\documentclass[a4paper,reqno]{article}


    \usepackage[english]{babel} 

    \usepackage{graphicx}
    \usepackage[]{color}  
    \usepackage[dvipsnames]{xcolor} 
    \usepackage{csquotes} 
    \usepackage{appendix} 
    \usepackage[a4paper]{geometry} 
    \usepackage{cancel} 
    \usepackage{tikz}
        \usetikzlibrary{backgrounds}

    \usepackage{amsmath,amssymb,amsfonts,amsthm} 
    \usepackage{mathtools,todonotes} 
    \usepackage{tikz-cd}  
    \usepackage[all,cmtip]{xy} 
    \usepackage[bbgreekl]{mathbbol} 
    \usepackage{array} 

    \usepackage{hyperref} 

    \usepackage{authblk} 
\theoremstyle{definition} 
    \newtheorem{definition}{Definition}

\theoremstyle{plain} 
    \newtheorem{theorem}[definition]{Theorem}
    
    \newtheorem{lemma}[definition]{Lemma}

\theoremstyle{remark} 
    \newtheorem{remark}[definition]{Remark}


\usepackage[bibstyle=alphabetic,citestyle=alphabetic,useprefix,giveninits=true, sorting=ynt, sortcites, minbibnames=99,maxbibnames=99,backend=biber]{biblatex}  
\renewbibmacro{in:}{} 
\bibliography{bibliography}  
\emergencystretch=1em 

\DeclareSourcemap{
  \maps[datatype=bibtex]{
    \map[overwrite]{ 
      \step[fieldsource=doi, final]
      \step[fieldset=url, null]
      \step[fieldset=eprint, null]
      }
     \map[overwrite]{ 
      \step[fieldsource=eprint, final]
      \step[fieldset=pages, null]
      \step[fieldset=eid, null]
      \step[fieldset=journal, null]
    }  
  }
}



    \newcommand{\Tr}[0]{\text{Tr}}
    



    \DeclareMathOperator{\Ima}{Im}

    \newcommand{\qsp}[2]{\,\ensuremath{\raise.5ex\hbox{$#1$}\big\slash\raise-.5ex\hbox{$#2$}}}

    \newcommand{\FF}{\mathfrak{F}}




    \makeatletter
        \newcommand{\zzlabel}[1]{\ifmeasuring@\else\ltx@label{#1}\fi} 
    \makeatother

    \newcounter{terms}[equation] 



\title{Boundary structure of the standard model coupled to gravity\thanks{A. S. C. acknowledges partial support of SNF Grant No. 200020 192080 and of the Simons Collaboration
on Global Categorical Symmetries. G. C. acknowledges partial support of SNF Grant No P500PT 203085. This
research was (partly) supported by the NCCR SwissMAP, funded by the Swiss National Science Foundation.}}
\author[1]{Giovanni Canepa}
\author[2]{Alberto S. Cattaneo}
\author[3]{Filippo Fila-Robattino}
\author[4]{Manuel Tecchiolli}

\renewcommand\footnotemark{}

\affil[1]{Universit\"at Wien, \href{mailto:giovanni.canepa.math@gmail.com}{giovanni.canepa.math@gmail.com}}
\affil[2,3,4]{Institut f\"ur Mathematik, Universit\"at Z\"urich }
\affil[2]{\href{mailto:cattaneo@math.uzh.ch}{cattaneo@math.uzh.ch}}
\affil[4]{\href{mailto:manuel.tecchiolli@math.uzh.ch}{manuel.tecchiolli@math.uzh.ch}}
\affil[3]{Scuola Internazionale Superiore di Studi Avanzati, Trieste, \href{mailto:ffilarob@sissa.it}{ffilarob@sissa.it}}

\date{}

\begin{document}
\maketitle

\begin{abstract}
In this article a description of the reduced phase space of the standard model coupled to gravity  is given. For space or time-like boundaries this is achieved as the reduction of a symplectic space with respect to a coisotropic submanifold and with the BFV formalism. For light-like boundaries the reduced phase space is described as the reduction of a symplectic manifold with respect to a set of constraints. Some results about the Poisson brackets of sums of functionals are also proved.
\end{abstract}

\tableofcontents
\section{Introduction}
The goal of this paper is to provide a description of the Reduced Phase Space (RPS) of the standard model coupled to gravity. We will describe it in two different ways, one through the reduction of a symplectic manifold by a coisotropic submanifold, and the other using the cohomological approach given by the BFV formalism---after Batalin--Fradkin--Vilkowisky \cite{BV3,BV2} (only in the case of space or time-like boundaries).

If we consider a globally hyperbolic space-time and a Cauchy surface $\Sigma$ on it, then the reduced phase space of a theory describes the set of possible initial conditions on $\Sigma$. In other words, not all possible $n$-uples of fields on a Cauchy surface produce, under evolution in time, a solution of the field equations on the space-time, but only a subset of them satisfying some conditions. The RPS is exactly this subset and it is usually described as a quotient. 

In more general cases we can still consider a space-time with boundary, not necessarily space-like, and consider the RPS on the boundary. The importance of this space is related to the possibility of analyzing in this way the structures of the theory at null-infinity (see for example  \cite{Sachs1962,Penrose1980, Torre1985,PS2017} and references therein) and to the problem of locality after quantization. Indeed, in order to encode cutting and gluing in the BV formalism, we need a description of the boundary fields given by the BFV formalism \cite{CMR2012,CMR2, Schaetz:2008}. The BV formalism---after Batalin--Vilkowisky\cite{BV1}--- is a necessary tool to quantize---using the path integral quantization---a gauge theory, as it is the case for General Relativity and the standard model. 
Furthermore the advantages of describing the RPS as a BFV theory are also related to the possibility of using the AKSZ construction \cite{AKSZ} in order to construct a BV theory on a cylindrical manifold, thus highlighting the true structure of the symmetries of the theory. This construction has already been applied to gravity in \cite{CCS2020b}. A BFV formulation also allows to consider manifolds with corners and induce on the corners some data containing information about the structure of observables around punctures of the boundary (see \cite{CC2023} for gravity and \cite{OS2020} for the connection with similar works by the quantum gravity community).

The structure and methods of this paper follow those of  \cite{CCS2020,CCF2022, CCT2020, CFHT23}. In particular we combine here some of the results of the aforementioned papers, about the RPS of gravity and gravity combined with a scalar field, a Yang--Mills field and a spinor field, in order to get a description of the standard model by adding the corresponding interaction terms. 

As in the previous articles we use the coframe formulation of General Relativity \cite{thi2007} (sometimes called Palatini--Cartan or Einstein--Cartan) and the Kijowski and Tulczijew (KT) construction \cite{KT1979} which produces a symplectic space, called \emph{geometric phase space}, and a set of constraints.

Namely, one starts by taking the variation of the action of the theory and from the variation we can distinguish between the Euler--Lagrange equations and a boundary term stemming from possible integration by parts. It is then possible to reinterpret the boundary term as a one form on the space of the restriction of the fields to the boundary. By taking the variation of this one form we then obtain a closed two form. If this form is non-degenerate, it will constitute the symplectic form of the geometric phase space (formed by the restriction of the fields to the boundary). Otherwise, if the two form is degenerate but its kernel is regular, it is possible to obtain the geometric phase space as the quotient by the kernel of the aforementioned degenerate two-form. 
Subsequently starting from the Euler--Lagrange equations one can restrict them to the boundary and distinguish between evolution equations (containing  derivatives in the direction normal to the boundary) and constraints. We then look for some structural constraint on the geometric phase space so that these constraints can be expressed in the new boundary variables (i.e. after the reduction described above).
Then the RPS is defined as the reduction of the geometric phase space with respect to these constraints.

If the constraints are of first class, i.e. the Poisson brackets of the constraints are proportional to 
themselves (or zero), they define a coisotropic submanifold. If the boundary metric is non-degenerate (i.e. a space-like or time-like boundary) it was proven in \cite{CCS2020} that the constraints of gravity do actually define a coisotropic submanifold. Later it was shown in \cite{CCF2022} that this is also the case when some matter or gauge fields are added. On the other hand, note that when the boundary metric is degenerate, i.e. we are considering a light-like or null boundary, the sets of constraints is no longer coisotropic \cite{CCT2020, CFHT23}.
In this note we show that this two-fold behaviour (depending on the degeneracy of the boundary metric) happens also if we combine all the fields and the interactions needed to form the coupling of the standard model and gravity. This is proved by making a careful use of the known results and the properties of the Poisson brackets.
The BFV formalism can only be written for first class constraints and in this case it is possible to recover cohomologically the RPS, as explained with more detail in Section \ref{s:BFV_all}.

In this article we focus on the problem in dimension 4. However this can be generalized for dimensions $N>4$ with some extra care. For gravity the generalization was introduced in \cite{CCS2020} and the results are similar to the case $N=4$. We do expect a similar behaviour also for the gravity coupled with fields and the standard model.

\subsection{Overview of the results}
The main result of this paper is the description of the reduced phase space of the standard model coupled to gravity. However, in order to exploit some of the previous results about gravity coupled to different gauge and matter fields, we first develop some results in Poisson geometry. In particular we prove some formulas about the Poisson brackets of sum of functions in terms of the Poisson brackets of the single addends. Namely, in Lemma \ref{lem:Poisson_Brackets_singles} we prove a result for two addends, in Theorem \ref{thm:Poisson_Brackets_couples} for three addends and in Theorem \ref{thm:Poisson_brackets_four} for four addends. Furthermore we also prove an useful result about the Classical Master equation of the sum of two actions, in some particular conditions (Theorem \ref{thm:BFV_computation}).

On the more physical point of view the main result of this article is the description of the Reduced Phase Space of the standard model coupled to gravity. For a space or time-like boundary this has been achieved in two ways, one more classical, as a reduction of a symplectic space with respect to a coisotropic submanifold (defined from a set of constraints), the second as a cohomology class of a differential graded operator using the BFV formalism.
In the light-like case only the description through the geometric phase space and the constraints can be achieved.

Let us start from the case of non-degenerate boundary, i.e. for a boundary such that the induced metric is space or time-like. 
The reduced phase space of the standard model coupled to gravity is described as follows. Using the KT construction, we first find  the geometric phase space of gravity, which turns out to be composed of the spaces of a coframe $e$ and a connection $\omega$, satisfying  the additional structural constraint \eqref{e:structural_constraint_grav}. For more details about this space we refer to the description in \cite{CCS2020}. To the space of gravity fields we then have to add the spaces corresponding to the bosons and fermions of the standard model together with their conjugate momenta. In particular we have the spaces of the following fields:
\begin{enumerate}
    \item A scalar field multiplet corresponding to the Higgs boson $\phi$ and the momentum $\Pi$, satisfying the structural constraint \eqref{e:structural_Higgs}.
    \item An $SU(3)\times SU(2) \times U(1)$ gauge field $A$ and its momentum $B$, satisfying the structural constraint \eqref{e:constraintYM}.
    \item A left handed $\psi_L$ and a right handed $\psi_R$ multiplet of spinors and their conjugate fields $\overline{\psi_L}$ and $\overline{\psi_R}$. Note that the presence of the spinors modifies the pure gravity structural constraint \eqref{e:structural_constraint_grav} into a corrected one \eqref{e:omegareprfix2spin}.
\end{enumerate}

These space of fields form a symplectic space with symplectic form
\begin{equation*}
        \Omega_{SM}=\varpi+\varpi_A+\varpi_{\psi_L}+\varpi_{\psi_R}+\varpi_H,
    \end{equation*}
where the single terms are defined as in  table \ref{tab:symplectic}.
    \bgroup
    \def\arraystretch{1.5}%
    \begin{table}[ht!]
    \setlength\tabcolsep{4pt}
        \centering
        \begin{tabular}{|c c | c c |}
    \hline
        $\varpi$ & \eqref{e:sympl_form_grav} &
        $\varpi_H$  & \eqref{e:sympl_form_H}\\
       $\varpi_A$  & \eqref{e:sympl_form_YM}&
       $\varpi_{\psi_R}$/$\varpi_{\psi_R}$ & \eqref{e:sympl_form_spin}\\
       \hline
    \end{tabular}
        \caption{Terms building up the symplectic form of the the standard model coupled to gravity and their defining equations.}
        \label{tab:symplectic}
    \end{table}
    \egroup

More details about the coupling of gravity together with each one of these fields  can be found in \cite{CCF2022}.

On this geometric phase space we can define the constraints
\begin{align*}
        L_c^{SM}&=L_c+l_c^\psi, &  P_\xi^{SM}&=P_\xi + p_\xi^A + p_\xi^\psi + p_\xi^H + p_\xi^{A,\psi} + p_\xi^{H,A}, \\
        M_\mu^{SM}&= M_\mu^A + m_\mu^{H,A} + m_\mu^{A,\psi}, & H_\lambda^{SM}&=H_\lambda + h_\lambda^A + h_\lambda^\psi + h_\lambda^H + h_\lambda^{A,\psi} + h_\lambda^{H,A} +h_\lambda^{H,\psi}
    \end{align*}
    where the single terms are defined as in  table \ref{tab:constraints}.

    \bgroup
    \def\arraystretch{1.5}%
    \begin{table}[!ht]
    \setlength\tabcolsep{4pt}
        \centering
        \begin{tabular}{|c|c c | c c | c c|c c|c c|c c|c c|}
    \hline
        $L_c^{SM}$ &
       $L_c$  & \eqref{e:constraint_gravity1}&
       $l_c^H$ & 0 
       &$l_c^A$ & 0 
       & $l_c^\psi$ & \eqref{e:constraints_spinor1} 
       & $l_c^{H,\psi}$ & 0
       & $l_c^{H,A}$ & 0
       & $l_c^{A,\psi}$ & 0\\
       $P_\xi^{SM}$ &
       $P_\xi$  & \eqref{e:constraint_gravity2}
       & $p_\xi^H$ & \eqref{e:constraint_H2}
       &$p_\xi^A$ & \eqref{e:constraints_YM2} 
       & $p_\xi^\psi$ & \eqref{e:constraints_spinor2}
       & $p_\xi^{H,\psi}$ & 0
       & $p_\xi^{H,A}$ & \eqref{e:constraint_YMH2}
       & $p_\xi^{A,\psi}$ & \eqref{e:constraints_YMS2}\\
       $H_\lambda^{SM}$ &
       $H_{\lambda}$  &\eqref{e:constraint_gravity3} 
       & $h_\lambda^H$ & \eqref{e:constraint_H3}
       &$h_\lambda^A$ & \eqref{e:constraints_YM3} 
       & $h_\lambda^\psi$ & \eqref{e:constraints_spinor3}
       & $h_\lambda^{H,\psi}$ & \eqref{e:constraints_Yukawa3}
       & $h_\lambda^{H,A}$ & \eqref{e:constraint_YMH3}
       & $h_\lambda^{A,\psi}$ & \eqref{e:constraints_YMS3}\\
       $M_\mu^{SM}$&
       & &  & &$M_\mu^A$ & \eqref{e:constraints_YM4}  
       &  & &  & 
       & $m_\mu^{H,A}$ & \eqref{e:constraint_YMH4}
       & $m_\mu^{A,\psi}$ & \eqref{e:constraints_YMS4}\\
       \hline
    \end{tabular}
        \caption{Terms building up the constraints in the standard model coupled to gravity and their defining equations.}
        \label{tab:constraints}
    \end{table}
    \egroup

These constraints are roughly speaking given by the restriction of the Euler--Lagrange equation of the Lagrangian of the standard model, together with gravity.

The main result of this paper than states that for a non-degenerate boundary metric this set of constraints are of first class and then define a coisotropic submanifold. All the details and the precise expression of the Poisson brackets of the constraints are collected in Theorem \ref{thm:first-class-constraints_SM}. The Reduced Phase Space is then given by the coisotropic reduction of the Geometric Phase Space with respect to the zero set of these constraints. 

Alternatively, resorting to the BFV formalism, we can express the reduced phase space with an isomorphism with the zeroth cohomology class of the differential graded vector field $Q^{SM}$, the Hamiltonian vector field of the BFV action $\mathcal{S}^{SM}$. The precise details of this approach are collected in Theorem \ref{thm:BFVaction_SM}.

In order to prove the aforementioned result we worked constructively by proving similar results for gravity coupled with all the possible pairs of gauge and matter fields considered for the standard model.

When dealing with a light-like boundary the results are different. Indeed, using the KT construction for gravity alone, we find a geometric phase space still composed by the spaces of the coframe $e$ and the connection $\omega$, but with different structural constraints, given by \eqref{e:structural_constraint_grav_deg} and \eqref{e:structural_constraint2_grav_deg}. For more details about this space we refer to \cite{CCT2020}. In order to get the geometric phase space of the standard model coupled to gravity, we then have to add the spaces of the fields. This goes as in the non-degenerate case written above, and in presence of spinors the pure gravitational constraint \eqref{e:structural_constraint_gravity_degenerate} is modified into a corrected one \eqref{e:structural_constraint_spinor_deg}. More details can be found in \cite{CFHT23}. On the geometric phase space one then defines the constraints $ L_c^{SM},  P_\xi^{SM},H_\lambda^{SM}$ and $M_\mu^{SM}$ as above and the additional constraint
\begin{align*}
    R_{\tau}^{SM}= R_{\tau} + r_{\tau}^{\psi}.
\end{align*}
defined as in table \ref{tab:constraints_deg}.
    \bgroup
    \def\arraystretch{1.5}%
    \begin{table}[!ht]
    \setlength\tabcolsep{4pt}
        \centering
        \begin{tabular}{|c|c c | c c | c c|c c|c c|c c|c c|}
    \hline
        $R_{\tau}^{SM}$ &
       $R_{\tau}$  & \eqref{e:constraint_R}&
       $r_{\tau}^H$ & 0 
       &$r_{\tau}^A$ & 0 
       & $r_{\tau}^\psi$ & \eqref{e:constraint_R_psi} 
       & $r_{\tau}^{H,\psi}$ & 0
       & $r_{\tau}^{H,A}$ & 0
       & $r_{\tau}^{A,\psi}$ & 0\\
       \hline
    \end{tabular}
        \caption{Terms building up the additional constraint  in the degenerate case and their defining equations.}
        \label{tab:constraints_deg}
    \end{table}
    \egroup

This set of constraints, in contrast with the non-degenerate case is now not of first class, as proved in Theorem \ref{thm:no-first-class-constraints_SM}). This is a consequence of the geometry of the light-like boundary (and mathematically of the degeneracy of the metric).
As a consequence of this fact, in the degenerate case, providing a BFV resolution of the quotient becomes unfeasible.

\subsection{How to read this paper}\label{s:howtoread}
In this section we introduce the notation used in the paper and its logical structure. 

Let us begin with the notation used for describing the geometric phase spaces of the theories considered here.
All the theories are constructed starting from gravity and then adding the corresponding matter or gauge fields. The geometric phase space of gravity is a symplectic space denoted by $(F^{\partial}, \varpi)$, i.e. with a roman $F^{\partial}$ for the space and $\varpi$ for the corresponding symplectic form.\footnote{Note that we use $\varpi$ instead of the more common $\omega$ for symplectic forms in order not to confuse it with the connection $\omega$ of the gravity coframe formulation.} 
The reduced phase space is then defined through the use of constraints that will be denoted by roman capitalized letters with an index denoting the corresponding (odd) Lagrange multiplier. 

Then we add the gauge or matter fields. These theories will be denoted using letters out of the following table:
\begin{table}[!ht]
\centering
\begin{tabular}{|c|c|}
\hline
    Additional field & Index \\
\hline
    Scalar field & $\phi$  \\
    Higgs field (multiplet of scalar fields) & $H$ \\
    Yang--Mills gauge field & $A$ \\
    Spinor field & $\psi$\\
\hline
\end{tabular}
\end{table}

These letters will be used as apexes as follows: the geometric phase space of a theory with field  with index $f$ will be a direct sum $F^{\partial}\oplus F_f$ with symplectic form $\Omega_f=\varpi+\varpi_f$. The constraints defining the RPS are then denoted as $X^f= X + x^f$ where $X$ is the corresponding constraint for gravity, $x^f$ is the additional part and $X^f$ is the whole constraint for the coupled theory. In case we have two or more matter or gauge fields, we simply stack indices and add additional terms corresponding to the interactions. As an example we can consider the following constraint (from Section \ref{s:YM+spinor}): 

\noindent
\begin{center}
\begin{tikzpicture}[framed]
        \node (All) at (4.5, 1) {$H^{A, \psi}_{\lambda}$};
        \node (ug) at (5.5, 1) {$=$};
        \node (grav) at (6.5, 1) {$H_{\lambda}$};
        \node (p1) at (7, 1) {$+$};
        \node (YM) at (7.5, 1) {$h^{A}_{\lambda}$};
        \node (p2) at (8, 1) {$+$};
        \node (spin) at (8.5, 1) {$h^{\psi}_{\lambda}$};
        \node (p3) at (9, 1) {$+$};
        \node (YMS) at (9.7, 1) {$h^{A, \psi}_{\lambda}$};
        \draw[->] (5,2.4) node [anchor=south]{Constraint of theory with a Yang--Mills field and a spinor field} to[out=-135, in=135] (All);
        \draw[->] (4.5,0.2) node [anchor=east]{$\lambda$ is the Lagrange multiplier} to[out=0, in=-90] (All);
        \draw[->] (7.5,2) node [anchor=south]{Pure gravity constraint} to[out=-90, in=90] (grav);
        \draw[->] (10.5,1.6) node [anchor=south]{Additional part containing YM fields} to[out=-135, in=45] (YM);
        \draw[->] (7.5,0.2) node [anchor=north]{Additional part containing spinor fields} to[out=45, in=-90] (spin);
        \draw[->] (11,-0.2) node [anchor=north]{Additional part containing interaction} to[out=90, in=-45] (YMS);
\end{tikzpicture}

\end{center}

In Section \ref{s:appendix_Poisson_brackets}, where we present more general results not connected to a specific theory or constraint, we do not indicate the Lagrange multiplier associated to the constraint and we use the letters $X$ and $Y$ to denote generic constraints. The Hamiltonian vector fields will be denoted with blackboard bold letters and indices corresponding to the matter/gauge fields as defined in \eqref{e:Hamiltonian_VF_single} and \eqref{e:Hamiltonian_VF_interaction}.

For the BFV formalism we use the same conventions but here the BFV space of fields will be denoted by a calligraphic $\mathcal{F}$  and the actions with a calligraphic $\mathcal{S}$ with the appropriate indices. The symplectic form of the BFV space will be denoted by $\Omega^{BFV}$ with the same convention on the indices as before. Further technical notation is introduced in Section \ref{s:BFV_brackets_theorem} and in Appendix \ref{a:def_maps_and_spaces}, however this is not necessary for the general understanding of the main results of this paper but only for the proofs.

\subsection{Structure of the paper}

This paper is structured as follows. In Section \ref{s:appendix_Poisson_brackets} we show and prove some useful identities for the Poisson brackets of sums of functionals. Furthermore we also prove here a result for the BFV brackets.

In Section \ref{s:previousResults} we recall the previous results about the reduced phase space of gravity (Section \ref{s:previousResults_gravity}) and of gravity coupled with a scalar field (Section \ref{s:previousResults_scalar}), with a Yang--Mills field (Section \ref{s:YM}) and a spinor field (Section \ref{s:spinor}).

Section \ref{s:interaction_terms} is the fundamental building block of the paper in which we describe the reduced phase space of the theories composed of gravity and pairs of gauge and matter field. Respectively we have gravity together with a Yang--Mills field and a spinor field in Section \ref{s:YM+spinor}, then we consider the Higgs and Yang--Mills fields in Section \ref{s:YM+scalar} and lastly a spinor field together with a scalar field, forming the Yukawa interaction in \ref{s:scalar+spinor}.

Finally in Section \ref{s:standard_model} we describe the reduced phase space of gravity coupled to the standard model as a coisotropic reduction. 

Section \ref{s:BFV_all} is dedicated to the BFV formulations of all the theory listed above. 

In Section \ref{s:degenerate_section} we consider the same theories but for a boundary with induced degenerate metric. In this case as well we divide the work starting from recalling the previous results (Sections \ref{s:degenerate_gravity}, \ref{s:degenerate_scalar}, \ref{s:degenerate_YM} and \ref{s:degenerate_spinor}), considering the single interactions (Sections \ref{s:degenerate_YMS}, \ref{s:degenerate_YMH} and \ref{s:degenerate_yuk}) and describing the standard model  (Section \ref{s:degenerate_SM}).

\section{Properties of the Poisson brackets}
\label{s:appendix_Poisson_brackets}
The goal of this section is to show how to compute the Poisson brackets of particular sum of constraints of which we already know some partial results.

\subsection{Sum of two terms}
As a warm up let us start from constraints composed of two terms. Let $(F, \varpi)$ be a symplectic space and $F_f , \varpi_f$ a space and a two form such that $$(F\oplus F_f, \Omega_{f}=\varpi+\varpi_f)$$ is still symplectic. In our case $(F, \varpi)$ is the geometric phase space of gravity while the index $f$ denotes the spaces of some matter fields. 

Let us suppose that we have two constraints of the following form:
\begin{align*}
    X^{f}= X + x^f, \qquad Y^{f}= Y + y^f
\end{align*}
where $X,Y \in C^{\infty}(F)$ and $x^f,y^f \in C^{\infty}(F\oplus F_f)$.

The first step towards the computation of the brackets is to find the Hamiltonian vector fields of the functions $X^{f}$ and $Y^{f}$, i.e. vector fields satisfying 
\begin{align*}
    \iota_{\mathbb{X}^{f}} \Omega_{f} = \delta X^{f}, \qquad \iota_{\mathbb{Y}^{f}} \Omega_{f} = \delta Y^{f}, 
\end{align*}
Let us concentrate on $X$, for $Y$ is exactly the same.
Let us denote by $\mathbb{X}$ the Hamiltonian vector field of $X$ with respect to $\varpi$:
\begin{align*}
    \iota_{\mathbb{X}} \varpi = \delta X.
\end{align*}
Then, if we look for an Hamiltonian vector field of the form
\begin{align*}
    \mathbb{X}^{f} = \mathbb{X} + \mathbb{x}^{f}
\end{align*}
we get that $\mathbb{x}^{f}$ must satisfy
\begin{align*}
    \iota_{\mathbb{X}}\varpi_{f}+ \iota_{\mathbb{x}^{f}} (\varpi+\varpi_{f}) = \delta x^{f}.
\end{align*}
Then we have the following result:
\begin{lemma}\label{lem:Poisson_Brackets_singles}
Using the notation introduced above, the Poisson brackets of the functions $X^{f}$ and $Y^{f}$ with respect to the symplectic form $\Omega_{SM}$ are given by the following expression:
\begin{align*}
    \{X^{f},Y^{f}\}_{f}&=\{X, Y\} + \iota_{\mathbb{y}^{f}}\delta (X+ x^{f}) + \iota_{\mathbb{x}^{f}}\delta (Y+ y^{f})- \iota_{\mathbb{x}^{f}}\iota_{\mathbb{y}^{f}}\Omega_{f}+\iota_{\mathbb{X}}\iota_{\mathbb{Y}}\varpi_{f}
\end{align*}
where $\{\bullet, \bullet\}_f$ means that we are considering the Poisson brackets with respect to $\varpi+\varpi_f$ and 
$\{\bullet, \bullet\}$ means that we are considering the Poisson brackets with respect to $\varpi$.
\end{lemma}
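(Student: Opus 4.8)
The plan is to compute $\{X^f, Y^f\}_f$ directly from the definition $\{X^f, Y^f\}_f = \iota_{\mathbb{Y}^f}\iota_{\mathbb{X}^f}\Omega_f$ (equivalently $\iota_{\mathbb{X}^f}\delta Y^f$), unpacking the decompositions $X^f = X + x^f$, $Y^f = Y + y^f$, $\mathbb{X}^f = \mathbb{X} + \mathbb{x}^f$, $\mathbb{Y}^f = \mathbb{Y} + \mathbb{y}^f$, and $\Omega_f = \varpi + \varpi_f$, and then reassembling the pieces into the claimed closed form. The only inputs are the defining relations $\iota_{\mathbb{X}}\varpi = \delta X$, $\iota_{\mathbb{Y}}\varpi = \delta Y$, the relation $\iota_{\mathbb{X}}\varpi_f + \iota_{\mathbb{x}^f}(\varpi+\varpi_f) = \delta x^f$ derived above (and its analogue for $\mathbb{y}^f$), plus the observation that $\mathbb{X}$, $\mathbb{Y}$ are vertical with respect to the $F$-factor so that quantities like $\iota_{\mathbb{X}}\varpi_f$ make sense as one-forms on $F\oplus F_f$.

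First I would write $\{X^f, Y^f\}_f = \iota_{\mathbb{X}^f}\delta Y^f = \iota_{\mathbb{X}}\delta Y + \iota_{\mathbb{X}}\delta y^f + \iota_{\mathbb{x}^f}\delta Y + \iota_{\mathbb{x}^f}\delta y^f$. The term $\iota_{\mathbb{X}}\delta Y$ is $\iota_{\mathbb{X}}\iota_{\mathbb{Y}}\varpi = \{X, Y\}$, the gravity bracket. The last three terms I would group as $\iota_{\mathbb{x}^f}\delta Y + \iota_{\mathbb{X}}\delta y^f + \iota_{\mathbb{x}^f}\delta y^f$, which is \emph{almost} $\iota_{\mathbb{x}^f}\delta(Y + y^f) + \iota_{\mathbb{X}}\delta y^f$; to reach the symmetric-looking answer in the statement I would instead symmetrize. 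The cleaner route is to start from $\{X^f, Y^f\}_f = \iota_{\mathbb{Y}^f}\iota_{\mathbb{X}^f}\Omega_f$ and use $\iota_{\mathbb{X}^f}\Omega_f = \delta X^f$ on one copy and the explicit vector-field decomposition on the other: $\iota_{\mathbb{Y}^f}\delta X^f = \iota_{\mathbb{Y}}\delta X + \iota_{\mathbb{Y}}\delta x^f + \iota_{\mathbb{y}^f}\delta X + \iota_{\mathbb{y}^f}\delta x^f$, and then rewrite $\iota_{\mathbb{Y}}\delta X = \iota_{\mathbb{Y}}\iota_{\mathbb{X}}\varpi = \{X,Y\}$ and recognize $\iota_{\mathbb{y}^f}\delta X + \iota_{\mathbb{y}^f}\delta x^f = \iota_{\mathbb{y}^f}\delta(X + x^f)$. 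This accounts for two of the five terms in the target formula; the remaining job is to show $\iota_{\mathbb{Y}}\delta x^f = \iota_{\mathbb{x}^f}\delta(Y+y^f) - \iota_{\mathbb{x}^f}\iota_{\mathbb{y}^f}\Omega_f + \iota_{\mathbb{X}}\iota_{\mathbb{Y}}\varpi_f$, which is where the defining relation for $\mathbb{x}^f$ enters.

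Concretely, from $\iota_{\mathbb{x}^f}(\varpi+\varpi_f) = \delta x^f - \iota_{\mathbb{X}}\varpi_f$ I get $\iota_{\mathbb{Y}}\delta x^f = \iota_{\mathbb{Y}}\iota_{\mathbb{x}^f}\Omega_f + \iota_{\mathbb{Y}}\iota_{\mathbb{X}}\varpi_f = -\iota_{\mathbb{x}^f}\iota_{\mathbb{Y}}\Omega_f + \iota_{\mathbb{Y}}\iota_{\mathbb{X}}\varpi_f$. Now $\iota_{\mathbb{Y}}\Omega_f = \iota_{\mathbb{Y}}\varpi + \iota_{\mathbb{Y}}\varpi_f = \delta Y + \iota_{\mathbb{Y}}\varpi_f$, and using the analogue of the structural relation for $\mathbb{y}^f$, namely $\iota_{\mathbb{Y}}\varpi_f = \delta y^f - \iota_{\mathbb{y}^f}\Omega_f$, gives $\iota_{\mathbb{Y}}\Omega_f = \delta(Y + y^f) - \iota_{\mathbb{y}^f}\Omega_f$. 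Substituting back, $\iota_{\mathbb{Y}}\delta x^f = -\iota_{\mathbb{x}^f}\delta(Y+y^f) + \iota_{\mathbb{x}^f}\iota_{\mathbb{y}^f}\Omega_f + \iota_{\mathbb{Y}}\iota_{\mathbb{X}}\varpi_f$. Comparing with what I needed, there is a sign discrepancy coming from the antisymmetry convention of the Poisson bracket ($\{X^f,Y^f\} = \iota_{\mathbb{X}^f}\iota_{\mathbb{Y}^f}\Omega$ versus $\iota_{\mathbb{Y}^f}\iota_{\mathbb{X}^f}\Omega$) and from $\iota_{\mathbb{X}}\iota_{\mathbb{Y}}\varpi_f = -\iota_{\mathbb{Y}}\iota_{\mathbb{X}}\varpi_f$; I expect the main technical obstacle to be pinning down these sign conventions consistently (degree/parity of the constraints and of $\delta$, order of contractions) so that all five terms land with exactly the signs in the statement, rather than any conceptual difficulty. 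Once the bookkeeping is fixed, collecting the terms $\{X,Y\}$, $\iota_{\mathbb{y}^f}\delta(X+x^f)$, $\iota_{\mathbb{x}^f}\delta(Y+y^f)$, $-\iota_{\mathbb{x}^f}\iota_{\mathbb{y}^f}\Omega_f$, and $\iota_{\mathbb{X}}\iota_{\mathbb{Y}}\varpi_f$ yields the claim.
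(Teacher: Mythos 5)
Your proposal is correct and follows essentially the same route as the paper: both arguments are direct expansions of $\iota_{\mathbb{X}^f}\iota_{\mathbb{Y}^f}\Omega_f$ by bilinearity combined with the defining relations for $\mathbb{X}$, $\mathbb{x}^f$, $\mathbb{Y}$, $\mathbb{y}^f$ (the paper expands into eight elementary terms and regroups them; you instead substitute the relation for $\delta x^f$ into the term $\iota_{\mathbb{Y}}\delta x^f$ --- the same computation organized differently). The sign discrepancy you flag at the end disappears once you note that the constraints are odd functionals (the Lagrange multipliers $c,\xi,\lambda,\mu$ are odd), so their Hamiltonian vector fields are odd and the interior products $\iota_{\mathbb{Y}}$ and $\iota_{\mathbb{x}^f}$ commute rather than anticommute; with that convention your computation lands exactly on the stated formula.
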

\begin{proof}
    If we expand the Poisson brackets $\{X^{f},Y^{f}\}_{f}$ we get eight terms as follows:
    \begin{align*}
        \iota_{\mathbb{X}+\mathbb{x}^{f}}\iota_{\mathbb{Y}+\mathbb{y}^{f}}(\varpi+\varpi_f)&=  
        \iota_{\mathbb{X}}\iota_{\mathbb{Y}}\varpi +\iota_{\mathbb{X}}\iota_{\mathbb{y}^{f}}\varpi
        +\iota_{\mathbb{x}^{f}}\iota_{\mathbb{Y}}\varpi
        +\iota_{\mathbb{x}^{f}}\iota_{\mathbb{y}^{f}}\varpi\\
        &\phantom{=}
        +\iota_{\mathbb{X}}\iota_{\mathbb{Y}}\varpi_f +\iota_{\mathbb{X}}\iota_{\mathbb{y}^{f}}\varpi_f
        +\iota_{\mathbb{x}^{f}}\iota_{\mathbb{Y}}\varpi_f
        +\iota_{\mathbb{x}^{f}}\iota_{\mathbb{y}^{f}}\varpi_f.
    \end{align*}
    It is then straightforward to note that the first term in this expression corresponds to $\{X, Y\}$; the sum of the  second, fourth, sixth and eighth corresponds to $\iota_{\mathbb{y}^{f}}\delta (X+ x^{f})$; the sum of the  third, fourth, seventh and eighth corresponds to $+ \iota_{\mathbb{x}^{f}}\delta (Y+ y^{f})$. Then we have to subtract the eighth since we summed up it two times and to add the fifth which was left.
\end{proof}

This lemma was tacitly used in \cite{CCF2022} and will be generalized in the next section.

\subsection{Sum of three terms}
Let us now consider constraints composed of the sum of three terms, the pure gravity one and two matter/gauge fields. As before, let $(F, \varpi)$, $(F\oplus F_f , \varpi+\varpi_f)$, $(F\oplus F_g, \varpi+\varpi_g)$  be symplectic spaces such that $$(F\oplus F_f\oplus F_g , \Omega_{fg}=\varpi+\varpi_f+\varpi_g)$$ is still symplectic.
Let us suppose that we have two constraints of the following form:
\begin{align*}
    X^{f,g}= X + x^f + x^g + x^{f,g}, \qquad Y^{f,g}= Y + y^f + y^g + y^{f,g}
\end{align*}
where $X,Y \in C^{\infty}(F)$, $x^f,y^f \in C^{\infty}(F\oplus F_f)$, $x^g,y^g\in C^{\infty}(F\oplus F_g)$ and $x^{f,g},y^{f,g}\in C^{\infty}(F\oplus F_f\oplus F_g)$. The Hamiltonian vector fields of these constraints will satisfy
\begin{align*}
    \iota_{\mathbb{X}^{f, g}} \Omega_{fg} = \delta X^{f, g}, \qquad \iota_{\mathbb{Y}^{f, g}} \Omega_{fg} = \delta Y^{f, g}, 
\end{align*}
Let us concentrate on $X$, and look for a vector field of the form 
$  \mathbb{X}^{f, g} = \mathbb{X} + \mathbb{x}^{f}+\mathbb{x}^{g}+\mathbb{x}^{f, g}$
where $\mathbb{X},\mathbb{x}^{f},\mathbb{x}^{g}$ are such that
\begin{align}\label{e:Hamiltonian_VF_single}
    \iota_{\mathbb{X}} \varpi = \delta X, \quad 
    \iota_{\mathbb{X}}\varpi_{g}+ \iota_{\mathbb{x}^{g}} (\varpi+\varpi_{g}) = \delta x^{g}, \quad
    \iota_{\mathbb{X}}\varpi_{f}+ \iota_{\mathbb{x}^{f}} (\varpi+\varpi_{f}) = \delta x^{f}
\end{align}
as we have seen in the previous section.
We conclude that $\mathbb{x}^{f, g}$ satisfies
\begin{align}\label{e:Hamiltonian_VF_interaction}
    \iota_{\mathbb{x}^{f, g}} (\varpi+ \varpi_{g}+ \varpi_{f}) + \iota_{\mathbb{x}^{g}} \varpi_{f}+ \iota_{\mathbb{x}^{f}} \varpi_{g} = \delta x^{f, g}.
\end{align}

Having found the Hamiltonian vector fields of the two functions $X^{f,g}$ and $Y^{f,g}$, we have the following results:

\begin{theorem}\label{thm:Poisson_Brackets_couples}
Using the notation introduced above, the Poisson bracket of the functions $X^{f,g}$ and $Y^{f,g}$ with respect to the symplectic form $\Omega_{fg}$ are given by the following expression:
\begin{align*}
    \{X^{f, g},Y^{f, g}\}_{fg}&=\{X+x^{f}, Y+y^{f}\}_f+\{X+x^{g}, Y+y^{g}\}_g-\{X,Y\}\\
    &+ \iota_{\mathbb{y}^{f, g}}\delta (X+ x^{f}+x^{g}+x^{f, g}) + \iota_{\mathbb{x}^{f, g}}\delta (Y+ y^{f}+y^{g}+y^{f, g})- \iota_{\mathbb{x}^{f, g}}\iota_{\mathbb{y}^{f, g}}\Omega_{fg}\\
    &+ \iota_{\mathbb{X}+\mathbb{x}^{g}}\iota_{\mathbb{Y}+\mathbb{y}^{g}}\varpi_{f}+ \iota_{\mathbb{X}+\mathbb{x}^{f}}\iota_{\mathbb{Y}+\mathbb{y}^{f}}\varpi_{g} +\iota_{\mathbb{x}^{f}}\iota_{\mathbb{y}^{g}}\Omega_{fg}+\iota_{\mathbb{x}^{g}}\iota_{\mathbb{y}^{f}}\Omega_{fg} -\iota_{\mathbb{X}}\iota_{\mathbb{Y}}(\varpi_{f}+\varpi_{g}) 
\end{align*}
where $\{\bullet, \bullet\}_f$ means that we are considering the Poisson brackets with respect to $\varpi+\varpi_f$,
$\{\bullet, \bullet\}_g$ means that we are considering the Poisson brackets with respect to $\varpi+\varpi_g$,
$\{\bullet, \bullet\}$ means that we are considering the Poisson brackets with respect to $\varpi$,
and
$\{\bullet, \bullet\}_{fg}$ means that we are considering the Poisson brackets with respect to $\Omega_{fg}$.
\end{theorem}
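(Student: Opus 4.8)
The plan is to mimic the computation of Lemma \ref{lem:Poisson_Brackets_singles} but now expanding everything over four summands in each vector field and five summands in the symplectic form $\Omega_{fg}=\varpi+\varpi_f+\varpi_g$. Write $\mathbb{X}^{f,g}=\mathbb{X}+\mathbb{x}^f+\mathbb{x}^g+\mathbb{x}^{f,g}$ and similarly for $\mathbb{Y}^{f,g}$, and expand
\begin{align*}
\{X^{f,g},Y^{f,g}\}_{fg}=\iota_{\mathbb{X}+\mathbb{x}^f+\mathbb{x}^g+\mathbb{x}^{f,g}}\,\iota_{\mathbb{Y}+\mathbb{y}^f+\mathbb{y}^g+\mathbb{y}^{f,g}}\,(\varpi+\varpi_f+\varpi_g).
\end{align*}
This is a sum of $4\times 4\times 3=48$ terms. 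The strategy is to sort these $48$ terms into the groups appearing on the right-hand side of the claimed identity and check that each group reconstructs one of the stated expressions, paying attention to overlaps (terms counted in more than one group) that must then be subtracted, exactly as in the two-term case.

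First I would isolate the block $\{X+x^f,Y+y^f\}_f+\{X+x^g,Y+y^g\}_g-\{X,Y\}$: by Lemma \ref{lem:Poisson_Brackets_singles} (applied once with field $f$ and once with field $g$) these three pieces together equal a specific collection of the $48$ terms — namely all contractions built from $\{\mathbb{X},\mathbb{x}^f,\mathbb{y}^f\}$ against $\{\varpi,\varpi_f\}$, plus all contractions built from $\{\mathbb{X},\mathbb{x}^g,\mathbb{y}^g\}$ against $\{\varpi,\varpi_g\}$, with the overcounted $\iota_{\mathbb{X}}\iota_{\mathbb{Y}}\varpi$ removed once. Next I would collect the ``interaction'' block: the terms $\iota_{\mathbb{y}^{f,g}}\delta(X+x^f+x^g+x^{f,g})+\iota_{\mathbb{x}^{f,g}}\delta(Y+y^f+y^g+y^{f,g})-\iota_{\mathbb{x}^{f,g}}\iota_{\mathbb{y}^{f,g}}\Omega_{fg}$ account for every one of the $48$ terms in which at least one of the two slots is filled by $\mathbb{x}^{f,g}$ or $\mathbb{y}^{f,g}$; here I would use the defining equations \eqref{e:Hamiltonian_VF_single}–\eqref{e:Hamiltonian_VF_interaction} to rewrite $\iota_{\mathbb{x}^{f,g}}\Omega_{fg}$ and $\iota_{\mathbb{X}}\varpi_f+\iota_{\mathbb{x}^f}(\varpi+\varpi_f)$ etc.\ as the appropriate $\delta$-terms, just as the single-term proof turns four contraction terms into one $\iota_{\mathbb{y}^f}\delta(X+x^f)$.

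What remains after removing those two blocks are precisely the ``cross'' terms: contractions of one $f$-type object against one $g$-type object, and the leftover $\varpi_f,\varpi_g$ contractions of purely gravitational vector fields that were not produced by the two Lemma-\ref{lem:Poisson_Brackets_singles} invocations. I would check that these reassemble exactly into $\iota_{\mathbb{X}+\mathbb{x}^g}\iota_{\mathbb{Y}+\mathbb{y}^g}\varpi_f+\iota_{\mathbb{X}+\mathbb{x}^f}\iota_{\mathbb{Y}+\mathbb{y}^f}\varpi_g+\iota_{\mathbb{x}^f}\iota_{\mathbb{y}^g}\Omega_{fg}+\iota_{\mathbb{x}^g}\iota_{\mathbb{y}^f}\Omega_{fg}-\iota_{\mathbb{X}}\iota_{\mathbb{Y}}(\varpi_f+\varpi_g)$, again tracking which terms have been double-counted (e.g.\ $\iota_{\mathbb{X}}\iota_{\mathbb{Y}}\varpi_f$ appears in the $\{\cdot,\cdot\}_f$ block and must be compensated) and verifying the correction term $-\iota_{\mathbb{X}}\iota_{\mathbb{Y}}(\varpi_f+\varpi_g)$ does the bookkeeping. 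The main obstacle is purely combinatorial: with $48$ terms and several overlapping groups, the risk is a miscount of multiplicities or a sign slip from antisymmetry of the interior products. I would manage this by organizing the $48$ terms into a table indexed by (slot-1 type, slot-2 type, form type), assign each cell to exactly one group of the RHS, and then read off the correction terms as the cells that got assigned twice minus the cells assigned zero times — this turns the proof into a finite verification rather than a long manipulation.
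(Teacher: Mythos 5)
Your proposal is correct and follows essentially the same route as the paper, which likewise expands $\iota_{\mathbb{X}^{f,g}}\iota_{\mathbb{Y}^{f,g}}\Omega_{fg}$ by linearity into the $4\times4\times3=48$ terms and rearranges them into the stated groups (your multiplicity bookkeeping indeed closes: each of the 48 cells is covered exactly once after the subtractions). The paper's proof is in fact terser than your plan, simply citing the analogy with Lemma \ref{lem:Poisson_Brackets_singles}, so your tabular verification is a legitimate filling-in of the same argument.
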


\begin{proof}
We recall that the Poisson brackets of two functions can be expressed in terms of their Hamiltonian vector field and the symplectic form as
\begin{align*}
    \{X^{f, g},Y^{f, g}\}_{fg}&= \iota_{\mathbb{X}^{f, g}}\iota_{\mathbb{Y}^{f, g}} \Omega_{fg}\\
    &=\iota_{(\mathbb{X}+\mathbb{x}^{f}+\mathbb{x}^{g}+\mathbb{x}^{f, g})}\iota_{(\mathbb{Y}+\mathbb{y}^{f}+\mathbb{y}^{g}+\mathbb{y}^{f, g})}(\varpi+ \varpi_f+ \varpi_g).
\end{align*}
    As we have done in the proof of Lemma  \ref{lem:Poisson_Brackets_singles}, it is sufficient to use linearity in each of the sums in order to get 48 terms that can be rearranged as in the statement.
\end{proof}

This result is particularly useful, since in our applications we already know the terms appearing on the first row of the RHS and some of the other terms will vanish trivially.

\subsection{Sum of four terms}
\label{s: brackets triple_int}
    Similarly we can consider what happens when we compute the brackets of constraints with three matter/gauge fields. We assume that there is no triple interaction as it is the case for the standard model. Using the same notation as in the previous section, we want to compute the brackets between 
    \begin{align*}
        X^{f,g,h}&= X + x^f + x^g +x^h + x^{f,g}+ x^{f,h}+ x^{g,h} \\ 
        Y^{f,g,h}&= Y + y^f + y^g +y^h + y^{f,g}+ y^{f,h}+ y^{g,h}
    \end{align*}
    with respect to a symplectic form 
    \begin{align*}
        \Omega_{fgh}= \varpi + \varpi_f +  \varpi_g + \varpi_h.
    \end{align*}
    The first step is to compute the Hamiltonian vector field of $X^{f,g,h}$ with respect to $\Omega_{fgh}$. Call such a vector field $\mathbb{X}^{f,g,h}= \mathbb{X} + \mathbb{x}^{f}+\mathbb{x}^{g}+\mathbb{x}^{h}+\mathbb{x}^{f,g}+\mathbb{x}^{f,h}+\mathbb{x}^{g,h}+\mathbb{x}^{f,g,h}$ where the first seven summands satisfy \eqref{e:Hamiltonian_VF_single} and \eqref{e:Hamiltonian_VF_interaction}. Then it can be shown that $\mathbb{x}^{f,g,h}$ must satisfy 
    \begin{align*}
    \iota_{\mathbb{x}^{f,g,h}}\Omega_{fgh}= 
    - \iota_{\mathbb{x}^{f,g}}\varpi_{h}
    - \iota_{\mathbb{x}^{g,h}}\varpi_{f}
    - \iota_{\mathbb{x}^{f,h}}\varpi_{g}
    \end{align*}
    The following result will simplify the computation of the constraints.
    \begin{theorem}\label{thm:Poisson_brackets_four}
        In the setting above, suppose that $\iota_{\mathbb{x}^{f,g}}\varpi_{h}=\iota_{\mathbb{x}^{g,h}}\varpi_{f}= \iota_{\mathbb{x}^{f,h}}\varpi_{g}=0$. Then we have the following formula:
        \begin{align*}
    \{X^{f, g, h},Y^{f, g, h}\}_{fgh}&=\{X^{f, g},Y^{f, g}\}_{fg}+\{X^{f, h},Y^{f, h}\}_{fh}+\{X^{g, h},Y^{g, h}\}_{gh} \\
    & \phantom{=} - \{X^f, Y^f\}_f -\{X^g, Y^g\}_g -\{X^h, Y^h\}_h+\{X,Y\}\\
    & \phantom{=} + \sum^{cycl}_{ijk=fgh}  (\iota_{\mathbb{x}^{i,j}}\iota_{\mathbb{y}^{k}}+\iota_{\mathbb{y}^{i,j}}\iota_{\mathbb{x}^{k}})(\varpi + \varpi_i + \varpi_j)\\
    &\phantom{=} + \sum^{cycl}_{ijk=fgh} (\iota_{\mathbb{x}^{i,j}}\iota_{\mathbb{y}^{i,k}}+\iota_{\mathbb{y}^{i,j}}\iota_{\mathbb{x}^{i,k}})(\varpi + \varpi_i) + \sum^{cycl}_{ijk=fgh} (\iota_{\mathbb{x}^{i}}\iota_{\mathbb{y}^{j}}+\iota_{\mathbb{y}^{i}}\iota_{\mathbb{x}^{j}})\varpi_k
\end{align*}
where the sums are cyclic sums over the indices $f,g,h$.
    \end{theorem}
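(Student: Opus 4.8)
The plan is to follow exactly the computational strategy already used in the proofs of Lemma~\ref{lem:Poisson_Brackets_singles} and Theorem~\ref{thm:Poisson_Brackets_couples}: expand the Poisson bracket $\{X^{f,g,h},Y^{f,g,h}\}_{fgh}=\iota_{\mathbb{X}^{f,g,h}}\iota_{\mathbb{Y}^{f,g,h}}\Omega_{fgh}$ by bilinearity in all the sums, and then reassemble the resulting terms into the lower-order brackets appearing on the right-hand side. Since $\mathbb{X}^{f,g,h}$ has eight summands, $\mathbb{Y}^{f,g,h}$ has eight summands, and $\Omega_{fgh}$ has four, a brute-force expansion gives $8\times 8\times 4 = 256$ terms; the hypothesis $\iota_{\mathbb{x}^{f,g}}\varpi_h = \iota_{\mathbb{x}^{g,h}}\varpi_f = \iota_{\mathbb{x}^{f,h}}\varpi_g = 0$ (together with the relation it forces, namely $\iota_{\mathbb{x}^{f,g,h}}\Omega_{fgh}=0$, so that one may take $\mathbb{x}^{f,g,h}=0$) kills a large fraction of these outright. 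The art is not in the expansion but in the bookkeeping of the regrouping, and the cleanest way to organize it is by an inclusion–exclusion argument over the subsets $\{f\},\{g\},\{h\},\{f,g\},\{f,h\},\{g,h\}$ of field labels, which is precisely the combinatorial pattern visible in the alternating signs of the first two rows of the claimed formula.

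Concretely, I would proceed in three steps. \textbf{Step 1: record the Hamiltonian vector fields.} Use \eqref{e:Hamiltonian_VF_single} and \eqref{e:Hamiltonian_VF_interaction} for the seven lower components, and verify the displayed identity $\iota_{\mathbb{x}^{f,g,h}}\Omega_{fgh}=-\iota_{\mathbb{x}^{f,g}}\varpi_h-\iota_{\mathbb{x}^{g,h}}\varpi_f-\iota_{\mathbb{x}^{f,h}}\varpi_g$ by subtracting from $\delta X^{f,g,h}$ the contraction of $\Omega_{fgh}$ with the first seven components; under the standing hypothesis the right-hand side vanishes, so $\mathbb{x}^{f,g,h}$ contributes nothing and may be dropped from $\mathbb{X}^{f,g,h}$ and $\mathbb{Y}^{f,g,h}$. \textbf{Step 2: assemble the lower-order brackets.} Observe that each bracket $\{X^{S},Y^{S}\}_{S}$ for a subset $S\subseteq\{f,g,h\}$ is, by Theorem~\ref{thm:Poisson_Brackets_couples} (or Lemma~\ref{lem:Poisson_Brackets_singles}, or trivially for $S=\emptyset$), itself a contraction of $\Omega_{S}=\varpi+\sum_{i\in S}\varpi_i$ with the truncated vector fields $\mathbb{X}^{S}=\mathbb{X}+\sum_{\text{parts}\subseteq S}\mathbb{x}^{\text{part}}$ and the analogue for $Y$. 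So the alternating sum $\sum_{S}(-1)^{?}\{X^S,Y^S\}_S$ on the first two rows is itself an explicit sum of contractions; the combinatorial claim is that, after this alternating sum, exactly the "cross terms" listed in the last two rows survive. \textbf{Step 3: match term by term.} Partition all $256$ terms of the full expansion of $\{X^{f,g,h},Y^{f,g,h}\}_{fgh}$ according to (a) which of $\varpi,\varpi_f,\varpi_g,\varpi_h$ is the two-form factor and (b) which labelled components of $\mathbb{X}^{f,g,h}$ and $\mathbb{Y}^{f,g,h}$ appear; then check that the multiset of terms coincides with the multiset produced by (first two rows) $+$ (last two rows). This is most efficiently done by noting that a term $\iota_{\mathbb{a}}\iota_{\mathbb{b}}\varpi_k$ (with $\mathbb{a}$ a component carrying label set $I$, $\mathbb{b}$ a component carrying label set $J$) appears in $\{X^S,Y^S\}_S$ precisely when $I\cup J\cup\{k\}\subseteq S$, so its net coefficient in the alternating sum over $S$ is $\sum_{S\supseteq I\cup J\cup\{k\}}(-1)^{|\{f,g,h\}\setminus S|}$ times a sign, which by inclusion–exclusion is $0$ unless $I\cup J\cup\{k\}=\{f,g,h\}$ — and the terms with $I\cup J\cup\{k\}=\{f,g,h\}$ are exactly the three cyclic families written in the statement (where $\varpi$ counts as $\varpi_\emptyset$, i.e.\ $k$ ranges over $\{f,g,h,\emptyset\}$ with the obvious reading).

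\textbf{The main obstacle} is purely organizational: making the inclusion–exclusion cancellation watertight for the terms involving $\varpi$ (the "$k=\emptyset$" case), where one must be careful that $\iota_{\mathbb{X}}\iota_{\mathbb{Y}}\varpi=\{X,Y\}$ and that the lower components $\mathbb{x}^{i}$, $\mathbb{x}^{i,j}$ contract against $\varpi$ in a controlled way, and in tracking the signs produced by antisymmetry of $\Omega_{fgh}$ versus the signs produced by the alternating sum over subsets, so that the three cyclic-sum families come out with the exact coefficients and the exact two-form arguments $(\varpi+\varpi_i+\varpi_j)$, $(\varpi+\varpi_i)$, $\varpi_k$ shown. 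I expect no conceptual difficulty — every ingredient is already in Theorem~\ref{thm:Poisson_Brackets_couples} — but the cleanest write-up will state the inclusion–exclusion lemma abstractly first and then invoke it, rather than attempting to display all $256$ terms.
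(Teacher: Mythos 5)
Your proposal is correct, but it takes a genuinely different route from the paper's. The paper proves the four-term formula by reducing it to the already-established three-term case: it groups two fields into one, setting $\varpi_a=\varpi_f+\varpi_g$, $\varpi_b=\varpi_h$, $x^a=x^f+x^g+x^{f,g}$, $x^b=x^h$ and $x^{a,b}=x^{f,h}+x^{g,h}$, notes that the hypothesis forces $\mathbb{x}^{f,g,h}=0$ and hence $\mathbb{x}^{a,b}=\mathbb{x}^{f,h}+\mathbb{x}^{g,h}$, and then applies Theorem \ref{thm:Poisson_Brackets_couples} to the pair $(a,b)$ and regroups terms. You instead expand $\iota_{\mathbb{X}^{f,g,h}}\iota_{\mathbb{Y}^{f,g,h}}\Omega_{fgh}$ from scratch and control the cancellation against the alternating sum of lower brackets by an inclusion--exclusion identity over label subsets: a term $\iota_{\mathbb{x}^{I}}\iota_{\mathbb{y}^{J}}\varpi_k$ survives the alternating sum with nonzero coefficient only when $I\cup J\cup\{k\}=\{f,g,h\}$. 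Both arguments rest on the same two facts --- the vanishing of $\mathbb{x}^{f,g,h}$ and the consistency of the component vector fields across the truncated symplectic forms --- and both end in the same term-matching; your version makes the combinatorics transparent and generalizes immediately to $n$ fields with only pairwise interactions, whereas the paper's reduction is shorter to state but buries the same bookkeeping in ``carefully expanding and summing the equal terms.'' One point to make explicit in your Step 3: among the terms whose label set is all of $\{f,g,h\}$, several (e.g.\ $\iota_{\mathbb{x}^{i,j}}\iota_{\mathbb{y}^{k}}\varpi_k$ or $\iota_{\mathbb{x}^{i,j}}\iota_{\mathbb{Y}}\varpi_k$) do not appear in the statement's cyclic sums; they vanish precisely because of the hypothesis $\iota_{\mathbb{x}^{i,j}}\varpi_k=0$, which is why the two-form arguments read $\varpi+\varpi_i+\varpi_j$, $\varpi+\varpi_i$ and $\varpi_k$ rather than the full $\Omega_{fgh}$. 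You allude to this when you say the hypothesis ``kills a large fraction'' of the $256$ terms, but it deserves a dedicated line, since it is the only place after Step 1 where the hypothesis is actually used.
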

\begin{proof}
    In order to prove the theorem we can resort to Theorem \ref{thm:Poisson_Brackets_couples} by defining $\varpi_a= \varpi_f +  \varpi_g$, $\varpi_b=\varpi_h$, $x^a= x^f + x^g + x^{f,g}$, $x^b=x^h$ and $x^{a,b}= x^{f,h}+ x^{g,h}$.
    Using the hypothesis, we get that $\mathbb{x}^{f,g,h}=0$ and hence $\mathbb{x}^{a,b}=\mathbb{x}^{g,h}+\mathbb{x}^{f,h}$. Then the formula can be proved by carefully expanding the term in the provided formula and summing the equal terms.
\end{proof}

\subsection{Results for the BFV brackets}\label{s:BFV_brackets_theorem}

Let us now prove another useful technical result for the proofs about BFV structures. 
Suppose that we have a BFV structure $(\mathcal{F}, \mathcal{S}, \varpi_{BFV})$ and let $\mathcal{S}= S_0+ S_1$ where $S_0= \sum_i X_i$ is the sum of the constraints of the theory, $S_1$ is the part of the BFV action linear in the antifields.
Let also $\varpi_{BFV}= \varpi_c + \varpi_b$ where
and $\varpi_c$ is the classical boundary symplectic form and $\varpi_b$ depends only on ghosts and antifields.

\begin{theorem}\label{thm:BFV_computation}
Let $(\mathcal{F}, \mathcal{S}, \varpi_{BFV})$ a BFV structure and let us denote $S_0$ the part of $\mathcal{S}$ linear in the ghost fields and $S_1$ the part linear in the antifields. Let $(\widetilde{\mathcal{F}}, \widetilde{\varpi})=(\mathcal{F}\oplus\mathcal{F}_2, \varpi+\varpi_2)$ be a symplectic space, $s$ be some functional on $\widetilde{\mathcal{F}}$ linear in the ghosts and $Q=Q_0+Q_1+q_0+q_1$ be a vector field such that
\begin{align*}
    \iota_{Q_0}\varpi_{BFV}&= \delta S_0, &  \iota_{Q_1}\varpi_{BFV}&= \delta S_1,\\ 
    \iota_{q_0}(\varpi_{BFV}+\varpi_2)&= \delta s - \iota_{Q_0} \varpi_2, & \iota_{q_1}(\varpi_{BFV}+\varpi_2)&= - \iota_{Q_1} \varpi_2.
\end{align*}
 Then $\{\mathcal{S}+s, \mathcal{S}+s\}=0$ if
    \begin{align}\label{e:computation_BFV1}
 \iota_{Q_0+q_0}\iota_{Q_0+q_0}(\varpi_c+\varpi_2)+2\iota_{Q_0+q_0}\iota_{Q_1+q_1}\varpi_b=0,\\
  \iota_{Q_1}\iota_{q_0}\varpi_c=0. \label{e:computation_BFV2}
 \end{align}
\end{theorem}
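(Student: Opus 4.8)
The plan is to expand the BFV master equation $\{\mathcal{S}+s,\mathcal{S}+s\}$ with respect to the symplectic form $\widetilde\varpi=\varpi_{BFV}+\varpi_2=\varpi_c+\varpi_b+\varpi_2$, using the Hamiltonian vector field $Q+q=(Q_0+Q_1)+(q_0+q_1)$ that by hypothesis generates $\mathcal{S}+s$. Concretely, $\{\mathcal{S}+s,\mathcal{S}+s\}=\iota_{Q+q}\iota_{Q+q}\widetilde\varpi$, and I would split this into the ``classical+matter'' part $\iota_{Q+q}\iota_{Q+q}(\varpi_c+\varpi_2)$ and the ``ghost/antifield'' part $\iota_{Q+q}\iota_{Q+q}\varpi_b$. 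For the latter, the key structural input is that $\varpi_b$ depends only on ghosts and antifields, so $\iota_{Q_0+q_0}\iota_{Q_0+q_0}\varpi_b$ and $\iota_{Q_1+q_1}\iota_{Q_1+q_1}\varpi_b$ vanish by a degree/parity count (the components of $Q_0+q_0$ along ghost-and-antifield directions pair trivially, and similarly for the antifield-linear part $Q_1+q_1$), leaving only the cross term $2\,\iota_{Q_0+q_0}\iota_{Q_1+q_1}\varpi_b$. This is exactly the second summand of \eqref{e:computation_BFV1}.

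Next I would handle $\iota_{Q+q}\iota_{Q+q}(\varpi_c+\varpi_2)$. Writing $Q+q=(Q_0+q_0)+(Q_1+q_1)$ and expanding bilinearly gives three pieces: $\iota_{Q_0+q_0}\iota_{Q_0+q_0}(\varpi_c+\varpi_2)$, which is the first summand of \eqref{e:computation_BFV1}; the mixed term $2\,\iota_{Q_0+q_0}\iota_{Q_1+q_1}(\varpi_c+\varpi_2)$; and $\iota_{Q_1+q_1}\iota_{Q_1+q_1}(\varpi_c+\varpi_2)$. I expect the last of these to vanish because $Q_1$ and $q_1$ are the Hamiltonian vector fields of the antifield-linear part, hence point in antifield directions on which $\varpi_c+\varpi_2$ (a purely classical/matter form) is trivial — again a degree argument. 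For the mixed term, I would use the defining equations: $\iota_{Q_1}\varpi_{BFV}=\delta S_1$ and $\iota_{q_1}(\varpi_{BFV}+\varpi_2)=-\iota_{Q_1}\varpi_2$, so that $\iota_{Q_1+q_1}(\varpi_{BFV}+\varpi_2)=\delta S_1$, and then contract with $Q_0+q_0$ to rewrite $\iota_{Q_0+q_0}\iota_{Q_1+q_1}(\varpi_c+\varpi_2)$ in terms of $\iota_{Q_0+q_0}\delta S_1$ minus the $\varpi_b$ contribution that has already been accounted for. The surviving obstruction from this manipulation should collapse to $\iota_{Q_1}\iota_{q_0}\varpi_c$, i.e. condition \eqref{e:computation_BFV2}; the other would-be terms (such as $\iota_{Q_1}\iota_{Q_0}\varpi_c$ and $\iota_{q_1}\iota_{q_0}\varpi_2$) vanish because $(\mathcal{F},\mathcal{S},\varpi_{BFV})$ is already a BFV structure (so $\{\mathcal{S},\mathcal{S}\}=0$) and because $s$ and $q_0,q_1$ are built to be compatible with $\varpi_2$ by the third and fourth defining equations.

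Assembling: modulo the terms shown to vanish by degree counting and by the hypothesis that $(\mathcal{F},\mathcal{S},\varpi_{BFV})$ is BFV, the full bracket $\{\mathcal{S}+s,\mathcal{S}+s\}$ equals $\iota_{Q_0+q_0}\iota_{Q_0+q_0}(\varpi_c+\varpi_2)+2\,\iota_{Q_0+q_0}\iota_{Q_1+q_1}\varpi_b+2\,\iota_{Q_1}\iota_{q_0}\varpi_c$ (the last up to a sign/factor to be fixed), so that \eqref{e:computation_BFV1} and \eqref{e:computation_BFV2} together force it to zero, which is the claim.

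The main obstacle I anticipate is bookkeeping: correctly tracking which contractions vanish for degree reasons versus which vanish because of the ambient BFV identity, and getting the cross-term manipulation to land precisely on $\iota_{Q_1}\iota_{q_0}\varpi_c$ rather than on a larger expression — in particular making sure the terms involving $\delta S_0$, $\delta S_1$ and $\delta s$ recombine via the already-known relation $\{\mathcal{S},\mathcal{S}\}=0$ and the defining equations for $q_0,q_1$, so that no extra hypotheses beyond \eqref{e:computation_BFV1}–\eqref{e:computation_BFV2} are needed. A secondary subtlety is sign/grading conventions for the odd symplectic form $\varpi_{BFV}$ and the odd vector fields, which I would fix once at the start and carry consistently.
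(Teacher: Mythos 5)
Your overall architecture matches the paper's: expand $\{\mathcal{S}+s,\mathcal{S}+s\}=\iota_{Q+q}\iota_{Q+q}(\varpi_c+\varpi_b+\varpi_2)$, isolate the two hypothesis terms, and kill the remainder using the defining equations for $q_0,q_1$ and the ambient CME. But several of your individual vanishing claims are wrong as stated. First, $\iota_{Q_1+q_1}\iota_{Q_1+q_1}\varpi_b$ does \emph{not} vanish by a degree/parity count: $Q_1$ is the Hamiltonian vector field of the antifield-linear part $S_1$, so it has a nonzero component along the ghost directions (coming from $\delta S_1/\delta c^\dagger$) \emph{and} along the antifield directions (from $\delta S_1/\delta c$), and $\varpi_b$ pairs exactly these two; $\iota_{Q_1}\iota_{Q_1}\varpi_b$ is generically nonzero. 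Second, $\iota_{Q_1+q_1}\iota_{Q_1+q_1}(\varpi_c+\varpi_2)$ does not vanish because ``$Q_1$ points in antifield directions'': $S_1$ contains classical fields (e.g.\ the term $(\omega-\omega_0)_\nu c^\dagger$ in the gravity BFV action), so $Q_1$ has nonzero components along $e$ and $\omega$ and $\iota_{Q_1}\iota_{Q_1}\varpi_c$ is not zero for degree reasons. Third, $\iota_{Q_0}\iota_{Q_1}\varpi_c$, which you list among the terms vanishing ``because $\{\mathcal{S},\mathcal{S}\}=0$'', does not vanish on its own.

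What actually eliminates these terms---and what is missing from your proposal---is the decomposition of the ambient CME $\{\mathcal{S},\mathcal{S}\}=0$ by antifield number into three separate identities: $\iota_{Q_0}\iota_{Q_0}\varpi_c+2\iota_{Q_0}\iota_{Q_1}\varpi_b=0$, $2\iota_{Q_0}\iota_{Q_1}\varpi_c+\iota_{Q_1}\iota_{Q_1}\varpi_b=0$, and $\iota_{Q_1}\iota_{Q_1}\varpi_c=0$. The second identity cancels $2\iota_{Q_0}\iota_{Q_1}\varpi_c$ against $\iota_{Q_1}\iota_{Q_1}\varpi_b$ (neither vanishes alone), and the third kills the antifield-quadratic piece of $\iota_{Q_1+q_1}\iota_{Q_1+q_1}\varpi_c$; the $q_1$-contributions drop out because $\iota_{q_1}(\varpi_c+\varpi_b)=0$ and $\iota_{Q_1+q_1}\varpi_2=0$. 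In your write-up the two offending terms are each dropped for an invalid reason; their sum happens to be zero, so you land on the correct residue $2\iota_{q_0}\iota_{Q_1}\varpi_c$, but the argument as given is not a proof. Grading the CME by antifield number and substituting the resulting identities closes the gap and reproduces the paper's computation.
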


Note that the last equation defining $q_1$ can be simplified as $\iota_{q_1}\varpi_2= - \iota_{Q_1} \varpi_2$ and $\iota_{q_1}(\varpi_c+\varpi_b)=0.$\footnote{Note that this is a possible solution of the equation defining $q_1$, and since $\varpi_{BFV}+ \varpi_2$ is symplectic, this is also the unique solution to this equation.}
\begin{proof}
From the definition of Poisson bracket, we have
\begin{align*}
\{\mathcal{S},\mathcal{S}\}=\iota_{Q_0}\iota_{Q_0}\varpi_c+2\iota_{Q_0}\iota_{Q_1}\varpi_c+\iota_{Q_0}\iota_{Q_0}\varpi_b+2\iota_{Q_0}\iota_{Q_1}\varpi_b+\iota_{Q_1}\iota_{Q_1}\varpi_c+\iota_{Q_1}\iota_{Q_1}\varpi_b=0.
\end{align*}
Note that we automatically have $\iota_{Q_0}\iota_{Q_0}\varpi_b=0$ since $Q_0$ has no antifields terms.
Now, the equation $\{\mathcal{S},\mathcal{S}\}=0$ can be split in three smaller ones dividing it by the number of antifields appearing in each term:
  \begin{align*}
  \iota_{Q_0}\iota_{Q_0}\varpi_c+2\iota_{Q_0}\iota_{Q_1}\varpi_b=0.\\
    2\iota_{Q_0}\iota_{Q_1}\varpi_c+\iota_{Q_1}\iota_{Q_1}\varpi_b=0,\\
\iota_{Q_1}\iota_{Q_1}\varpi_c=0
\end{align*}
respectively for terms with no antifields, linear in antifields and quadratic in the antifields.

Let us now expand $\{\mathcal{S}+s, \mathcal{S}+s\}:$
 \begin{align*}
     \{\mathcal{S}+s, \mathcal{S}+s\}&=\iota_{Q_0+q_0}\iota_{Q_0+q_0}(\varpi_c+\varpi_2)+2\iota_{Q_0+q_0}\iota_{Q_1+q_1}(\varpi_c+\varpi_2)+\iota_{Q_0+q_0}\iota_{Q_0+q_0}\varpi_b\\
     &\phantom{=}+2\iota_{Q_0+q_0}\iota_{Q_1+q_1}\varpi_b+\iota_{Q_1+q_1}\iota_{Q_1+q_1}(\varpi_c+\varpi_2)+\iota_{Q_1+q_1}\iota_{Q_1+q_1}\varpi_b.
\end{align*}
Using the hypothesis \eqref{e:computation_BFV1}
and $\iota_{q_1}(\varpi_c+\varpi_b)=0$, from the definition of $q_1$, we get 
 \begin{align*}     
     \{\mathcal{S}+s, \mathcal{S}+s\}&=2\iota_{Q_0+q_0}\iota_{Q_1+q_1}(\varpi_c+\varpi_2)+\iota_{Q_0+q_0}\iota_{Q_0+q_0}\varpi_b\\
     &\phantom{=}+\iota_{Q_1+q_1}\iota_{Q_1+q_1}\varpi_2+\iota_{Q_1}\iota_{Q_1}\varpi_b
 \end{align*}

Once again we have $\iota_{Q_0+q_0}\iota_{Q_0+q_0}\varpi_b=0$ since both $Q_0$ and $q_0$ have only nonzero components in the direction of the antifields. Furthermore we have 
\begin{align*}
    \iota_{Q_1+q_1}\iota_{Q_1+q_1}\varpi_2= 0
\end{align*}
since $\iota_{Q_1+q_1}\varpi_2=0$. Finally, using again these relations we get
\begin{align*}
    2\iota_{Q_0+q_0}\iota_{Q_1+q_1}(\varpi_c+\varpi_2)+\iota_{Q_1}\iota_{Q_1}\varpi_b&= 2 \iota_{Q_0}\iota_{Q_1+q_1}\varpi_c +2\iota_{Q_0}\iota_{Q_1+q_1}\varpi_2\\
    &\phantom{=}+ 2 \iota_{q_0}\iota_{Q_1+q_1}\varpi_c +2\iota_{q_0}\iota_{Q_1+q_1}\varpi_2 +\iota_{Q_1}\iota_{Q_1}\varpi_b\\
    &= 2 \iota_{Q_0}\iota_{Q_1}\varpi_c + 2 \iota_{q_0}\iota_{Q_1}\varpi_c +\iota_{Q_1}\iota_{Q_1}\varpi_b\\
    &=  2 \iota_{q_0}\iota_{Q_1}\varpi_c.
\end{align*}
Hence we conclude using \eqref{e:computation_BFV2}.
\end{proof}

\begin{remark}\label{rem:BFVandbracketsofconstraints}
    This result will be particularly useful in computing the Classical Master Equation starting from a known BFV action and adding some terms corresponding to the additional parts in the constraints. Indeed, as we will see, the interaction parts of the constraints do not modify the brackets of the constraints themselves (see Theorems \ref{thm:first-class-constraints_YMS}, \ref{thm:first-class-constraints_Higgs} and \ref{thm:first-class-constraints_Yukawa}) and this particular condition is exactly encoded in \eqref{e:computation_BFV1}. Hence we will have only to check \eqref{e:computation_BFV2} which will be almost straightforward.
\end{remark}

\section{Previous results}\label{s:previousResults}
The description of the reduced phase space of the standard model coupled to gravity is based on some building blocks, namely the description of the RPS of Gravity in the vacuum and the RPS of gravity together with a scalar field, a Yang--Mills field and a spinor field respectively. In this section we recall the results contained in the articles \cite{CS2019, CCS2020, CCF2022}. We start with the description of gravity in vacuum.

\subsection{Gravity}\label{s:previousResults_gravity}
In this article we use the coframe formulation of General Relativity. In this formulation, instead of using the metric,
the main fields of the theory are a coframe field and a principal connection. We refer to \cite{T2019} for more detail about this theory and we present here only a summary.

The geometrical set-up is the following:
\begin{itemize}
    \item[--] An $4$-dimensional differentiable oriented pseudo-Riemannian manifold $M$ with boundary $\Sigma$;
    \item[--] A principal GL$(4,\mathbb R)$-bundle $LM$ called the \emph{frame bundle}, which can be reduced to a principal SO$(3,1)$-bundle $P$;
    \item[--] An associated vector bundle $\mathcal V\coloneqq P\times_\rho V$ called the \emph{Minkowski bundle}, where $V$ is an $4$-dimensional real pseudo-Riemannian vector space with reference metric $\eta=$diag$(1,...,-1)$ and $\rho\colon$SO$(3,1)\to$Aut$(V)$ is the fundamental representation of SO$(3,1)$.
\end{itemize}

A non-canonical identification exists between the fibers of the tangent bundle and those of the vector bundle $\mathcal V$ equipped with the reference metric $\eta$. This identification is referred to as the \emph{vielbein} or coframe, defined as the vector bundle isomorphism $e\colon TM\to\mathcal V$, which can be represented as an element of $\Omega^1_{nd}(M,\mathcal V)$, i.e. a one form on $M$ with values in $\mathcal{V}$, where the subscript $nd$ signifies the non-degenerate nature of the isomorphism. Moreover we can recover the space-time metric as $g=e^*\eta$.
\begin{remark}
    In order to shorten the notation, we will write $\Omega^{i,j}\coloneqq\Omega^i(M,\wedge^j\mathcal V)$ and $\Omega^{i,j}_\partial\coloneqq\Omega^i(\Sigma,\wedge^j\mathcal V_\Sigma)$, where the last object represents bundle-valued differential forms and will be properly defined later.
\end{remark}
The fields of the theory are therefore the coframe $e\in \Omega^1_{nd}(M,\mathcal V)$ and  the pull-back of the principal connection $\omega\in \mathcal A(P)$ via sections of $P$.\\
The action functional of the theory is 
    \begin{align}\label{e:actionPCT}
    S=\int_{M}\frac{1}{2}e^{2} F_\omega+\frac1{4!}\Lambda e^4,
    \end{align}
and its  Euler--Lagrange equations are
\label{eq:el}
\begin{align}
e F_\omega-\frac1{3!}\Lambda e^{3}&=0\label{eq:PCe}\\
e d_\omega e&= 0,\label{eq:PComega}
\end{align}
where we have omitted the wedge product.\footnote{Note that the quantities appearing in \eqref{e:actionPCT} form, up to multiplication by $\sqrt{|det(\eta)|}$ which is omitted for ease of notation, densities that can be integrated. For more details see \cite[footnotes 5 and 6]{CCS2020}.} 
Finally, the object $\Lambda\in\Omega^{0,0}$ is the so called \emph{cosmological constant}.

The fields $(e,\omega)\in\Omega^{1,1}_{nd}\times \mathcal A(P)$ and the action functional $S_{PC}$ define what we call the Palatini--Cartan theory.

The boundary structure of the theory is obtained by means of the KT construction. We recall here only the results and we refer to \cite{CCS2020} for all the details.
The output of the KT construction is a symplectic space, the geometric phase space, built out of the boundary fields, which are obtained as a pull-back of the bulk fields $(e,\omega)$ via the inclusion map $\iota\colon\Sigma\to M$, where $\Sigma$ is the boundary of $M$. We denote the restrictions of the fields to the boundary $\Sigma$ in the very same way of the bulk fields and denote by $\mathcal{V}_\Sigma$ the restriction $\mathcal{V}|_\Sigma$. Assuming $\mathcal{V}_\Sigma$ to be isomorphic to $T\Sigma\oplus\underline{\mathbb{R}}$, we fix a nowhere vanishing section $\epsilon_n$ of the summand $\underline{\mathbb{R}}$ and define $\Omega^1_{e_n}(\Sigma,\mathcal V_{\Sigma})$ to be the space of bundle maps $e\colon T\Sigma\to\mathcal{V}_\Sigma$ such that $eee e_n\neq 0$ everywhere.\footnote{Note that this is an equivalent condition to that the three components of $e$ together with $e_n$ form a basis of $\mathcal{V}_\Sigma$. Note as well that in \cite{CCS2020} $\Omega_{e_n}^1(\Sigma, \mathcal{V}_\Sigma)$ was denoted $\Omega_\text{nd}^1(\Sigma, \mathcal{V}_\Sigma)$.} The boundary action can be recovered by $g^\partial\coloneqq\iota^*g$, i.e. by pulling back along the inclusion $\iota$ the bulk metric. Equivalently we have $g^\partial = \eta (e,e)$.
The geometric phase space for gravity in the PC formulation for a space or time-like boundary is given by the bundle
\begin{align*}
    F^{\partial} \rightarrow \Omega^1_{e_n}(\Sigma,\mathcal{V}_{\Sigma}) 
\end{align*}
with fiber $\mathcal{A}_{red}(\Sigma)$ where the fields $\omega \in \mathcal{A}_{red}(\Sigma)$ satisfy the structural constraint
\begin{align}\label{e:structural_constraint_grav}
    e_n d_{\omega} e = e \sigma
\end{align}
for some $\sigma\in \Omega^{1}(\Sigma, \mathcal{V}_{\Sigma})$.
The corresponding symplectic form on  $F^{\partial} $ is given by 
\begin{align}\label{e:sympl_form_grav}
    \varpi = \int_{\Sigma} e \delta e \delta \omega .
\end{align}

For a light-like boundary the description is slightly different and will be recalled in Section \ref{s:degenerate_gravity}. From now on, unless otherwise stated we consider only space or time-like boundaries.

On this geometric phase space one can then define some constraints. Let us define
\begin{align}
L_c &= \int_{\Sigma} c e d_{\omega} e \label{e:constraint_gravity1}\\
P_{\xi} &= \int_{\Sigma}  \iota_{\xi} e e F_{\omega} + \iota_{\xi} (\omega-\omega_0) e d_{\omega} e \label{e:constraint_gravity2}\\
H_{\lambda} &= \int_{\Sigma} \lambda e_n \left(eF_\omega + \frac{1}{3!}\Lambda e^3\right) \label{e:constraint_gravity3}
\end{align}
where $c \in\Omega^{0,2}_\partial[1]$, $\xi \in\mathfrak{X}[1](\Sigma)$ and $\lambda\in \Omega^{0,0}_\partial[1]$ are (odd) Lagrange multipliers and the notation $[1]$ denotes that the fields are shifted by 1 in parity and are treated as odd variables.\footnote{Note that this does not mean necessarily that their total parity is odd.}

This set of constraints defines a coisotropic submanifold, as stated by the following theorem.
\begin{theorem}[\cite{CCS2020}] \label{thm:first-class-constraints}
 Let $g^\partial$ be non-degenerate on $\Sigma$. Then, the functionals  $L_c$, $P_{\xi}$,  $H_{\lambda}$ are well defined on ${F}^{\partial}$ and define a coisotropic submanifold  with respect to the symplectic structure $\varpi$. In particular they satisfy the following relations
 \begin{align*}
     \{L_c, L_c\} &= - \frac{1}{2} L_{[c,c]}  & \{L_c, P_{\xi}\}  &=  L_{\mathrm{L}_{\xi}^{\omega_0}c}\\
     \{L_c,  H_{\lambda}\}  &= - P_{X^{(\nu)}} + L_{X^{(\nu)}(\omega - \omega_0)_{\nu}} - H_{X^{(n)}} & \{P_{\xi}, P_{\xi}\}  &=  \frac{1}{2}P_{[\xi, \xi]}- \frac{1}{2}L_{\iota_{\xi}\iota_{\xi}F_{\omega_0}}\\
     \{P_{\xi},H_{\lambda}\}  &=  P_{Y^{(\nu)}} -L_{ Y^{(\nu)} (\omega - \omega_0)_{\nu}} +H_{ Y^{(n)}} & \{H_{\lambda},H_{\lambda}\}  &=0
 \end{align*}
where $X= [c, \lambda e_n ]$, $Y = \mathrm{L}_{\xi}^{\omega_0} (\lambda e_n)$ and $Z^{(\nu)}$, $Z^{(n)}$ are the components of $Z=X,Y$ in the basis $(e_{\nu}, e_n)$.	
\end{theorem}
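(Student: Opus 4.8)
The plan is to run the Kijowski--Tulczyjew machinery on the three functionals \eqref{e:constraint_gravity1}--\eqref{e:constraint_gravity3}: first establish the elementary linear-algebra facts about the coframe that make the reduced geometric phase space well-behaved, then compute the Hamiltonian vector field of each constraint with respect to $\varpi$, and finally evaluate the pairwise contractions. The preliminary step is to record the \emph{coframe lemmas} for a three-dimensional boundary with non-degenerate $g^\partial$: the wedge maps $\alpha\mapsto e\wedge\alpha$ from $\Omega^{i,j}_\partial$ to $\Omega^{i+1,j+1}_\partial$ are injective, resp.\ surjective, in the relevant bidegrees, so that equations of the form $e\wedge v=\beta$ can be solved (uniquely, where needed) for $v$. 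These are exactly the statements that make $\varpi$ in \eqref{e:sympl_form_grav} weakly non-degenerate on the constrained fibre $\mathcal{A}_{red}(\Sigma)$ and that allow one to extract Hamiltonian vector fields despite the structural constraint \eqref{e:structural_constraint_grav} restricting the admissible variations $\delta\omega$.

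The first part of the statement, well-definedness on $F^\partial$, I would dispatch by checking that each integrand descends to the reduced bundle: that it does not depend on the auxiliary choice of section $\epsilon_n$ used to split $\mathcal{V}_\Sigma$, and, for $P_\xi$, that replacing the reference connection $\omega_0$ by another flat-gauge choice shifts $P_\xi$ only by a multiple of $L_{(\cdot)}$ plus a term that vanishes on the common zero locus -- this uses that $\iota_\xi(\omega-\omega_0)$ transforms as a section together with an integration by parts on the closed manifold $\Sigma$. For $H_\lambda$ one uses \eqref{e:structural_constraint_grav} to see that $\lambda e_n\big(eF_\omega+\frac{1}{3!}\Lambda e^3\big)$ is insensitive to the remaining ambiguity in $\omega$.

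Next, the Hamiltonian vector fields. Writing $\mathbb{X}=(\mathbb{X}_e,\mathbb{X}_\omega)$ and using $\varpi=\int_\Sigma e\,\delta e\,\delta\omega$, the defining equation $\iota_{\mathbb{X}}\varpi=\delta X$ takes the schematic form $\int_\Sigma\big(e\,\mathbb{X}_e\,\delta\omega-e\,\delta e\,\mathbb{X}_\omega\big)=\delta X$; one expands $\delta X$ by the Leibniz rule for $d_\omega$ (with $\delta F_\omega=d_\omega\delta\omega$ and $\delta(d_\omega e)=d_\omega\delta e+[\delta\omega,e]$), moves every $d_\omega$ onto the variations by integrating by parts (no boundary contributions since $\partial\Sigma=\emptyset$), and matches coefficients of $\delta e$ and $\delta\omega$. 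Here one must invert $e\wedge(\cdot)$ to read off $\mathbb{X}_e$ from the $\delta\omega$-coefficient, and one can only determine $\mathbb{X}_\omega$ modulo the subspace of variations that preserve \eqref{e:structural_constraint_grav} -- both points handled by the coframe lemmas. The outcome is the expected geometric picture: $L_c$ generates the infinitesimal Lorentz transformation $\mathbb{X}_e=[c,e]$, $\mathbb{X}_\omega=d_\omega c$; $P_\xi$ generates a Lie derivative along $\xi$ together with a compensating rotation; and $H_\lambda$ generates the evolution normal to $\Sigma$ associated with $\lambda e_n$.

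Finally the brackets, computed as $\{X,Y\}=\iota_{\mathbb{X}}\,\delta Y$ by substituting the vector fields just found and simplifying with $d_\omega^2=[F_\omega,\cdot\,]$, the Bianchi identity $d_\omega F_\omega=0$, the structural constraint, and once more the coframe lemmas. The relations $\{L_c,L_c\}=-\frac12 L_{[c,c]}$, $\{L_c,P_\xi\}=L_{\mathrm{L}_\xi^{\omega_0}c}$ and $\{H_\lambda,H_\lambda\}=0$ are the quickest: the first two say that $L$ and $P$ represent (a version of) the gauge algebra, and the last that two normal evolutions commute. The delicate ones are $\{P_\xi,P_\xi\}$, $\{L_c,H_\lambda\}$ and $\{P_\xi,H_\lambda\}$: closure holds only with \emph{field-dependent} structure functions (the $X^{(\nu)},X^{(n)},Y^{(\nu)},Y^{(n)}$ of the statement, obtained by decomposing $[c,\lambda e_n]$ and $\mathrm{L}_\xi^{\omega_0}(\lambda e_n)$ in the frame $(e_\nu,e_n)$), and one must carefully carry along the curvature-of-$\omega_0$ term $\iota_\xi\iota_\xi F_{\omega_0}$ that appears in $\{P_\xi,P_\xi\}$ and reorganise the surviving terms into the stated $P_{(\cdot)}-L_{(\cdot)}+H_{(\cdot)}$ combinations. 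I expect this last bookkeeping -- expanding everything in the $(e_\nu,e_n)$ frame and repeatedly using injectivity/surjectivity of $e\wedge(\cdot)$ to rewrite terms, all while staying tangent to the structural-constraint surface -- to be the main obstacle; the conceptual content (gauge $+$ diffeomorphism $+$ Hamiltonian constraints closing among themselves) is exactly what one expects for a generally covariant theory, so the work is in getting the precise coefficients right.
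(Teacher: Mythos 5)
Your proposal is correct and follows essentially the same route as the proof in \cite{CCS2020} that this paper cites: the injectivity/surjectivity properties of the maps $W_1^{\partial,(i,j)}$ (Lemma \ref{lem:strconstr_free}) to control the structural constraint and extract Hamiltonian vector fields, the vector fields $\mathbb{L}_e=[c,e]$, $\mathbb{P}_e=-\mathrm{L}_\xi^{\omega_0}e$, $\mathbb{H}_e=d_\omega(\lambda e_n)+\lambda\sigma$ (with the tangency corrections $\mathbb{V}$) exactly as recalled in Appendix \ref{s:appendix_Hamiltonianvf}, and the brackets evaluated as $\iota_{\mathbb{X}}\delta Y$ with the field-dependent structure functions obtained by decomposing $[c,\lambda e_n]$ and $\mathrm{L}_\xi^{\omega_0}(\lambda e_n)$ in the frame $(e_\nu,e_n)$. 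No gap in the strategy; the remaining work is the bookkeeping you already flag.
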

Hence the RPS of gravity in the PC formulation is given by the reduction of the geometric phase space ${F}^{\partial}$ with respect to the coisotropic submanifold defined by the zero set of the constraints \eqref{e:constraint_gravity1}, \eqref{e:constraint_gravity2} and \eqref{e:constraint_gravity3}.

As announced in the introduction, we can also describe the RPS cohomologically using the BFV formalism. See Section \ref{s:BFV_all} for more details about this approach. 
The corresponding BFV theory for the PC formulation of General Relativity is recalled in the following theorem:
\begin{theorem}[\cite{CCS2020}]\label{thm:BFVaction}
Let $g^\partial$ be non-degenerate on $\Sigma$. Let $\mathcal{F}$ be the bundle
\begin{equation}
\mathcal{F} \longrightarrow \Omega_{e_n}^1(\Sigma, \mathcal{V}_{\Sigma}),
\end{equation}
with local trivialisation on an open $\mathcal{U}_{\Sigma} \subset \Omega_{e_n}^1(\Sigma, \mathcal{V}_{\Sigma})$
\begin{equation}\label{LoctrivF1}
\mathcal{F}\simeq \mathcal{U}_{\Sigma} \times \mathcal{A}_{red}(\Sigma) \oplus T^* \left(\Omega_{\partial}^{0,2}[1]\oplus \mathfrak{X}[1](\Sigma) \oplus C^\infty[1](\Sigma)\right) =: \mathcal{U}_{\Sigma} \times \mathcal{T}_{grav},
\end{equation}
where $e \in \mathcal{U}_{\Sigma}$ and $\omega \in \mathcal{A}_{red}(\Sigma)$ satisfy the modified structural constraint
\begin{align}\label{e:modifiedStructuralConstraint}
    e_n d_{\omega} e + \left(L_\xi^\omega \epsilon_n - [c,\epsilon_n]\right)^{(a)}  c^\dag_a = e \sigma_{BFV}
\end{align}
for some $\sigma_{BFV}\in \Omega^{1}(\Sigma, \mathcal{V}_{\Sigma})$.
Further denoting the ghost fields $c \in\Omega_{\partial}^{0,2}[1]$, $\xi \in\mathfrak{X}[1](\Sigma)$ and $\lambda\in \Omega^{0,0}[1]$ in degree one, $c^\dag\in\Omega_{\partial}^{3,2}[-1]$, $\lambda^\dag\in\Omega_{\partial}^{3,4}[-1]$ and $\xi^\dag\in\Omega_\partial^{1,0}[-1]\otimes\Omega_{\partial}^{3,4}$ in degree minus one, we 
define a symplectic form and an action functional on $\mathcal{F}$ respectively by

\begin{align}
\Omega^{BFV} & = \varpi + \varpi_{ghost} \label{symplectic_form_NC1} \\
\mathcal{S} & = L_c + P_{\xi}+ H_{\lambda} + \mathcal{S}_{ghost}. \label{action_NC1}
\end{align}
where 
\begin{align*}
    \varpi_{ghost}= \int_{\Sigma} & \delta c \delta c^\dag + \delta \lambda \delta \lambda^\dag + \iota_{\delta \xi} \delta \xi^\dag \\
    \mathcal{S}_{ghost}= \int_{\Sigma} &  \frac{1}{2} [c,c] c^{\dag} - \mathrm{L}_{\xi}^{\omega_0} c c^{\dag} +\frac{1}{2}\iota_{\xi}\iota_{\xi}F_{\omega_0}c^{\dag}- \mathrm{L}_{\xi}^{\omega_0} (\lambda e_n)^{(n)}\lambda^\dag - \frac{1}{2}\iota_{[\xi,\xi]}\xi^{\dag} \nonumber\\ 
    & + [c, \lambda e_n ]^{(\nu)}(\xi_\nu^{\dag}- (\omega - \omega_0)_{\nu} c^\dag) + [c, \lambda e_n ]^{(n)}\lambda^\dag - \mathrm{L}_{\xi}^{\omega_0} (\lambda e_n)^{(\nu)}(\xi_{\nu}^{\dag}- (\omega - \omega_0)_{\nu} c^\dag). 
\end{align*}
Then the triple $(\mathcal{F}, \Omega^{BFV}, \mathcal{S})$ defines a BFV structure on $\Sigma$.
\end{theorem}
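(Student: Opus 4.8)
Two things have to be established: that $(\mathcal F,\Omega^{BFV})$ is a symplectic manifold whose symplectic form has ghost degree $0$ and on which the bundle description via the modified structural constraint \eqref{e:modifiedStructuralConstraint} is consistent; and that the ghost-degree $1$ functional $\mathcal S$ of \eqref{action_NC1} satisfies the classical master equation $\{\mathcal S,\mathcal S\}=0$. Non-degeneracy of $\Omega^{BFV}$ then produces the cohomological vector field $Q$ with $\iota_Q\Omega^{BFV}=\delta\mathcal S$ and $Q^2=0$; since the ghost-degree $0$ part of $Q$ reproduces the constraints $L_c,P_\xi,H_\lambda$ while the remaining terms provide a Koszul--Tate resolution of their common zero locus, $H^0(Q)$ is identified with the functions on the coisotropic reduction of Theorem~\ref{thm:first-class-constraints}, which is the content of ``BFV structure'' here. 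I sketch the plan for the two parts.

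For the symplectic part, $\varpi_{ghost}$ is manifestly the canonical symplectic form of the cotangent bundle $T^*(\Omega^{0,2}_\partial[1]\oplus\mathfrak X[1](\Sigma)\oplus C^\infty[1](\Sigma))$, hence closed and fibrewise non-degenerate, and $\varpi$ of \eqref{e:sympl_form_grav} is closed as the two-form produced by the Kijowski--Tulczijew procedure. The content is therefore non-degeneracy of $\Omega^{BFV}=\varpi+\varpi_{ghost}$ after imposing \eqref{e:modifiedStructuralConstraint}: on the unreduced space $\Omega^1_{e_n}(\Sigma,\mathcal V_\Sigma)\times\mathcal A(\Sigma)\times T^*(\mathrm{ghosts})$ the form is degenerate, its kernel lying along $\omega$-variations $\delta\omega$ with $e\,\delta\omega=0$, and one verifies that \eqref{e:modifiedStructuralConstraint} cuts out a slice transverse to this kernel on which $\Omega^{BFV}$ restricts non-degenerately. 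This is the antifield-graded refinement of the linear-algebra argument of \cite{CCS2020} (injectivity of multiplication $\delta e\mapsto e\,\delta e$ and surjectivity of $\delta\omega\mapsto e\,\delta\omega$ between the appropriate spaces of $\mathcal V_\Sigma$-valued forms, when $g^\partial$ is non-degenerate); the extra $c^\dag$-dependent term in \eqref{e:modifiedStructuralConstraint} only shifts the affine equation defining the slice by a ghost-degree $0$ quantity, and so does not affect solvability.

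For the master equation I would first solve $\iota_Q\Omega^{BFV}=\delta\mathcal S$ for the components of $Q$. On the ghosts and antifields this is a bounded computation reproducing the expected transformations ($Qc=\frac12[c,c]+\cdots$, $Q\xi=\frac12[\xi,\xi]$, together with the dual Koszul--Tate differentials of $c^\dag,\lambda^\dag,\xi^\dag$ read off from $\mathcal S_{ghost}$); on $e$ and $\omega$ it produces an infinitesimal local Lorentz transformation plus a diffeomorphism, corrected so as to stay tangent to \eqref{e:modifiedStructuralConstraint}. Writing $\mathcal S=S_0+S_1$ with $S_0=L_c+P_\xi+H_\lambda$ and $S_1=\mathcal S_{ghost}$, I then expand $\{\mathcal S,\mathcal S\}=\iota_Q\iota_Q\Omega^{BFV}$ and sort it by antifield number. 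The antifield number $0$ piece is precisely the statement that the sum of the classical brackets $\{L_c,L_c\}+2\{L_c,P_\xi\}+\cdots$ of Theorem~\ref{thm:first-class-constraints} is cancelled against the linear-in-antifield terms of $\mathcal S_{ghost}$, whose coefficients $\frac12[c,c]$, $\mathrm{L}_\xi^{\omega_0}c$, $\iota_\xi\iota_\xi F_{\omega_0}$, $[c,\lambda e_n]^{(\nu)}$, $[c,\lambda e_n]^{(n)}$ and $\mathrm{L}_\xi^{\omega_0}(\lambda e_n)$ are in bijection with, and equal to, the structure functions on the right-hand sides of those relations. The pieces of higher antifield number reduce to algebraic identities among these structure functions --- graded Jacobi for the bracket on $\wedge^2\mathcal V_\Sigma$, $\mathrm{L}_\xi^{\omega_0}$-equivariance, compatibility of the diffeomorphism action with $[\cdot,\cdot]$, and the Bianchi identity $d_{\omega_0}F_{\omega_0}=0$ --- all of which hold.

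The main obstacle, as always in the coframe formulation, is the structural constraint. Because $\omega$ is not a free field, the identity $\iota_Q\Omega^{BFV}=\delta\mathcal S$ must be solved treating $\delta\omega$ and $\delta\sigma_{BFV}$ as dependent variations, and one must prove existence and uniqueness of $Q$ as well as tangency of $Q$ to $\mathcal F$, i.e.\ that $Q$ preserves \eqref{e:modifiedStructuralConstraint}; this is where non-degeneracy of $g^\partial$ is essential, and the $c^\dag$-correction distinguishing \eqref{e:modifiedStructuralConstraint} from the classical constraint \eqref{e:structural_constraint_grav} is exactly what makes $Q$ tangent. Once $Q$ is pinned down, the remaining verification of the master equation is bookkeeping that can be checked term by term against the already established brackets of Theorem~\ref{thm:first-class-constraints}.
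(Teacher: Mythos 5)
The paper itself gives no proof of this theorem: it is recalled verbatim from \cite{CCS2020}, so the only comparison available is with the argument in that reference (and with the general machinery the present paper sets up in Theorem \ref{thm:BFV_computation}). Your strategy coincides with it — non-degeneracy of $\varpi+\varpi_{ghost}$ on the slice cut out by the modified structural constraint, construction of $Q$ from $\iota_Q\Omega^{BFV}=\delta\mathcal S$ with the $c^\dag$-correction ensuring tangency, and verification of $\{\mathcal S,\mathcal S\}=0$ graded by antifield number, with the number-zero piece cancelling against the structure functions of Theorem \ref{thm:first-class-constraints} and the higher pieces reducing to Jacobi/equivariance/Bianchi identities — so the sketch is the right proof in outline, with only the explicit computations left to carry out.
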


Then the space of functions on the reduced phase space is isomorphic to the cohomology in degree 0 of the Hamiltonian vector field $\mathcal{Q}$ of $\mathcal{S}$.

\subsection{Scalar field}\label{s:previousResults_scalar}
Let us now consider gravity together with a scalar field.
The action of the scalar field coupled with gravity takes the form
	\begin{align*}
	    S_{scal}= S + S_{\phi}
	\end{align*}
 where
 \begin{align}\label{e:action_scalar}
     S_{\phi}&=\int_M \frac{1}{3!}e^{3}\Pi d\phi +  \frac{1}{2\cdot 4!}e^4 (\Pi,\Pi)
 \end{align}
 where $(\bullet,\bullet)$ stands for the pairing $\eta$ in $\mathcal{V}$.
The structure of the reduced phase space of this case was discussed in \cite{CCF2022} to which we refer for more details and insight.

In this case, the geometric phase space is the bundle
\begin{align*}
    {F}^{\partial}_{\phi} \rightarrow \Omega_{e_n}^1(\Sigma, \mathcal{V}_{\Sigma})\oplus \mathcal{C^{\infty}}(\Sigma)
\end{align*}
with fiber $\mathcal{A}_{red}(\Sigma)\oplus\Omega^{(0,1)}_{\partial,\text{red}}$ such that, using the same notation as before and furthermore denoting $\phi \in \mathcal{C^{\infty}}(\Sigma)$, $\Pi \in\Omega^{(0,1)}_{\partial,\text{red}}$, the structural constraints \eqref{e:structural_constraint_grav} and 
	 \begin{align}\label{e:constraintscalar}
	     (e,\Pi)=-d\phi.
	 \end{align}
  are satisfied.
  ${F}^{\partial}_{\phi}$ is a symplectic space with symplectic form
  $\Omega_{\phi}= \varpi+\varpi_{\phi}$ 
  where\footnote{It is possible to redefine $p:=\frac{1}{3!}e^3 \Pi$. The three components of $\Pi$ fixed by the structural constraint \eqref{e:constraintscalar} are in kernel of $e^3 \wedge \cdot$ acting on $\Omega^{(0,1)}_\partial $. With this substitution, the symplectic form $\varpi_\phi$ is non-degenerate, therefore there is no need for reduction. However, in some instances it is still convenient to use $\Pi$ as it provides simpler computations.}
  \begin{align}\label{e:sympl_form_scal}
      \varpi_{\phi}= \int_{\Sigma}\frac{1}{3!}\delta(e^3\Pi)\delta\phi.
  \end{align}

  Define now on this space the following functions:
    \begin{align}
		&p^{\phi}_\xi=-\int_\Sigma \frac{1}{3!}e^3\Pi\mathrm{L}_{\xi}\phi;\label{e:constraint_scalar2}\\
		&h^{\phi}_\lambda=\int_\Sigma \lambda e_n \left( \frac{1}{2}e^{2}\Pi d\phi + \frac{1}{2\cdot 3!}e^{3}(\Pi,\Pi) \right) .\label{e:constraint_scalar3}
	\end{align} 
 Then, it has been proven in \cite{CCF2022} that the functions
 $L_c$, $P^{\phi}_{\xi}=P_{\xi}+p^{\phi}_\xi$, $H^{\phi}_{\lambda}=H_{\lambda}+h^{\phi}_\lambda$ define a coisotropic submanifold. In particular the brackets of these functions are verbatim the same as in the vacuum case (Theorem \ref{thm:first-class-constraints}). Hence it is possible to describe the RPS as the quotient of the geometric phase space ${F}_{\phi}$ with respect to this coisotropic submanifold. As in the vacuum case the resolution of such quotient is given by the following BFV structure:

\begin{theorem}[\cite{CCF2022}]\label{thm:BFVaction_scalar}
		Let $\mathcal{F}_{\phi}$ be the bundle 
		\begin{equation*}
\mathcal{F}_{\phi} \longrightarrow \Omega_{e_n}^1(\Sigma, \mathcal{V}_{\Sigma})\oplus \mathcal{C^{\infty}}(\Sigma),
\end{equation*}
with local trivialisation on an open $\mathcal{U}_{\Sigma}$
		\begin{equation*}
			\mathcal{F}_{\phi}\simeq \mathcal{U}_{\Sigma} \times \mathcal{T}_{grav}\times \Omega^{(0,1)}_{\partial,\text{red}}
		\end{equation*}
		where $\mathcal{T}_{grav}$ was defined in \eqref{LoctrivF1} and the additional fields are denoted by $\Pi\in\Omega^{(0,1)}_{\partial,\text{red}}$ and $\phi\in\mathcal{C^{\infty}}(\Sigma)$ and such that  they satisfy the structural constraint \eqref{e:constraintscalar}. 
		The symplectic form and the action functional on $\mathcal{F}_S$ are respectively defined by
		\begin{align*}
			\Omega^{BFV}_{\phi} &= \Omega^{BFV}+\varpi_{\phi}, \\
			\mathcal{S}_{\phi} &= \mathcal{S} + p^{\phi}_\xi + h^{\phi}_\lambda.
		\end{align*}
		Then the triple $(\mathcal{F}_{\phi}, \Omega^{BFV}_{\phi}, S_{\phi})$ defines a BFV structure on $\Sigma$.
	\end{theorem}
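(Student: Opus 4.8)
The plan is to verify the three defining conditions of a BFV structure: that $\Omega^{BFV}_{\phi}$ is symplectic, that $\mathcal{S}_{\phi}$ has ghost number $1$, and that the classical master equation $\{\mathcal{S}_{\phi},\mathcal{S}_{\phi}\}_{\Omega^{BFV}_{\phi}}=0$ holds (equivalently, that the Hamiltonian vector field $\mathcal{Q}_\phi$ of $\mathcal{S}_\phi$ is cohomological). The first two are essentially bookkeeping: non-degeneracy of $\Omega^{BFV}_\phi=\Omega^{BFV}+\varpi_\phi$ follows because $\Omega^{BFV}$ is symplectic on $\mathcal{F}$ (Theorem \ref{thm:BFVaction}) and $\varpi_\phi$ — after the substitution $p:=\tfrac{1}{3!}e^3\Pi$ mentioned in the footnote — is non-degenerate on the scalar sector, and the two sectors are on disjoint sets of fields; closedness is clear since both summands are closed. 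The ghost-number count is immediate from the assignments in the statement together with those inherited from Theorem \ref{thm:BFVaction}, since $p^\phi_\xi$ and $h^\phi_\lambda$ are each linear in one ghost and carry no antifields.

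The heart of the matter is the master equation, and here I would invoke Theorem \ref{thm:BFV_computation} with the identification $\mathcal{F}_2=$ (the scalar sector with coordinates $\phi$, $\Pi$), $\varpi_2=\varpi_\phi$, and $s=p^\phi_\xi+h^\phi_\lambda$, which is indeed linear in the ghosts. Writing $\mathcal{S}=S_0+S_1$ as in that theorem (with $S_0=L_c+P_\xi+H_\lambda$ the constraint part and $S_1=\mathcal{S}_{ghost}$ the part linear in antifields) and $\varpi_{BFV}=\varpi_c+\varpi_b$ with $\varpi_c=\varpi$ and $\varpi_b=\varpi_{ghost}$, the theorem reduces the claim to checking the two conditions \eqref{e:computation_BFV1} and \eqref{e:computation_BFV2}. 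Condition \eqref{e:computation_BFV1} is exactly the statement that the Poisson brackets of the constraints are not modified by the scalar additions — which is precisely the content, already cited in the excerpt, that the brackets of $L_c$, $P^\phi_\xi$, $H^\phi_\lambda$ are verbatim those of Theorem \ref{thm:first-class-constraints} (so in particular they close with the \emph{same} structure functions $X$, $Y$, hence the $S_1$-part of the BFV action built from those structure functions still does its job; this is the reasoning flagged in Remark \ref{rem:BFVandbracketsofconstraints}). So the only genuine computation is \eqref{e:computation_BFV2}, namely $\iota_{Q_1}\iota_{q_0}\varpi=0$, where $q_0$ is the scalar-sector correction to the Hamiltonian vector field of $S_0$ determined by $\iota_{q_0}(\varpi_{BFV}+\varpi_\phi)=\delta s-\iota_{Q_0}\varpi_\phi$.

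The step I expect to be the main obstacle is this verification of \eqref{e:computation_BFV2}: one must compute $q_0$ explicitly from $\delta(p^\phi_\xi+h^\phi_\lambda)$ — paying attention to the structural constraint \eqref{e:constraintscalar} which ties three components of $\Pi$ to $d\phi$ and hence constrains which variations are admissible — and then check that pairing its components against $Q_1=\mathbb{Q}_{\mathcal{S}_{ghost}}$ through $\varpi$ vanishes. Since $\varpi=\int_\Sigma e\,\delta e\,\delta\omega$ only involves $e$ and $\omega$, this amounts to showing that the $e$- and $\omega$-components of $q_0$ either vanish or pair trivially with the $e$- and $\omega$-components of the ghost-sector vector field; because $q_0$ arises from functionals ($p^\phi_\xi$, $h^\phi_\lambda$) whose $e$-dependence is explicit and whose $\omega$-dependence is absent, I expect the $\omega$-component of $q_0$ to vanish outright and the $e$-component to pair to zero after an integration by parts using $d_\omega$-Leibniz and the structural constraints — but making this airtight, and correctly handling the reduction by the kernel in $\Omega^{(0,1)}_{\partial,\text{red}}$, is where the care is needed. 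I would then conclude by Theorem \ref{thm:BFV_computation} that $\{\mathcal{S}_\phi,\mathcal{S}_\phi\}=0$, completing the proof that $(\mathcal{F}_\phi,\Omega^{BFV}_\phi,\mathcal{S}_\phi)$ is a BFV structure, and remark that the degree-$0$ cohomology of $\mathcal{Q}_\phi$ then recovers the functions on the RPS as in the vacuum case.
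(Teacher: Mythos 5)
Your skeleton is exactly the one this paper uses for all of its BFV statements (and the one underlying the cited proof from \cite{CCF2022}): invoke Theorem \ref{thm:BFV_computation} with $\varpi_2=\varpi_\phi$ and $s=p^\phi_\xi+h^\phi_\lambda$, dispose of condition \eqref{e:computation_BFV1} by the fact that the algebra of $L_c$, $P^\phi_\xi$, $H^\phi_\lambda$ closes verbatim as in Theorem \ref{thm:first-class-constraints} (this is precisely the mechanism of Remark \ref{rem:BFVandbracketsofconstraints}), and reduce everything to the single check \eqref{e:computation_BFV2}. This matches the proofs of Theorems \ref{thm:BFVaction_YMS}--\ref{thm:BFVaction_SM}, so the approach is the right one.

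The only substantive correction concerns how \eqref{e:computation_BFV2} actually closes: it is not an integration-by-parts cancellation, and you have the roles of the $e$- and $\omega$-components reversed. From Appendix \ref{s:appendix_Hamiltonianvf} one reads off $\mathbb{p}^{\phi}_e=\mathbb{h}^{\phi}_e=0$, so it is the $e$-component of $q_0$ that vanishes outright, while the $\omega$-component $\mathbb{h}^{\phi}_\omega$ is nonzero but proportional to the odd ghost $\lambda$. Appendix \ref{s:appendix_BFV_identities} shows that every coefficient of $\iota_{Q_1}\varpi$ (and of $\iota_{Q_1}\varpi_\phi$) is itself proportional to $\lambda$, so $\iota_{q_0}\iota_{Q_1}\varpi\sim\lambda^2=0$. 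This parity argument makes the step you flag as the ``main obstacle'' essentially immediate; neither the structural constraint \eqref{e:constraintscalar} nor the Leibniz rule for $d_\omega$ is needed there. With that substitution your argument is complete and coincides with the paper's.
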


\subsection{Yang--Mills} \label{s:YM}
Similarly to the scalar field case, for the theory coupling gravity and a Yang--Mills field (with group $G$ and Lie algebra $\mathfrak{g}$), the action is 
	\begin{align*}
	    S_{YM}= S + S_{A}
	\end{align*}
 where
 \begin{align}\label{e:action_YM}
     S_{A}&=\int_M \frac{1}{2}e^{2}\mathrm{Tr}(B F_A) + \frac{1}{2\cdot 4!}e^4 \mathrm{Tr}(B,B),
		\end{align}
	where $(\bullet,\bullet)$ is the canonical pairing in $\wedge^2 V$ defined in coordinates for all $C,D\in \wedge^2 V$ by $(C,D):=C^{ab}D^{cd}\eta_{ac}\eta_{bd}$ with respect to an orthonormal basis $\{u_a\}$ of $V$.

The geometric phase space is the bundle 
\begin{align*}
    F^{\partial}_{A} \rightarrow \Omega_{e_n}^1(\Sigma, \mathcal{V}_{\Sigma})\oplus \mathcal{A}^{G}_{\Sigma}
\end{align*}
with fiber $\mathcal{A}_{red}(\Sigma)\oplus \Omega^{(0,2)}_{\Sigma, red}(\mathfrak{g})$ such that, denoting $A \in \mathcal{A}^{\text{YM}}_{\Sigma}$ and  $B\in\Omega^{(0,2)}_{\Sigma, red}(\mathfrak{g})$, the structural constraints \eqref{e:structural_constraint_grav} and 
 \begin{align}\label{e:constraintYM}
	     F_A + \frac{1}{2}(e^2,B)=0
	 \end{align}
  are satisfied.
  The symplectic form on ${F}^{\partial}_{A}$ reads $\Omega_{A}= \varpi + \varpi_{A}$ where
  \begin{align}\label{e:sympl_form_YM}
      \varpi_{A}= \int_{\Sigma}\Tr\left( \frac{1}{2}\delta (eeB) \delta A\right)
  \end{align}
where the trace is defined with respect to the Lie algebra $\mathfrak{g}$.

In order to define the reduced phase space, we introduce the following functions on   ${F}^{\partial}_{A}$:
	\begin{align}
		& M^{A}_\mu:=\int_{\Sigma}\frac{1}{2}\Tr(\mu d_A (e^2B)) \label{e:constraints_YM4}; \\
		& 	p^{A}_\xi:=\int_{\Sigma}  \frac{1}{2}\iota_\xi e^2 \Tr(BF_A) + \frac{1}{2}\Tr\{\iota_{\xi}(A-A_0)d_A(e^2B)\} \label{e:constraints_YM2};\\
		& h^{A}_\lambda:=\int_{\Sigma} \lambda e_n \left(e \Tr(B F_A) + \frac{1}{2\cdot 3!} e^3 \Tr(B,B)\right) \label{e:constraints_YM3}.
	\end{align}

 Then we have the following result:

 \begin{theorem}{\cite{CCF2022}} \label{thm:first-class-constraints_YM}
		The functions $M^{A}_\mu$, $L_c$, $P^{A}_{\xi}=P_{\xi}+p^{A}_{\xi} $,  $H^{A}_{\lambda}=H_{\lambda}+h^{A}_{\lambda}$ define a coisotropic submanifold  with respect to the symplectic structure $\Omega_{A}$. In particular they have the following mutual Poisson brackets:
		
		\begin{align*}\label{brackets-of-constraints_YM}
			\{M^{A}_\mu,M^{A}_\mu\}_A&=-\frac{1}{2}M^{A}_{[\mu,\mu]}	 & \{M^{A}_\mu,L_c\}_A&=0\\
			\{M^{A}_\mu,P^{A}_\xi\}_A&=M^{A}_{\mathrm{L}_\xi^{A_0}\mu} & \{M^{A}_\mu,H^{A}_\lambda\}_A&= 0\\
            \{P^{A}_{\xi}, P^{A}_{\xi}\}_A  &=  \frac{1}{2}P^{A}_{[\xi, \xi]}- \frac{1}{2}L_{\iota_{\xi}\iota_{\xi}F_{\omega_0}}-\frac{1}{2}M^{A}_{\iota_{\xi}\iota_{\xi}F_{A_0}} & \{L_c, P^{A}_{\xi}\}_A  &=  L_{\mathrm{L}_{\xi}^{\omega_0}c}\\
            \{L_c,  H^{A}_{\lambda}\}_A & = - P^{A}_{X^{(\nu)}} + L_{X^{(\nu)}(\omega - \omega_0)_{\nu}} -H^{A}_{X^{(n)}} + M^{A}_{X^{(\nu)}(A-A_0)_{\nu}} {} & \{L_c, L_c\}_A &= - \frac{1}{2}L_{[c,c]}
             \\
			\{P^{A}_{\xi},H^{A}_{\lambda}\}_A  &=  P^{A}_{Y^{(\nu)}} -L_{ Y^{(\nu)} (\omega - \omega_0)_{\nu}} +  H^{A}_{ Y^{(n)}} - M^{A}_{Y^{(\nu)}(A-A_0)_{\nu}}   	
             & \{H^{A}_\lambda,H^{A}_\lambda\}_A&=0 		
		\end{align*}
	where $X$ and $Y$ are defined as in Theorem \ref{thm:first-class-constraints}.
	\end{theorem}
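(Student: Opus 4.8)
The plan is to derive all ten Poisson brackets from Lemma \ref{lem:Poisson_Brackets_singles}, applied with $(F,\varpi)$ the geometric phase space of gravity and the index $f=A$ denoting the Yang--Mills sector, so that $\Omega_A=\varpi+\varpi_A$ with $\varpi_A$ as in \eqref{e:sympl_form_YM}. In this decomposition the four functionals are
\begin{align*}
  L_c=L_c+0,\qquad P^A_\xi=P_\xi+p^A_\xi,\qquad H^A_\lambda=H_\lambda+h^A_\lambda,\qquad M^A_\mu=0+M^A_\mu,
\end{align*}
with the gravitational pieces \eqref{e:constraint_gravity1}--\eqref{e:constraint_gravity3} in $C^\infty(F)$ and the Yang--Mills pieces \eqref{e:constraints_YM2}--\eqref{e:constraints_YM4} in $C^\infty(F\oplus F_A)$. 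The first task is to produce the Hamiltonian vector fields: the gravitational components (the Hamiltonian vector fields of $L_c$, $P_\xi$, $H_\lambda$ with respect to $\varpi$) are recalled from \cite{CCS2020}, and the corrections $\mathbb{x}^A$ are the unique solutions of $\iota_{\mathbb{X}}\varpi_A+\iota_{\mathbb{x}^A}\Omega_A=\delta x^A$.

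To solve for $\mathbb{x}^A$ I would work, as in the scalar case treated in \cite{CCF2022}, on the reduced fiber $\Omega^{(0,2)}_{\Sigma,red}(\mathfrak{g})$ on which $\varpi_A$ becomes nondegenerate (equivalently, one may pass to the momentum $\frac{1}{2}e^2B$): the degeneracy directions of $\varpi_A$ in the $B$-variable are exactly the components fixed by the structural constraint \eqref{e:constraintYM}, so nothing dynamical is lost, and one verifies at the end that the functionals and vector fields descend. In this description the Hamiltonian vector field of $M^A_\mu$ is the generator of the infinitesimal internal gauge transformation on $(A,B)$, leaving $(e,\omega)$ untouched; that of $p^A_\xi$ generates the boundary diffeomorphism along $\xi$ on the Yang--Mills fields (up to an internal gauge transformation), and that of $h^A_\lambda$ the analogue of the normal evolution. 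The structural constraint \eqref{e:constraintYM} will be used throughout to trade $F_A$ for $-\frac{1}{2}(e^2,B)$.

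Feeding each pair into
\begin{align*}
  \{X^A,Y^A\}_A=\{X,Y\}+\iota_{\mathbb{y}^A}\delta(X+x^A)+\iota_{\mathbb{x}^A}\delta(Y+y^A)-\iota_{\mathbb{x}^A}\iota_{\mathbb{y}^A}\Omega_A+\iota_{\mathbb{X}}\iota_{\mathbb{Y}}\varpi_A
\end{align*}
and simplifying, the brackets fall into two groups. The \emph{straightforward} ones are $\{M^A_\mu,M^A_\mu\}_A=-\frac{1}{2}M^A_{[\mu,\mu]}$ and $\{M^A_\mu,P^A_\xi\}_A=M^A_{\mathrm{L}_\xi^{A_0}\mu}$, which encode the Lie-algebra structure of infinitesimal gauge transformations and their transformation under boundary diffeomorphisms, together with $\{M^A_\mu,L_c\}_A=0$, $\{M^A_\mu,H^A_\lambda\}_A=0$, $\{L_c,L_c\}_A=-\frac{1}{2}L_{[c,c]}$, $\{L_c,P^A_\xi\}_A=L_{\mathrm{L}_\xi^{\omega_0}c}$ and $\{H^A_\lambda,H^A_\lambda\}_A=0$: in each of these the potentially new contributions $\iota_{\mathbb{X}}\iota_{\mathbb{Y}}\varpi_A$ and the $\mathbb{x}^A$-terms either vanish or cancel against one another, using the Lorentz- and gauge-invariance of $M^A_\mu$, $p^A_\xi$, $h^A_\lambda$. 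The \emph{substantial} computations are the three mixed brackets $\{P^A_\xi,P^A_\xi\}_A$, $\{L_c,H^A_\lambda\}_A$ and $\{P^A_\xi,H^A_\lambda\}_A$, where both $\iota_{\mathbb{X}}\iota_{\mathbb{Y}}\varpi_A$ and the $\mathbb{x}^A$-terms are nonzero; one combines the gravitational brackets of Theorem \ref{thm:first-class-constraints}, the structural constraint \eqref{e:constraintYM}, the Leibniz rule and Bianchi identity for $d_A$, and integration by parts on $\Sigma$, to recognize that the leftover terms assemble precisely into $M^A$ evaluated at $-\frac{1}{2}\iota_\xi\iota_\xi F_{A_0}$, $X^{(\nu)}(A-A_0)_\nu$ and $-Y^{(\nu)}(A-A_0)_\nu$ respectively. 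I expect this last piece of bookkeeping — keeping track of signs, of the splitting of each vector field into its $(e,\omega)$- and $(A,B)$-components, and of which terms recombine into a constraint versus which simply cancel — to be the main obstacle; the rest is mechanical once the vector fields are explicit.

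Finally I would verify that $M^A_\mu$, $L_c$, $P^A_\xi$, $H^A_\lambda$ and their Hamiltonian vector fields are well defined on, and independent of the representative chosen in, the reduced fiber $\mathcal{A}_{red}(\Sigma)\oplus\Omega^{(0,2)}_{\Sigma,red}(\mathfrak{g})$, which proceeds exactly as in \cite{CCS2020,CCF2022}. Since every bracket computed above is a linear combination of $M^A_\mu$, $L_c$, $P^A_\xi$, $H^A_\lambda$ with (possibly field-dependent) coefficients, their common zero set is coisotropic with respect to $\Omega_A$, which is the assertion of the theorem.
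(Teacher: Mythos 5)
Your proposal is correct and follows exactly the route the paper intends for this result: the paper itself states Theorem \ref{thm:first-class-constraints_YM} without proof (it is recalled from \cite{CCF2022}), but it explicitly notes that Lemma \ref{lem:Poisson_Brackets_singles} is the tool "tacitly used" there, and the Hamiltonian vector fields you describe are precisely those listed in Appendix \ref{s:appendix_Hamiltonianvf} satisfying $\iota_{\mathbb{X}}\varpi_A+\iota_{\mathbb{x}^A}(\varpi+\varpi_A)=\delta x^A$. The only slight looseness is that in the mixed brackets the leftover terms must reassemble not only into the $M^A$ contributions but also into the $p^A$ and $h^A$ pieces needed to promote $P$ and $H$ to $P^A$ and $H^A$ (e.g. $\tfrac12 p^A_{[\xi,\xi]}$ in $\{P^A_\xi,P^A_\xi\}_A$), but this is bookkeeping of the same kind you already flag, not a gap.
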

 The corresponding BFV structure resolving the quotient is given by the following theorem:

\begin{theorem}\label{thm:BFVactionYM}
	The BFV space of field $\mathcal{F}_{A}$ is the bundle
				\begin{equation*}
\mathcal{F}_{A} \longrightarrow \Omega_{e_n}^1(\Sigma, \mathcal{V}_{\Sigma})\oplus \mathcal{A}^{G}_{\Sigma}
\end{equation*}
with local trivialisation on an open $\mathcal{U}^{YM}_{\Sigma} \subset \Omega_{e_n}^1(\Sigma, \mathcal{V}_{\Sigma})\oplus \mathcal{A}^{G}_{\Sigma}$
		\begin{equation}
			\mathcal{F}_{\mathrm{YM}}\simeq \mathcal{U}^{YM}_{\Sigma} \times \mathcal{T}_{grav} \times \Omega^{(0,2)}_{\Sigma, red}(\mathfrak{g})\times T^*(\Gamma[1](\Sigma,\mathfrak{g})),
		\end{equation}
		where $\mathcal{T}_{grav}$ is the \emph{gravity fiber} defined in \eqref{LoctrivF1} and we denote by  $\mu\in \Gamma[1](\mathfrak{g})$ and its antifield by $\mu^\dagger \in \Gamma[-1](\wedge^3 T^*\Sigma \otimes \wedge^4 \mathcal{V}_{\Sigma}\otimes \mathfrak{g})$. Furthermore, $B\in\Omega^{(0,2)}_{\Sigma, red}(\mathfrak{g})$ satisfies \eqref{e:constraintYM}.
        If we define
        \begin{align*}
            \mathcal{S}^{A}_{ghost}=\int_{\Sigma} \Tr\left\{ \frac{1}{2} [\mu,\mu]\mu^\dagger - 	\mathrm{L}_\xi^{A_0}(\mu)\mu^\dagger + \frac{1}{2}\iota_\xi\iota_\xi F_{A_0}\mu^\dagger +  \left[ \mathrm{L}_\xi^{\omega_0}(\lambda e_n) ^{(\nu)}-[c,\lambda e_n]^{(\nu)}\right](A-A_0)_{\nu}\mu^\dagger \right\}
        \end{align*}
        The corresponding action functional and a symplectic form on $\mathcal{F}_{A}$ are
			\begin{align}	
				\mathcal{S}_{A}&= \mathcal{S} + p^{A}_{\xi}+ h^{A}_{\lambda}+ M^{A}_{\mu}+\mathcal{S}^{A}_{ghost}\\
				\Omega^{BFV}_A &=\Omega^{BFV} + \varpi_A  + \varpi^{A}_{ghost}\label{symplectic_form_YM} 
			\end{align}
        where $\varpi^{A}_{ghost}= \int_{\Sigma} \Tr(\delta \mu \delta \mu^\dagger).$ 
		Then the triple $(\mathcal{F}_A, \Omega^{BFV}_A, \mathcal{S}_A)$ defines a BFV structure on $\Sigma$.
  
	\end{theorem}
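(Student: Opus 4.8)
The plan is to follow the same two-step pattern used for all the previous BFV theorems in the paper: first verify that the proposed triple is actually a symplectic manifold together with a well-defined functional, and then verify the Classical Master Equation $\{\mathcal{S}_A,\mathcal{S}_A\}=0$. For the symplectic part, I would check that $\Omega^{BFV}_A = \Omega^{BFV}+\varpi_A+\varpi^A_{ghost}$ is closed and non-degenerate on the local trivialisation; closedness is immediate since each summand is exact in $\delta$, and non-degeneracy follows because the three blocks pair independent sets of fields: the gravity block on $\mathcal{T}_{grav}$ (already established in Theorem \ref{thm:BFVaction}), the Yang--Mills block $\varpi_A=\int_\Sigma\Tr(\tfrac12\delta(eeB)\delta A)$ on the reduced fields (after the substitution making it non-degenerate, as in footnote for the scalar case), and the ghost block $\varpi^A_{ghost}=\int_\Sigma\Tr(\delta\mu\,\delta\mu^\dagger)$ pairing $\mu$ with $\mu^\dagger$. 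One also has to check that $\mathcal{S}_A$ is well defined on the locus cut out by \eqref{e:structural_constraint_grav} and \eqref{e:constraintYM}, i.e. that the chosen representatives do not affect the value of the functional; this is inherited from the analogous checks in \cite{CCF2022}.

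The heart of the proof is the Classical Master Equation, and here I would invoke Theorem \ref{thm:BFV_computation} with $\mathcal{S}=\mathcal{S}$ the pure-gravity BFV action (which satisfies $\{\mathcal{S},\mathcal{S}\}=0$ by Theorem \ref{thm:BFVaction}), $\varpi_2=\varpi_A+\varpi^A_{ghost}$, and $s = p^A_\xi+h^A_\lambda+M^A_\mu+\mathcal{S}^A_{ghost}$. The functional $s$ is indeed linear in the ghosts $c,\xi,\lambda,\mu$, as required. With this choice, $Q_0$ is the Hamiltonian vector field of $S_0=L_c+P_\xi+H_\lambda$ (the ghost-linear part of $\mathcal{S}$), $Q_1$ the part of the gravity $Q$ linear in antifields, and $q_0,q_1$ the corrections determined by the defining equations in Theorem \ref{thm:BFV_computation}. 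By Remark \ref{rem:BFVandbracketsofconstraints}, the whole point is that adding the Yang--Mills pieces does not change the Poisson brackets of the constraints among themselves — this is precisely the content of Theorem \ref{thm:first-class-constraints_YM}, where the brackets of $M^A_\mu, L_c, P^A_\xi, H^A_\lambda$ reproduce, term by term, those of Theorem \ref{thm:first-class-constraints} with the bare constraints replaced by the YM-corrected ones — so condition \eqref{e:computation_BFV1} holds automatically. What remains is the single condition \eqref{e:computation_BFV2}, namely $\iota_{Q_1}\iota_{q_0}\varpi_c=0$, where $\varpi_c=\varpi$ is the classical gravity boundary form.

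I expect the verification of \eqref{e:computation_BFV2} to be the main obstacle, though a mild one. Concretely, $q_0$ encodes how the YM sector reacts to the gravitational gauge transformations: its components are read off from $\delta s - \iota_{Q_0}\varpi_A$, and they involve the variations of $A$ and $B$ under $c,\xi,\lambda$ together with the ghost $\mu$. The contraction $\iota_{Q_1}\iota_{q_0}\varpi$ then asks whether these YM-induced motions have any component along the gravitational antifield directions $c^\dag,\lambda^\dag,\xi^\dag$ weighted against the antifield-linear part $Q_1$ of the gravity vector field. Since $\varpi=\int_\Sigma e\,\delta e\,\delta\omega$ only pairs $e$ with $\omega$, and $q_0$ moves $e,\omega$ only through the structural-constraint correction (the term $(A-A_0)_\nu\mu^\dagger$-type contributions appear in $\mathcal{S}^A_{ghost}$ and deform the structural constraint exactly as in \eqref{e:modifiedStructuralConstraint}), one has to track how the modified structural constraint for the coupled theory interacts with $Q_1$; this is the place where the explicit form of $\mathcal{S}^A_{ghost}$ — in particular the term $\bigl[\mathrm{L}_\xi^{\omega_0}(\lambda e_n)^{(\nu)}-[c,\lambda e_n]^{(\nu)}\bigr](A-A_0)_\nu\mu^\dagger$ — was designed precisely to make \eqref{e:computation_BFV2} vanish. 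I would therefore spell out $q_0$ and $Q_1$ on the trivialisation, plug into $\iota_{Q_1}\iota_{q_0}\varpi$, and check the cancellation, referencing the corresponding computation in \cite{CCS2020,CCF2022}; once \eqref{e:computation_BFV1} and \eqref{e:computation_BFV2} are established, Theorem \ref{thm:BFV_computation} gives $\{\mathcal{S}_A,\mathcal{S}_A\}=0$ and hence the BFV structure.
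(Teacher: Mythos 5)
First, a point of reference: the paper itself does not prove Theorem \ref{thm:BFVactionYM}. It sits in Section \ref{s:previousResults} (``Previous results'') and is recalled from \cite{CCF2022}, so there is no in-paper proof to match; your proposal has to stand on its own. On its own, it has a genuine gap.

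The gap is the invocation of Theorem \ref{thm:BFV_computation} with $s = p^A_\xi + h^A_\lambda + M^A_\mu + \mathcal{S}^A_{ghost}$. That theorem requires $s$ to be a functional \emph{linear in the ghosts} (and, implicitly from the structure of the decomposition $Q=Q_0+Q_1+q_0+q_1$, containing no antifields, so that $\delta s$ feeds only into $q_0$ and the $q_1$-equation is sourced purely by $-\iota_{Q_1}\varpi_2$). But $\mathcal{S}^A_{ghost}$ is quadratic in the ghosts (the terms $\tfrac12[\mu,\mu]\mu^\dagger$ and $\tfrac12\iota_\xi\iota_\xi F_{A_0}\mu^\dagger$) and every one of its terms is linear in the new antifield $\mu^\dagger$. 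Your sentence ``the functional $s$ is indeed linear in the ghosts $c,\xi,\lambda,\mu$, as required'' is therefore false, and the decomposition underlying conditions \eqref{e:computation_BFV1}--\eqref{e:computation_BFV2} does not apply as stated.

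Relatedly, the appeal to Remark \ref{rem:BFVandbracketsofconstraints} is misplaced here. That remark covers the situation in which the added pieces \emph{do not modify} the bracket structure of the constraints — which is exactly what happens in Theorems \ref{thm:BFVaction_YMS}, \ref{thm:BFVaction_Higgs}, \ref{thm:BFVaction_Yukawa} and \ref{thm:BFVaction_SM}, where the starting BFV theory already contains the relevant ghost/antifield sector and the additional constraint pieces leave the brackets verbatim unchanged. Passing from pure gravity to gravity plus Yang--Mills is not of this type: a genuinely new constraint $M^A_\mu$ appears, with a new ghost $\mu$ and antifield $\mu^\dagger$, and Theorem \ref{thm:first-class-constraints_YM} shows that the brackets acquire \emph{new} structure functions relative to Theorem \ref{thm:first-class-constraints} (the term $-\tfrac12 M^A_{\iota_\xi\iota_\xi F_{A_0}}$ in $\{P^A_\xi,P^A_\xi\}_A$ and the terms $\pm M^A_{X^{(\nu)}(A-A_0)_\nu}$, $\mp M^A_{Y^{(\nu)}(A-A_0)_\nu}$ in the brackets with $H^A_\lambda$). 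These new structure functions are precisely what the last two terms of $\mathcal{S}^A_{ghost}$ are there to encode, so condition \eqref{e:computation_BFV1} does not ``hold automatically''. To prove the theorem one must either verify the Classical Master Equation for $\mathcal{S}_A$ directly — computing $\{\mathcal{S}_A,\mathcal{S}_A\}$ degree by degree in the antifields, matching the zero-antifield part against the brackets of Theorem \ref{thm:first-class-constraints_YM} and checking that the quadratic-in-antifields part vanishes so that no higher correction $R$ is needed — or first prove a genuine extension of Theorem \ref{thm:BFV_computation} that accommodates a new ghost/antifield pair and a modified bracket algebra. Your outline for verifying \eqref{e:computation_BFV2} would become one ingredient of such a computation, but as written the reduction to that single condition is not justified.
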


\subsection{Spinor}\label{s:spinor}
The precise setting in which we can couple gravity with a spinor field is described in \cite[Section 5]{CCF2022}.
We recall here just the bulk action and the  boundary structure.
The action of the spinor field coupled with gravity takes the form
	\begin{align*}
	    S_{spin}= S + S_{\psi}
	\end{align*}
 where
\begin{equation}\label{e:action_spin}
	S_{\psi} =\int_{M} i\frac{e^{3}}{2\cdot 3!}\left[ \overline{\psi} \gamma d_\omega \psi - d_\omega \overline{ \psi}\gamma \psi \right],
\end{equation}

The geometric phase space is the bundle 
\begin{align*}
    {F}^{\partial}_{\psi} \rightarrow \Omega_{e_n}^1(\Sigma, \mathcal{V}_{\Sigma})\times S(\Sigma)\times \overline{S}(\Sigma)
\end{align*}
where $S(\Sigma):=\Gamma(\Sigma,E_{\lambda}\vert_\Sigma)$, \footnote{$E_\lambda$ is defined to be the associated bundle to $\hat{P}$, $E_\lambda:=\hat{P}\times_{\lambda} W$
	where $W$ is an $N$-dimensional complex vector space and $\lambda\colon\mathrm{Spin}(N-1,1)\times W\rightarrow W$ is a non-tensorial representation of the spin group on $W$.}
with fiber $\mathcal{A}_{red}(\Sigma)$ such that
\begin{equation}\label{e:omegareprfix2spin}
		e_n\left(  d_{{\omega}}e - \frac{i}{16}\overline{\psi}\left( j_{\gamma}j_{\gamma} e^2 \gamma + \gamma j_{\gamma}j_{\gamma} e^2 \right)\psi\right)= e \widetilde{\sigma}
	\end{equation}
 for some $\widetilde{\sigma} \in \Omega^{1}(\Sigma, \mathcal{V}_{\Sigma})$.
 The symplectic form on this space is given by $\Omega_{\psi}= \varpi + \varpi_{\psi}$ where 
 \begin{align}\label{e:sympl_form_spin}
     \varpi_{\psi}= \int_{\Sigma} i\frac{e^2}{4}\left( \overline{\psi}\gamma \delta \psi - \delta \overline{\psi}	\gamma \psi \right)\delta e + i \frac{e^3}{ 3!}  \delta \overline{\psi} \gamma \delta \psi.
 \end{align}

As in the previous examples the reduced phase space is recovered as the reduction with respect to the functions
\begin{align*}
    L^{\psi}_c= L_c + l^{\psi}_c, \quad P_{\xi}^{\psi}=P_{\xi}+p_{\xi}^{\psi}, \quad H_{\lambda}^{\psi}=H_{\lambda}+h_{\lambda}^{\psi}
\end{align*}
where
\begin{align}
    l^{\psi}_c &=  \int_{\Sigma} - i \frac{e^3}{2\cdot 3!} \left( [c,\overline{\psi}]\gamma \psi - \overline{\psi} \gamma [c,\psi] \right),\label{e:constraints_spinor1}\\
    p^{\psi}_\xi&=\int_{\Sigma}- i \frac{e^3}{2\cdot 3!} \left( \overline{\psi} \gamma \mathrm{L}_{\xi}^{\omega_0}(\psi) - \mathrm{L}_\xi^{\omega_0}(\overline{\psi})\gamma \psi  \right)\label{e:constraints_spinor2}\\
	 h^{\psi}_\lambda &= \int_{\Sigma} \lambda e_n\left[ i\frac{e^2}{4}\left( \overline{\psi}\gamma d_\omega \psi - d_\omega\overline{\psi}\gamma \psi \right) \right],\label{e:constraints_spinor3}
\end{align}
having defined $[\chi,\psi]:=\frac{1}{4}j_{\gamma} j_{\gamma} \chi \psi$ for $\chi \in \Omega^{\bullet}(\Sigma, \wedge^2 \mathcal{V}_{\Sigma})$.\footnote{
 	For any $X\in V$ and for all $\alpha\in \wedge^k V$, we define for $\alpha=\frac{1}{k!}\alpha^{i_1\cdots i_k}v_{i_1}\wedge\cdots\wedge v_{i_k}$, $
			j_X \alpha:=\frac{\eta_{ab}}{(k-1)!} X^a \alpha^{b {i_2}\cdots {i_k}} 	v_{i_2}\wedge \cdots \wedge v_{i_k}$.}
The functions $L^{\psi}_c$, $ P_{\xi}^{\psi}$ and $H_{\lambda}^{\psi}$ define a coisotropic submanifold \cite{CCF2022} and their Poisson brackets are the same as in Theorem \ref{thm:first-class-constraints} (after modifying the corresponding notation).

 The BFV resolution of this quotient is given by the set of data specified in the following theorem:
      \begin{theorem}\label{thm:BFVaction spinor}
		Let $\mathcal{F}_{\psi}$ be the bundle 
        \begin{align*}
            \mathcal{F}_{\psi}\longrightarrow \Omega_{e_n}^1(\Sigma, \mathcal{V}_{\Sigma}) \times S(\Sigma) \times \overline{S}(\Sigma)
        \end{align*}
        with local trivialisation on an open $\mathcal{U}_{\Sigma}$
\begin{equation*}
\mathcal{F}_{\psi} \simeq \mathcal{U}_{\Sigma} \times \mathcal{T}_{grav}.
\end{equation*}
Further define
		\begin{align*}
			\Omega^{BFV}_{\psi} &= \Omega^{BFV} + \varpi_{\psi} \\
			\mathcal{S}_{\psi} &= \mathcal{S} + l^{\psi}_{c} + p^{\psi}_{\xi}+ h^{\psi}_{\lambda}
		\end{align*}
		Then the triple $(\mathcal{F}_{\psi}, \Omega^{BFV}_{\psi}, \mathcal{S}_{\psi})$ defines a BFV structure on $\Sigma$.
	\end{theorem}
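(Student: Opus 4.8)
To prove that the triple $(\mathcal{F}_\psi,\Omega^{BFV}_\psi,\mathcal{S}_\psi)$ is a BFV structure I would verify three things: that $\Omega^{BFV}_\psi$ is symplectic, that $\mathcal{S}_\psi$ has ghost number one, and that the classical master equation $\{\mathcal{S}_\psi,\mathcal{S}_\psi\}_\psi=0$ holds (equivalently, that the Hamiltonian vector field $\mathcal{Q}_\psi$ of $\mathcal{S}_\psi$ is cohomological). The first two are quick, so the bulk of the argument goes into the master equation, which I would obtain from Theorem \ref{thm:BFV_computation} applied with $\varpi_2=\varpi_\psi$.

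The form $\Omega^{BFV}_\psi=\Omega^{BFV}+\varpi_\psi$ is closed because both summands are: $\varpi_\psi$ of \eqref{e:sympl_form_spin} is (even) exact, being the variation of the boundary one-form produced by the KT construction, and $\Omega^{BFV}$ is symplectic by Theorem \ref{thm:BFVaction}. It is non-degenerate because the $(e,\omega,\psi,\overline{\psi})$ sector is precisely the geometric phase space $F^\partial_\psi$ of \cite{CCF2022}, where the spinor-corrected structural constraint \eqref{e:omegareprfix2spin} (together with the $c^\dagger$-modification of \eqref{e:modifiedStructuralConstraint}) is exactly what keeps the $\omega$-direction well posed, while the ghost/antifield directions form the standard cotangent fibre $\mathcal{T}_{grav}$ of \eqref{LoctrivF1} already present in $\mathcal{F}$. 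As for the ghost number, the added terms $l^\psi_c$, $p^\psi_\xi$, $h^\psi_\lambda$ of \eqref{e:constraints_spinor1}--\eqref{e:constraints_spinor3} are each linear in the degree-one multipliers $c,\xi,\lambda$ and carry no antifields, so $\mathcal{S}_\psi=\mathcal{S}+l^\psi_c+p^\psi_\xi+h^\psi_\lambda$ inherits the ghost number one of $\mathcal{S}$.

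For the master equation I would set $S_0=L_c+P_\xi+H_\lambda$, $S_1=\mathcal{S}-S_0=\mathcal{S}_{ghost}$, $\varpi_c=\varpi$ of \eqref{e:sympl_form_grav}, $\varpi_b=\varpi_{ghost}$, and $s=l^\psi_c+p^\psi_\xi+h^\psi_\lambda$, which has no antifield-linear part; with $Q_0,Q_1,q_0,q_1$ as in Theorem \ref{thm:BFV_computation}, that theorem reduces $\{\mathcal{S}_\psi,\mathcal{S}_\psi\}_\psi=0$ to the two identities \eqref{e:computation_BFV1} and \eqref{e:computation_BFV2}. Identity \eqref{e:computation_BFV1} is handled as in Remark \ref{rem:BFVandbracketsofconstraints}: it encodes the fact that the spinorial corrections do not change the constraint algebra, i.e. that $\{L^\psi_c,L^\psi_c\}_\psi=-\frac12 L^\psi_{[c,c]}$, $\{L^\psi_c,P^\psi_\xi\}_\psi=L^\psi_{\mathrm{L}_\xi^{\omega_0}c}$, and so on, with the same structure functions as in Theorem \ref{thm:first-class-constraints}; this is precisely the coisotropicity result of \cite{CCF2022} recalled in Section \ref{s:spinor}, so $S_1$ and $\varpi_b$ can be carried over verbatim from Theorem \ref{thm:BFVaction} and \eqref{e:computation_BFV1} holds.

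The remaining identity \eqref{e:computation_BFV2}, $\iota_{Q_1}\iota_{q_0}\varpi=0$, is the one place where a genuine computation is needed, and I expect it to be the main obstacle. Here one must extract the $e$- and $\omega$-components of the two vector fields: the $\omega$-component of $Q_1$ comes from the $c^\dagger$-dependent terms of $\mathcal{S}_{ghost}$ that pair against $\delta\omega$ in $\varpi=\int_\Sigma e\,\delta e\,\delta\omega$, while the relevant piece of $q_0$ is produced by the $\delta e$-content of $\delta s$ together with the correction $-\iota_{Q_0}\varpi_\psi$ coming from \eqref{e:sympl_form_spin}. I would then expand $\iota_{Q_1}\iota_{q_0}\varpi$ using the explicit formulas \eqref{e:constraints_spinor1}--\eqref{e:constraints_spinor3} and the identity $[\chi,\psi]=\frac14 j_\gamma j_\gamma\chi\psi$, and show it vanishes, mirroring the scalar-field computation of \cite{CCF2022}; the Remark already anticipates that this is almost immediate, since $q_0$ is supported on few directions and $\varpi$ only pairs $e$ against $\omega$. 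Once \eqref{e:computation_BFV1} and \eqref{e:computation_BFV2} are established, Theorem \ref{thm:BFV_computation} yields $\{\mathcal{S}_\psi,\mathcal{S}_\psi\}_\psi=0$, and together with the symplectic and ghost-number checks this shows that $(\mathcal{F}_\psi,\Omega^{BFV}_\psi,\mathcal{S}_\psi)$ is a BFV structure on $\Sigma$.
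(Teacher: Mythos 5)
The paper itself does not prove this statement: Theorem \ref{thm:BFVaction spinor} sits in the ``Previous results'' section and is recalled from \cite{CCF2022} without proof. Your strategy is nonetheless exactly the template the paper uses for the analogous theorems it \emph{does} prove (Theorems \ref{thm:BFVaction_YMS}, \ref{thm:BFVaction_Higgs}, \ref{thm:BFVaction_Yukawa}, \ref{thm:BFVaction_SM}): apply Theorem \ref{thm:BFV_computation} with the pure-gravity BFV theory as base, $\varpi_2=\varpi_\psi$ and $s=l^\psi_c+p^\psi_\xi+h^\psi_\lambda$, get \eqref{e:computation_BFV1} from the fact that the spinor corrections leave the constraint algebra of Theorem \ref{thm:first-class-constraints} unchanged (Remark \ref{rem:BFVandbracketsofconstraints}), then check \eqref{e:computation_BFV2}. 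The one place you diverge is the final step: you announce a term-by-term expansion of $\iota_{Q_1}\iota_{q_0}\varpi$ as ``the main obstacle,'' whereas in every in-paper instance this check is dispatched in one line by noting that the components of $q_0$ entering the contraction (here $\mathbb{h}^{\psi}_e=-\lambda(\sigma-\widetilde\sigma)$ and $\mathbb{h}^{\psi}_\omega$) are proportional to the odd multiplier $\lambda$, and that $\iota_{Q_1}\varpi$ is itself proportional to $\lambda$ (Appendix \ref{s:appendix_BFV_identities}), so the double contraction is $\sim\lambda^2=0$. Your computation would land in the same place, but the $\lambda^2$ mechanism is what makes the step ``almost straightforward'' as the Remark promises; the symplecticity and ghost-number checks you list are fine.
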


\section{Interaction terms}\label{s:interaction_terms}
Besides the coupling between gravity and the matter fields, in the standar model there are also interactions between the matter field themselves. Since multiple instances of the same interaction occur, it is useful to spell out the details of the interaction between possible couples of types of fields, keeping the discussion as general as possible. Later on, in Section \ref{s:standard_model}, we will combine all the results contained in this section and specify the type of scalar, Yang--Mills and spinor fields we are working on. The possible interactions are the following:
\begin{enumerate}
    \item A Yang--Mills field and a spinor field;
    \item A Yang--Mills field and a scalar field;
    \item A scalar field and a spinor field.
\end{enumerate}
We will treat these three interactions respectively in Sections \ref{s:YM+spinor}, \ref{s:YM+scalar} and \ref{s:scalar+spinor}.
For each of the aforementioned interaction we describe the reduced phase space using the KT construction, following the scheme already used for the single matter fields coupled to gravity. Namely, starting from the classical action on the bulk, we derive the EL equations and the pre-symplectic form on the boundary fields. Then, if necessary we perform the reduction and find the geometric phase space. Subsequently we find the functions describing the constraints and we check if they form a coisotropic submanifold. Then the reduced phase space of the theory is found to be the quotient of the geometric phase space with respect to the coisotropic submanifold defined by the constraints. 

In order to keep the construction simple, we will make use of the notation introduced in Section \ref{s:howtoread}. In particular all the relevant quantities will be presented as sums (or products) of the quantities introduced above plus an interaction or correction term.

\subsection{Yang--Mills-Spinor}\label{s:YM+spinor}

We consider in this section the interaction of a Yang--Mills field and a spinor field together with gravity.
We denote the spinor field by $\psi\in S(M)\otimes \mathfrak{su}(N)$ and the Yang--Mills field by $A$. The action on the bulk is given by the sum of the gravity part, the Yang--Mills part \eqref{e:action_YM}, the spinor part \eqref{e:action_spin} and an interaction part:
\begin{align*}
    S_{YMS} = S+ S_{\psi} + S_A + S_{A,\psi}
\end{align*}
where 
\begin{align*}
    S_{A,\psi}= \int_M &\frac{e^{N-1}}{2(N-1)!}\left( \overline{\psi} \gamma [A, \psi] - [A, \overline{\psi}] \gamma \psi \right).
\end{align*}

where $\overline{\psi} \gamma [A, \psi]= i g_i \overline{\psi}_{I} \gamma A^{I}_J \psi^J$ and $g_i$ is a coupling constant.
The interaction term does not contain derivatives, hence the boundary structure is just the direct sum of the YM structure and Spinor structures. In particular the geometric phase space is given by

\begin{align*}
    {F}^{\partial}_{YMS} \rightarrow \Omega_{e_n}^1(\Sigma, \mathcal{V}_{\Sigma})\oplus \mathcal{A}^{G}_{\Sigma} \times S(\Sigma)\times \overline{S}(\Sigma)
\end{align*}
with fiber $\mathcal{A}_{red}(\Sigma) \oplus \Omega^{(0,2)}_{\Sigma, red}(\mathfrak{g})$ such that  \eqref{e:constraintYM} and  \eqref{e:omegareprfix2spin} are satisfied. The symplectic form on this space reads 
\begin{align*}
    \Omega_{YMS}= \varpi + \varpi_A + \varpi_\psi .
\end{align*}

On this geometric phase space we can then define the following constraints:
\begin{align*}
    L^{A, \psi}_c &= L_c + l^{\psi}_{c}; \\
    P^{A, \psi}_{\xi} &= P_{\xi}+ p^{A}_{\xi} + p^{\psi}_{\xi}+p^{A, \psi}_{\xi}; \\
    H^{A, \psi}_{\lambda} &= H_{\lambda}+ h^{A}_{\lambda} + h^{\psi}_{\lambda}+ h^{A, \psi}_{\lambda};\\
    M^{A, \psi}_\mu &= M^{A}_\mu + m^{A, \psi}_\mu
\end{align*}
where
\begin{align}
    p^{A, \psi}_\xi&=\int_{\Sigma}- i \frac{e^3}{2\cdot 3!} \left( \overline{\psi} \gamma [\iota_{\xi}{A_0},\psi] - [\iota_\xi {A_0},\overline{\psi}]\gamma \psi  \right)\label{e:constraints_YMS2}\\
	h^{A, \psi}_\lambda & = \int_{\Sigma}- \lambda e_n\left[ i\frac{e^2}{4}\left( \overline{\psi}\gamma [A, \psi] - [A,\overline{\psi}]\gamma \psi \right) \right],\label{e:constraints_YMS3}\\
    m^{A, \psi}_\mu & =  \int_{\Sigma}  - i \frac{e^3}{2\cdot 3!} \left( [\mu,\overline{\psi}]\gamma \psi - \overline{\psi} \gamma [\mu,\psi] \right)\label{e:constraints_YMS4}.
\end{align}

\begin{remark}\label{r:properties_mu}
    Note that for every field $\mu$ with values in $\mathfrak{g}$ the following identities hold:
    \begin{align*}
        [\mu,\overline{\psi}\gamma \psi]&=0 \\
        [\mu,\overline{\psi}]\gamma \psi - \overline{\psi} \gamma [\mu,\psi] &= 2 [\mu,\overline{\psi}]\gamma \psi
    \end{align*}
\end{remark}
The following theorem proves that these constraints form a coisotropic submanifold.

\begin{theorem} \label{thm:first-class-constraints_YMS}
 Assume that $g^\partial$ is non-degenerate on $\Sigma$. Then, the zero locus of the functions  $L^{A, \psi}_c$, $P^{A, \psi}_{\xi}$,  $H^{A, \psi}_{\lambda}$ and $M^{A, \psi}_\mu$ is a coisotropic submanifold with respect to the symplectic structure $\Omega_{YMS}$. Their mutual Poisson brackets read
 {\footnotesize
 \begin{align*}\label{brackets-of-constraints_YMS}
			\{M^{A, \psi}_\mu,M^{A, \psi}_\mu\}_{YMS}&=-\frac{1}{2}M^{A, \psi}_{[\mu,\mu]}	 & \{M^{A, \psi}_\mu,L^{A, \psi}_c\}_{YMS}&=0\\
			\{M^{A, \psi}_\mu,P^{A, \psi}_\xi\}_{YMS}&=M^{A, \psi}_{\mathrm{L}_\xi^{A_0}\mu} & \{M^{A, \psi}_\mu,H^{A, \psi}_\lambda\}_{YMS}&= 0\\
            \{P^{A, \psi}_{\xi}, P^{A, \psi}_{\xi}\}_{YMS}  &=  \frac{1}{2}P^{A, \psi}_{[\xi, \xi]}- \frac{1}{2}L^{A, \psi}_{\iota_{\xi}\iota_{\xi}F_{\omega_0}}-\frac{1}{2}M^{A, \psi}_{\iota_{\xi}\iota_{\xi}F_{A_0}} & \{L^{A, \psi}_c, P^{A, \psi}_{\xi}\}_{YMS}  &=  L^{A, \psi}_{\mathrm{L}_{\xi}^{\omega_0}c}\\
            \{L^{A, \psi}_c,  H^{A, \psi}_{\lambda}\}_{YMS} & = - P^{A, \psi}_{X^{(\nu)}} + L^{A, \psi}_{X^{(\nu)}(\omega - \omega_0)_{\nu}} -H^{A, \psi}_{X^{(n)}} + M^{A, \psi}_{X^{(\nu)}(A-A_0)_{\nu}} {} & \{L^{A, \psi}_c, L^{A, \psi}_c\}_{YMS} &= - \frac{1}{2}L^{A, \psi}_{[c,c]}
             \\
			\{P^{A, \psi}_{\xi},H^{A, \psi}_{\lambda}\}_{YMS}  &=  P^{A, \psi}_{Y^{(\nu)}} -L^{A, \psi}_{ Y^{(\nu)} (\omega - \omega_0)_{\nu}} +  H^{A, \psi}_{ Y^{(n)}} - M^{A, \psi}_{Y^{(\nu)}(A-A_0)_{\nu}}   	
             & \{H^{A, \psi}_\lambda,H^{A, \psi}_\lambda\}_{YMS}&=0 		
		\end{align*}
  }
 with the same notation as in Theorem \ref{thm:first-class-constraints}. 
\end{theorem}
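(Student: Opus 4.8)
The plan is to reduce everything to the previously established brackets (Theorem \ref{thm:first-class-constraints_YM} for the Yang--Mills sector and the spinor analogue of Theorem \ref{thm:first-class-constraints}) by means of the general decomposition formula of Theorem \ref{thm:Poisson_Brackets_couples}. Here the two ``matter'' indices are $f=A$ and $g=\psi$, with $\varpi_f=\varpi_A$, $\varpi_g=\varpi_\psi$, and the constraints are split as $X^{A,\psi}=X+x^A+x^\psi+x^{A,\psi}$. The symplectic form $\Omega_{YMS}=\varpi+\varpi_A+\varpi_\psi$ has no cross term between the $A$-sector and the $\psi$-sector, and neither $\varpi_A$ nor its Hamiltonian vector fields involve the spinor fields (and vice versa). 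The first thing I would do is record these ``orthogonality'' facts explicitly: $\iota_{\mathbb{x}^\psi}\varpi_A=0$, $\iota_{\mathbb{x}^A}\varpi_\psi=0$, $\iota_{\mathbb{X}}\varpi_A$ pairs only with $A$-directions, etc. These make most of the terms in the second and third lines of the formula of Theorem \ref{thm:Poisson_Brackets_couples} vanish on sight.

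Next I would compute the interaction Hamiltonian vector fields $\mathbb{x}^{A,\psi}$ for each of $L^{A,\psi}$, $P^{A,\psi}$, $H^{A,\psi}$, $M^{A,\psi}$ from equation \eqref{e:Hamiltonian_VF_interaction}, i.e. $\iota_{\mathbb{x}^{A,\psi}}(\varpi+\varpi_A+\varpi_\psi)+\iota_{\mathbb{x}^\psi}\varpi_A+\iota_{\mathbb{x}^A}\varpi_\psi=\delta x^{A,\psi}$. Since $p^{A,\psi}_\xi$, $h^{A,\psi}_\lambda$, $m^{A,\psi}_\mu$ are built from $\overline\psi\gamma[A,\psi]$-type terms with no derivatives and no momenta $B$, I expect these interaction vector fields to have components only along $A$, $\psi$, $\overline\psi$ (and possibly $\omega$ via the structural constraint \eqref{e:omegareprfix2spin}), and in particular $\iota_{\mathbb{x}^{A,\psi}}\iota_{\mathbb{y}^{A,\psi}}\Omega_{YMS}$ and the various mixed contractions should either vanish or reassemble into the interaction parts of the constraints on the RHS. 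Remark \ref{r:properties_mu} will be the key computational input here: the identities $[\mu,\overline\psi\gamma\psi]=0$ and $[\mu,\overline\psi]\gamma\psi-\overline\psi\gamma[\mu,\psi]=2[\mu,\overline\psi]\gamma\psi$ are exactly what is needed to see that $m^{A,\psi}_\mu$ behaves like a Yang--Mills current and that the $SU(3)\times SU(2)\times U(1)$ structure constants close up.

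Then I would feed all of this into Theorem \ref{thm:Poisson_Brackets_couples}: the first line gives $\{X^A,Y^A\}_A+\{X^\psi,Y^\psi\}_\psi-\{X,Y\}$, which by Theorems \ref{thm:first-class-constraints_YM} and \ref{thm:first-class-constraints} (spinor version) already produces the full RHS of the claimed brackets \emph{except} that the structure functions multiply $X^A+X^\psi-X$ rather than $X^{A,\psi}$; the remaining job is to check that the leftover terms (the interaction vector field contractions and the $\iota_{\mathbb{X}}\iota_{\mathbb{Y}}(\varpi_A+\varpi_\psi)$-type pieces) exactly supply the missing interaction summands $x^{A,\psi}$ so that everything recombines into $X^{A,\psi}_{(\cdot)}$. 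Concretely: for $\{L^{A,\psi}_c,H^{A,\psi}_\lambda\}$ one must see the term $m^{A,\psi}_{X^{(\nu)}(A-A_0)_\nu}$ and $h^{A,\psi}$ appear, and for $\{P^{A,\psi}_\xi,H^{A,\psi}_\lambda\}$ likewise. I would organize this by tracking, constraint-pair by constraint-pair, which of the ``extra'' terms are nonzero; most pairs (everything involving $M$, all $\{H,H\}$, etc.) will have all extra terms vanish by the orthogonality and by Remark \ref{r:properties_mu}, leaving only $\{L,H\}$ and $\{P,H\}$ and $\{P,P\}$ to check carefully.

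The main obstacle I anticipate is precisely the $\{P^{A,\psi}_\xi,H^{A,\psi}_\lambda\}$ and $\{L^{A,\psi}_c,H^{A,\psi}_\lambda\}$ brackets: the Hamiltonian constraint $H^{A,\psi}_\lambda$ contains the term $h^{A,\psi}_\lambda\sim\lambda e_n\, e^2\,\overline\psi\gamma[A,\psi]$, whose Hamiltonian vector field will mix the coframe $e$, the connection $\omega$ (through the corrected structural constraint \eqref{e:omegareprfix2spin}), the gauge field $A$ and the spinors simultaneously, and one has to verify that the resulting cross terms land on the ``interaction'' parts of $P^{A,\psi}$ and $M^{A,\psi}$ rather than producing a genuinely new obstruction. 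This is where I would expect to spend essentially all the real work, using integration by parts on $\Sigma$, the Leibniz rule for $d_\omega$ and $d_A$, and the structural constraints; the rest is bookkeeping. A useful sanity check along the way is that the final brackets must reduce to those of Theorem \ref{thm:first-class-constraints_YM} when $\psi=\overline\psi=0$ and to the spinor ones when $A=B=0$, which pins down the combinatorial coefficients.
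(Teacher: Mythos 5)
Your proposal follows essentially the same route as the paper's proof: apply Theorem \ref{thm:Poisson_Brackets_couples} with $f=A$, $g=\psi$, compute the interaction Hamiltonian vector fields $\mathbb{x}^{A,\psi}$ from \eqref{e:Hamiltonian_VF_interaction}, use the orthogonality of the $A$- and $\psi$-sectors to reduce to the simplified bracket formula, and then verify case by case (with Remark \ref{r:properties_mu} and the Lie-derivative identities) that the leftover terms reassemble into the interaction summands --- with the hard work concentrated exactly where you predict, in the brackets involving $H^{A,\psi}_\lambda$ and $\{P,P\}$. The plan is correct and matches the paper's argument.
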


\begin{proof}
    We prove each bracket by using Theorem \ref{thm:Poisson_Brackets_couples}.  The first step is to compute the Hamiltonian vector fields of the constraints. Using the notation and the results of Section \ref{s:appendix_Poisson_brackets}, the expressions of $\mathbb{L},\mathbb{l}^{A}, \mathbb{l}^{\psi}, \mathbb{P}\mathbb{p}^{A}, \mathbb{p}^{\psi}, \mathbb{H},\mathbb{h}^{A}, \mathbb{h}^{\psi}$ and $ \mathbb{M}^{A}$ have been computed in \cite{CCF2022} and are collected in Appendix \ref{s:appendix_Hamiltonianvf}. Hence the only components that we have to compute through \eqref{e:Hamiltonian_VF_interaction} are $\mathbb{l}^{A, \psi}$, $\mathbb{p}^{A, \psi}$, $\mathbb{h}^{A, \psi}$ and  $\mathbb{m}^{A, \psi}$. Let us start from $\mathbb{l}^{A, \psi}$. It must satisfy
\begin{align*}
     \iota_{\mathbb{l}^{A, \psi}} (\varpi+ \varpi_{A}+ \varpi_{\psi}) + \iota_{\mathbb{l}^{\psi}} \varpi_{A}+ \iota_{\mathbb{l}^{A}} \varpi_{\psi} = 0.
\end{align*}
Since $\iota_{\mathbb{l}^{\psi}} \varpi_{A}= \iota_{\mathbb{l}^{A}} \varpi_{\psi} = 0$ we conclude $\mathbb{l}^{A, \psi}=0$. Similarly we have 
\begin{align*}
     \iota_{\mathbb{p}^{A, \psi}} (\varpi+ \varpi_{A}+ \varpi_{\psi}) + \iota_{\mathbb{p}^{\psi}} \varpi_{A}+ \iota_{\mathbb{p}^{A}} \varpi_{\psi} = \delta p^{A, \psi}_{\xi}.
\end{align*}
Since $\iota_{\mathbb{p}^{\psi}} \varpi_{A}= \iota_{\mathbb{p}^{A}} \varpi_{\psi}=0$, the computation is exactly the same as for $p^{\psi}_\xi$ with $A_0$ instead of $\omega_0$. 
    Hence we get
        \begin{align*}
	    \mathbb{p}^{A,\psi}_e &= 0 &
	    \mathbb{p}^{A,\psi}_\omega &= \mathbb{V}_{p^{A, \psi}} \\
        \mathbb{p}^{A,\psi}_A &= 0 &
	    \mathbb{p}^{A,\psi}_\rho &= 0 \\
	    \mathbb{p}^{A,\psi}_{\psi} &= - [\iota_{\xi}{A_0},\psi] &
	    \mathbb{p}^{A,\psi}_{\overline{\psi}} &= -  [\iota_{\xi}{A_0},\overline\psi].
	\end{align*}
 For $\mathbb{h}^{A, \psi}$ we need some more work. We have $\iota_{\mathbb{h}^{\psi}} \varpi_{A}= \iota_{\mathbb{h}^{A}} \varpi_{\psi}=0$ and 
 \begin{align*}
     \delta h^{A, \psi}_\lambda & = \int_{\Sigma}- \lambda e_n ie \delta e\left( \overline{\psi}\gamma [A, \psi]\right)- \lambda e_n\left[ i\frac{e^2}{2}\left( \delta \overline{\psi}\gamma [A, \psi] +\overline{\psi}\gamma [\delta A, \psi]-\overline{\psi}\gamma [A, \delta \psi]  \right)
     \right].
 \end{align*}
    Hence we get:
    \begin{align*}
	    \mathbb{h}^{A,\psi}_e &= 0 &
	    \mathbb{h}^{A,\psi}_\omega &= - \frac{i}{2}\lambda e_n \overline{\psi}\gamma [A, \psi]+ \mathbb{V}_{h^{A, \psi}}\\
        \mathbb{h}^{A,\psi}_A &= 0 &
	    (\mathbb{h}^{A,\psi}_\rho)^J_I &=- \frac{1}{2}g_i \lambda e_n e^2 \overline{\psi}_I\gamma \psi^J \\
	    \frac{e^3}{3!}\gamma \mathbb{h}^{A,\psi}_{\psi} &= \frac{\lambda e_n e^2}{2}\gamma [A, \psi] &
     \frac{e^3}{3!}\mathbb{h}^{A,\psi}_{\overline{\psi}}\gamma  &= \frac{\lambda e_n e^2}{2} [A, \overline{\psi}] \gamma.
	\end{align*}
 As for $p^{A, \psi}_{\xi}$, the Hamiltonian vector field of $m^{A, \psi}_{\mu}$ can be obtained by noticing that it is equal to that of $l^{ \psi}_{c}$ by substituting $c$ with $\mu$. The result is 
         \begin{align*}
	    \mathbb{m}^{A,\psi}_e &= 0 &
	    \mathbb{m}^{A,\psi}_\omega &= \mathbb{V}_{m^{A, \psi}} \\
        \mathbb{m}^{A,\psi}_A &= 0 &
	    \mathbb{m}^{A,\psi}_\rho &= 0 \\
	    \mathbb{m}^{A,\psi}_{\psi} &= [\mu, \psi]&
	    \mathbb{m}^{A,\psi}_{\overline{\psi}} &= [\mu, \overline\psi].
	\end{align*}
 We can now compute the constraints using Theorem \ref{thm:Poisson_Brackets_couples}. Before beginning the actual computation we note that for all constraints 
\begin{align*}
    \iota_{\mathbb{X}+\mathbb{x}^{\psi}}\iota_{\mathbb{Y}+\mathbb{y}^{\psi}}\varpi_{A}+ \iota_{\mathbb{X}+\mathbb{x}^{A}}\iota_{\mathbb{Y}+\mathbb{y}^{A}}\varpi_{\psi}=0,\\
    \iota_{\mathbb{X}}\iota_{\mathbb{Y}}\varpi_{A}=0, \\ \iota_{\mathbb{X}}\iota_{\mathbb{Y}}\varpi_{\psi}=0.
\end{align*}
Furthermore it is also possible to note that 
\begin{align*}
    \iota_{\mathbb{x}^{\psi}}\iota_{\mathbb{y}^{A}}\Omega_{YMS}=0\\
    \iota_{\mathbb{x}^{A}}\iota_{\mathbb{y}^{\psi}}\Omega_{YMS}=0
\end{align*}
for all brackets except $\{H^{A, \psi}_\lambda,H^{A, \psi}_\lambda\}$
\begin{align*}
    \iota_{\mathbb{h}^{\psi}}\iota_{\mathbb{h}^{A}}\Omega_{YMS}=\iota_{\mathbb{h}^{\psi}}\iota_{\mathbb{h}^{A}}\varpi= \int_{\Sigma} e \mathbb{h}_e^{\psi} \mathbb{h}_\omega^{A}=0
\end{align*}
since $\mathbb{h}_e^{\psi} \sim \lambda$, $\mathbb{h}_\omega^{\psi} \sim \lambda$ and $\lambda^2=0$. Hence we conclude that we have
\begin{align*}
    \{X^{A, \psi},Y^{A, \psi}\}_{YMS}&=\{X+x^{A}, Y+y^{A}\}_A+\{X+x^{\psi}, Y+y^{\psi}\}_\psi-\{X,Y\}\\
    &\phantom{=}+ \iota_{\mathbb{y}^{A, \psi}}\delta (X+ x^{A}+x^{\psi}+x^{A, \psi}) + \iota_{\mathbb{x}^{A, \psi}}\delta (Y+ y^{A}+y^{\psi}+y^{A, \psi})\\
    &\phantom{=}- \iota_{\mathbb{x}^{A, \psi}}\iota_{\mathbb{y}^{A, \psi}}\Omega_{YMS}. 
\end{align*}
Using this formula we can compute the brackets, omitting the terms that are zero.
 \begin{align*}
    \{M_{\mu}^{A, \psi},M_{\mu}^{A, \psi}\}_{YMS}&=\{M_{\mu}^{A}, M_{\mu}^{A}\}_A+ 2\iota_{\mathbb{m}^{A, \psi}}\delta ( M_{\mu}^{A}+m_{\mu}^{A, \psi}) -\iota_{\mathbb{m}^{A, \psi}}\iota_{\mathbb{m}^{A, \psi}}\Omega_{YMS} \\
    &= \frac{1}{2} M_{[\mu, \mu]}^{A} +  2\int_{\Sigma} i \frac{e^3}{2\cdot 3!} \left( [\mu,[\mu,\overline{\psi}]]\gamma \psi - \overline{\psi} \gamma [\mu,[\mu,\psi]] +[\mu,\overline{\psi}]\gamma [\mu,\psi]\right)\\
     &\phantom{=} +2\int_{\Sigma} i \frac{e^3}{2\cdot 3!}  [\mu,\overline{\psi}] \gamma [\mu,\psi]-2\int_{\Sigma} i \frac{e^3}{ 3!}  [\mu,\overline{\psi}] \gamma [\mu,\psi]\\
     &= \frac{1}{2} M_{[\mu, \mu]}^{A} + \frac{1}{2} m_{[\mu, \mu]}^{A, \psi}= \frac{1}{2}M_{[\mu, \mu]}^{A, \psi}.
\end{align*}
Since $\mathbb{l}^{A, \psi}=0$ we get
 \begin{align*}
    \{M_{\mu}^{A, \psi},L_c^{A, \psi}\}_{YMS}&=\{M_{\mu}^{A}, L_c\}_A
    + \iota_{\mathbb{m}^{A, \psi}}\delta (L_c+l_c^{\psi})\\
    &= \int_{\Sigma} - i \frac{e^3}{2\cdot 3!} \left( -[c,[\mu,\overline{\psi}]]\gamma \psi-[c,\overline{\psi}]\gamma [\mu,\psi] - [\mu, \overline{\psi}] \gamma [c,\psi] + \overline{\psi} \gamma [c, [\mu, \psi]]\right)=0.
\end{align*}
where we used that $[c,[\mu,\overline{\psi}]]=[\mu,[c,\overline{\psi}]]$ and the properties in Remark \ref{r:properties_mu}. 
 \begin{align*}
    \{M_{\mu}^{A, \psi},P_{\xi}^{A, \psi}\}_{YMS}&=\{M_{\mu}^{A}, P_{\xi}+p_{\xi}^{A}\}_A+ \iota_{\mathbb{p}^{A, \psi}}\delta (M_{\mu}^{A}+m_{\mu}^{A, \psi}) + \iota_{\mathbb{m}^{A, \psi}}\delta (P_{\xi}+ p_{\xi}^{A}+p_{\xi}^{\psi}+p_{\xi}^{A, \psi})\\
    &\phantom{=}- \iota_{\mathbb{m}^{A, \psi}}\iota_{\mathbb{p}^{A, \psi}}\Omega_{YMS}\\
    &= M^{A}_{\mathrm{L}_\xi^{A_0}\mu}- \int_{\Sigma} i \frac{e^3}{3!} \left([\mu,\overline{\psi}]\gamma [\iota_{\xi} {A_0}, \psi] + [\iota_{\xi}{A_0}, \overline{\psi}] \gamma [\mu,\psi] \right)\\
    &\phantom{=}-\int_{\Sigma} i \frac{e^3}{2\cdot 3!} \left( [\mu,\overline{\psi}] \gamma \mathrm{L}_{\xi}^{\omega_0+A_0}(\psi) -\overline{\psi} \gamma \mathrm{L}_{\xi}^{\omega_0+A_0}([\mu,\psi])+ \mathrm{L}_\xi^{\omega_0+A_0}([\mu,\overline{\psi}])\gamma \psi \right)\\
    &\phantom{=}+ \int_{\Sigma} i \frac{e^3}{2\cdot 3!} \left(\mathrm{L}_\xi^{\omega_0+A_0}(\overline{\psi})\gamma [\mu,\psi]  +2\left([\mu,\overline{\psi}]\gamma [\iota_{\xi} {A_0}, \psi] + [\iota_{\xi}{A_0}, \overline{\psi} ]\gamma [\mu,\psi] \right)\right)\\
    &= M^{A}_{\mathrm{L}_\xi^{A_0}\mu}  -\int_{\Sigma} i \frac{e^3}{2\cdot 3!} \left(  -\overline{\psi} \gamma [\mathrm{L}_{\xi}^{\omega_0+A_0}\mu,\psi]+ [\mathrm{L}_\xi^{\omega_0+A_0}\mu,\overline{\psi}]\gamma \psi \right)\\
    &= M^{A}_{\mathrm{L}_\xi^{A_0}\mu}  -\int_{\Sigma} i \frac{e^3}{2\cdot 3!} \left(  -\overline{\psi} \gamma [\mathrm{L}_{\xi}^{A_0}\mu,\psi]+ [\mathrm{L}_\xi^{A_0}\mu,\overline{\psi}]\gamma \psi \right)\\
    &= M^{A}_{\mathrm{L}_\xi^{A_0}\mu}+ m^{A, \psi}_{\mathrm{L}_\xi^{A_0}\mu}= M^{A, \psi}_{\mathrm{L}_\xi^{A_0}\mu}.
\end{align*}
where we used $\mathrm{L}_{\xi}^{\omega_0+A_0}\mu=\mathrm{L}_{\xi}^{A_0}\mu$ and $[\mu,\overline{\psi}] \gamma \mathrm{L}_{\xi}^{\omega_0+A_0}(\psi)=-\overline{\psi} \gamma [\mu,\mathrm{L}_{\xi}^{\omega_0+A_0}(\psi)]$.
Similarly we get 
\begin{align*}
    \{M_{\mu}^{A, \psi},H_{\lambda}^{A, \psi}\}_{YMS}&=\{M_{\mu}^{A}, H^A_{\lambda}\}_A + \iota_{\mathbb{h}^{A, \psi}}\delta (M_{\mu}^{A}+m_{\mu}^{A, \psi}) + \iota_{\mathbb{m}^{A, \psi}}\delta (H_{\lambda}+ h_{\lambda}^{A}+h_{\lambda}^{\psi}+h_{\lambda}^{A, \psi})\\
    &\phantom{=}- \iota_{\mathbb{m}^{A, \psi}}\iota_{\mathbb{h}^{A, \psi}}\Omega_{YMS}\\
    &= -\int_{\Sigma}\frac{i\lambda e_n e^2}{2} \overline{\psi}\gamma [d_A \mu, \psi] +\int_{\Sigma}\frac{i\lambda e_n e^2}{2} \left([\mu,\overline{\psi}]\gamma [A, \psi] - [A,\overline{\psi}]\gamma [\mu, \psi] \right)\\
    &\phantom{=}+\int_{\Sigma}\frac{i\lambda e_n e^2}{4} \left([\mu,\overline{\psi}]\gamma d_{\omega} \psi- \overline{\psi}\gamma d_{\omega}([\mu, \psi])  + d_{\omega}([\mu, \overline{\psi}])\gamma \psi +d_{\omega}( \overline{\psi})\gamma [\mu, \psi]  \right)\\
    &\phantom{=}+\int_{\Sigma}\frac{i\lambda e_n e^2}{4} \left([\mu,\overline{\psi}]\gamma [A, \psi]- \overline{\psi}\gamma [A,[\mu, \psi]] + [A,[\mu, \overline{\psi}]]\gamma \psi +[A,\overline{\psi}]\gamma [\mu, \psi]  \right)\\
    &\phantom{=}-\int_{\Sigma}\frac{i\lambda e_n e^2}{2} \left([\mu,\overline{\psi}]\gamma [A, \psi] - [A,\overline{\psi}]\gamma [\mu, \psi] \right)\\
    &= -\int_{\Sigma}\frac{i\lambda e_n e^2}{2} \overline{\psi}\gamma [d_A \mu, \psi]+\int_{\Sigma}\frac{i\lambda e_n e^2}{4} \left([\mu,\overline{\psi}]\gamma d_{\omega+A} \psi- \overline{\psi}\gamma d_{\omega+A}([\mu, \psi])\right) \\
    &\phantom{=} +\int_{\Sigma}\frac{i\lambda e_n e^2}{4} \left(d_{\omega+A}([\mu, \overline{\psi}])\gamma \psi +d_{\omega+A}( \overline{\psi})\gamma [\mu, \psi]  \right)\\
    &= -\int_{\Sigma}\frac{i\lambda e_n e^2}{2} \overline{\psi}\gamma [d_A \mu, \psi]+\int_{\Sigma}\frac{i\lambda e_n e^2}{2} \overline{\psi}\gamma [d_{A+\omega} \mu, \psi]=0
\end{align*}
where we used $d_{A+\omega} \mu=d_{A} \mu.$ Using again that $\mathbb{l}^{A, \psi}=0$ we get
 \begin{align*}
    \{L_c^{A, \psi},L_c^{A, \psi}\}_{YMS}&=\{L_c, L_c\}_A+\{L_c+l_c^{\psi}, L_c+l_c^{\psi}\}_\psi-\{L_c,L_c\}\\
    &= \frac{1}{2}(L_{[c,c]}+l_{[c,c]}^{\psi})= \frac{1}{2}L^{A,\psi}_{[c,c]}.
\end{align*}
Similarly, 
\begin{align*}
    \{L_c^{A, \psi},P_{\xi}^{A, \psi}\}_{YMS}&=\{L_c, P_{\xi}+p_{\xi}^{A}\}_A+\{L_c+l_c^{\psi}, P_{\xi}+p_{\xi}^{\psi}\}_\psi-\{L_c,P_{\xi}\}+ \iota_{\mathbb{p}^{A, \psi}}\delta (L_c+ l_c^{\psi})\\
    &= L^A_{\mathrm{L}_{\xi}^{\omega_0}c}+ L^{\psi}_{\mathrm{L}_{\xi}^{\omega_0}c}- L_{\mathrm{L}_{\xi}^{\omega_0}c}+ \int_{\Sigma}  i \frac{e^3}{2\cdot 3!} \left( [c,[\iota_{\xi}A_0,\overline{\psi}]]\gamma \psi + [c,\overline{\psi}]\gamma [\iota_{\xi}A_0,\psi]\right)\\
    &\phantom{=}+ \int_{\Sigma}  i \frac{e^3}{2\cdot 3!} \left(- [\iota_{\xi}A_0,\overline{\psi}] \gamma [c,\psi] - \overline{\psi} \gamma [c,[\iota_{\xi}A_0,\psi]]\right)\\
    &= L^{A,\psi}_{\mathrm{L}_{\xi}^{\omega_0}c} 
    + \int_{\Sigma}  i \frac{e^3}{2\cdot 3!} \left( -[\iota_{\xi}A_0,[c,\overline{\psi}]]\gamma \psi + [c,\overline{\psi}]\gamma [\iota_{\xi}A_0,\psi]\right)\\
    &\phantom{=}+ \int_{\Sigma}  i \frac{e^3}{2\cdot 3!} \left(- [\iota_{\xi}A_0,\overline{\psi}] \gamma [c,\psi] + \overline{\psi} \gamma [\iota_{\xi}A_0,[c,\psi]]\right)\\
    &= L^{A,\psi}_{\mathrm{L}_{\xi}^{\omega_0}c}
\end{align*}
where we used $[c,[\iota_{\xi}A_0,\overline{\psi}]]=-[\iota_{\xi}A_0,[c,\overline{\psi}]]$ and $[\iota_{\xi}A_0,[c,\overline{\psi}]]\gamma \psi = [c,\overline{\psi}]\gamma [\iota_{\xi}A_0,\psi]$.

 \begin{align*}
    \{L_c^{A, \psi},H_{\lambda}^{A, \psi}\}_{YMS}&=\{L_c, H_{\lambda}+h_{\lambda}^{A}\}_A+\{L_c+l_c^{\psi}, H_{\lambda}+h_{\lambda}^{\psi}\}_\psi-\{L_c,H_{\lambda}\}+ \iota_{\mathbb{h}^{A, \psi}}\delta (L_c+ l_c^{\psi})\\
    &= - P^{A}_{X^{(\nu)}} + L_{X^{(\nu)}(\omega - \omega_0)_{\nu}} -H^{A}_{X^{(n)}} + M^{A}_{X^{(\nu)}(A-A_0)_{\nu}} - P^{\psi}_{X^{(\nu)}} + L^{\psi}_{X^{(\nu)}(\omega - \omega_0)_{\nu}} \\
    &\phantom{=}- H^{\psi}_{X^{(n)}}- P_{X^{(\nu)}} + L_{X^{(\nu)}(\omega - \omega_0)_{\nu}} - H_{X^{(n)}} - \int_{\Sigma} \frac{i}{2} [\lambda e_n \overline{\psi}\gamma [A, \psi], e^2] \\
    &\phantom{=}- \int_{\Sigma} \frac{i}{4}\left(-[c, \lambda e_n e^2 [A,\overline{\psi}]]\gamma \psi + [c, \overline{\psi}]\gamma \lambda e_n e^2 [A,\psi]\right)\\
    &\phantom{=}+\int_{\Sigma} \frac{i}{4}\left(- \lambda e_n e^2 [A,\overline{\psi}] \gamma [c, \psi] + \overline{\psi}\gamma [c, \lambda e_n e^2 [A,\psi]] \right)\\
    &= - P^{A}_{X^{(\nu)}} + L_{X^{(\nu)}(\omega - \omega_0)_{\nu}} -H^{A}_{X^{(n)}} + M^{A}_{X^{(\nu)}(A-A_0)_{\nu}} - P^{\psi}_{X^{(\nu)}} + L^{\psi}_{X^{(\nu)}(\omega - \omega_0)_{\nu}}\\
    & \phantom{=}- H^{\psi}_{X^{(n)}}- P_{X^{(\nu)}} + L_{X^{(\nu)}(\omega - \omega_0)_{\nu}} - H_{X^{(n)}}\\
    &\phantom{=} - \int_{\Sigma}[c, \lambda e_n]\left[ i\frac{e^2}{4}\left( \overline{\psi}\gamma [A, \psi] - [A,\overline{\psi}]\gamma \psi \right) \right]\\
    &= - P^{A}_{X^{(\nu)}} + L_{X^{(\nu)}(\omega - \omega_0)_{\nu}} -H^{A}_{X^{(n)}} + M^{A}_{X^{(\nu)}(A-A_0)_{\nu}} - P^{\psi}_{X^{(\nu)}} + L^{\psi}_{X^{(\nu)}(\omega - \omega_0)_{\nu}}\\
    &\phantom{=} - H^{\psi}_{X^{(n)}}- P_{X^{(\nu)}} + L_{X^{(\nu)}(\omega - \omega_0)_{\nu}} - H_{X^{(n)}} - p^{A, \psi}_{X^{(\nu)}} - h^{A, \psi}_{X^{(n)}} + m^{A, \psi}_{X^{(\nu)}(A-A_0)_{\nu}}\\
    &= - P^{A, \psi}_{X^{(\nu)}} + L^{A, \psi}_{X^{(\nu)}(\omega - \omega_0)_{\nu}} -H^{A, \psi}_{X^{(n)}} + M^{A, \psi}_{X^{(\nu)}(A-A_0)_{\nu}}
\end{align*}
where in the second-last passage we used that $[c,\lambda e_n]=X= X^{(\nu)}e_\nu+ X^{(n)}e_n$ and that \begin{align*}
    - p^{A, \psi}_{X^{(\nu)}}  + m^{A, \psi}_{X^{(\nu)}(A-A_0)_{\nu}}=- \int_{\Sigma}[c, \lambda e_n]^{(\nu)}e_\nu\left[ i\frac{e^2}{4}\left( \overline{\psi}\gamma [A, \psi] - [A,\overline{\psi}]\gamma \psi \right) \right]
\end{align*}
Let us now consider
 \begin{align*}
    \{P_{\xi}^{A, \psi},P_{\xi}^{A, \psi}\}_{YMS}&=\{P_{\xi}+p_{\xi}^{A}, P_{\xi}+p_{\xi}^{A}\}_A+\{P_{\xi}+p_{\xi}^{\psi}, P_{\xi}+p_{\xi}^{\psi}\}_\psi-\{P_{\xi},P_{\xi}\}\\
    &\phantom{=}+ 2\iota_{\mathbb{p}^{A, \psi}}\delta (P_{\xi}+ p_{\xi}^{A}+p_{\xi}^{\psi}+p_{\xi}^{A, \psi}) - \iota_{\mathbb{p}^{A, \psi}}\iota_{\mathbb{p}^{A, \psi}}\Omega_{YMS}.
\end{align*}
    We have
    \begin{align*}
    &\{P_{\xi}+p_{\xi}^{A}, P_{\xi}+p_{\xi}^{A}\}_A+\{P_{\xi}+p_{\xi}^{\psi}, P_{\xi}+p_{\xi}^{\psi}\}_\psi-\{P_{\xi},P_{\xi}\}\\
    &=\frac{1}{2}\left(P^{A}_{[\xi, \xi]}- L_{\iota_{\xi}\iota_{\xi}F_{\omega_0}}-M^{A}_{\iota_{\xi}\iota_{\xi}F_{A_0}} + P^{\psi}_{[\xi, \xi]}- L^{\psi}_{\iota_{\xi}\iota_{\xi}F_{\omega_0}}-P_{[\xi, \xi]}+ L_{\iota_{\xi}\iota_{\xi}F_{\omega_0}}\right)
\end{align*}
    and 
    \begin{align*}
    &2\iota_{\mathbb{p}^{A, \psi}}\delta (P_{\xi}+ p_{\xi}^{A}+p_{\xi}^{\psi}+p_{\xi}^{A, \psi}) - \iota_{\mathbb{p}^{A, \psi}}\iota_{\mathbb{p}^{A, \psi}}\Omega_{YMS}\\
    &= -\int_{\Sigma} i \frac{e^3}{2\cdot 3!} \left( [\iota_{\xi}A_0,\overline{\psi}] \gamma \mathrm{L}_{\xi}^{\omega_0+A_0}(\psi)+\overline{\psi} \gamma \mathrm{L}_{\xi}^{\omega_0+A_0}([\iota_{\xi}A_0,\psi])\right)\\
    &\phantom{=}-\int_{\Sigma} i \frac{e^3}{2\cdot 3!} \left( - \mathrm{L}_\xi^{\omega_0+A_0}(\overline{\psi})\gamma [\iota_{\xi}A_0,\psi]- \mathrm{L}_\xi^{\omega_0+A_0}([\iota_{\xi}A_0,\overline{\psi}])\gamma \psi  \right)\\
    &=-\int_{\Sigma} i \frac{e^3}{2\cdot 3!} \left( -\overline{\psi} \gamma [\iota_{\xi}A_0,\mathrm{L}_{\xi}^{\omega_0+A_0}(\psi)]+\overline{\psi} \gamma \mathrm{L}_{\xi}^{\omega_0+A_0}([\iota_{\xi}A_0,\psi])\right)\\
    &\phantom{=}-\int_{\Sigma} i \frac{e^3}{2\cdot 3!} \left( - [\iota_{\xi}A_0,\mathrm{L}_\xi^{\omega_0+A_0}(\overline{\psi})]\gamma \psi- \mathrm{L}_\xi^{\omega_0+A_0}([\iota_{\xi}A_0,\overline{\psi}])\gamma \psi  \right)\\
    &=\int_{\Sigma}- i \frac{e^3}{2\cdot 3!} \left( -\overline{\psi} \gamma [\mathrm{L}_\xi^{\omega_0+A_0}(\iota_{\xi}A_0),\psi]- [\mathrm{L}_\xi^{\omega_0+A_0}(\iota_{\xi}A_0),\overline{\psi}]\gamma \psi  \right)\\
    &= \int_{\Sigma} i \frac{e^3}{4\cdot 3!} \left( \overline{\psi} \gamma \left[\iota_{[\xi,\xi]}A_0+\iota_{\xi}\iota_{\xi}F_{A_0},\psi\right]+ \left[\iota_{[\xi,\xi]}A_0+\iota_{\xi}\iota_{\xi}F_{A_0},\overline{\psi}\right]\gamma \psi  \right)\\
    &= \frac{1}{2}p^{A,\psi}_{[\xi, \xi]}- m^{A, \psi}_{\iota_{\xi}\iota_{\xi}F_{A_0}}
    \end{align*}
    where we used that $\mathrm{L}_\xi^{\omega_0+A_0}\iota_{\xi}A_0=\mathrm{L}_\xi^{A_0}\iota_{\xi}A_0= \frac{1}{2}\iota_{[\xi,\xi]}A_0+\frac{1}{2}\iota_{\xi}\iota_{\xi}F_{A_0}.$
    Hence we get
    \begin{align*}
    \{P_{\xi}^{A, \psi},P_{\xi}^{A, \psi}\}_{YMS}&= \frac{1}{2}\left(P^{A,\psi}_{[\xi, \xi]}- L^{A,\psi}_{\iota_{\xi}\iota_{\xi}F_{\omega_0}}-M^{A,\psi}_{\iota_{\xi}\iota_{\xi}F_{A_0}}\right).
    \end{align*}

 \begin{align*}
    \{P_{\xi}^{A, \psi},H_{\lambda}^{A, \psi}\}_{YMS}&=\{P_{\xi}+p_{\xi}^{A}, H_{\lambda}+h_{\lambda}^{A}\}_A+\{P_{\xi}+p_{\xi}^{\psi}, H_{\lambda}+h_{\lambda}^{\psi}\}_\psi-\{P_{\xi},H_{\lambda}\}\\
    &\phantom{=}+ \iota_{\mathbb{h}^{A, \psi}}\delta (P_{\xi}+ p_{\xi}^{A}+p_{\xi}^{\psi}+p_{\xi}^{A, \psi}) + \iota_{\mathbb{p}^{A, \psi}}\delta (H_{\lambda}+ h_{\lambda}^{A}+h_{\lambda}^{\psi}+h_{\lambda}^{A, \psi})\\
    &\phantom{=}- \iota_{\mathbb{p}^{A, \psi}}\iota_{\mathbb{h}^{A, \psi}}\Omega_{YMS}\\
    &=  P^{A}_{Y^{(\nu)}} -L_{ Y^{(\nu)} (\omega - \omega_0)_{\nu}} +  H^{A}_{ Y^{(n)}} - M^{A}_{Y^{(\nu)}(A-A_0)_{\nu}}  +  p^{\psi}_{Y^{(\nu)}} -l^\psi_{ Y^{(\nu)} (\omega - \omega_0)_{\nu}}\\
    &\phantom{=} +  h^{\psi}_{ Y^{(n)}}- \int_{\Sigma} \frac{i \lambda e_n \mathrm{L}_\xi^{\omega_0}e^2 }{4}\left(\overline{\psi}\gamma [A,\psi]-[A,\overline{\psi}])\gamma \psi\right)\\
    &\phantom{=}- \frac{i \lambda e_n e^2 }{2}\left(\overline{\psi}\gamma[\iota_{\xi}F_{\omega_0}+\mathrm{L}_\xi^{A_0}(A-A_0), \psi]- [\iota_{\xi}F_{\omega_0}+\mathrm{L}_\xi^{A_0}(A-A_0),\overline{\psi}]\gamma \psi \right)\\
    &\phantom{=}-\int_{\Sigma} \frac{i \lambda e_n e^2 }{2}\left([A,\overline{\psi}]\gamma \mathrm{L}_\xi^{\omega_0+A_0}\psi - \mathrm{L}_\xi^{\omega_0+A_0}\overline{\psi}\gamma [A,\psi] \right)  \\
    &\phantom{=}- \int_{\Sigma}\frac{i \lambda e_n e^2 }{4}\left([\iota_\xi {A_0},\overline{\psi} ]\gamma d_{\omega+A}\psi-\overline{\psi} \gamma d_{\omega+A}[\iota_\xi {A_0},\psi]+d_{\omega+A}[\iota_\xi {A_0},\overline{\psi} ]\gamma \psi \right)\\
    &\phantom{=}- \int_{\Sigma}\frac{i \lambda e_n e^2 }{4}\left(d_{\omega+A}\overline{\psi} \gamma[\iota_\xi {A_0},\psi ] +2[A,\overline{\psi}]\gamma [\iota_\xi A_0,\psi] + 2[\iota_\xi A_0 ,\overline{\psi}]\gamma [A,\psi] \right)  
    \end{align*}
    Hence we get
    \begin{align*}
    \{P_{\xi}^{A, \psi},H_{\lambda}^{A, \psi}\}_{YMS}&=
      P^{A}_{Y^{(\nu)}} -L_{ Y^{(\nu)} (\omega - \omega_0)_{\nu}} +  H^{A}_{ Y^{(n)}} - M^{A}_{Y^{(\nu)}(A-A_0)_{\nu}}  +  p^{\psi}_{Y^{(\nu)}} -l^\psi_{ Y^{(\nu)} (\omega - \omega_0)_{\nu}}\\
    &\phantom{=} +  h^{\psi}_{ Y^{(n)}}- \int_{\Sigma}  \frac{i \mathrm{L}_\xi^{\omega_0}(\lambda e_n) e^2 }{4}\left(\overline{\psi}\gamma [A,\psi]-[A,\overline{\psi}])\gamma \psi\right)\\
    &=  P^{A}_{Y^{(\nu)}} -L_{ Y^{(\nu)} (\omega - \omega_0)_{\nu}} +  H^{A}_{ Y^{(n)}} - M^{A}_{Y^{(\nu)}(A-A_0)_{\nu}}  +  p^{\psi}_{Y^{(\nu)}} -l^\psi_{ Y^{(\nu)} (\omega - \omega_0)_{\nu}}\\
    &\phantom{=} +  h^{\psi}_{ Y^{(n)}}+  p^{A, \psi}_{Y^{(\nu)}}  +  h^{A, \psi}_{ Y^{(n)}} - m^{A, \psi}_{Y^{(\nu)}(A-A_0)_{\nu}}\\
    &= P^{A, \psi}_{Y^{(\nu)}} -L^{A, \psi}_{ Y^{(\nu)} (\omega - \omega_0)_{\nu}} +  H^{A, \psi}_{ Y^{(n)}} - M^{A, \psi}_{Y^{(\nu)}(A-A_0)_{\nu}}
\end{align*}

 \begin{align*}
    \{H_{\lambda}^{A, \psi},H_{\lambda}^{A, \psi}\}_{YMS}&=\{H_{\lambda}+h_{\lambda}^{A}, H_{\lambda}+h_{\lambda}^{A}\}_A+\{H_{\lambda}+h_{\lambda}^{\psi}, H_{\lambda}+h_{\lambda}^{\psi}\}_\psi-\{H_{\lambda},H_{\lambda}\}\\
    &\phantom{=}+ \iota_{\mathbb{h}^{A, \psi}}\delta (H_{\lambda}+ h_{\lambda}^{A}+h_{\lambda}^{\psi}+h_{\lambda}^{A, \psi}) + \iota_{\mathbb{h}^{A, \psi}}\delta (H_{\lambda}+ h_{\lambda}^{A}+h_{\lambda}^{\psi}+h_{\lambda}^{A, \psi})\\
    &\phantom{=}- \iota_{\mathbb{h}^{A, \psi}}\iota_{\mathbb{h}^{A, \psi}}\Omega_{YMS} =0
\end{align*}
because all the components of the Hamiltonian vector field of $H_{\lambda}^{A, \psi}$ are proportional to $\lambda$ and $\lambda^2=0$.
\end{proof}

\subsection{Yang--Mills-Higgs}\label{s:YM+scalar}

We consider the case of an interacting scalar field and a Yang--Mills field both coupled to gravity, with the addition of a Higgs-type potential.

First of all, let $P_{SU(n)}$ be a a $SU(n)$-principal bundle over $M$, with the fundamental representation $n\colon SU(n)\rightarrow \mathrm{End}(\mathbb{C}^n)$ and its conjugate one $\bar{n}$ with respect to the canonical hermitian structure on $\mathbb{C}^n$. 

We define the Higgs field $\phi$ (a scalar multiplet) to be a section of the associated bundle $E_n:=P_{SU(n)} \times_n \mathbb{C}^n$, while $\phi^\dagger$ is a section of $E_{\bar{n}}:=P_{SU(n)} \times_{\bar{n}} \mathbb{C}^n$.
Working in the first order formalism, we also introduce the associated momentum
$\Pi\in\Gamma(M,\mathcal{V}\otimes E_n)=:\Omega^{(0,1)}(E_n)$ and its conjugate $\Pi^\dagger\in\Omega^{(0,1)}( E_{\bar{n}})$.

\begin{remark}
    In the remainder of this section, we will identify (sections of) the Lie algebra $ \mathfrak{su}(n)$ with (sections of) the algebra of hermitian traceless matrices over $\mathbb{C}^n$, i.e.
        \begin{equation*}
            \Gamma(M,\mathfrak{su}(n))\simeq\Gamma(M,(E_n\otimes E_{\bar{n}})_{\mathrm{t,h}})=:\Gamma(M,(E_n\otimes E_{\bar{n}})').
        \end{equation*}
    Furthermore, we will consider $\phi$ and $\Pi$ to be such that the total degrees are $|\phi|=0$ and $|\Pi|=1$.%
\end{remark}

\begin{remark}
    The canonical hermitian product on $\mathbb{C}^n$ induces a hermitian product on $E_n$ (and hence on $\Gamma(E_n)$). We \emph{symmetrize} it (i.e. add its complex conjugate) to account for the reality requirement 
        \begin{align*}
            <\cdot,\cdot>\colon E_n\times E_n &\longrightarrow \mathbb{C}\\
            (\hspace{1mm}\phi\hspace{1mm},\hspace{1mm}\varphi\hspace{1mm})&\longmapsto\hspace{1mm} <\phi,\varphi>:=\frac{1}{2}(\phi^\dagger\varphi  + (-1)^{|\phi||\varphi|} \varphi^\dagger \phi).
        \end{align*}

    Furthermore, we denote the full interior product on $E_n\otimes \mathcal{V}$ by 
        \begin{align*}
            (<\cdot,\cdot>)\colon (E_n\otimes \mathcal{V})^2 &\longrightarrow \mathbb{C}\\
            (\Pi,\epsilon)&\longmapsto\hspace{1mm} (<\Pi,\epsilon>):=\frac{1}{2}\eta_{ab}(\Pi^{\dagger a}_i \epsilon^{b,i} + \text{c.c}).
        \end{align*}
\end{remark}

The last ingredient we need to write the YMH action is the covariant derivative. Letting $\alpha\in\Gamma(M,(E_n\otimes E_{\bar{n}})')$, we set $d_\alpha\phi:=d\phi + [A,\phi]$ and $d_\alpha\phi^\dagger:=d\phi^\dagger + [\alpha,\phi^\dagger]$, whilst in coordinates we have\footnote{the action of $\mathfrak{su}(n)$ on $\phi^\dagger$ is defined by requiring that $[\alpha,<\phi,\phi>]=0$, since $<\phi,\phi>$ is assumed to be a $SU(n)$ scalar.} 
    \begin{align*}
        [\alpha,\phi]^i&:=i g_H (\alpha \phi)^i = i g \alpha^i_j \phi^j ;\\
        [\alpha,\phi^\dagger]_i&=-(-1)^{|\alpha|(|\phi|+1)} i g_H \phi^\dagger_j \alpha^j_i,
    \end{align*}
where $g_H$ is a coupling constant related to the representation of $SU(n)$.

With those definitions having been established, denoting the YM field and its conjugate momentum by $A$ and $B$ as before, the desired action in dimension $N$ is given by   $S_{YMH}=S+S_{A}+S_H$ where $S_A$ is defined in \eqref{e:action_YM} and 
    \begin{align*}
        S_H=\int_M &\frac{e^3}{3!}
        <\Pi,d_A\phi> + \frac{e^4}{2 \cdot 4!} (<\Pi,\Pi>)-\frac{q}{4 \cdot 4!}e^4(<\phi,\phi>-v^2),
    \end{align*}
where $q$ is another coupling constant and $v$ represent the Higgs vacuum. 

\begin{remark}
    Notice that the first terms of $S_H$ are formally equivalent to $S_{\phi}$, after substituting $d\phi\rightarrow d_A\phi$ and generalizing $\phi$ to a $SU(n)$ multiplet. Then one can easily show that in this case the structural constraint reads 
        \begin{equation}\label{e:structural_Higgs}
            (e,\Pi)+d_A\phi=0.
        \end{equation}
    (The structural constraint for the $B$ field remains unaltered.) 
    As a consequence the interaction between the YM field and the Higgs field is contained in $S_H$.
\end{remark}

We obtain the geometric phase space  as the bundle
    \begin{equation*}
        {F}^\partial_{YMH}\rightarrow\Omega_{e_n}^1(\Sigma, \mathcal{V}_{\Sigma})\oplus \mathcal{A}_\Sigma^{SU(n)}\oplus \Gamma(\Sigma,E_n\vert_\Sigma)\times\Gamma(\Sigma,E_{\bar{n}}\vert_\Sigma),
    \end{equation*}
with fiber 
    \begin{equation*}
        \mathcal{A}_{red}(\Sigma)\oplus \Omega^{(0,2)}_{\Sigma, red}(\mathfrak{g})\oplus\Omega^{0,1}_\partial(E_{n}\vert_\Sigma)\times\Omega^{0,1}_\partial(E_{\bar{n}}\vert_\Sigma ).
    \end{equation*}
    where $\omega,B,\Pi,\Pi^\dagger$ satisfy
    \eqref{e:structural_constraint_grav}, \eqref{e:constraintYM} and \eqref{e:structural_Higgs}.
${F}^\partial_{YMH}$ is symplectic with symplectic form
    \begin{equation*}
        \Omega_{YMH}=\varpi + \varpi_{A}+\varpi_H
    \end{equation*}
where 
    \begin{equation}\label{e:sympl_form_H}
        \varpi_H = \int_\Sigma  <\delta p,\delta \phi> 
    \end{equation}
having defined $p:=\frac{e^3}{3!}\Pi$ to get rid of the unphysical components of $\Pi$, as we did in the case of the free scalar field.

Before moving on to the constraint analysis, we provide some useful identities. Let $\phi,\varphi \in \Gamma(E_n)$, $\alpha \in \Gamma(\mathfrak{su}(n))$, then: 
    \begin{align}
        <\alpha \varphi, \phi> &= (-1)^{|\alpha||\varphi|}<\varphi, \alpha\phi> \\
        <\varphi, [\alpha,\phi]>&=\frac{i g_H}{2}\Tr[(-1)^{|\alpha||\varphi|} \varphi \phi^\dagger - (-1)^{|\phi|(|\alpha|+|\varphi|)}\phi\varphi^\dagger )\alpha ]\\
        [L_\xi^{A_0},d_A]\phi&=L_\xi^{A_0} d_A\phi + d_A L_\xi^{A_0} \phi = [\iota_\xi F_{A_0} + L_\xi^{A_0}(A-A_0),\phi]\label{id: comm L_Ad_Aphi}
    \end{align}
Let us now consider  the constraints. The coupling of the scalar field to the YM field produces the expected constraints (i.e. free gravity + free YM + free scalar) plus the expected interaction terms (the one arising from the minimal coupling via the covariant derivative $d_A$). Indeed we have
    \begin{align}
        &m_\mu^{H,A}=\int_\Sigma \frac{i}{2} g_H \Tr[\mu (\phi p^\dagger - p \phi^\dagger)]=\int_\Sigma <p,[\mu,\phi]> ;\label{e:constraint_YMH4}\\
        &p_\xi^H =\int_\Sigma -<p,\mathrm{L}_\xi^{\omega_0}\phi>;\label{e:constraint_H2}\\
        &p_\xi^{H,A}=\int_\Sigma -<p, \iota_\xi[A_0,\phi]> \label{e:constraint_YMH2}\\
        &h_\lambda^H=\int_\Sigma \lambda e_n \big[ \frac{e^2}{2}<\Pi,d\phi>  + \frac{e^3}{2\cdot 3!}(<\Pi,\Pi>) - \frac{q_H}{2\cdot 3!}e^3(<\phi,\phi>-v^2)^2 \big];\label{e:constraint_H3}\\
        &h_\lambda^{H,A}=\int_\Sigma \lambda e_n \frac{e^2}{2} <\Pi,[A,\phi]>.\label{e:constraint_YMH3}
    \end{align}
\begin{remark}
    Notice that $h_\lambda^H=h_\lambda^\phi + h_\lambda^{V_H}$, where $h_\lambda^{V_H}$ is the term containing the Higgs potential term $V_H:=\frac{1}{2}q_H (<\phi,\phi>-v^2)^2$
    \begin{equation*}
        h_\lambda^{V_H}= - \int_\Sigma \lambda e_n  \frac{q_H}{24}e^3(<\phi,\phi>-v^2)^2 =\int_\Sigma \lambda e_n  \frac{e^3}{3!}V_H
    \end{equation*}   
\end{remark}    
Obtaining
    \begin{align*}
    L_c^{H,A}&=L_c; &  P_{\xi}^{H,A}&=P_\xi + p_\xi^A + p_\xi^H + p_\xi^{H,A};\\
    M_\mu^{H,A}&=M_\mu^A + m_\mu^{H,A}; &   H_{\lambda}^{H,A}&=H_\lambda + h_\lambda^A + h_\lambda^{H} + h_\lambda^{H,A}    
    \end{align*}

\begin{theorem}\label{thm:first-class-constraints_Higgs}
    Assume that $g^\partial$ is non-degenerate on $\Sigma$. Then, the zero locus of the functions  $L^{H,A}_c$, $P^{H,A}_{\xi}$,  $H^{H,A}_{\lambda}$ and $M_\mu^{H,A}$ defined above is a  coisotropic submanifold with respect to the symplectic structure $\Omega_{YMH}$. Their mutual Poisson brackets read
    {\footnotesize
         \begin{align*}
             \{L_c^{H,A}, L_c^{H,A}\}_{YMH} &= - \frac{1}{2} L^{H,A}_{[c,c]}  & \{L^{H,A}_c, P^{H,A}_{\xi}\}_{YMH}  &=  L^{H,A}_{\mathrm{L}_{\xi}^{\omega_0}c}\\
             \ \{ L^{H,A}_c, M^{H,A}_\mu \}_{YMH} & = 0   & \{L^{H,A}_c,  H^{H,A}_{\lambda}\}_{YMH}  &= - P^{H,A}_{X^{(\nu)}} + L^{H,A}_{X^{(\nu)}(\omega - \omega_0)_\nu} - H^{H,A}_{X^{(n)}} + M^{H,A}_{X^{\nu}(A-A_0)_\nu} \\
             \ \{M^{H,A}_\mu, P^{H,A}_{\xi}\}_{YMH}  &=  M^{H,A}_{\mathrm{L}_{\xi}^{\omega_0}\mu  } & \{P^{H,A}_{\xi},H^{H,A}_{\lambda}\}_{YMH}  &=  P^{H,A}_{Y^{(\nu)}} -L^{H,A}_{ Y^{(\nu)} (\omega - \omega_0)_\nu} +H^{H,A}_{ Y^{(n)}} - M^{H,A}_{Y^\nu (A-A_0)_\nu} \\
             \ \{M^{H,A}_\mu, H^{H,A}_\lambda\}_{YMH}&= 0  & \{P^{H,A}_{\xi}, P^{H,A}_{\xi}\}_{YMH}  &=  \frac{1}{2}P^{H,A}_{[\xi, \xi]}- \frac{1}{2}L^{H,A}_{\iota_{\xi}\iota_{\xi}F_{\omega_0}} - \frac{1}{2}M^{H,A}_{\iota_{\xi}\iota_{\xi}F_{\omega_0}} \\
             \ \{M_\mu^{H,A},M_\mu^{H,A}\}_{YMH}&=-\frac{1}{2}M_{[\mu,\mu]}^{H,A} & \{H^{H,A}_{\lambda},H^{H,A}_{\lambda}\}_{YMH}  &=0 
         \end{align*}
        }
 with the same notation as in Theorem \ref{thm:first-class-constraints}. 
 
\end{theorem}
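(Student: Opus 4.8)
The plan is to follow the same strategy as in the proof of Theorem~\ref{thm:first-class-constraints_YMS}, applying Theorem~\ref{thm:Poisson_Brackets_couples} with the identification $f=A$ and $g=H$, so that $\varpi_f=\varpi_A$, $\varpi_g=\varpi_H$, $\Omega_{fg}=\Omega_{YMH}$, and $x^f=p^A$ (etc.), $x^g=p^H$, $x^{f,g}=p^{H,A}$. The first-row contributions of that formula are already available: $\{X^A,Y^A\}_A$ is given by Theorem~\ref{thm:first-class-constraints_YM}, $\{X^H,Y^H\}_H$ is the free-scalar result of \cite{CCF2022} adapted to the $SU(n)$ multiplet and supplemented by the Higgs potential term $h_\lambda^{V_H}$, and $\{X,Y\}$ is the pure-gravity result of Theorem~\ref{thm:first-class-constraints}. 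What remains is to compute the interaction Hamiltonian vector fields and the genuinely new pieces of the brackets.

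First I would determine the Hamiltonian vector fields of $m_\mu^{H,A}$, $p_\xi^{H,A}$ and $h_\lambda^{H,A}$ from \eqref{e:Hamiltonian_VF_interaction}, reading off the single-field vector fields from \cite{CCS2020,CCF2022}. Since $\varpi_A$ pairs only the Yang--Mills variables $A$ and $eeB$ while $\varpi_H$ pairs only $\phi$ and $p$, one has $\iota_{\mathbb{x}^A}\varpi_H=\iota_{\mathbb{x}^H}\varpi_A=0$ for all single-field vector fields, so that $\mathbb{x}^{A,H}$ simply satisfies $\iota_{\mathbb{x}^{A,H}}\Omega_{YMH}=\delta x^{A,H}$. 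As $m_\mu^{H,A}$ and $p_\xi^{H,A}$ involve only $\phi$, $p$ and the background $A_0$, their vector fields have nonzero components only in the $\phi$, $p$ (and, through $\varpi$, $\omega$) directions; by contrast $h_\lambda^{H,A}$ contains the dynamical field $A$, so — exactly as for $h_\lambda^{A,\psi}$ in the previous section — its vector field also acquires a component in the direction of the momentum $B$, proportional to $\lambda e_n$.

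With these in hand I would go through the ten brackets. For the pairings not involving $H^{H,A}_\lambda$ the interaction contributions either vanish outright or reassemble into $m^{H,A}$, $p^{H,A}$ via the pairing identities displayed before the statement, the $SU(n)$-invariance $[\mu,<\phi,\phi>]=0$, and the graded Jacobi identity for $[\cdot,\cdot]$; the bracket $\{H^{H,A}_\lambda,H^{H,A}_\lambda\}_{YMH}$ vanishes because every component of the Hamiltonian vector field of $H^{H,A}_\lambda$ is proportional to $\lambda$ and $\lambda^2=0$. The Higgs potential term $h_\lambda^{V_H}$, being a scalar under both gauge transformations and diffeomorphisms, contributes only terms proportional to $\lambda$ (annihilated in $\{H,H\}$) and covariant terms that fit into $P^{H,A}$ and $H^{H,A}$.

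The main obstacle is the bookkeeping in the three brackets $\{L^{H,A}_c,H^{H,A}_\lambda\}_{YMH}$, $\{P^{H,A}_\xi,H^{H,A}_\lambda\}_{YMH}$ and $\{P^{H,A}_\xi,P^{H,A}_\xi\}_{YMH}$: there, as in the Yang--Mills--spinor computation, one must combine the terms produced by the interaction vector fields with the background-$A_0$ terms coming from $p_\xi^{H,A}$ so that the partial covariant derivatives $d_\omega$ and $d_A$ assemble into the full $d_{\omega+A}$, using \eqref{id: comm L_Ad_Aphi} and $d_{\omega+A}\mu=d_A\mu$, and then recognize the outcome as the displayed linear combination of $P^{H,A}$, $L^{H,A}$, $H^{H,A}$ and $M^{H,A}$. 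Once all ten brackets are seen to close on the constraints, the zero locus is coisotropic with respect to $\Omega_{YMH}$, which is the assertion.
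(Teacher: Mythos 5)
Your proposal is correct and follows essentially the same route as the paper's proof: apply Theorem \ref{thm:Poisson_Brackets_couples} with $f=A$, $g=H$, compute $\mathbb{m}^{H,A}$, $\mathbb{p}^{H,A}$, $\mathbb{h}^{H,A}$ from \eqref{e:Hamiltonian_VF_interaction} after noting $\iota_{\mathbb{x}^A}\varpi_H=\iota_{\mathbb{x}^H}\varpi_A=0$, reassemble covariant derivatives via \eqref{id: comm L_Ad_Aphi} and the $SU(n)$ identities, and kill $\{H^{H,A}_\lambda,H^{H,A}_\lambda\}_{YMH}$ by $\lambda$-nilpotency. The only minor point is that the paper also notes the terms of the form $\lambda e_n\, d_\omega(\lambda e_n)=0$ in that last bracket, but this is the same idea.
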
    

\begin{proof}
    We use the results of appendix \ref{s:appendix_Hamiltonianvf} for the components of the Hamiltonian vector fields of the non-interacting theories. In particular, we have $\mathbb{x}^H=\mathbb{x}^\phi$. The residual components are computed using the results in Section \ref{s:appendix_Poisson_brackets}.
    We start with $M_\mu^{H,A}$. One can quite easily see that
        \begin{equation*}
            \mathbb{m}^{H,A}_\phi=[\mu,\phi], \qquad \mathbb{m}^{H,A}_p=[\mu,p].
        \end{equation*}
     Now, since $l_c^{H,A}=0$ and $\iota_{\mathbb{l}^\phi  }\varpi_A=\iota_{\mathbb{l}^A}\varpi_H=0$, one finds
        \begin{equation*}
            \mathbb{l}^{H,A}=0.
        \end{equation*}
    For $P_\xi^{H,A }$ we find 
        \begin{equation*}
            \delta p_\xi^{A,H}= \int_\Sigma - <\delta p, \iota_\xi[A_o,\phi ] > + <[A_0,\iota_\xi p], \delta \phi>= \underbrace{\iota_{\mathbb{p}^\phi  }\varpi_A}_0 + \underbrace{\iota_{\mathbb{p}^A  }\varpi_H}_0 + \iota_{\mathbb{p}^{H,A}}\Omega_{YMH},
        \end{equation*}
    finding
        \begin{equation*}
            \mathbb{p}^{H,A}_\phi=-\iota_\xi[A_0,\phi], \qquad \mathbb{p}^{H,A}_p=[A_0,\iota_\xi p]; 
        \end{equation*}
    while all the other components of $\mathbb{p}^{H,A}$ vanish.    

    Regarding $H_\lambda^{H,A}$, we find that $\mathbb{h}^H=\mathbb{h}^\phi$ as expected, except for the components that inherit the Higgs potential term:
        \begin{align*}
            \mathbb{h}^H_p&=d_\omega\left(\frac{\lambda e_n}{2} e^2 \Pi\right) +\frac{\lambda e_n}{2\cdot 3!}q_h e^3 (<\phi,\phi> - v^2 )\phi;\\
            e\mathbb{h}^H_\omega&=\lambda e_n \left( e<\Pi,d\phi> + \frac{e^2}{4}<(\Pi,\Pi)> + \frac{e^2}{2}V_H \right) -\frac{\lambda}{2}e^2\Pi(\Pi,e_n) + \mathbb{V}_{h^H}.
        \end{align*}
    For $h_\lambda^{H,A}$ we obtain
        \begin{align*}
            \delta h_\lambda^{H,A}&= \int_\Sigma \lambda e_n \left[  e <\Pi,[A,\phi]>\delta e + \frac{e^2}{2}\left(<\delta \Pi, [A,\phi]> +<[A,\Pi],\delta \phi>  \right)  \right]\\
            &\phantom{=}+\int_\Sigma \lambda e_n \frac{i g_H e^2}{4} \Tr[(\Pi \phi^\dagger - \phi\Pi^\dagger)\delta A] \\
            &= \underbrace{\iota_{\mathbb{h}^\phi  }\varpi_A}_0 + \underbrace{\iota_{\mathbb{h}^A  }\varpi_H}_0 + \iota_{\mathbb{h}^{H,A}}\Omega_{YMH},
        \end{align*}
    hence finding
        \begin{align*}
            \mathbb{h}^{A,H}_e&=0 &e\mathbb{h}_\omega^{A,H}&=-\frac{\lambda e_n}{2}e<\Pi,[A,\phi]>\\
            \frac{e^3}{3!}\mathbb{h}^{A,H}_\phi&=\frac{\lambda e_n}{2}e^2 [A.\phi] & \frac{e^3}{3!}\mathbb{h}^{A,H}_\Pi&= \frac{\lambda e_n}{2}e^2 [A,\Pi]\\
            \mathbb{h}^{A,H}_A&=0  &\mathbb{h}^{A,H}_\rho&=\frac{i g_H}{4}\lambda e_n e^2 \Tr(\Pi \phi^\dagger - \phi \Pi^\dagger)
        \end{align*}
    For the computations we use  Theorem \ref{thm:Poisson_Brackets_couples} and the results of Theorem \ref{thm:first-class-constraints} and of corresponding results in presence of a scalar Higgs field and and a Yang--Mills field.

     As before we note that for all constraints 
\begin{align*}
    \iota_{\mathbb{X}+\mathbb{x}^{H}}\iota_{\mathbb{Y}+\mathbb{y}^{H}}\varpi_{A}+ \iota_{\mathbb{X}+\mathbb{x}^{A}}\iota_{\mathbb{Y}+\mathbb{y}^{A}}\varpi_{H}=0,\\
    \iota_{\mathbb{X}}\iota_{\mathbb{Y}}\varpi_{A}=0, \\ \iota_{\mathbb{X}}\iota_{\mathbb{Y}}\varpi_{H}=0,\\
    \iota_{\mathbb{x}^{H}}\iota_{\mathbb{y}^{A}}\Omega_{YMH}=0,\\
    \iota_{\mathbb{x}^{A}}\iota_{\mathbb{y}^{H}}\Omega_{YMH}=0.
\end{align*}
 Hence we can use the simplified formula
\begin{align*}
    \{X^{H,A},Y^{H,A}\}_{YMH}&=\{X+x^{A}, Y+y^{A}\}_A+\{X+x^{H}, Y+y^{H}\}_H-\{X,Y\}\\
    &\phantom{=}+ \iota_{\mathbb{y}^{H,A}}\delta (X+ x^{A}+x^{H}+x^{H,A}) + \iota_{\mathbb{x}^{H,A}}\delta (Y+ y^{A}+y^{H}+y^{H,A})\\
    &\phantom{=}- \iota_{\mathbb{x}^{H,A}}\iota_{\mathbb{y}^{H,A}}\Omega_{YMH}. 
\end{align*}
Applying it we get:
    \begin{align*}
        \{M_\mu^{A,H},M_\mu^{A,H}\}_{YMH}&=\{M_\mu^A, M_\mu^A \}_A+ 2\iota_{\mathbb{m}^{H, A}}\delta ( M_{\mu}^{A}+m_{\mu}^{H, A}) -\iota_{\mathbb{m}^{H, A}}\iota_{\mathbb{m}^{H, A}}\Omega_{YMH}\\
        &= \frac{1}{2} M_{[\mu, \mu]}^{A} + \int_\Sigma <[\mu,p],[\mu,\phi]>+ <p,[\mu,[\mu,\phi]]>-\int_\Sigma <[\mu,p],[\mu,\phi]> \\
        &=\frac{1}{2} M^A_{[\mu,\mu]}+\frac{1}{2}\int_\Sigma <p,[[\mu,\mu],\phi]> \\
        &=\frac{1}{2} M^{H,A}_{[\mu,\mu]};
    \end{align*}
    
    \begin{align*}
        \{L^{A,H}_c,M^{A,H}_\mu\}_{YMH}&=\{M_{\mu}^{A}, L_c\}_A + \iota_{\mathbb{m}^{H, A}}\delta L_c=0
    \end{align*}

    \begin{align*}
        \{ M_\mu^{H,A}, P_\xi^{H,A} \}_{YMH}&=\{M_{\mu}^{A}, P_{\xi}+p_{\xi}^{A}\}_A+ \iota_{\mathbb{p}^{H, A}}\delta (M_{\mu}^{A}+m_{\mu}^{H, A})\\
    &\phantom{=}- \iota_{\mathbb{m}^{H, A}}\iota_{\mathbb{p}^{H, A}}\Omega_{YMH}+ \iota_{\mathbb{M}^{A}}\iota_{\mathbb{P}+\mathbb{p}^{A}}\varpi_{H} +\iota_{\mathbb{M}^{A}}\iota_{\mathbb{p}^{H}}\Omega_{YMH}\\
    &= \int_\Sigma \left( < [A_0, \iota_{\xi}p],[\mu,\phi]> + <p,[\mu,[\iota_{\xi}A_0, \phi]]>\right) \\
    &\phantom{=}+M^A_{\mathrm{L}^{A_0}\mu}+\int_\Sigma \left( -<[\mu,p],\mathrm{L}_\xi^{\omega_0+A_0}\phi>-<p,\mathrm{L}_\xi^{\omega_0+A_0}[\mu,\phi]> \right)\\
    &\phantom{=}-\int_\Sigma \left(<[\mu,p],[\iota_{\xi}A_0, \phi]>+<[A_0, \iota_{\xi}p],[\mu,\phi]>\right)\\
        &= M^A_{\mathrm{L}_\xi^{A_0}\mu}- \int_\Sigma <p, [\mathrm{L}_\xi^{A_0}\mu, \phi]> =M^{H,A}_{\mathrm{L}_\xi^{A_0}\mu},
    \end{align*}
    where we noticed that in the second step the first and third line cancel and used that $$-<[\mu,p],\mathrm{L}_\xi^{\omega_0+A_0}\phi>=<p,[\mu,\mathrm{L}_\xi^{\omega_0+A_0}\phi]>.$$
    \begin{align*}
    \{ M_\mu^{H,A}, H_\lambda^{H,A} \}_{YMH}&=
    \{M_{\mu}^{A}, H^A_{\lambda}\}_A + \iota_{\mathbb{h}^{H, A}}\delta (M_{\mu}^{A}+m_{\mu}^{H, A}) + \iota_{\mathbb{m}^{H, A}}\delta (H_{\lambda}+ h_{\lambda}^{A}+h_{\lambda}^{H}+h_{\lambda}^{H, A})\\
    &\phantom{=}- \iota_{\mathbb{m}^{H, A}}\iota_{\mathbb{h}^{H, A}}\Omega_{YMH}\\
    &=  \int_{\Sigma}\frac{\lambda e_n e^2}{2}\left(<\Pi, [d_A \mu, \phi]> +<[A,\Pi], [\mu, \phi]> + <\Pi, [\mu,[A, \phi]]>\right)\\
    &\phantom{=}-\int_\Sigma \frac{\lambda e_n e^2}{2}\left(< [\mu, \Pi],d\phi>+<\Pi,d[\mu,\phi]>\right)+ \frac{\lambda e_n e^3}{3}<[\mu,\Pi], \Pi> \\
    &\phantom{=}+\int_\Sigma \frac{\lambda e_n e^3 q_H}{3}(<\phi,\phi>-v^2)<[\mu,\phi],\phi>\\
    &\phantom{=}+ \int_\Sigma \frac{\lambda e_n e^2}{2} \left(<[\mu,\Pi],[A,\phi]>+<\Pi,[A,[\mu,\phi]]>\right) \\
    &\phantom{=} -\int_\Sigma \frac{\lambda e_n e^2}{2} \left(<[\mu,\Pi],[A,\phi]>+<[A,\Pi], [\mu, \phi]>\right)=0,
        \end{align*}
    having used the fact that $[\mu, <\phi,\phi>]=[\mu, <\Pi,\Pi>]=0$ and the Jacobi identity for $A, \mu$ and $\phi$. 
    \begin{align*}
            \{L^{A,H}_c,L^{A,H}_c\}_{YMH}&=\{L^H_c, L_c^H\}_{H}+\{L_c^A, L_c^A\}_{A}-\{L_c,L_c\}\\
            &=-\frac{1}{2} L_{[c,c]}=-\frac{1}{2} L^{A,H}_{[c,c]}; 
    \end{align*}
    \begin{align*}
            \{L^{A,H}_c,P_\xi^{H,A}\}_{YMH}&=\{ L_c, P_\xi + p_\xi^H \}_H + \{ L_c, P_\xi + p_\xi^A \}_A - \{L_c, P_\xi\}+ \iota_{\mathbb{p}^{H,A}}\delta L_c= L_{\mathrm{L}^{\omega_0}c}^{H,A}
    \end{align*}
We have 
\begin{align*}
    \iota_{\mathbb{h}^{H,A}}\delta L_c= \int_\Sigma c \left[\frac{\lambda e_n}{4}<\Pi, [A,\phi]>,e^2\right]=\int_\Sigma \left[c , \lambda e_n\right] \frac{ e^2 }{4}<\Pi, [A,\phi]>.
\end{align*}
    Hence, using $[c,\lambda e_n]=X= X^{(\nu)}e_\nu+ X^{(n)}e_n$ and 
    \begin{align*}
    - p^{H, A}_{X^{(\nu)}}  + m^{H,A}_{X^{(\nu)}(A-A_0)_{\nu}}=\int_\Sigma \left[c , \lambda e_n\right]^{(\nu)}e_\nu \frac{ e^2 }{4}<\Pi, [A,\phi]>
\end{align*}
    we get
    \begin{align*}
        \{L^{A,H}_c,H_\lambda^{H,A}\}_{YMH}&=\{L_c, H_{\lambda}+h_{\lambda}^{A}\}_A+\{L_c, H_{\lambda}+h_{\lambda}^{H}\}_H-\{L_c,H_{\lambda}\}+ \iota_{\mathbb{h}^{H,A}}\delta L_c\\
        &=- P^{A}_{X^{(\nu)}} + L^{A}_{X^{(\nu)}(\omega - \omega_0)_\nu} - H^{A}_{X^{(n)}} + M^{A}_{X^{\nu}(A-A_0)_\nu}- P^{H}_{X^{(\nu)}} + L^{H}_{X^{(\nu)}(\omega - \omega_0)_\nu}\\
        &\phantom{=} - H^{H}_{X^{(n)}}+ P_{X^{(\nu)}} - L_{X^{(\nu)}(\omega - \omega_0)_\nu} + H_{X^{(n)}} - p^{H, A}_{X^{(\nu)}}  -h^{H,A}_{X^{(n)}} + m^{H,A}_{X^{(\nu)}(A-A_0)_{\nu}}\\
        &=- P^{H,A}_{X^{(\nu)}} + L^{H,A}_{X^{(\nu)}(\omega - \omega_0)_\nu} - H^{H,A}_{X^{(n)}} + M^{H,A}_{X^{\nu}(A-A_0)_\nu}
        \end{align*}
     To compute the following bracket, we make use of the following identity
     \begin{align*}
         \frac{1}{2}\iota_{[\xi,\xi]}A_0= \iota_{\xi} d \iota_{\xi} A_0 -\frac{1}{2}\iota_{\xi}\iota_{\xi}dA_0.
     \end{align*}
    \begin{align*}
        \{P_\xi^{H,A},P_\xi^{H,A}\}_{YMH}&=
        \{P_{\xi}+p_{\xi}^{A}, P_{\xi}+p_{\xi}^{A}\}_A+\{P_{\xi}+p_{\xi}^{H}, P_{\xi}+p_{\xi}^{H}\}_H-\{P_{\xi},P_{\xi}\}\\
        &\phantom{=}+ 2\iota_{\mathbb{p}^{H, A}}\delta (P_{\xi}+ p_{\xi}^{A}+p_{\xi}^{H}+p_{\xi}^{H, A}) - \iota_{\mathbb{p}^{H, A}}\iota_{\mathbb{p}^{H, A}}\Omega_{YMS}\\
        &=\frac{1}{2}P^{A}_{[\xi, \xi]}- \frac{1}{2}L^{A}_{\iota_{\xi}\iota_{\xi}F_{\omega_0}} - \frac{1}{2}M^{A}_{\iota_{\xi}\iota_{\xi}F_{A_0}} + \frac{1}{2}P^{\psi}_{[\xi, \xi]}- \frac{1}{2}L^{\psi}_{\iota_{\xi}\iota_{\xi}F_{\omega_0}}-\frac{1}{2}P_{[\xi, \xi]}\\
        &\phantom{=}+ \frac{1}{2}L_{\iota_{\xi}\iota_{\xi}F_{\omega_0}}+ \int_\Sigma < [A_0, \iota_\xi p],  \mathrm{L}^{\omega_0+A_0}_{\xi}\phi> +<p,\mathrm{L}^{\omega_0+A_0}_{\xi}[\iota_{\xi}A_0, \phi]>\\
        &\phantom{=}-\int_\Sigma  < [A_0, \iota_\xi p],[\iota_{\xi}A_0, \phi]>\\
        &=\frac{1}{2}P^{A}_{[\xi, \xi]}- \frac{1}{2}L^{A}_{\iota_{\xi}\iota_{\xi}F_{\omega_0}} - \frac{1}{2}M^{A}_{\iota_{\xi}\iota_{\xi}F_{A_0}} + \frac{1}{2}P^{\psi}_{[\xi, \xi]}- \frac{1}{2}L^{\psi}_{\iota_{\xi}\iota_{\xi}F_{\omega_0}}-\frac{1}{2}P_{[\xi, \xi]}\\
        &\phantom{=}+ \frac{1}{2}L_{\iota_{\xi}\iota_{\xi}F_{\omega_0}}+ \int_\Sigma <p,[\iota_{[\xi,\xi]}A_0, \phi]>\\
        &=\frac{1}{2}P^{H,A}_{[\xi, \xi]}- \frac{1}{2}L^{H,A}_{\iota_{\xi}\iota_{\xi}F_{\omega_0}} - \frac{1}{2}M^{H,A}_{\iota_{\xi}\iota_{\xi}F_{A_0}};
        \end{align*}
    \begin{align*}
        \{P_\xi^{H,A},H_\lambda^{H,A}\}_{YMH}&=\{P_{\xi}+p_{\xi}^{A}, H_{\lambda}+h_{\lambda}^{A}\}_A+\{P_{\xi}+p_{\xi}^{H}, H_{\lambda}+h_{\lambda}^{H}\}_H-\{P_{\xi},H_{\lambda}\}\\
        &\phantom{=}+ \iota_{\mathbb{h}^{H, A}}\delta (P_{\xi}+ p_{\xi}^{A}+p_{\xi}^{H}+p_{\xi}^{H, A}) + \iota_{\mathbb{p}^{H, A}}\delta (H_{\lambda}+ h_{\lambda}^{A}+h_{\lambda}^{H}+h_{\lambda}^{H, A})\\
        &\phantom{=}- \iota_{\mathbb{p}^{H, A}}\iota_{\mathbb{h}^{H, A}}\Omega_{YMH}\\
        &= P^{A}_{Y^{(\nu)}} -L^{A}_{ Y^{(\nu)} (\omega - \omega_0)_\nu} +H^{A}_{ Y^{(n)}} - M^{A}_{Y^\nu (A-A_0)_\nu}+ P^{H}_{Y^{(\nu)}} -L^{H}_{ Y^{(\nu)} (\omega - \omega_0)_\nu}\\
        &\phantom{=}+H^{H}_{ Y^{(n)}} - P_{Y^{(\nu)}} +L_{ Y^{(\nu)} (\omega - \omega_0)_\nu} -H_{ Y^{(n)}}+ \int_\Sigma \mathrm{L}^{\omega_0+A_0}_\xi e \frac{\lambda e_n e}{2}<\Pi, [A, \phi]>\\
        &\phantom{=}+ \int_\Sigma \frac{\lambda e_n e^2}{2}<[A,\Pi], \mathrm{L}^{\omega_0+A_0}_\xi\phi>  -
        \frac{\lambda e_n e^2}{2}<\Pi, \mathrm{L}^{\omega_0+A_0}_\xi[A,\phi]>\\
        &\phantom{=}+ \int_\Sigma <\Pi, \mathrm{L}^{\omega_0+A_0}_\xi\left(\frac{\lambda e_n e^2}{2}[A,\phi]\right)> + \frac{\lambda e_n e^2}{2}<[\iota_{\xi}A_0,\Pi],d_A\phi>\\
        &\phantom{=}+ \int_\Sigma \frac{\lambda e_n e^2}{2}\left(<\Pi,d_A[\iota_{\xi}A_0,\phi]>\right)\\
        &\phantom{=}+\int_\Sigma\frac{\lambda e_n e^3}{3}<[A,\Pi], \Pi> +\frac{\lambda e_n e^3 q_H}{3}(<\phi,\phi>-v^2)<[A,\phi],\phi>\\
         &= P^{A}_{Y^{(\nu)}} -L^{A}_{ Y^{(\nu)} (\omega - \omega_0)_\nu} +H^{A}_{ Y^{(n)}} - M^{A}_{Y^\nu (A-A_0)_\nu}+ P^{H}_{Y^{(\nu)}} -L^{H}_{ Y^{(\nu)} (\omega - \omega_0)_\nu}\\
        &\phantom{=}+H^{H}_{ Y^{(n)}} - P_{Y^{(\nu)}} +L_{ Y^{(\nu)} (\omega - \omega_0)_\nu} -H_{ Y^{(n)}}\\
        &\phantom{=}+\int_\Sigma \mathrm{L}^{\omega_0+A_0}_\xi(\lambda e_n)\frac{e^2}{2} <\Pi,[A,\phi]>\\    
            &= P^{H,A}_{Y^{(\nu)}} -L^{H,A}_{ Y^{(\nu)} (\omega - \omega_0)_\nu} +H^{H,A}_{ Y^{(n)}} - M^{H,A}_{Y^\nu (A-A_0)_\nu},
        \end{align*}
    having used  \eqref{id: comm L_Ad_Aphi}.
        \begin{align*}
            \{H_\lambda^{H,A},H_\lambda^{H,A}\}_{YMH}&=\{H_\lambda^{H},H_\lambda^{H}\}_H + \{H_\lambda^{A},H_\lambda^{A}\}_A - \{H_\lambda,H_\lambda\} \\
            &\quad + 2 \iota_{\mathbb{h}_\lambda^{H,A}}\delta H_\lambda^{H,A} - \iota_{\mathbb{h}_\lambda^{H,A}}\iota_{\mathbb{h}_\lambda^{H,A}}\Omega_{YMH}=0,
        \end{align*}
    in fact all terms in   $2 \iota_{\mathbb{h}_\lambda^{H,A}}\delta H_\lambda^{H,A} - \iota_{\mathbb{h}_\lambda^{H,A}}\iota_{\mathbb{h}_\lambda^{H,A}}\Omega_{SM}$ contain either $(\lambda e_n)^2=0$ or $\lambda e_n d_\omega(\lambda e_n)=0$.
\end{proof}

\begin{remark}
    Notice that, after choosing $U(1)$ as the gauge group and setting $q_H=0$, we obtain scalar electrodynamics coupled to gravity as a particular case of this theory.
\end{remark}

\subsection{Yukawa interaction}\label{s:scalar+spinor}
The interaction of a scalar field and a spinor field takes the name of Yukawa interaction.

As before, let us denote the spinor field by $\psi\in S(M)$ and a scalar field by $\phi \in C^{\infty}(M)$. Then an action on the bulk for the Yukawa interaction (i.e. an action correctly reproducing the classical Euler--Lagrange equations) is given by
\begin{align*}
    S_Y = S+ S_{\psi} + S_{\phi} + g_Y \int_M \frac{1}{2N!} e^N \overline{\psi}\phi \psi 
\end{align*}
where $g_Y$ is a coupling constant, $S_{\phi}$ is defined in \eqref{e:action_scalar} and $S_{\psi}$  in \eqref{e:action_spin}. Since the additional interaction term does not have derivatives, the geometric phase space will be just the direct sum of the building blocks composing this theory. In particular we have that the geometric phase space is the bundle
\begin{align*}
    {F}^{\partial}_{Y} \rightarrow \Omega_{e_n}^1(\Sigma, \mathcal{V}_{\Sigma})\oplus \mathcal{C^{\infty}}(\Sigma)\times S(\Sigma)\times \overline{S}(\Sigma)
\end{align*}
with fiber $\mathcal{A}_{red}(\Sigma)\oplus \Omega^{(0,1)}_{\Sigma,red}$ such that  \eqref{e:constraintscalar} and  \eqref{e:omegareprfix2spin} are satisfied.

The corresponding symplectic form is again just the sum of the symplectic form of the building blocks:
\begin{align*}
\Omega_{Y} = \varpi + \varpi_{\psi}+ \varpi_{\phi}.
\end{align*}

Let us now consider the constraints. By the previous computation we already know that for a spinor field and a scalar field coupled to gravity the functions building up the constraints are those corresponding to the variations $\delta e $ and $\delta \omega$, since all the other ones are evolution equations. From the expression of the action of the Yukawa interaction, we deduce that there are no other constraints than the pure gravity ones and that the only one that is modified is $H_{\lambda}$. 
Let
\begin{align}\label{e:constraints_Yukawa3}
    h^{\phi, \psi}_{\lambda}:= g_Y \int_{\Sigma} \lambda e_n \frac{1}{2(N-1)!} e^{(N-1)} \overline{\psi}\phi \psi 
\end{align}
Then the constraints for Yukawa theory are:
\begin{align*}
    L^{\phi, \psi}_c= L_c + l^{\psi}_{c}; \quad 
    P^{\phi, \psi}_{\xi}= P_{\xi}+ p^{\phi}_{\xi} + p^{\psi}_{\xi}; \quad
    H^{\phi, \psi}_{\lambda}= H_{\lambda}+ h^{\phi}_{\lambda} + h^{\psi}_{\lambda}+ h^{\phi, \psi}_{\lambda}.
\end{align*}
Once again, these constraints define a coisotropic submanifold with Poisson brackets analogous to the gravity case, as specified by the following theorem.
\begin{theorem} \label{thm:first-class-constraints_Yukawa}
 Assume that $g^\partial$ is non-degenerate on $\Sigma$. Then, the zero locus of the functions  $L^{\phi, \psi}_c$, $P^{\phi, \psi}_{\xi}$,  $H^{\phi, \psi}_{\lambda}$ defined above is  coisotropic submanifold with respect to the symplectic structure $\Omega_{Y}$. Their mutual Poisson brackets read
 \begin{align*}
     \{L^{\phi, \psi}_c, L^{\phi, \psi}_c\}_{Y} &= - \frac{1}{2} L^{\phi, \psi}_{[c,c]}  & \{L^{\phi, \psi}_c, P^{\phi, \psi}_{\xi}\}_{Y}  &=  L^{\phi, \psi}_{\mathrm{L}_{\xi}^{\omega_0}c}\\
     \{L^{\phi, \psi}_c,  H^{\phi, \psi}_{\lambda}\}_{Y}  &= - P^{\phi, \psi}_{X^{(\nu)}} + L^{\phi, \psi}_{X^{(\nu)}(\omega - \omega_0)_{\nu}} - H^{\phi, \psi}_{X^{(n)}} & \{P^{\phi, \psi}_{\xi}, P^{\phi, \psi}_{\xi}\}_{Y}  &=  \frac{1}{2}P^{\phi, \psi}_{[\xi, \xi]}- \frac{1}{2}L^{\phi, \psi}_{\iota_{\xi}\iota_{\xi}F_{\omega_0}}\\
     \{P^{\phi, \psi}_{\xi},H^{\phi, \psi}_{\lambda}\}_{Y}  &=  P^{\phi, \psi}_{Y^{(\nu)}} -L^{\phi, \psi}_{ Y^{(\nu)} (\omega - \omega_0)_{\nu}} +H^{\phi, \psi}_{ Y^{(n)}} & \{H^{\phi, \psi}_{\lambda},H^{\phi, \psi}_{\lambda}\}_{Y}  &=0
 \end{align*}
 with the same notation as in Theorem \ref{thm:first-class-constraints}. 
\end{theorem}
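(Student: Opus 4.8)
The plan is to apply Theorem~\ref{thm:Poisson_Brackets_couples} with $f=\phi$ and $g=\psi$, using that the geometric phase space of the Yukawa theory is the fibred product of the gravity--scalar and the gravity--spinor ones and that the scalar variables $(\phi,p)$ are symplectically orthogonal to the spinor variables $(\psi,\overline\psi)$. Since the Yukawa term in the bulk action carries no derivatives, the only constraint that acquires a genuine interaction piece is $H_\lambda$: in the notation of Section~\ref{s:appendix_Poisson_brackets} one has $l^{\phi,\psi}_c=0$ and $p^{\phi,\psi}_\xi=0$, while $h^{\phi,\psi}_\lambda$ is given by \eqref{e:constraints_Yukawa3}.

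First I would determine the Hamiltonian vector fields of the corrected constraints. The components of $\mathbb{l}^\psi,\mathbb{p}^\phi,\mathbb{p}^\psi,\mathbb{h}^\phi,\mathbb{h}^\psi$ are the ones recalled in Appendix~\ref{s:appendix_Hamiltonianvf}. Using symplectic orthogonality of the scalar and spinor sectors (and that $\mathbb{p}^\phi$ has no $e$-component), the defining equation \eqref{e:Hamiltonian_VF_interaction} reduces to $\iota_{\mathbb{l}^{\phi,\psi}}\Omega_Y=0$ and $\iota_{\mathbb{p}^{\phi,\psi}}\Omega_Y=0$, whence $\mathbb{l}^{\phi,\psi}=\mathbb{p}^{\phi,\psi}=0$; similarly $\mathbb{h}^{\phi,\psi}$ solves just $\iota_{\mathbb{h}^{\phi,\psi}}\Omega_Y=\delta h^{\phi,\psi}_\lambda$. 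Varying \eqref{e:constraints_Yukawa3} and pairing against $\Omega_Y=\varpi+\varpi_\psi+\varpi_\phi$ produces components of $\mathbb{h}^{\phi,\psi}$ in the $\omega,\psi,\overline\psi$ and $p$ directions, all proportional to $\lambda e_n$, with $\mathbb{h}^{\phi,\psi}_e=0$. In particular every component of the Hamiltonian vector field of $H^{\phi,\psi}_\lambda$ is proportional to $\lambda$, so $\{H^{\phi,\psi}_\lambda,H^{\phi,\psi}_\lambda\}_Y=0$ follows at once from $\lambda^2=0$.

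Feeding this into Theorem~\ref{thm:Poisson_Brackets_couples}, most of the terms disappear: $\iota_{\mathbb{X}}\iota_{\mathbb{Y}}\varpi_\phi=\iota_{\mathbb{X}}\iota_{\mathbb{Y}}\varpi_\psi=0$ since the pure-gravity vector fields touch only $e$ and $\omega$; the mixed terms $\iota_{\mathbb{x}^\phi}\iota_{\mathbb{y}^\psi}\Omega_Y$ and $\iota_{\mathbb{x}^\psi}\iota_{\mathbb{y}^\phi}\Omega_Y$ vanish by orthogonality of the two sectors (and, in the brackets containing $H_\lambda$, additionally because the surviving pieces are proportional to $\lambda$); and $\mathbb{l}^{\phi,\psi}=\mathbb{p}^{\phi,\psi}=0$ removes every interaction contribution from the brackets not involving $H_\lambda$. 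Thus $\{L^{\phi,\psi}_c,L^{\phi,\psi}_c\}_Y$, $\{L^{\phi,\psi}_c,P^{\phi,\psi}_\xi\}_Y$ and $\{P^{\phi,\psi}_\xi,P^{\phi,\psi}_\xi\}_Y$ collapse to $\{X+x^\phi,Y+y^\phi\}_\phi+\{X+x^\psi,Y+y^\psi\}_\psi-\{X,Y\}$, which by the gravity--scalar and gravity--spinor results recalled in Sections~\ref{s:previousResults_scalar} and~\ref{s:spinor} recombine into the stated right-hand sides. What remains to compute are $\{L^{\phi,\psi}_c,H^{\phi,\psi}_\lambda\}_Y$ and $\{P^{\phi,\psi}_\xi,H^{\phi,\psi}_\lambda\}_Y$, where on top of the known gravity--matter contributions one has the extra pieces $\iota_{\mathbb{h}^{\phi,\psi}}\delta(L_c+l^\psi_c)$ and $\iota_{\mathbb{l}^\psi}\delta h^{\phi,\psi}_\lambda$ (and their $P$ analogues).

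The main obstacle will be checking that these extra pieces reorganise into $-h^{\phi,\psi}_{X^{(n)}}$ and $+h^{\phi,\psi}_{Y^{(n)}}$, with $X=[c,\lambda e_n]$ and $Y=\mathrm{L}^{\omega_0}_\xi(\lambda e_n)$ as in Theorem~\ref{thm:first-class-constraints}, so that no term outside the gravitational pattern survives. The decisive input is that the Yukawa density $\overline\psi\phi\psi$ is a Lorentz (and internal) scalar, $[c,\overline\psi\phi\psi]=0$, i.e. $[c,\overline\psi]\phi\psi+\overline\psi\phi[c,\psi]=0$ because $\phi$ carries no Lorentz index; this makes the $\mathbb{l}^\psi$-variation of $h^{\phi,\psi}_\lambda$ and the $\mathbb{h}^{\phi,\psi}$-variation of $l^\psi_c$ cancel pairwise, except for the contribution in which $X$ acts on $e_n$, which is exactly $-h^{\phi,\psi}_{X^{(n)}}$. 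For $\{P^{\phi,\psi}_\xi,H^{\phi,\psi}_\lambda\}_Y$ one argues in the same way, using that $\overline\psi\phi\psi$ transforms covariantly under $\mathrm{L}^{\omega_0}_\xi$, so that $\mathrm{L}^{\omega_0}_\xi$ distributes over the three factors and the leftover is $+h^{\phi,\psi}_{Y^{(n)}}$. Once these cancellations are established, collecting the surviving terms yields precisely the six brackets in the statement; since they are all of the first class (each being a linear combination of $L^{\phi,\psi}$, $P^{\phi,\psi}$, $H^{\phi,\psi}$ with functional coefficients), the common zero locus is coisotropic.
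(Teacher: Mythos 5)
Your overall strategy coincides with the paper's: apply Theorem \ref{thm:Poisson_Brackets_couples} with $f=\phi$, $g=\psi$, observe that the only nonzero interaction term is $h^{\phi,\psi}_\lambda$, deduce $\mathbb{l}^{\phi,\psi}=\mathbb{p}^{\phi,\psi}=0$, and use $\lambda^2=0$ for $\{H^{\phi,\psi}_\lambda,H^{\phi,\psi}_\lambda\}_Y$. However, there is a genuine error in your treatment of the Hamiltonian constraint. The scalar and spinor sectors are \emph{not} symplectically orthogonal in the sense you need, because both $\varpi_\phi=\int_\Sigma\frac{1}{3!}\delta(e^3\Pi)\delta\phi$ and $\varpi_\psi$ contain $\delta e$, and the spinor vector field $\mathbb{h}^\psi$ has a nonvanishing $e$\ndash component, $\mathbb{h}^\psi_e=-\lambda(\sigma-\widetilde\sigma)$ (see Appendix \ref{s:appendix_Hamiltonianvf}). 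Consequently $\iota_{\mathbb{h}^\psi}\varpi_\phi=\int_\Sigma\frac{1}{2}e^2\mathbb{h}^\psi_e\Pi\,\delta\phi\neq0$, so the defining equation \eqref{e:Hamiltonian_VF_interaction} for $\mathbb{h}^{\phi,\psi}$ is \emph{not} $\iota_{\mathbb{h}^{\phi,\psi}}\Omega_Y=\delta h^{\phi,\psi}_\lambda$ but carries the extra source $-\iota_{\mathbb{h}^\psi}\varpi_\phi$; this is why in the paper $\frac{1}{3!}e^3\mathbb{h}^{\phi,\psi}_\Pi$ contains the additional piece $-\frac{1}{2}e^2\mathbb{h}^\psi_e\Pi$ on top of the Yukawa contribution. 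For the same reason your claim that all mixed terms vanish fails for $\{P^{\phi,\psi}_\xi,H^{\phi,\psi}_\lambda\}_Y$: the term $\iota_{\mathbb{h}^\psi}\iota_{\mathbb{p}^\phi}\Omega_Y=-\int_\Sigma\frac{1}{2}e^2\mathbb{h}^\psi_e\Pi\,\mathrm{L}_\xi\phi$ is only \emph{linear} in $\lambda$, so it is not killed by $\lambda^2=0$, and it does not vanish by orthogonality either.

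As it happens, these two omissions compensate: in the paper's computation the nonzero cross term $\iota_{\mathbb{h}^\psi}\iota_{\mathbb{p}^\phi}\Omega_Y$ cancels exactly against the contribution of the extra piece of $\mathbb{h}^{\phi,\psi}_\Pi$ inside $\iota_{\mathbb{h}^{\phi,\psi}}\delta p^\phi_\xi$, leaving only the genuinely Yukawa part that reassembles into $h^{\phi,\psi}_{Y^{(n)}}$. So your final formulas are correct, but the argument as written is not: you must either track both terms and exhibit their cancellation, or justify independently why they may be dropped together. (A minor additional point: the term you call $\iota_{\mathbb{l}^\psi}\delta h^{\phi,\psi}_\lambda$ does not occur in the decomposition of Theorem \ref{thm:Poisson_Brackets_couples}; since $\mathbb{l}^{\phi,\psi}=0$ the only interaction contribution to $\{L^{\phi,\psi}_c,H^{\phi,\psi}_\lambda\}_Y$ beyond the pairwise brackets is $\iota_{\mathbb{h}^{\phi,\psi}}\delta(L_c+l^\psi_c)$.) Your identification of the key algebraic inputs, namely $[c,\overline\psi\phi\psi]=0$ and the covariance of the Yukawa density under $\mathrm{L}^{\omega_0}_\xi$, does match the mechanism used in the paper.
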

\begin{proof}
We use the results proved in Section \ref{s:appendix_Poisson_brackets}.
The expressions of $\mathbb{L}$,$\mathbb{l}^{\phi}$, $\mathbb{l}^{ \psi}$, $\mathbb{P}$, $\mathbb{p}^{\phi}$, $\mathbb{p}^{\psi}$, $\mathbb{H}$, $\mathbb{h}^{\phi}$ and $\mathbb{h}^{\psi}$ have been computed in \cite{CCF2022} and are collected in Appendix \ref{s:appendix_Hamiltonianvf}. 
Let us start with the constraint $L^{\phi, \psi}_c$. We first notice that $\iota_{\mathbb{l}^{\phi}} \varpi_{\psi}= \iota_{\mathbb{l}^{\psi}} \varpi_{\phi}=0$. The variation of the interaction term is $l^{\phi, \psi}_{c}=0$, hence we also conclude
\begin{align*}
    \mathbb{l}^{\phi, \psi}=0.
\end{align*}
For $P^{\phi, \psi}_{\xi}$, we work in the same way and find
\begin{align*}
    \mathbb{p}^{\phi, \psi}=0.
\end{align*}
On the other hand, for $H^{\phi, \psi}_{\lambda}$ we get $\iota_{\mathbb{h}^{\phi}}\varpi_{\psi}=0$ and 
\begin{align*}
    \iota_{\mathbb{h}^{\psi}}\varpi_{\phi}= \int_{\Sigma}\frac{1}{2}e^2\mathbb{h}_e^{\psi} \Pi \delta \phi.
\end{align*}
Note that we will not need the explicit expression.
The variation of the interaction term reads
\begin{align*}
    \delta h^{\phi, \psi}_{\lambda}:= g_Y \int_{\Sigma} \lambda e_n \left( \frac{1}{4}e^{2} \delta e \overline{\psi}\phi \psi
    +\frac{1}{2\cdot 3!} e^3 \delta\overline{\psi}\phi \psi
    -\frac{1}{2\cdot 3!} e^3 \overline{\psi}\delta \phi \psi
    -\frac{1}{2\cdot 3!} e^3 \overline{\psi}\phi \delta \psi \right). 
\end{align*}
    Hence we get:
    \begin{align*}
	    \mathbb{h}^{\phi,\psi}_e &= 0 &
	    e \mathbb{h}^{\phi,\psi}_\omega &= 0 \\
        \mathbb{h}^{\phi,\psi}_\phi &= 0 &
	    \frac{1}{ 3!}e^3\mathbb{h}^{\phi,\psi}_\Pi &=-\frac{1}{2}e^2\mathbb{h}_e^{\psi} \Pi +\frac{1}{2\cdot 3!}g_Y \lambda e_n e^3 \overline{\psi} \psi \\
	    \gamma \mathbb{h}^{\phi,\psi}_{\psi} &= -\frac{i}{2}g_Y \lambda e_n\phi \psi  &
	    \mathbb{h}^{\phi,\psi}_{\overline{\psi}} \gamma &= \frac{i}{2}g_Y \lambda e_n\overline{\psi}\phi
	\end{align*}

In order to compute the Poisson brackets of the constraints we use Theorem \ref{thm:Poisson_Brackets_couples}. 
We first note that for all constraints we have 
\begin{align*}
    \iota_{\mathbb{X}+\mathbb{x}^{\psi}}\iota_{\mathbb{Y}+\mathbb{y}^{\psi}}\varpi_{\phi}+ \iota_{\mathbb{X}+\mathbb{x}^{\phi}}\iota_{\mathbb{Y}+\mathbb{y}^{\phi}}\varpi_{\psi}=0.
\end{align*}
For the computations we use repeatedly the results of Theorem\ref{thm:first-class-constraints} and the corresponding results in presence of a scalar field and and spinor field. Not writing the zero terms we have
\begin{align*}
    \{L^{\phi, \psi}_c,L^{\phi, \psi}_c\}_Y&=\{L_c+l_c^{\phi}, L_c+l_c^{\phi}\}_{\phi}+\{L_c+l_c^{\psi}, L_c+l_c^{\psi}\}_{\psi}-\{L_c,L_c\}\\
    & +\iota_{\mathbb{l}^{\phi}}\iota_{\mathbb{l}^{\psi}}\Omega_{Y}+\iota_{\mathbb{l}^{\psi}}\iota_{\mathbb{l}^{\phi}}\Omega_{Y}\\
    &=-\frac{1}{2} L_{[c,c]} -\frac{1}{2}  l_{[c,c]}^{\psi} = -\frac{1}{2}  L^{\phi, \psi}_{[c,c]}
\end{align*}
where we used for the second line $\mathbb{l}_e^{\phi}=\mathbb{l}_e^{\psi}=0$.
Similarly
\begin{align*}
    \{L^{\phi, \psi}_c,P^{\phi, \psi}_\xi\}_Y&=\{L_c+l_c^{\phi}, P_\xi+p_\xi^{\phi}\}_{\phi}+\{L_c+l_c^{\psi}, P_\xi+p_\xi^{\psi}\}_{\psi}-\{L_c,P_\xi\}\\
    & +\iota_{\mathbb{l}^{\phi}}\iota_{\mathbb{p}^{\psi}}\Omega_{Y}+\iota_{\mathbb{l}^{\psi}}\iota_{\mathbb{p}^{\phi}}\Omega_{Y}\\
    &= L_{\mathrm{L}_{\xi}^{\omega_0}c}+ l_{\mathrm{L}_{\xi}^{\omega_0}c}^{\psi} = L^{\phi, \psi}_{\mathrm{L}_{\xi}^{\omega_0}c}
\end{align*}
where we used for the second line $\mathbb{l}_e^{\phi}=\mathbb{p}_e^{\phi}=\mathbb{l}_e^{\psi}=\mathbb{p}_e^{\psi}=0$, and
\begin{align*}
    \{P^{\phi, \psi}_{\xi}, P^{\phi, \psi}_{\xi}\}_Y &=\{P_\xi+p_\xi^{\phi}, P_\xi+p_\xi^{\phi}\}_{\phi}+\{P_\xi+p_\xi^{\psi}, P_\xi+p_\xi^{\psi}\}_{\psi}-\{P_\xi,P_\xi\}\\
    & +\iota_{\mathbb{p}^{\phi}}\iota_{\mathbb{p}^{\psi}}\Omega_{Y}+\iota_{\mathbb{p}^{\psi}}\iota_{\mathbb{p}^{\phi}}\Omega_{Y}\\
    &= \frac{1}{2}P_{[\xi, \xi]}- \frac{1}{2}L_{\iota_{\xi}\iota_{\xi}F_{\omega_0}}+
    \frac{1}{2}p^{\phi}_{[\xi, \xi]}+
    \frac{1}{2}p^{\psi}_{[\xi, \xi]}- \frac{1}{2}l^{\psi}_{\iota_{\xi}\iota_{\xi}F_{\omega_0}}\\
    &=\frac{1}{2}P^{\phi, \psi}_{[\xi, \xi]}- \frac{1}{2}L^{\phi, \psi}_{\iota_{\xi}\iota_{\xi}F_{\omega_0}}
\end{align*}
For the brackets with $H^{\phi, \psi}_{\lambda}$ we have to take into account more terms:
\begin{align*}
    \{L^{\phi, \psi}_c,  H^{\phi, \psi}_{\lambda}\}_Y  &=\{L_c+l_c^{\phi}, H_{\lambda}+h_{\lambda}^{\phi}\}_{\phi}+\{L_c+l_c^{\psi}, H_{\lambda}+h_{\lambda}^{\psi}\}_{\psi}-\{L_c,H_{\lambda}\}\\
    &+\iota_{\mathbb{h}^{\phi, \psi}}\delta (L_c+l_c^{\psi})
\end{align*}    
Let us compute the last term:
\begin{align*}
    \iota_{\mathbb{h}^{\phi, \psi}}\delta (L_c+l_c^{\psi})&= \int_{\Sigma} - c e [\mathbb{h}^{\phi, \psi}_{\omega}, e] - i \frac{e^3}{2\cdot 3!} \left( -[c,\mathbb{h}^{\phi, \psi}_{\overline{\psi}}]\gamma \psi - \mathbb{h}^{\phi, \psi}_{\overline{\psi}} \gamma [c,\psi] \right)\\
    &- i \frac{e^3}{2\cdot 3!} \left(- [c,\overline{\psi}]\gamma \mathbb{h}^{\phi, \psi}_\psi + \overline{\psi} \gamma [c,\mathbb{h}^{\phi, \psi}_\psi] \right)\\
    &= \int_{\Sigma} - \frac{1}{4} g_Y c e [ \lambda e_n  e \overline{\psi}\phi \psi, e] - i \frac{e^3}{2\cdot 3!} \left( -[c,\mathbb{h}^{\phi, \psi}_{\overline{\psi}}]\gamma \psi - \mathbb{h}^{\phi, \psi}_{\overline{\psi}} \gamma [c,\psi] \right)\\
    &- i \frac{e^3}{2\cdot 3!} \left(- [c,\overline{\psi}]\gamma \mathbb{h}^{\phi, \psi}_\psi + \overline{\psi} \gamma [c,\mathbb{h}^{\phi, \psi}_\psi] \right)\\
    &= -\int_{\Sigma}\frac{e^3}{2\cdot 3!} g_Y [c,\lambda e_n]  \overline{\psi}\phi \psi
\end{align*}
Where we used the properties listed in \cite[Appendix B.5]{CCF2022} for the bracket $[\cdot, \cdot]$ on spinors. The result is easily recognized as
 $h^{\phi, \psi}_{X^{(n)}}$. Hence we get
\begin{align*}
    \{L^{\phi, \psi}_c,  H^{\phi, \psi}_{\lambda}\}_Y&=  - P_{X^{(\nu)}} + L_{X^{(\nu)}(\omega - \omega_0)_{\nu}} - H_{X^{(n)}}
    - p^{\phi}_{X^{(\nu)}} - h^{\phi}_{X^{(n)}}
    - p^{\psi}_{X^{(\nu)}} \\
    & \quad + l^{\psi}_{X^{(\nu)}(\omega - \omega_0)_{\nu}} - h^{\psi}_{X^{(n)}}
    -h^{\phi, \psi}_{X^{(n)}}\\
    &=- P^{\phi, \psi}_{X^{(\nu)}} + L^{\phi, \psi}_{X^{(\nu)}(\omega - \omega_0)_{\nu}} - H^{\phi, \psi}_{X^{(n)}}
\end{align*} 
Analogously we compute
\begin{align*}
    \{P^{\phi, \psi}_\xi,  H^{\phi, \psi}_{\lambda}\}_Y  &=\{P_\xi+p_\xi^{\phi}, H_{\lambda}+h_{\lambda}^{\phi}\}_{\phi}+\{P_\xi+p_\xi^{\psi}, H_{\lambda}+h_{\lambda}^{\psi}\}_{\psi}-\{P_\xi,H_{\lambda}\}\\
    &+ \iota_{\mathbb{h}^{\phi, \psi}}\delta (P_\xi+p_\xi^{\psi}+p_\xi^{\phi})+\iota_{\mathbb{h}^{\psi}}\iota_{\mathbb{p}^{\phi}}\Omega_{Y}
    \end{align*}
    Let us consider the terms in the second row. We have $$\iota_{\mathbb{h}^{\psi}}\iota_{\mathbb{p}^{\phi}}\Omega_{Y}= \int_{\Sigma}\frac{1}{2}e^2 \mathbb{h}^{\psi}_e \Pi \mathbb{p}^{\phi}_{\phi}=
    -\int_{\Sigma}\frac{1}{2}e^2 \mathbb{h}^{\psi}_e \Pi \mathrm{L}_\xi \phi.$$ On the other hand we have
    \begin{align*}
    \iota_{\mathbb{h}^{\phi, \psi}}\delta (P_\xi+p_\xi^{\psi}+p_\xi^{\phi})&= \int_{\Sigma} - \frac{1}{3!}e^3 \mathbb{h}^{\phi, \psi}_{\Pi}\mathrm{L}_{\xi}\phi - i \frac{e^3}{2\cdot 3!} \left( \mathbb{h}^{\phi, \psi}_{\overline{\psi}} \gamma \mathrm{L}_{\xi}^{\omega_0}(\psi) + \mathrm{L}_\xi^{\omega_0}(\overline{\psi})\gamma \mathbb{h}^{\phi, \psi}_{\psi}  \right)\\
    &\quad 
    - \frac{i}{2\cdot 3!} \left(\mathrm{L}_{\xi}^{\omega_0}( e^3 \overline{\psi}) \gamma \mathbb{h}^{\phi, \psi}_{\psi} + \mathbb{h}^{\phi, \psi}_{\overline{\psi}}\gamma \mathrm{L}_\xi^{\omega_0}(e^3 \psi)  \right)\\
    &= \int_{\Sigma} \frac{1}{2}e^2\mathbb{h}_e^{\psi} \Pi \mathrm{L}_{\xi}\phi-\frac{1}{2\cdot 3!}g_Y \lambda e_n e^3 \overline{\psi} \psi\mathrm{L}_{\xi}\phi    \\
    &\quad + \frac{e^3}{2\cdot 3!} \left( \frac{1}{2}g_Y \lambda e_n\overline{\psi}\phi \mathrm{L}_{\xi}^{\omega_0}(\psi) - \frac{1}{2}g_Y \lambda e_n\mathrm{L}_\xi^{\omega_0}(\overline{\psi})\phi \psi  \right)\\
    &\quad 
    -  \frac{1}{2\cdot 3!} \left(\frac{1}{2}g_Y \lambda e_n\mathrm{L}_{\xi}^{\omega_0}( e^3 \overline{\psi}) \phi \psi -\frac{1}{2}g_Y \lambda e_n\overline{\psi}\phi\mathrm{L}_\xi^{\omega_0}(e^3\psi)  \right)\\
    &= \int_{\Sigma} - g_Y \frac{1}{2\cdot 3!}  \lambda e_n \mathrm{L}_{\xi}^{\omega_0}( e^3 \overline{\psi}\phi \psi)+\frac{1}{2}e^2\mathbb{h}_e^{\psi} \Pi \mathrm{L}_{\xi}\phi\\
    &= \int_{\Sigma} g_Y \frac{1}{2\cdot 3!} \mathrm{L}_{\xi}^{\omega_0} (\lambda e_n ) e^3 \overline{\psi}\phi \psi+\frac{1}{2}e^2\mathbb{h}_e^{\psi} \Pi \mathrm{L}_{\xi}\phi.
    \end{align*}
    where we used $\mathrm{L}_\xi^{\omega_0} \gamma=0.$
    The second term in the last row cancels out with the one computed above, while the first is exactly $h^{\phi, \psi}_{ Y^{(n)}}$. Hence collecting all the terms we get
    \begin{align*}
    \{P^{\phi, \psi}_\xi,  H^{\phi, \psi}_{\lambda}\}_Y &=  P_{Y^{(\nu)}} - L_{Y^{(\nu)}(\omega - \omega_0)_{\nu}} + H_{Y^{(n)}}
    + p^{\phi}_{Y^{(\nu)}} + h^{\phi}_{Y^{(n)}}
    + p^{\psi}_{Y^{(\nu)}}\\
    &\quad  - l^{\psi}_{Y^{(\nu)}(\omega - \omega_0)_{\nu}} + h^{\psi}_{Y^{(n)}}
    +h^{\phi, \psi}_{Y^{(n)}}\\
    &= - P^{\phi, \psi}_{Y^{(\nu)}} + L^{\phi, \psi}_{ Y^{(\nu)} (\omega - \omega_0)_{\nu}} -H^{\phi, \psi}_{ Y^{(n)}}.
\end{align*}

Finally we consider
 \begin{align*}
    \{H^{\phi, \psi}_{\lambda},H^{\phi, \psi}_{\lambda}\}_Y &=\{H_{\lambda}+h^{\phi}_{\lambda}, H_{\lambda}+h^{\phi}_{\lambda}\}_{\phi}+\{H_{\lambda}+h^{\psi}_{\lambda}, H_{\lambda}+h^{\psi}_{\lambda}\}_{\psi}-\{H_{\lambda},H_{\lambda}\}\\
    &+ 2\iota_{\mathbb{h}^{\phi, \psi}}\delta (H_{\lambda}+ h^{\phi}_{\lambda}+h^{\psi}_{\lambda}+h^{\phi,\psi}_{\lambda})- \iota_{\mathbb{h}^{\phi, \psi}}\iota_{\mathbb{h}^{\phi, \psi}}\Omega_{Y}+2\iota_{\mathbb{h}^{\phi}}\iota_{\mathbb{h}^{\psi}}\Omega_{Y}.
\end{align*}
All the terms of the first line are zero, as proved in the previous theorems. Furthermore notice that $H_{\lambda}+ h^{\phi}_{\lambda}+h^{\psi}_{\lambda}+h^{\phi,\psi}_{\lambda}$ is proportional to $\lambda$, as well as $\mathbb{h}^{\phi, \psi}$, $\mathbb{h}^{\phi}$ and  $\mathbb{h}^{\psi}$. Since $\lambda^2=0$, we conclude that $\{H^{\phi, \psi}_{\lambda},H^{\phi, \psi}_{\lambda}\}=0.$
\end{proof}

\section{Standard Model}\label{s:standard_model}
We now consider the full theory of gravity coupled to a Yang--Mills, a Higgs scalar and a spinor fields as defined in the previous sections. One recovers the usual structure of the standard model by choosing the appropriate gauge group and its representation on the various fields. 
The action reads:
    \begin{equation*}
        S_{SM}=S+S_A+S_{\psi}+S_H+S_{\psi,A}+S_{H,\psi}.
    \end{equation*}

In order to have a well defined Yukawa interaction term $S_{H,\psi}$, the space of Dirac spinors is split as the direct sum of left-handed and right-handed Weyl spinors, i.e.
    \begin{equation*}
        S=S_R\oplus S_L,
    \end{equation*}
corresponding respectively to the projections of $S$ onto the eigenspaces of $\gamma_5:=-\Pi_{a=0}^3 \gamma_a$. In particular, defining the projectors $P_{R/L}:=\frac{\mathbb{1}\pm \gamma^5}{2}$, for any $\psi\in S$, we obtain
    \begin{equation*}
        \psi_L=P_L(\psi)\in S_L \qquad \psi_R=P_R(\psi)\in S_R.
    \end{equation*}
We choose $\psi_R$ to be in the same representation space of $G$ as the Higgs field $\phi$,\footnote{In particular $\psi_R\in\Gamma(S_R\otimes E_n)$.} while $\psi_L$ is set to be in the trivial one. Hence, the Yukawa interaction term is
    \begin{equation*}
        S_{H,\psi}=\int_M \frac{e^4}{2\cdot 4!} g_Y \left[\bar{\psi}_L < \phi, \psi_R> - <\phi,\bar{\psi}_R>\psi_L \right].
    \end{equation*}
On the boundary $\Sigma$, the geometric phase space is given by 
    \begin{equation*}
        F^\partial_{SM}\rightarrow \Omega^1_{e_n}(\Sigma,\mathcal{V}_{\Sigma}) \oplus \mathcal{A}_\Sigma^G \oplus \Gamma(\Sigma, (S_R\otimes E_n)) \oplus \Gamma(\Sigma, (\bar{S}_R\otimes E_n)) \oplus \Gamma(\Sigma,{S}_L) \oplus\Gamma(\Sigma,\bar{S}_L) \oplus \Gamma(\Sigma,E_n),
    \end{equation*}
with fiber given by $\mathcal{A}_\Sigma \oplus \Omega^{(0,2)}_{\partial,\mathrm{red}}(\mathfrak{g}) \oplus \Omega^{(0,1)}_{\partial,\mathrm{red}}(E_n) $ satisfying \eqref{e:omegareprfix2spin},\eqref{e:constraintYM} and \eqref{e:constraintscalar}.

The corresponding symplectic structure is simply given by 
    \begin{equation*}
        \Omega_{SM}=\varpi+\varpi_A+\varpi_{\psi_L}+\varpi_{\psi_R}+\varpi_H,
    \end{equation*}
where
    \begin{equation*}
        \varpi_{\psi_R}=\int_\Sigma \frac{i}{4}e^2 (<\delta \bar{\psi}_R  \gamma, \psi_R> - <\bar{\psi}_R  \gamma,\delta \psi_R>)\delta e + \frac{i}{3!}e^3 <\delta\bar{\psi}_R \gamma,\delta\psi_R>
    \end{equation*}
while, defining $x^{\psi_{L/R}}:=x^{\psi_{L}}+x^{\psi_{R}}$, the constraints are
    \begin{align*}
        L_c^{SM}&=L_c+l_c^{\psi_{L/R}}, \\
        P_\xi^{SM}&=P_\xi + p_\xi^A + p_\xi^{\psi_{L/R}} + p_\xi^H + p_\xi^{A,{\psi_{L/R}}} + p_\xi^{H,A}, \\
        M_\mu^{SM}&= M_\mu^A + m_\mu^{H,A} + m_\mu^{A,{\psi_{L/R}}}, \\
        H_\lambda^{SM}&=H_\lambda + h_\lambda^A + h_\lambda^{\psi_{L/R}} + h_\lambda^H + h_\lambda^{A,{\psi_{L/R}}} + h_\lambda^{H,A} +h_\lambda^{H,{\psi_{L/R}}}.
    \end{align*}
Given the absence of any triple interaction between the fields, it is easy to check that there does not exist a non-vanishing $\mathbb{x}^{H,A,\psi}$ satisfying
\begin{align*}
    \iota_{\mathbb{x}^{H,A,\psi}}\Omega_{SM}= 
    - \iota_{\mathbb{x}^{H,A}}\varpi_{\psi}
    - \iota_{\mathbb{x}^{H,\psi}}\varpi_{A}
    - \iota_{\mathbb{x}^{A,\psi}}\varpi_{H},
\end{align*}
then the Hamiltonian vector fields associated to constraints are just the sums of the ones found in the previous sections. 
    \begin{align*}
        \mathbb{P}^{SM}&=\mathbb{P}+\mathbb{p}^A+\mathbb{p}^H+\mathbb{p}^{H,A}+\Sigma_{i=L,R}(\mathbb{p}^{\psi_i}+\mathbb{p}^{A,\psi_i})\\
        \mathbb{H}^{SM}&=\mathbb{H}+\mathbb{h}^A+\mathbb{p}^H+\mathbb{h}^{H,A}+\Sigma_{i=L,R}(\mathbb{h}^{\psi_i}+\mathbb{h}^{A,\psi_i})+\mathbb{h}^{H,\psi}\\
        \mathbb{L}^{SM}&=\mathbb{L}+\Sigma_{i=L,R}\mathbb{l}^{\psi_i}\\
        \mathbb{M}^{SM}&=\mathbb{M}^A+\mathbb{m}^{H,A}+\Sigma_{i=L,R}\mathbb{m}^{A,\psi_i}\\
    \end{align*}

One only has to be careful regarding $\mathbb{h}_{\psi_{L/R}}^{H,\psi}$, indeed
    \begin{equation*}
        \gamma\mathbb{h}_{\psi_{L}}^{H,\psi}=-\frac{i}{2}g_Y\lambda e_n<\phi,\psi_R> \qquad \gamma\mathbb{h}_{\psi_{R}}^{H,\psi}=-\frac{i}{2}g_Y\lambda e_n\phi\psi_L
    \end{equation*}
\begin{theorem}\label{thm:first-class-constraints_SM}
    Assume that $g^\partial$ is nondegenerate on $\Sigma$. Then, the zero locus of the functions  $L^{SM}_c$, $P^{SM}_{\xi}$,  $H^{SM}_{\lambda}$ and $M_\mu^{SM}$ defined above is a  coisotropic submanifold with respect to the symplectic structure $\Omega_{SM}$. Their mutual Poisson brackets read
    {\footnotesize
         \begin{align*}
             \{L_c^{SM}, L_c^{SM}\}_{SM} &= - \frac{1}{2} L^{SM}_{[c,c]}  & \{L^{SM}_c, P^{SM}_{\xi}\}_{SM}  &=  L^{SM}_{\mathrm{L}_{\xi}^{\omega_0}c}\\
             \ \{ L^{SM}_c, M^{SM}_\mu \}_{SM} & = 0   & \{L^{SM}_c,  H^{SM}_{\lambda}\}_{SM}  &= - P^{SM}_{X^{(\nu)}} + L^{SM}_{X^{(\nu)}(\omega - \omega_0)_\nu} - H^{SM}_{X^{(n)}} + M^{SM}_{X^{\nu}(A-A_0)_\nu} \\
             \ \{M^{SM}_\mu, P^{SM}_{\xi}\}_{SM}  &=  M^{SM}_{\mathrm{L}_{\xi}^{A_0}\mu  } & \{P^{SM}_{\xi},H^{SM}_{\lambda}\}_{SM}  &=  P^{SM}_{Y^{(\nu)}} -L^{SM}_{ Y^{(\nu)} (\omega - \omega_0)_\nu} +H^{SM}_{ Y^{(n)}} - M^{SM}_{Y^\nu (A-A_0)_\nu} \\
             \ \{M^{SM}_\mu, H^{SM}_\lambda\}_{SM}&= 0  & \{P^{SM}_{\xi}, P^{SM}_{\xi}\}_{SM}  &=  \frac{1}{2}P^{SM}_{[\xi, \xi]}- \frac{1}{2}L^{SM}_{\iota_{\xi}\iota_{\xi}F_{\omega_0}} - \frac{1}{2}M^{SM}_{\iota_{\xi}\iota_{\xi}F_{A_0}} \\
             \ \{M_\mu^{SM},M_\mu^{SM}\}_{SM}&=-\frac{1}{2}M_{[\mu,\mu]}^{SM} & \{H^{SM}_{\lambda},H^{SM}_{\lambda}\}_{SM}  &=0 
         \end{align*}
         }
 with the same notation as in Theorem \ref{thm:first-class-constraints}. 
 
\end{theorem}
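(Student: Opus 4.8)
The plan is to derive all the brackets from Theorem~\ref{thm:Poisson_brackets_four}, fed with the brackets of the three two-field couplings of Section~\ref{s:interaction_terms} and of the single-field couplings of Section~\ref{s:previousResults}. I take $f=H$, $g=A$, $h=\psi$, keeping the Dirac field $\psi=\psi_L\oplus\psi_R$ as a single sector; this is forced, since the Yukawa term couples $\phi$ to both chiralities at once, so splitting $\psi$ would manufacture a spurious triple interaction. With this choice the three pairwise interactions of the standard model are Yang--Mills--Higgs, Yang--Mills--spinor and Yukawa, and there is no triple one, so by the computation recorded just before the statement $\mathbb{x}^{H,A,\psi}=0$ and the Hamiltonian vector fields $\mathbb{L}^{SM}$, $\mathbb{P}^{SM}$, $\mathbb{H}^{SM}$, $\mathbb{M}^{SM}$ are the sums displayed above, the only delicate point being the chiral components $\mathbb{h}^{H,\psi}_{\psi_{L/R}}$.

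First I would verify the hypothesis of Theorem~\ref{thm:Poisson_brackets_four}, namely $\iota_{\mathbb{x}^{H,A}}\varpi_\psi=\iota_{\mathbb{x}^{A,\psi}}\varpi_H=\iota_{\mathbb{x}^{H,\psi}}\varpi_A=0$ for every $\mathbb{x}\in\{\mathbb{l},\mathbb{p},\mathbb{h},\mathbb{m}\}$: reading the explicit vector fields off Appendix~\ref{s:appendix_Hamiltonianvf} and Section~\ref{s:interaction_terms}, each interaction vector field is supported only on the two fields entering that interaction, plus $\omega$, and has vanishing $e$-component, whereas $\varpi_k$ pairs only the $k$-th matter field, its momentum and $e$. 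Granting this, Theorem~\ref{thm:Poisson_brackets_four} applies and reduces the bracket to
\begin{align*}
\{X^{SM},Y^{SM}\}_{SM}&=\{X^{H,A},Y^{H,A}\}_{YMH}+\{X^{H,\psi},Y^{H,\psi}\}_{Y}+\{X^{A,\psi},Y^{A,\psi}\}_{YMS}\\
&\phantom{=}-\{X^{H},Y^{H}\}_{H}-\{X^{A},Y^{A}\}_{A}-\{X^{\psi},Y^{\psi}\}_{\psi}+\{X,Y\}
\end{align*}
plus the three cyclic correction sums, for $X,Y\in\{L_c^{SM},P_\xi^{SM},H_\lambda^{SM},M_\mu^{SM}\}$ (with the understanding that a sub-theory bracket is zero whenever the corresponding constraint is absent there, e.g.\ $M$ in the scalar, spinor and Yukawa theories).

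Next I substitute the known brackets — the pairwise ones from Theorems~\ref{thm:first-class-constraints_Higgs}, \ref{thm:first-class-constraints_Yukawa} and \ref{thm:first-class-constraints_YMS}, the single-field ones from Theorem~\ref{thm:first-class-constraints_YM} and from the scalar and spinor statements recalled in Sections~\ref{s:previousResults_scalar} and \ref{s:spinor}, and the gravitational ones from Theorem~\ref{thm:first-class-constraints} — and observe that they all share the same shape. A purely gravitational term on a right-hand side is then weighted $+1$ by each of the three pairwise brackets, $-1$ by each of the three single-field brackets and $+1$ by the gravity bracket, for total weight $1$; a term built from a single matter/gauge field is weighted $+1$ by the two pairwise brackets containing that field and $-1$ by its single-field bracket, again total weight $1$; an interaction term is weighted only by its own pairwise bracket, weight $1$. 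Hence, once the correction sums are disposed of, the right-hand side reassembles exactly into the standard model constraints, reproducing the table in the statement; and since each bracket is then a linear combination of $L^{SM}_c$, $P^{SM}_\xi$, $H^{SM}_\lambda$, $M^{SM}_\mu$ with coefficients depending on the fields and the multipliers, the zero locus is coisotropic with respect to $\Omega_{SM}$.

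The main obstacle is showing that the three cyclic correction sums of Theorem~\ref{thm:Poisson_brackets_four} vanish. The last sum, $\sum^{cycl}(\iota_{\mathbb{x}^i}\iota_{\mathbb{y}^j}+\iota_{\mathbb{y}^i}\iota_{\mathbb{x}^j})\varpi_k$, dies at once: for $i,j,k$ pairwise distinct, $\varpi_k$ has no slot that the single-field vector fields $\mathbb{x}^i$, $\mathbb{y}^j$ can fill. For the first two sums one must run through the explicit vector fields: most terms vanish because the interaction vector fields carry no $e$- and no $A$-component and are block-localised, and all terms involving $H^{SM}_\lambda$ vanish since every vector field in sight is proportional to $\lambda$ and $\lambda^2=0$; the remaining handful, coupling two interaction vector fields that share one field (such as $\iota_{\mathbb{p}^{H,A}}\iota_{\mathbb{h}^{H,\psi}}\varpi_H$ appearing in $\{P^{SM}_\xi,H^{SM}_\lambda\}$), have to be checked by hand to cancel in pairs or vanish individually — this is the one point where a genuine computation with the chiral Yukawa vector fields is unavoidable, and it is where essentially all the work is concentrated. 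Everything downstream is the weighted bookkeeping above, entirely analogous to (but one level above) the proofs of Theorems~\ref{thm:first-class-constraints_YMS}, \ref{thm:first-class-constraints_Higgs} and \ref{thm:first-class-constraints_Yukawa}.
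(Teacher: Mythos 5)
Your proposal is correct and follows essentially the same route as the paper: the paper's proof also invokes Theorem \ref{thm:Poisson_brackets_four} with the three sectors $H$, $A$ and the combined chiral spinors $\psi_{L/R}$, reduces each bracket to the already-established pairwise brackets of Sections \ref{s:YM+spinor}--\ref{s:scalar+spinor} (and the single-field and pure-gravity ones), and then disposes of the residual cross terms — which all live in the Higgs/Yukawa sector and either vanish, cancel via Leibniz/Jacobi identities, or are killed by $\lambda^2=0$ — exactly as you describe. The only cosmetic difference is that the paper writes each bracket as ``the YMS (or Yukawa) bracket plus explicit Higgs-sector remainders'' rather than spelling out your weight-counting argument, but this is the same bookkeeping reorganized.
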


\begin{proof}
We make use of Theorem \ref{thm:Poisson_brackets_four} to compute the following brackets. Omitting the vanishing terms, we obtain
    \begin{align*}
        \{L_c^{SM},L_c^{SM}\}_{SM}&=\{L_c+ l_c^{\psi_{L/R}},L_c+l_c^{\psi_{L/R}} \}_{SM}\\
        &=\{L_c+ l_c^{\psi_{L/R}},L_c+l_c^{\psi_{L/R}}\}_{\psi_{L/R}}=-\frac{1}{2}L^{\psi_{L/R}}_{[c,c]}=-\frac{1}{2}L_{[c,c]}^{SM};
    \end{align*}
    \begin{align*}
        \{L_c^{SM},P_\xi^{SM}\}_{SM}&=\{L_c^{A,\psi_{L/R}},P_\xi^{A,\psi_{L/R}}\}_{A,\psi_{L/R}}=L^{A,\psi_{L/R}}_{\mathrm{L}^{\omega_0}c}=L^{SM}_{\mathrm{L}^{\omega_0}c};
    \end{align*}
    \begin{align*}
        \{L_c^{SM},M_\mu^{SM}\}_{SM}&=\{L_c^{A,\psi_{L/R}},M_\mu^{A,\psi_{L/R}}\}_{A,\psi_{L/R}}=0;
    \end{align*}
    \begin{align*}
        \{L_c^{SM},H_\lambda^{SM}\}_{SM}&=\int_\Sigma \lambda e_n \big( \frac{1}{2}[c,e^2<\Pi,d_A\phi>] + \frac{[c,e^3] }{2\cdot3!}(<(\Pi,\Pi)> + 2V_H)  \big)\\
        &\phantom{=}+ \{L_c^{A,\psi_{L/R}},H_\lambda^{A,\psi_{L/R}}\}_{A,\psi_{L/R}} \\
        &=- \int_\Sigma [c,\lambda e_n] \left(  \frac{e^2}{2}<\Pi,d_A\phi> + \frac{e^3}{2\cdot 3!}(2V_H <(\Pi,\Pi)>) \right) \\
        &\phantom{=}+ \{L_c^{A,\psi_{L/R}},H_\lambda^{A,\psi_{L/R}}\}_{A,\psi_{L/R}} \\
        &= - P^{SM}_{X^{(\nu)}} + L^{SM}_{X^{(\nu)}(\omega - \omega_0)_\nu} - H^{SM}_{X^{(n)}} + M^{SM}_{X^{\nu}(A-A_0)_\nu};
    \end{align*}
    \begin{align*}
        \{M_\mu^{SM},P_\xi^{SM}\}_{SM}&=\{M_\mu^{A,\psi_{L/R}},P_\xi^{A,\psi_{L/R}}\}_{A,\psi_{L/R}} \\
        &\phantom{=}- \int_\Sigma <[\mu,p], \mathrm{L}^{\omega_0 +A_0}_\xi\phi> + <p,\mathrm{L}^{\omega_0 +A_0}_\xi[\mu,\phi]> \\
        &=\{M_\mu^{A,\psi_{L/R}},P_\xi^{A,\psi_{L/R}}\}_{A,\psi_{L/R}}\\
        &\phantom{=} - \int_\Sigma <p, [\mu,\mathrm{L}^{\omega_0 +A_0}_\xi\phi] + [\mathrm{L}^{A_0}_\xi \mu,\phi] - [\mu,\mathrm{L}^{\omega_0 +A_0}_\xi\phi]>\\
        &=  M^{A,\psi_{L/R}}_{\mathrm{L}_{\xi}^{\omega_0}\mu  } - \int_\Sigma <p, \mathrm{L}_{\xi}^{\omega_0}\mu ;
    \end{align*}
    \begin{align*}
         \{M_\mu^{SM},H_\lambda^{SM}\}_{SM}&=\int_\Sigma \lambda e_n \big[ \frac{e^2}{2}(<[\mu,\Pi],d_A\phi> + <\Pi, [d_A\mu,\phi]> - <\Pi, d_A[\mu,\phi]>) \\
        &\qquad + g_Y \bar{\psi}_L  \frac{e^3}{3!}( <[\mu,\phi],\psi_L> + <\phi,[\mu,\psi_L]> )  \big]\\
        &\phantom{=}
        +\{M_\mu^{A,\psi_{L/R}},H_\lambda^{A,\psi_{L/R}}\}_{A,\psi_{L/R}}\\
        &=\int_\Sigma \lambda e_n \big[ \frac{e^2}{2}[\mu,<\Pi,d_A\phi>] + g_Y \bar{\psi}_L  \frac{e^3}{3!} [\mu,<\phi,\psi_R>] \big]=0;
    \end{align*}
    \begin{align*}
        \{M_\mu^{SM},M_\mu^{SM}\}_{SM}&= \{M_\mu^{A,\psi_{L/R}},M_\mu^{A,\psi_{L/R}}\}_{A,\psi_{L/R}}+ \int_\Sigma <[\mu,p],[\mu,\phi]> \\
        &=-\frac{1}{2}M_{[\mu,\mu]}^{A,\psi_{L/R}} + \frac{1}{2}\int_\Sigma <p, [\mu,[\mu,\phi]]>\\
        &=-\frac{1}{2}M_{[\mu,\mu]}^{SM};
    \end{align*}
    In the computation of the next couple of brackets we follow verbatim the steps in the analogous brackets in the Yang--Mills-Higgs section.
    \begin{align*}
        \{P_\xi^{SM},P_\xi^{SM}\}_{SM}&=\{P_\xi^{A,\psi_{L/R}},P_\xi^{A,\psi_{L/R}}\}_{A,\psi_{L/R}} \\
        &\phantom{=}+ \frac{1}{2}\int_\Sigma <\mathrm{L}^{\omega_0 +A_0}_\xi p, \mathrm{L}^{\omega_0 +A_0}_\xi\phi> + <p, \mathrm{L}^{\omega_0 +A_0}_\xi \mathrm{L}^{\omega_0 +A_0}_\xi \phi>\\
        &= \frac{1}{2}P^{SM}_{[\xi, \xi]}- \frac{1}{2}L^{SM}_{\iota_{\xi}\iota_{\xi}F_{\omega_0}} - \frac{1}{2}M^{SM}_{\iota_{\xi}\iota_{\xi}F_{A_0}} ;
    \end{align*}
    \begin{align*}
        \{P_\xi^{SM},H_\lambda^{SM}\}_{SM}&= - \int_\Sigma \lambda e_n \left[ \frac{1}{2}\mathrm{L}^{\omega_0 }_\xi(e^2)<\Pi,d_A\phi> + \frac{1}{2\cdot 3!}\mathrm{L}^{\omega_0 }_\xi(e^3)\bar{\psi}_L<\phi,\psi_R> \right]\\
        &\phantom{=}- \int_\Sigma \lambda e_n \left[\frac{1}{2\cdot 3!}\mathrm{L}^{\omega_0 }_\xi(e^3)\left(<(\Pi,\Pi)> + 2V_H\right) + \frac{e^2}{2}(<\mathrm{L}^{\omega_0 +A_0}_\xi \Pi, d_A\phi >\right]\\
        &\phantom{=}- \int_\Sigma \lambda e_n \left[ -\frac{e^2}{2} <\Pi, \left[\mathrm{L}^{A_0}_\xi(A-A_0) +\iota_\xi F_{A_0},\phi\right]> + \frac{e^2}{2}<\Pi,d_A \mathrm{L}^{\omega_0 +A_0}_\xi \phi>  \right] \\
        &\phantom{=}+ \int_\Sigma \lambda e_n  \frac{e^3}{2\cdot 3!}\mathrm{L}^{\omega_0 +A_0}_\xi\left(\bar{\psi}_L<\phi,\psi_R> + <(\Pi,\Pi)> + 2V_H\right)\\
        &\phantom{=}+\{P_\xi^{A,\psi_{L/R}},H_\lambda^{A,\psi_{L/R}}\}_{A,\psi_{L/R}}\\
        &=\int_\Sigma \mathrm{L}^{\omega_0}_\xi(\lambda e_n) \left[ \frac{e^2}{2}<\Pi,d_A\phi> + \frac{e^3}{2\cdot 3!}(\bar{\psi}_L<\phi,\psi_R> + <(\Pi,\Pi)> + 2V_H)\right]\\
        &\phantom{=}+\{P_\xi^{A,\psi_{L/R}},H_\lambda^{A,\psi_{L/R}}\}_{A,\psi_{L/R}}\\
        &= P^{SM}_{Y^{(\nu)}} -L^{SM}_{ Y^{(\nu)} (\omega - \omega_0)_\nu} +H^{SM}_{ Y^{(n)}} - M^{SM}_{Y^\nu (A-A_0)_\nu}.
    \end{align*}
    Lastly, in the computation of $\{H_\lambda^{SM},H_\lambda^{SM}\}$, it is easy to see that all the additional terms are proportional to $\lambda^2=0$, producing the desired result.
\end{proof}
    
\section{BFV formalism}\label{s:BFV_all}
For a precise formulation of the BFV formalism and the construction of a BFV theory out of the classical description of the RPS via symplectic spaces and constraints we refer to \cite[Section 5.1]{CCS2020}. We recall here the definition of BFV theory and its relation to the reduced phase space of a theory. 
\begin{definition}\label{def:BFV_theory}
A BFV theory is a triple $\FF= \left( \mathcal{F}, \mathcal{S}, \varpi \right)$ where $\mathcal{F}$ is a graded manifold (the \emph{space of boundary BFV fields}) endowed with a degree-$0$ exact symplectic form $\varpi= \delta \alpha$ and $\mathcal{S}: \mathcal{F} \rightarrow \mathbb{R}$ is a degree 1 functional
such that $\{\mathcal{S},\mathcal{S}\}=0$, satisfying the Classical Master Equation (CME), where the Poisson brackets $\{\cdot,\cdot\}$ are those inherited from the symplectic form $\varpi$.
\end{definition}

Note that sometimes this is referred as a strict BFV theory. From a BFV theory it is always possible to define a cohomological vector field $\mathcal{Q}$ as the Hamiltonian vector field of $\mathcal{S}$ with respect to $\varpi$. Then the CME is equivalent to $\mathcal{Q}^2=0.$ Under this condition, $\mathcal{Q}$ is a differential graded vector field and it makes sense to consider its cohomology. It turns out that, if we construct a BFV theory in the proper way, the zeroth cohomology of $\mathcal{Q}$ is isomorphic to the space of invariant functions on the coisotropic submanifold generated by the constraints. 

Namely, let us denote by by $X_{c^i}$ the set of constraints, with $c^i$ the corresponding Lagrange multiplier. We can extend the geometric phase space $(F,\varpi_{geom})$ to a graded symplectic manifold $F\times T^*W$ where $W$ is the odd vector space whose coordinates are the $c^i$s. The new symplectic form, or BFV form is then $\varpi_{BFV}= \varpi_{geom} + \int_\Sigma \delta c^\dagger_i\, \delta c^i$
where we denoted by $c^\dagger_i$ the fields in the fiber of $T^*W$.
Then we can define the BFV action, an odd function of degree $1$,
\begin{align*}
    \mathcal{S}= \int_\Sigma c^i\phi_i + \frac12 f_{ij}^k c^\dagger_k c^ic^j + R,
\end{align*}
where $R$ is a function of higher degree in the ghost momenta $c^\dag_i$, chosen such that $\{S,S\}=0$. Such a correction $R$ can always be found \cite{BV1981,Batalin:1983,Stasheff1997}. If we now let $\mathcal{Q}$ be the Hamiltonian vector field of $\mathcal{S}$, then the zeroth cohomology of it is isomorphic to the space of functions on the reduced phase space. 

In the following sections we will use this construction by first applying it to the theories with two gauge/matter fields and using these results to the standard model.

Note that given the results already proven in the previous sections, here we just have to find $R$ in order to get the CME. In particular, in all the cases we will prove that $R=0$. In order to do it we will repeatedly use Theorem \ref{thm:BFV_computation} with the results collected in Appendix \ref{s:appendix_BFV_identities}.

\subsection{Yang--Mills-Spinor}\label{s:BFV_YMS}
\begin{theorem}\label{thm:BFVaction_YMS}
		Let $\mathcal{F}_{YMS}$ be the bundle 
		\begin{equation*}
\mathcal{F}_{YMS} \longrightarrow \Omega_{e_n}^1(\Sigma, \mathcal{V}_{\Sigma})\oplus \mathcal{A}_{\Sigma}^G\oplus S(\Sigma) \oplus \overline{S}(\Sigma),
\end{equation*}
with local trivialisation on an open $\mathcal{U}_{\Sigma}^{YMS}\subset \Omega_{e_n}^1(\Sigma, \mathcal{V}_{\Sigma})\oplus \mathcal{A}_{\Sigma}^G\oplus S(\Sigma) \oplus \overline{S}(\Sigma)$
		\begin{equation*}
			\mathcal{F}_{YMS}\simeq \mathcal{U}_{\Sigma} \times \mathcal{T}_{grav}\times \Omega^{(0,2)}_{\Sigma, red}(\mathfrak{g})\times T^*(\Gamma(\Sigma,\mathfrak{g}[1]))
		\end{equation*}
		where $\mathcal{T}_{grav}$ was defined in \eqref{LoctrivF1} and the fields satisfy \eqref{e:constraintYM} and \eqref{e:omegareprfix2spin}. 
		The symplectic form and the action functional on $\mathcal{F}_{YMS}$ are respectively defined by
		\begin{align}
			\Omega^{BFV}_{YMS} &= \Omega^{BFV}+\varpi_{A}+\varpi^{A}_{ghost}+\varpi_{\psi}, \label{e:BFV_form_YMS}\\
			\mathcal{S}_{YMS} &= \mathcal{S} + p^{A}_\xi + h^{A}_\lambda+ M_{\mu}^A +\mathcal{S}^A_{ghost} + l^{\psi}_{c} + p^{\psi}_{\xi}+ h^{\psi}_{\lambda}+ p^{A, \psi}_\xi +h^{A, \psi}_{\lambda} + m^{A, \psi}_{\mu} \label{e:BFV_action_YMS}
		\end{align}
        where $\varpi^{A}_{ghost}$ and $\mathcal{S}^A_{ghost}$ are defined in Theorem \ref{thm:BFVactionYM}.
		Then the triple $(\mathcal{F}_{YMS}, \Omega^{BFV}_{YMS}, \mathcal{S}_{YMS})$ defines a BFV structure on $\Sigma$.
	\end{theorem}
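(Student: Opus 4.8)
The plan is to verify the three conditions of Definition~\ref{def:BFV_theory}, namely that $\Omega^{BFV}_{YMS}$ is a degree-$0$ exact symplectic form, that $\mathcal{S}_{YMS}$ has degree $1$, and that the Classical Master Equation $\{\mathcal{S}_{YMS},\mathcal{S}_{YMS}\}=0$ holds. The first two are essentially bookkeeping: $\Omega^{BFV}_{YMS}$ is a sum of the gravity BFV form (Theorem~\ref{thm:BFVaction}), the Yang--Mills form and its ghost part (Theorem~\ref{thm:BFVactionYM}) and the spinor form (Theorem~\ref{thm:BFVaction spinor}), each of which is known to be symplectic on the relevant summand; exactness follows by taking $\alpha$ to be the sum of the corresponding potentials, and non-degeneracy follows because the fibres are direct sums with no cross-terms (the interaction terms $S_{A,\psi}$ carry no derivatives, so they contribute nothing to the symplectic structure). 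Degree $1$ of $\mathcal{S}_{YMS}$ is immediate since each summand is a constraint paired with a degree-$1$ Lagrange multiplier (or already a known degree-$1$ BFV functional).

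The substantial part is the CME. I would invoke Theorem~\ref{thm:BFV_computation} (as announced in Remark~\ref{rem:BFVandbracketsofconstraints}) with $\mathcal{S}$ the known BFV action $\mathcal{S}_A$ of the Yang--Mills-plus-gravity sector (whose CME is Theorem~\ref{thm:BFVactionYM}) and $s = l^{\psi}_c + p^{\psi}_{\xi} + h^{\psi}_{\lambda} + p^{A,\psi}_\xi + h^{A,\psi}_{\lambda} + m^{A,\psi}_{\mu}$ the spinor and interaction corrections, with $\mathcal{F}_2$ the spinor fibre $S(\Sigma)\oplus\overline S(\Sigma)$ and $\varpi_2 = \varpi_{\psi}$. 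The Hamiltonian vector fields $q_0$, $q_1$ of these corrections are exactly the $\mathbb{l}^\psi,\mathbb{p}^\psi,\mathbb{h}^\psi$ computed in \cite{CCF2022} together with the $\mathbb{p}^{A,\psi},\mathbb{h}^{A,\psi},\mathbb{m}^{A,\psi}$ computed in the proof of Theorem~\ref{thm:first-class-constraints_YMS}. The first hypothesis \eqref{e:computation_BFV1} of Theorem~\ref{thm:BFV_computation} encodes precisely the statement that the brackets of the constraints are unchanged by the interaction terms; but this is exactly what Theorem~\ref{thm:first-class-constraints_YMS} established, so \eqref{e:computation_BFV1} holds. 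It then remains only to check \eqref{e:computation_BFV2}, i.e. $\iota_{Q_1}\iota_{q_0}\varpi_c = 0$, where $Q_1$ is the antifield-linear part of $\mathcal{S}_A$ and $q_0$ the ghost-linear part of the added $s$; since $Q_1$ involves only the gravity and YM ghosts with their antifields while $q_0$ acts on the matter and spinor directions, the relevant contractions vanish by degree/field-type reasons, and whatever residual terms arise are handled by the identities collected in Appendix~\ref{s:appendix_BFV_identities}. Concluding $R=0$ then gives $\{\mathcal{S}_{YMS},\mathcal{S}_{YMS}\}=0$.

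The step I expect to be the main obstacle is the careful verification of \eqref{e:computation_BFV2}, because $q_0$ contains the non-trivial pieces $\mathbb{h}^{A,\psi}$ and the spinor contributions $\mathbb{h}^{\psi}$, whose $\omega$-components are not purely matter-directional, so one must check that the contraction against the gravity antifield terms in $Q_1$ (the $c^\dag$-, $\xi^\dag$-, $\lambda^\dag$- and $\mu^\dag$-linear parts of $\mathcal{S}_{ghost}$ and $\mathcal{S}^A_{ghost}$) genuinely produces zero after using the spinor bracket identities and the fact that many terms are proportional to $\lambda$ with $\lambda^2=0$. This is the analogue of the ``almost straightforward'' check advertised in Remark~\ref{rem:BFVandbracketsofconstraints}, and the bulk of the real work is importing the Hamiltonian-vector-field data and the bracket identities from the earlier sections rather than any genuinely new computation.
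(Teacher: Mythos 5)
Your proposal follows essentially the same route as the paper: invoke Theorem \ref{thm:BFV_computation} with the Yang--Mills-plus-gravity BFV theory as the base, obtain hypothesis \eqref{e:computation_BFV1} from Theorem \ref{thm:first-class-constraints_YMS} via Remark \ref{rem:BFVandbracketsofconstraints}, and reduce the remaining work to checking \eqref{e:computation_BFV2}. One small correction: the contraction $\iota_{q_0}\iota_{Q_1}(\varpi+\varpi_A)$ does \emph{not} vanish for field-type reasons (the components $\mathbb{h}^{\psi}_e$, $\mathbb{h}^{\psi}_\omega+\mathbb{h}^{A,\psi}_\omega$ and $\mathbb{h}^{A,\psi}_\rho$ of $q_0$ are nonzero precisely in the directions that pair with $\iota_{Q_1}(\varpi+\varpi_A)$); the decisive mechanism, which you do correctly identify in your final paragraph, is that all these components and $\iota_{Q_1}(\varpi+\varpi_A)$ itself are proportional to $\lambda$, so the contraction is killed by $\lambda^2=0$.
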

 
 \begin{proof}
     We apply Theorem \ref{thm:BFV_computation} and consider as the starting BFV theory that of the Yang--Mills field coupled to gravity. Since from Theorem \ref{thm:first-class-constraints_YMS} we already know that the brackets of the interacting theory have the same structure of the one with just gravity coupled to Yang--Mills (given in Theorem \ref{thm:first-class-constraints_YM}), we can use the observation of Remark \ref{rem:BFVandbracketsofconstraints} and we must check just that $\iota_{q_0}\iota_{Q_1}(\varpi + \varpi_A)=0$. We have that
    \begin{align*}
        q_0=\mathbb{l}^{\psi}+\mathbb{p}^{\psi}+\mathbb{p}^{A, \psi}+\mathbb{h}^{\psi}+\mathbb{h}^{A, \psi}+\mathbb{m}^{A, \psi}+ \mathbb{w}
    \end{align*}
    where  $\mathbb{w}$ contains terms in the direction of antifields. In particular, from the expression of $\iota_{Q_1}(\varpi + \varpi_A)$ expressed in Appendix \ref{s:appendix_BFV_identities}, we are interested in the components of $q_0$ in the direction of $e, \omega$ and $\rho$. Using the expression for the Hamiltonian vector fields of Appendix \ref{s:appendix_Hamiltonianvf} and of Section \ref{s:YM+spinor} we get
    \begin{align*}
        (q_0)_{e}&= \mathbb{h}_e^{\psi}= \lambda (\widetilde{\sigma}- \sigma)\\
        (q_0)_{\omega}&=\mathbb{h}_{\omega}^{\psi}+\mathbb{h}_{\omega}^{A, \psi}= - \frac{i}{2}\lambda e_n \overline{\psi}\gamma [A, \psi]+i \frac{\lambda e_n}{4}  (\overline{\psi}\gamma d_\omega \psi - d_\omega\overline{\psi} \gamma \psi) + \mathbb{V}\\
        (q_0)_{\rho}&=\mathbb{h}_{\rho}^{A, \psi}=- \frac{1}{2}\lambda e_n e^2 \overline{\psi}\gamma [\cdot, \psi].
    \end{align*}
    Since all these components are proportional to $\lambda$ and 
    $\iota_{Q_1}(\varpi+\varpi_A)$ is proportional to $\lambda$ as well, we conclude that $\iota_{q_0}\iota_{Q_1}(\varpi + \varpi_A)\sim \lambda^2=0$.
 \end{proof}

\subsection{Yang--Mills-Higgs} 

\begin{theorem}\label{thm:BFVaction_Higgs}
		Let $\mathcal{F}_{YMH}$ be the bundle 
		\begin{equation*}
            \mathcal{F}_{YMH} \longrightarrow \Omega_{e_n}^1(\Sigma, \mathcal{V}_{\Sigma})\oplus \mathcal{A}_\Sigma^G\oplus \Gamma(\Sigma,E_n) ,
        \end{equation*}
with local trivialisation on an open $\mathcal{U}_{\Sigma}^{YMH}\subset \Omega_{e_n}^1(\Sigma, \mathcal{V}_{\Sigma})\oplus \mathcal{A}_\Sigma^G\oplus \Gamma(\Sigma,E_n) $
		\begin{equation*}
			\mathcal{F}_{YMH}\simeq \mathcal{U}_{\Sigma}^{YMH} \times \mathcal{T}_{grav}\times \Omega^1(\Sigma,E_n\otimes \mathcal{V}_{\Sigma})_\mathrm{red} \times \Omega^{(0,2)}_{\Sigma,red}(\mathfrak{g})\times T^*(\Gamma(\Sigma,\mathfrak{g}[1]))
		\end{equation*}
		where $\mathcal{T}_{grav}$ was defined in \eqref{LoctrivF1} and the fields $\Pi\in\Omega^1(\Sigma,E_n\otimes \mathcal{V}_{\Sigma})_\mathrm{red}$ and $B\in\Omega^{(0,2)}_{\Sigma,red}(\mathfrak{g})$ satisfy respectively \eqref{e:constraintscalar} and \eqref{e:constraintYM}. 
		The symplectic form and the action functional on $\mathcal{F}_{YMH}$ are respectively defined by
		\begin{align*}
			\Omega^{BFV}_{YMH}&= \Omega^{BFV}+\varpi_A+\varpi^{A}_{ghost}+\varpi_H, \\
			\mathcal{S}_{YMH} &= \mathcal{S} +M_\mu^{A} + p^{A}_\xi + h_\lambda^A  + \mathcal{S}^A_{ghost} + m_\mu^{H,A} +p^{H}_\xi + p_\xi^{H,A} + h^{H}_\lambda+ h_\lambda^{H,A} .
		\end{align*}
		where $\varpi^{A}_{ghost}$ and $\mathcal{S}^A_{ghost}$ are defined in Theorem \ref{thm:BFVactionYM}.
		Then the triple $(\mathcal{F}_{YMH}, \Omega^{BFV}_{YMH}, \mathcal{S}_{YMH})$ defines a BFV structure on $\Sigma$.
	\end{theorem}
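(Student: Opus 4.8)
The plan is to follow the proof of Theorem~\ref{thm:BFVaction_YMS} almost verbatim, taking as starting BFV theory the Yang--Mills field coupled to gravity of Theorem~\ref{thm:BFVactionYM} and treating the Higgs sector $s := m_\mu^{H,A} + p^{H}_\xi + p_\xi^{H,A} + h^{H}_\lambda + h_\lambda^{H,A}$ as the additional functional in Theorem~\ref{thm:BFV_computation}. Concretely I would let $S_0$ be the ghost-linear part of $\mathcal{S}_A$, $S_1$ its antifield-linear part, $\varpi_c = \varpi + \varpi_A$, $\varpi_b = \varpi^A_{ghost}$, $\varpi_2 = \varpi_H$, and take $q_0,q_1$ to be the components of the Hamiltonian vector field of $s$ prescribed there (the equation for $q_1$ being solved by $\iota_{q_1}\varpi_H = -\iota_{Q_1}\varpi_H$, $\iota_{q_1}(\varpi_c+\varpi_b)=0$, as in the remark following Theorem~\ref{thm:BFV_computation}). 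With these identifications $\Omega^{BFV}_{YMH}$ and $\mathcal{S}_{YMH} = \mathcal{S}_A + s$ coincide with the data in the statement, so the task reduces to verifying the two hypotheses \eqref{e:computation_BFV1} and \eqref{e:computation_BFV2}; in particular the correction term $R$ of Section~\ref{s:BFV_all} will turn out to vanish.

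Hypothesis \eqref{e:computation_BFV1} comes for free, by the mechanism of Remark~\ref{rem:BFVandbracketsofconstraints}: Theorem~\ref{thm:first-class-constraints_Higgs} shows that the brackets of $L^{H,A}_c,P^{H,A}_\xi,H^{H,A}_\lambda,M^{H,A}_\mu$ close with exactly the same structure constants as those of $L_c,P^A_\xi,H^A_\lambda,M^A_\mu$ in Theorem~\ref{thm:first-class-constraints_YM}; since $\iota_{Q_0+q_0}\iota_{Q_0+q_0}(\varpi_c+\varpi_2)+2\iota_{Q_0+q_0}\iota_{Q_1+q_1}\varpi_b$ is governed precisely by these structure constants, its vanishing is inherited from the CME already established for the Yang--Mills theory in Theorem~\ref{thm:BFVactionYM}. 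Hence only \eqref{e:computation_BFV2}, namely $\iota_{q_0}\iota_{Q_1}(\varpi+\varpi_A)=0$, remains to be checked.

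For \eqref{e:computation_BFV2} I would argue as in Theorem~\ref{thm:BFVaction_YMS}: by the expression of $\iota_{Q_1}(\varpi+\varpi_A)$ recalled in Appendix~\ref{s:appendix_BFV_identities} it is enough to control the $e$-, $\omega$- and $\rho$-components of $q_0 = \mathbb{m}^{H,A} + \mathbb{p}^H + \mathbb{p}^{H,A} + \mathbb{h}^H + \mathbb{h}^{H,A} + \mathbb{w}$ (with $\mathbb{w}$ along antifields), together with the fact that $\iota_{Q_1}(\varpi+\varpi_A)$ is proportional to $\lambda$. From the Hamiltonian vector fields computed in the proof of Theorem~\ref{thm:first-class-constraints_Higgs} one reads off that $\mathbb{m}^{H,A}$, $\mathbb{p}^{H,A}$ and $\mathbb{p}^H$ have vanishing $e$-, $\omega$- and $\rho$-components (they move only $\phi$ and $p$), while $\mathbb{h}^H=\mathbb{h}^\phi$ up to Higgs-potential corrections and $\mathbb{h}^{H,A}$ have $e$-component zero and $\omega$-, $\rho$-components carrying an explicit factor $\lambda e_n$; hence every surviving term of $\iota_{q_0}\iota_{Q_1}(\varpi+\varpi_A)$ is proportional to $\lambda^2 = 0$. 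Applying Theorem~\ref{thm:BFV_computation} then yields $\{\mathcal{S}_{YMH},\mathcal{S}_{YMH}\}=0$, which is the claim. The only point that requires genuine care beyond the Yang--Mills--spinor case, and the step I would scrutinize, is confirming that the new Higgs-potential contributions to $\mathbb{h}^H$ and $h^H_\lambda$ do not break this $\lambda$-proportionality; this is immediate, since $V_H$ enters these quantities always multiplied by $\lambda e_n$.
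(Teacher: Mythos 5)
Your proposal is correct and follows essentially the same route as the paper's proof: both take the gravity--Yang--Mills BFV theory of Theorem \ref{thm:BFVactionYM} as the base, invoke Remark \ref{rem:BFVandbracketsofconstraints} together with Theorem \ref{thm:first-class-constraints_Higgs} to dispose of \eqref{e:computation_BFV1}, and verify \eqref{e:computation_BFV2} by observing that the relevant $e$-, $\omega$- and $\rho$-components of $q_0$ (including the Higgs-potential contributions) are proportional to $\lambda$, so that $\iota_{q_0}\iota_{Q_1}(\varpi+\varpi_A)\sim\lambda^2=0$. Your remark on the $\lambda e_n$ prefactor of the $V_H$ terms is exactly the point the paper's explicit formulas for $(q_0)_\omega$ and $(q_0)_\rho$ are meant to confirm.
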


\begin{proof}
    As before we apply Theorem \ref{thm:BFV_computation} and consider as the starting BFV theory that of the Yang--Mills field coupled to gravity. Again, from Theorem \ref{thm:first-class-constraints_Higgs} we know that the brackets of the interacting theory have the same structure of the one with just gravity coupled to Yang--Mills (given in Theorem \ref{thm:first-class-constraints_YM}), hence we can use the observation of Remark \ref{rem:BFVandbracketsofconstraints} and we must check just that $\iota_{q_0}\iota_{Q_1}(\varpi + \varpi_A)$. We have that
    \begin{align*}
        q_0=\mathbb{l}^{H}+\mathbb{p}^{H}+\mathbb{p}^{H,A}+\mathbb{h}^{H}+\mathbb{h}^{H,A}+\mathbb{m}^{H}+ \mathbb{w}
    \end{align*}
    where  $\mathbb{w}$ contains terms in the direction of antifields. In particular, from the expression of $\iota_{Q_1}(\varpi + \varpi_A)$ expressed in Appendix \ref{s:appendix_BFV_identities}, we are interested in the components of $q_0$ in the direction of $e, \omega$ and $\rho$. Using the expression for the Hamiltonian vector fields of Appendix \ref{s:appendix_Hamiltonianvf} and of Section \ref{s:YM+scalar} we get
    \begin{align*}
        (q_0)_{e}&= 0\\
        (q_0)_{\omega}&=\mathbb{h}_{\omega}^{H}+\mathbb{h}_{\omega}^{H,A}= \lambda e_n \left( <\Pi,d\phi> + \frac{e}{4}<(\Pi,\Pi)> + \frac{e}{2}V_H \right) \\ 
        &\quad- \frac{\lambda e_n}{2}<\Pi,[A,\phi]>  -\frac{\lambda}{2}e <\Pi,(\Pi,e_n)>   + \mathbb{V}\\
        (q_0)_{\rho}&=\mathbb{h}_{\rho}^{H,A}= \frac{i}{4}g_H\lambda e_n e^2 \Tr(\Pi \phi^\dagger - \phi \Pi^\dagger).
    \end{align*}
    Since all these components are proportional to $\lambda$ and 
    $\iota_{Q_1}(\varpi+\varpi_A)$ is proportional to $\lambda$ as well, we conclude that $\iota_{q_0}\iota_{Q_1}(\varpi + \varpi_A) \sim \lambda^2=0$.
\end{proof}

\subsection{Yukawa interaction}

\begin{theorem}\label{thm:BFVaction_Yukawa}
		Let $\mathcal{F}_{Y}$ be the bundle 
		\begin{equation*}
\mathcal{F}_{Y} \longrightarrow \Omega_{e_n}^1(\Sigma, \mathcal{V}_{\Sigma})\oplus \mathcal{C^{\infty}}(\Sigma)\oplus S(\Sigma) \oplus \overline{S}(\Sigma),
\end{equation*}
with local trivialisation on an open $\mathcal{U}_{\Sigma} $
		\begin{equation*}
			\mathcal{F}_{Y}\simeq \mathcal{U}_{\Sigma} \times \mathcal{T}_{grav}\times \Omega^{(0,1)}_{\partial,\text{red}}
		\end{equation*}
		where $\mathcal{T}_{grav}$ was defined in \eqref{LoctrivF1} and the fields satisfy \eqref{e:constraintscalar} and \eqref{e:omegareprfix2spin}. 
		The symplectic form and the action functional on $\mathcal{F}_Y$ are respectively defined by
		\begin{align*}
			\Omega^{BFV}_Y &= \Omega^{BFV}+\varpi_{\phi}+\varpi_{\psi}, \\
			\mathcal{S}_{Y} &= \mathcal{S}_{grav} + p^{\phi}_\xi + h^{\phi}_\lambda+ l^{\psi}_{c} + p^{\psi}_{\xi}+ h^{\psi}_{\lambda}+ h^{\phi, \psi}_{\lambda}.
		\end{align*}
		Then the triple $(\mathcal{F}_{Y}, \Omega^{BFV}_{Y}, \mathcal{S}_{Y})$ defines a BFV structure on $\Sigma$.
	\end{theorem}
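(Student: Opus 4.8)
The goal is to verify the Classical Master Equation $\{\mathcal{S}_Y,\mathcal{S}_Y\}=0$; as explained in Section \ref{s:BFV_all}, once the constraints and their brackets (Theorem \ref{thm:first-class-constraints_Yukawa}) are known, this amounts to showing that the higher ghost-momentum correction $R$ may be taken to vanish. The plan is to apply Theorem \ref{thm:BFV_computation} with base BFV structure the spinor field coupled to gravity, $(\mathcal{F}_\psi,\Omega^{BFV}_\psi,\mathcal{S}_\psi)$ of Theorem \ref{thm:BFVaction spinor}, and with $\mathcal{F}_2$ the added scalar-field factor carrying $\varpi_2=\varpi_\phi$, so that
\begin{align*}
    \mathcal{S}_Y=\mathcal{S}_\psi+s,\qquad s=p^{\phi}_\xi+h^{\phi}_\lambda+h^{\phi,\psi}_\lambda,\qquad \Omega^{BFV}_Y=\Omega^{BFV}_\psi+\varpi_\phi .
\end{align*}
First I would check that the hypotheses of Theorem \ref{thm:BFV_computation} hold: writing $S_0$, $S_1$ for the ghost-linear and antifield-linear parts of $\mathcal{S}_\psi$, the constraints $L^\psi_c,P^\psi_\xi,H^\psi_\lambda$ and the gravity ghost action $\mathcal{S}_{ghost}$ involve neither $\phi$ nor the conjugate momentum, so $\iota_{Q_0}\varpi_\phi=0$ and $\iota_{Q_1}\varpi_\phi=0$; hence $q_0$ is just the Hamiltonian vector field of $s$ with respect to $\Omega^{BFV}_\psi+\varpi_\phi$, and $q_1=0$ (using the uniqueness noted in the remark after Theorem \ref{thm:BFV_computation}).

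Next, by Theorem \ref{thm:first-class-constraints_Yukawa} the Poisson brackets of $L^{\phi,\psi}_c,P^{\phi,\psi}_\xi,H^{\phi,\psi}_\lambda$ coincide, with identical structure functions, with those of the non-interacting spinor--gravity theory. As observed in Remark \ref{rem:BFVandbracketsofconstraints}, this is precisely the content of equation \eqref{e:computation_BFV1}: the ghost-linear part of $\mathcal{S}_Y$ squares to the same combination of constraints already compensated by $\mathcal{S}_{ghost}$, so no new term is generated. It therefore remains only to verify \eqref{e:computation_BFV2}, i.e. $\iota_{q_0}\iota_{Q_1}(\varpi+\varpi_\psi)=0$.

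For this last point I would use the Hamiltonian vector fields collected in Appendix \ref{s:appendix_Hamiltonianvf} together with the components of $\mathbb{h}^{\phi,\psi}$ already computed in the proof of Theorem \ref{thm:first-class-constraints_Yukawa}. Writing $q_0=\mathbb{p}^{\phi}+\mathbb{h}^{\phi}+\mathbb{h}^{\phi,\psi}+\mathbb{w}$, with $\mathbb{w}$ purely in antifield directions, the $1$-form $\iota_{Q_1}(\varpi+\varpi_\psi)$ (explicit in Appendix \ref{s:appendix_BFV_identities}) pairs with $q_0$ only through its components along $e,\omega,\psi,\overline{\psi}$. Along these directions $\mathbb{p}^{\phi}$ contributes nothing, since it lives in the $\phi$- and momentum-directions, while every component of $\mathbb{h}^{\phi}$ and of $\mathbb{h}^{\phi,\psi}$ along $e,\omega,\psi,\overline{\psi}$ is proportional to $\lambda$ (as visible from $\mathbb{h}^{\phi,\psi}_e=0$, $e\mathbb{h}^{\phi,\psi}_\omega=0$, $\gamma\mathbb{h}^{\phi,\psi}_\psi\sim\lambda$, and the corresponding scalar-gravity expressions). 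Since $\iota_{Q_1}(\varpi+\varpi_\psi)$ is itself proportional to $\lambda$, the contraction is proportional to $\lambda^2=0$, which gives \eqref{e:computation_BFV2}. Hence $R=0$ and $(\mathcal{F}_Y,\Omega^{BFV}_Y,\mathcal{S}_Y)$ is a BFV structure.

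The step I expect to be the main obstacle is the bookkeeping in this last paragraph: because $\varpi_\psi$ mixes the coframe $e$ with the spinors, solving for $q_0$ can produce spinor components of $\mathbb{h}^{\phi}$ absent in the pure scalar theory, and one must confirm that these too remain proportional to $\lambda$. A useful cross-check is to run the same argument starting instead from the scalar-coupled base $\mathcal{S}_\phi$ (with $\varpi_2=\varpi_\psi$ and $s=l^\psi_c+p^\psi_\xi+h^\psi_\lambda+h^{\phi,\psi}_\lambda$), exactly as in the proofs of Theorems \ref{thm:BFVaction_YMS} and \ref{thm:BFVaction_Higgs}, and verify that it yields the same conclusion.
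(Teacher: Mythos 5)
Your argument is correct and relies on exactly the same machinery as the paper --- Theorem \ref{thm:BFV_computation} combined with Remark \ref{rem:BFVandbracketsofconstraints} to reduce the CME to the single check \eqref{e:computation_BFV2}, closed by a ``everything relevant is proportional to $\lambda$ and $\lambda^2=0$'' argument --- but you run it with the opposite decomposition. You take the spinor-coupled theory $(\mathcal{F}_\psi,\Omega^{BFV}_\psi,\mathcal{S}_\psi)$ of Theorem \ref{thm:BFVaction spinor} as the base, with $\varpi_2=\varpi_\phi$ and $s=p^\phi_\xi+h^\phi_\lambda+h^{\phi,\psi}_\lambda$, so the condition to verify is $\iota_{q_0}\iota_{Q_1}(\varpi+\varpi_\psi)=0$, with the relevant components of $q_0$ being $(q_0)_e=0$, $(q_0)_\omega=\mathbb{h}^\phi_\omega$, $\gamma(q_0)_\psi=\gamma\mathbb{h}^{\phi,\psi}_\psi$, $(q_0)_{\bar\psi}\gamma=\mathbb{h}^{\phi,\psi}_{\bar\psi}\gamma$, all proportional to $\lambda$. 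The paper instead starts from the scalar-coupled theory of Theorem \ref{thm:BFVaction_scalar}, perturbs by $s=l^\psi_c+p^\psi_\xi+h^\psi_\lambda+h^{\phi,\psi}_\lambda$, and checks $\iota_{q_0}\iota_{Q_1}(\varpi+\varpi_\phi)=0$ using $(q_0)_e=\mathbb{h}^\psi_e$, $(q_0)_\omega=\mathbb{h}^\psi_\omega$, $(q_0)_\phi=0$. The two routes are mirror images; neither is shorter, both hinge on $\iota_{Q_1}\varpi_c\propto\lambda$, and the ``cross-check'' you propose at the end is literally the paper's proof.

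One claim in your write-up is wrong, although harmlessly so: it is not true that $\iota_{Q_0}\varpi_\phi=0$ and $\iota_{Q_1}\varpi_\phi=0$, hence not true that $q_1=0$. The form $\varpi_\phi=\int_\Sigma\frac{1}{3!}\delta(e^3\Pi)\delta\phi$ carries a $\delta e$ leg, and both $Q_0$ (through $\mathbb{L}_e=[c,e]$, $\mathbb{P}_e=-\mathrm{L}^{\omega_0}_\xi e$, $\mathbb{H}_e$) and $Q_1$ have nonzero $e$-components; Appendix \ref{s:appendix_BFV_identities} indeed records a nonvanishing $\iota_{Q_1}\varpi_\phi$. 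This does not affect your conclusion: condition \eqref{e:computation_BFV2} does not involve $q_1$, condition \eqref{e:computation_BFV1} is disposed of by Remark \ref{rem:BFVandbracketsofconstraints}, and the $q_0$ you actually use already incorporates the $-\iota_{Q_0}\varpi_2$ correction through the defining equations \eqref{e:Hamiltonian_VF_single} and \eqref{e:Hamiltonian_VF_interaction}. But the justification should be replaced by the observation in the footnote to Theorem \ref{thm:BFV_computation}, namely that $q_1$ satisfies $\iota_{q_1}(\varpi_c+\varpi_b)=0$, which is all the proof of that theorem uses. (You also omit $\mathbb{l}^\phi$ from $q_0$; since its only components lie along $\omega$ and $\Pi$ and consist of $\mathbb{V}$- and $\mathbb{W}$-type terms, this is handled at the same level of detail as in the paper's own proof, which likewise suppresses the $\mathbb{V}$-contributions of $\mathbb{l}^\psi$ and $\mathbb{p}^\psi$.)
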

 
\begin{proof}
    We apply Theorem \ref{thm:BFV_computation} but in this case we consider as the starting theory gravity coupled to the the scalar field. The expression of the brackets of the constraints of the interacting theory (Theorem \ref{thm:first-class-constraints_Yukawa}) and that of gravity coupled with the scalar theory (Section \ref{s:previousResults_scalar}) are the same, hence by Remark \ref{rem:BFVandbracketsofconstraints} we just have to check that $\iota_{q_0}\iota_{Q_1}(\varpi + \varpi_{\phi})=0$. We have
    \begin{align*}
        q_0=\mathbb{l}^{\psi}+\mathbb{l}^{\phi, \psi}+\mathbb{p}^{\psi}+\mathbb{p}^{\phi, \psi}+\mathbb{h}^{\psi}+\mathbb{h}^{\phi, \psi}+ \mathbb{w}
    \end{align*}
    where  $\mathbb{w}$ contains terms in the direction of antifields while the expression of $\iota_{Q_1}(\varpi + \varpi_{\phi})$ is collected in Appendix \ref{s:appendix_BFV_identities}. Hence we are interested in $q_0$ only in the direction of $e$, $\omega$ and $\phi$. From the expression of the Hamiltonian vector fields collected in Appendix \ref{s:appendix_Hamiltonianvf} and in Section \ref{s:scalar+spinor} we get
    \begin{align*}
        (q_0)_{e}&= \mathbb{h}_e^{\psi}= \lambda (\widetilde{\sigma}- \sigma)\\
        (q_0)_{\omega}&=\mathbb{h}_{\omega}^{\psi}= i \frac{\lambda e_n}{4}  (\overline{\psi}\gamma d_\omega \psi - d_\omega\overline{\psi} \gamma \psi) + \mathbb{V}\\
        (q_0)_{\phi}&=0.
    \end{align*}
 These components are all proportional to $\lambda$. Since the components of  $\iota_{Q_1} (\varpi + \varpi_\phi)$ are also all proportional to $\lambda$, we get that $\iota_{q_0}\iota_{Q_1} (\varpi + \varpi_\phi)$ is proportional to $\lambda^2=0$. Hence we conclude that $S_Y$ satisifies the classical master equation.
\end{proof}

\subsection{Standard Model}
Finally in this section we present one of the main results of this paper. Indeed, we characterize, using the BFV formalism, the Reduced Phase Space of the standard model coupled to gravity.  A BFV theory whose zeroth cohomology is isomorphic to it is described in the following theorem. 

\begin{theorem}\label{thm:BFVaction_SM}
		Let $\mathcal{F}_{SM}$ be the bundle 
		\begin{equation*}
\mathcal{F}_{SM} \longrightarrow \Omega_{e_n}^1(\Sigma, \mathcal{V}_{\Sigma})\oplus \mathcal{A}_{\Sigma}^G\oplus S(\Sigma) \oplus \overline{S}(\Sigma)\oplus \Gamma(\Sigma,E_n),
\end{equation*}
with local trivialisation on an open $\mathcal{U}_{\Sigma}^{SM}\subset \Omega_{e_n}^1(\Sigma, \mathcal{V}_{\Sigma})\oplus \mathcal{A}_{\Sigma}^G\oplus S(\Sigma) \oplus \overline{S}(\Sigma)\oplus \Gamma(\Sigma,E_n)$
		\begin{equation*}
			\mathcal{F}_{SM}\simeq \mathcal{U}_{\Sigma} \times \mathcal{T}_{grav}\times \Omega^{(0,2)}_\partial(\mathfrak{g})_\mathrm{red}\times T^*(\Gamma(\Sigma,\mathfrak{g}[1]))\times\Omega^1(\Sigma,E_n\otimes \mathcal{V}_{\Sigma})_\mathrm{red}
		\end{equation*}
		where $\mathcal{T}_{grav}$ was defined in \eqref{LoctrivF1} and the fields satisfy \eqref{e:constraintYM}, \eqref{e:constraintscalar} and \eqref{e:omegareprfix2spin}. 
		The symplectic form and the action functional on $\mathcal{F}_{SM}$ are respectively defined by
		\begin{align*}
			\Omega^{BFV}_{SM} &= \Omega^{BFV}+\varpi_{A}+\varpi^{A}_{ghost}+\varpi_{\psi_L}+\varpi_{\psi_R}+\varpi_H, \\
			\mathcal{S}_{SM} &= \mathcal{S}_{grav} + p^{A}_\xi + h^{A}_\lambda+ M_{\mu}^A +\mathcal{S}^A_{ghost} + l_c^{\psi_{L/R}} + p_\xi^{\psi_{L/R}}+ h_\lambda^{\psi_{L/R}}+ p_\xi^H + h_\lambda^H\\
            &\phantom{=}+ p_\xi^{A,{\psi_{L/R}}} + h_\lambda^{A,{\psi_{L/R}}} + m_\mu^{A,{\psi_{L/R}}} + p_\xi^{H,A}+ h_\lambda^{H,A} + m_\mu^{H,A}  +h_\lambda^{H,{\psi_{L/R}}}
		\end{align*}
        where $\varpi^{A}_{ghost}$ and $\mathcal{S}^A_{ghost}$ are defined in Theorem \ref{thm:BFVactionYM}.
		Then the triple $(\mathcal{F}_{SM}, \Omega^{BFV}_{SM}, \mathcal{S}_{SM})$ defines a BFV structure on $\Sigma$.
	\end{theorem}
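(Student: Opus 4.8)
The plan is to apply Theorem \ref{thm:BFV_computation} iteratively, using as the base BFV theory the one for gravity coupled to a Yang--Mills field and spinors, i.e.\ $\mathcal{F}_{YMS}$ with action $\mathcal{S}_{YMS}$ and symplectic form $\Omega^{BFV}_{YMS}$ (Theorem \ref{thm:BFVaction_YMS}), and then adding in turn the Higgs field together with its interaction terms with the Yang--Mills field and with the spinors. Concretely, I would write $\mathcal{S}_{SM} = \mathcal{S}_{YMS} + s$ where $s = p_\xi^H + h_\lambda^H + p_\xi^{H,A} + h_\lambda^{H,A} + m_\mu^{H,A} + h_\lambda^{H,\psi_{L/R}}$ collects all the terms containing the Higgs field, and $\widetilde{\mathcal{F}} = \mathcal{F}_{YMS}\oplus \Omega^1(\Sigma,E_n\otimes\mathcal{V}_\Sigma)_{\mathrm{red}}$ with $\varpi_2 = \varpi_H$. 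In the notation of Theorem \ref{thm:BFV_computation}, $S_0$ is the ghost-linear part (the constraints), $S_1 = \mathcal{S}^A_{ghost}$ is the antifield-linear part, and I must identify $q_0, q_1$ as the corresponding pieces of the Hamiltonian vector field of $\mathcal{S}_{SM}$; by the structural-constraint-independence of the Hamiltonian vector field decomposition established in Section \ref{s:standard_model}, $q_0$ is the sum of the $\mathbb{x}^H$-type vector fields computed in Sections \ref{s:YM+scalar} and \ref{s:scalar+spinor}, plus a piece $\mathbb{w}$ in the antifield directions, and $q_1$ is determined by $\iota_{q_1}\varpi_H = -\iota_{Q_1}\varpi_H$ together with $\iota_{q_1}(\varpi_c + \varpi_b) = 0$.

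The two hypotheses of Theorem \ref{thm:BFV_computation} that need checking are \eqref{e:computation_BFV1} and \eqref{e:computation_BFV2}. For \eqref{e:computation_BFV1}, I would invoke Remark \ref{rem:BFVandbracketsofconstraints}: since Theorem \ref{thm:first-class-constraints_SM} shows that adding the Higgs (and interaction) terms does not change the algebra of the constraints relative to the Yang--Mills--spinor case, the condition $\iota_{Q_0+q_0}\iota_{Q_0+q_0}(\varpi_c+\varpi_2) + 2\iota_{Q_0+q_0}\iota_{Q_1+q_1}\varpi_b = 0$ is automatically encoded in that invariance of the brackets, so no new computation is required there. The substantive point is \eqref{e:computation_BFV2}, namely $\iota_{Q_1}\iota_{q_0}\varpi_c = 0$. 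Here $\iota_{Q_1}\varpi_c = \iota_{Q_1}(\varpi + \varpi_A)$ is (as recalled from Appendix \ref{s:appendix_BFV_identities}) proportional to $\lambda$ and only pairs with the $e$, $\omega$ and $\rho$ components of $q_0$; and from the explicit Hamiltonian vector fields in Sections \ref{s:YM+scalar} and \ref{s:scalar+spinor} (see also the analogous computations in the proofs of Theorems \ref{thm:BFVaction_Higgs} and \ref{thm:BFVaction_Yukawa}), those components $(q_0)_e$, $(q_0)_\omega$, $(q_0)_\rho$ are each proportional to $\lambda$. Hence $\iota_{Q_1}\iota_{q_0}\varpi_c \sim \lambda^2 = 0$, which is exactly the kind of nilpotency argument already used for the sub-theories.

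The main obstacle I anticipate is bookkeeping rather than genuine difficulty: one must make sure that passing from the two-field results (Theorems \ref{thm:BFVaction_YMS}, \ref{thm:BFVaction_Higgs}, \ref{thm:BFVaction_Yukawa}) to the full standard model does not introduce cross terms that were absent in the pairwise cases. This is controlled by the observation, already made in Section \ref{s:standard_model}, that there is no triple interaction, so $\mathbb{x}^{H,A,\psi} = 0$ and the Hamiltonian vector fields (and hence $q_0$, $q_1$) are exactly sums of the pieces appearing in the sub-theories; combined with the splitting of $\gamma\mathbb{h}^{H,\psi}_{\psi_{L/R}}$ recorded just before the statement, all the needed components of $q_0$ have already been computed. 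Once that is in place, one concludes that $\{\mathcal{S}_{SM} + s, \mathcal{S}_{SM}+s\} = \{\mathcal{S}_{SM}, \mathcal{S}_{SM}\} = 0$ (i.e.\ the correction $R$ vanishes), so $(\mathcal{F}_{SM}, \Omega^{BFV}_{SM}, \mathcal{S}_{SM})$ satisfies the Classical Master Equation and defines a BFV structure, with the stated local trivialization and structural constraints inherited directly from the constituent theories.
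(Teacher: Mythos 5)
Your overall strategy is exactly the paper's: take the Yang--Mills--spinor BFV theory of Theorem \ref{thm:BFVaction_YMS} as the base, add all Higgs-containing terms as $s$ with $\varpi_2=\varpi_H$, dispose of \eqref{e:computation_BFV1} via Remark \ref{rem:BFVandbracketsofconstraints} and Theorems \ref{thm:first-class-constraints_YMS}, \ref{thm:first-class-constraints_SM}, and reduce \eqref{e:computation_BFV2} to a $\lambda^2=0$ argument. The identification of $q_0$ and the use of the absence of a triple interaction also match the paper.

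There is, however, one concrete slip in your verification of \eqref{e:computation_BFV2}. You write $\iota_{Q_1}\varpi_c=\iota_{Q_1}(\varpi+\varpi_A)$ and conclude that only the $e$, $\omega$ and $\rho$ components of $q_0$ matter. But when the base theory is the Yang--Mills--spinor one, the classical boundary symplectic form is $\varpi_c=\varpi+\varpi_A+\varpi_{\psi_L}+\varpi_{\psi_R}$: the spinors are classical fields of the base theory, and Appendix \ref{s:appendix_BFV_identities} shows $\iota_{Q_1}\varpi_{\psi}\neq 0$. Since your $q_0$ contains $\mathbb{h}^{H,\psi_{L/R}}$, whose components in the $\psi_{L/R}$ and $\overline{\psi}_{L/R}$ directions are nonzero (this is precisely the splitting $\gamma\mathbb{h}^{H,\psi}_{\psi_{L}}=-\tfrac{i}{2}g_Y\lambda e_n\langle\phi,\psi_R\rangle$, etc., recorded before the theorem), the contraction $\iota_{q_0}\iota_{Q_1}(\varpi_{\psi_L}+\varpi_{\psi_R})$ is a genuinely present term that your check omits. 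The paper's proof explicitly lists $(q_0)_{\psi_L},(q_0)_{\overline{\psi}_L},(q_0)_{\psi_R},(q_0)_{\overline{\psi}_R}$ for exactly this reason. The gap is easily closed, because these components are all proportional to $\lambda$, as is $\iota_{Q_1}\varpi_{\psi}$, so the extra terms vanish by the same nilpotency argument; but as written your verification checks the condition appropriate to the Yang--Mills--Higgs case (Theorem \ref{thm:BFVaction_Higgs}) rather than the one required here.
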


 \begin{proof}
     We apply Theorem \ref{thm:BFV_computation} once again but this time we consider as the starting BFV theory the building block given in Section \ref{s:BFV_YMS}, i.e. the BFV theory given by Yang--Mills and spinor fields coupled to gravity. Once again we can apply the observation in Remark \ref{rem:BFVandbracketsofconstraints} since the brackets of the two theories considered are the same (see Theorems \ref{thm:first-class-constraints_YMS} and \ref{thm:first-class-constraints_SM}). Hence we just have to check that $\iota_{q_0}\iota_{Q_1}(\varpi+\varpi_A+\varpi_{\psi_L}+\varpi_{\psi_R})$, where $Q_1$ satisfies $\iota_{Q_1}\Omega^{BFV}_{YMS}= \delta \mathcal{S}^{YMS}_1$ with $\mathcal{S}^{YMS}_1$ being the part linear in the antifields of $\mathcal{S}^{YMS}$ defined in \eqref{e:BFV_action_YMS} and $\Omega^{BFV}_{YMS}$ being defined in \eqref{e:BFV_form_YMS}. On the other hand we have 
    \begin{align*}
        q_0=\mathbb{p}^{H}+\mathbb{h}^{H}+\mathbb{p}^{H,A}+\mathbb{h}^{H,A}+\mathbb{m}^{H, A}+\mathbb{h}^{H, \psi_{L/R}}+ \mathbb{w}
    \end{align*}
    with $\mathbb{w}$ containing terms in the direction of antifields. In particular, from the expression of $\iota_{Q_1}(\varpi+\varpi_A+\varpi_{\psi_L}+\varpi_{\psi_R})$ expressed in Appendix \ref{s:appendix_BFV_identities}, we are interested in the components of $q_0$ in the direction of $e, \omega, \psi_L, \overline{\psi}_L,\psi_R, \overline{\psi}_R$ and $\rho$. Using the expression for the Hamiltonian vector fields of Appendix \ref{s:appendix_Hamiltonianvf} and of Sections \ref{s:YM+scalar} and \ref{s:scalar+spinor} we get
    \begin{align*}
        (q_0)_{e}&=0\\
        (q_0)_{\omega}&=\mathbb{h}_{\omega}^{H}+\mathbb{h}_{\omega}^{A, H}=  \lambda e_n \left( <\Pi,d\phi> + \frac{e}{4}<(\Pi,\Pi)> + \frac{e}{2}V_H \right) -\frac{\lambda}{2}e\Pi(\Pi,e_n) \\
        & \phantom{=\mathbb{h}_{\omega}^{H}+\mathbb{h}_{\omega}^{A, H}=}-\frac{\lambda e_n}{2}<\Pi,[A,\phi]>+\mathbb{V}_{h^H}\\
        \gamma(q_0)_{\psi_L}&=\gamma\mathbb{h}^{H, \psi_L}_{\psi_L}=-\frac{i}{2}g_Y \lambda e_n\phi \psi_L\\
        (q_0)_{\overline{\psi}_L}\gamma&=\mathbb{h}^{H, \psi_L}_{\overline{\psi}_L}\gamma=\frac{i}{2}g_Y \lambda e_n\overline{\psi}_L\phi\\
        \gamma(q_0)_{\psi_R}&=\gamma\mathbb{h}^{H, \psi_R}_{\psi_R}=-\frac{i}{2}g_Y \lambda e_n\phi \psi_R\\
        (q_0)_{\overline{\psi}_R}\gamma&=\mathbb{h}^{H, \psi_R}_{\overline{\psi}_R}\gamma=\frac{i}{2}g_Y \lambda e_n\overline{\psi}_R\phi\\
        (q_0)_{\rho}&=\mathbb{h}^{H, A}_{\rho}=\frac{i g_H}{4}\lambda e_n e^2 \Tr(\Pi \phi^\dagger - \phi \Pi^\dagger)
    \end{align*}
    Since all these components are proportional to $\lambda$ and 
    $\iota_{Q_1}(\varpi+\varpi_A+\varpi_{\psi_L}+\varpi_{\psi_R})$ is proportional to $\lambda$ as well, we conclude that $\iota_{q_0}\iota_{Q_1}\varpi_f \sim \lambda^2=0$.
 \end{proof}

\section{Light-like boundary}\label{s:degenerate_section}
In the preceding sections, we have consistently assumed that the pulled-back boundary metric $g^\partial=\iota^*g$ was non-degenerate. In the degenerate case some adaptations need to be taken into account.
\subsection{Gravity, light-like boundary}\label{s:degenerate_gravity}

 The analysis of the symplectic reduction to obtain the geometric phase space is independent of the nature of the boundary metric. Specifically, in both the degenerate and non-degenerate cases, the symplectic form remains unchanged.
However, when dealing with a light-like boundary, the geometric phase space has to be described differently, since it is not always possible to find a couple $(e,\omega)$ satisfying \eqref{e:structural_constraint_grav}.
Hence, it must be weakened and this leads to a new constraint. The details of this construction for gravity have been worked out in \cite{CCT2020} and the most important ones are recalled in Appendix \ref{a:def_maps_and_spaces} together with the definition of some of the spaces used. Here we just recall the results.

The Geometric phase space of gravity on a light-like boundary is a bundle 
\begin{align*}
    F_d^{\partial} \rightarrow \Omega^1_{e_n}(\Sigma,\mathcal{V}_{\Sigma}) 
\end{align*}
with fiber $\mathcal{A}_{red}(\Sigma)$ where the fields $\omega \in \mathcal{A}_{red}(\Sigma)$ satisfy the structural constraint
\begin{align}\label{e:structural_constraint_grav_deg}
    e_n (d_\omega e- p_{\mathcal{T}}(d_{\omega} e)) = e \sigma_d
\end{align}
for some $\sigma_d\in \Omega^{1}(\Sigma, \mathcal{V}_{\Sigma})$ and 
\begin{align}\label{e:structural_constraint2_grav_deg}
    p_{\mathcal{K}} \omega = 0
\end{align}

where the space $\mathcal{K}$ is defined in Definition \ref{d:spaces_and_maps} in Appendix \ref{a:def_maps_and_spaces}.
The corresponding symplectic form on  $F_d^{\partial} $ is given by $\varpi$ as in the non-degenerate case:
\begin{align*}
    \varpi = \int_{\Sigma} e \delta e \delta \omega .
\end{align*}

On this geometric phase space the constraints are the same of the non-degenerate theory with an additional one.
\begin{definition}\label{constraints_gravity-deg}
    The set of integral functionals defining the constraints of the theory are $L_c$, $H_\lambda$ and $P_\xi$ as defined in Section \ref{s:previousResults_gravity}, with the additional constraint
       \begin{align}\label{e:constraint_R}
     R_\tau&=\int_{\Sigma}\tau d_\omega e,
    \end{align}
with $\tau\in\mathcal S$ where the space $\mathcal{S}$ is defined in Definition \ref{d:spaces_and_maps} in Appendix \ref{a:def_maps_and_spaces}.
\end{definition}

The presence of the additional constraint spoils the coisotropicity of the zero set, as proved by the following.

\begin{theorem}[\cite{CCT2020}]
    Let $g^\partial$ be degenerate on $\Sigma$. Then, the set of constraints $L_c$, $H_\lambda$, $P_\xi$ and $R_\tau$ given in Definition \ref{constraints_gravity-deg} does not form a first class system.
\end{theorem}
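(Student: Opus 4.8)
The plan is to exhibit a single Poisson bracket among the constraints $\{L_c, H_\lambda, P_\xi, R_\tau\}$ whose value is \emph{not} a combination of the constraints themselves, thereby showing that the system fails to close and hence is not first class. Concretely, I would compute $\{R_\tau, H_\lambda\}$ (or, if that turns out to close, $\{R_\tau, P_\xi\}$ or $\{R_\tau, R_\tau\}$) and show that the result contains a term which is a nontrivial functional on the geometric phase space $F_d^\partial$, not proportional to any of $L_c$, $P_\xi$, $H_\lambda$, $R_\tau$. Since the bracket $\{R_\tau, R_\tau\}$ is automatically antisymmetric-even and the other constraints were already shown in Theorem~\ref{thm:first-class-constraints} (in the non-degenerate case) and its light-like analogue to close among themselves, the new obstruction must come from a bracket involving $R_\tau$.

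The key steps, in order, would be: (1) Compute the Hamiltonian vector field $\mathbb{R}_\tau$ of $R_\tau = \int_\Sigma \tau\, d_\omega e$ with respect to $\varpi = \int_\Sigma e\,\delta e\,\delta\omega$; this requires taking $\delta R_\tau = \int_\Sigma \tau(d_\omega \delta e + [\delta\omega, e]) = \int_\Sigma (-1)^{?}d_\omega\tau\,\delta e + \ldots$, and then solving $\iota_{\mathbb{R}_\tau}\varpi = \delta R_\tau$ for the components $(\mathbb{R}_\tau)_e$ and $(\mathbb{R}_\tau)_\omega$, keeping careful track of the fact that $\omega$ is constrained by \eqref{e:structural_constraint_grav_deg}--\eqref{e:structural_constraint2_grav_deg} and $\tau$ lives in the special space $\mathcal{S}$. (2) Pair this vector field against $\delta H_\lambda$ (and/or $\delta P_\xi$) to get the bracket. (3) Rearrange the resulting boundary integral, using the structural constraints and integration by parts, into a sum of recognizable constraint terms plus a remainder. (4) Argue that the remainder is genuinely nonzero on the constraint surface and is functionally independent of $L_c, P_\xi, H_\lambda, R_\tau$ — e.g. by evaluating it at a point of the zero locus where all four constraints vanish but the remainder does not, or by a degree/tensor-type argument showing it cannot be written in the form $\sum (\text{multiplier})\cdot(\text{constraint})$. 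This is exactly the strategy used in \cite{CCT2020}, so I would invoke that computation directly rather than redo it.

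The main obstacle I anticipate is Step~(1) together with Step~(4): the Hamiltonian vector field of $R_\tau$ is subtle because $\tau \in \mathcal{S}$ is constrained to a subspace (defined via the projectors $p_\mathcal{T}$, $p_\mathcal{K}$ of Appendix~\ref{a:def_maps_and_spaces}) precisely so that $R_\tau$ is well defined modulo the other constraints, and the degeneracy of $g^\partial$ means $e\wedge\cdot$ has a nontrivial kernel, so inverting $\varpi$ on the relevant components needs the decomposition machinery of that appendix. Then, in Step~(4), one must show the offending term does not secretly equal a combination of constraints after using the \emph{structural} constraints \eqref{e:structural_constraint_grav_deg}--\eqref{e:structural_constraint2_grav_deg} — distinguishing "fails to be first class" from "is first class in a non-obvious way" is the real content. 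Since the statement is attributed to \cite{CCT2020}, the honest proof here is simply: \emph{this is Theorem~4.?? (or the relevant statement) of \cite{CCT2020}; we recall only that the obstruction is the bracket $\{R_\tau, \cdot\}$, whose explicit non-closing form is given there.} So my proposed proof is essentially a one-line citation, with the above computation sketched as the reason behind it.

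\begin{proof}
This is proved in \cite{CCT2020}. We sketch the mechanism. One computes the Hamiltonian vector field $\mathbb{R}_\tau$ of $R_\tau=\int_\Sigma \tau\, d_\omega e$ with respect to $\varpi$, using the structural constraints \eqref{e:structural_constraint_grav_deg} and \eqref{e:structural_constraint2_grav_deg} and the fact that $\tau$ ranges over the space $\mathcal{S}$ of Definition \ref{d:spaces_and_maps}. Pairing $\mathbb{R}_\tau$ with the variations of the remaining constraints and integrating by parts, the brackets $\{R_\tau, L_c\}$ and $\{R_\tau, R_\tau\}$ close (they reproduce combinations of $L_c$, $R_\tau$ and, where relevant, $P_\xi$), but the bracket $\{R_\tau, H_\lambda\}$ produces, in addition to constraint terms, a residual functional on $F_d^\partial$ which does not vanish on the common zero locus of $L_c$, $P_\xi$, $H_\lambda$ and $R_\tau$ and is not expressible as a combination of these constraints. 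Hence the zero set of $\{L_c, P_\xi, H_\lambda, R_\tau\}$ is not coisotropic, i.e. the system is not first class. The detailed computation is carried out in \cite{CCT2020}.
\end{proof}
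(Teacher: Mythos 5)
Your overall strategy—reduce the statement to a citation of \cite{CCT2020} and identify a single Poisson bracket that fails to close—is exactly what the paper does: the paper offers no proof of this theorem at all, only the attribution, and the later degenerate-case theorems are all proved by exhibiting one non-closing bracket. So as a matter of form your proposal is fine.

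However, your sketch of the mechanism contains a concrete error that matters for how this theorem is used in the rest of the paper. You assert that $\{R_\tau, R_\tau\}$ closes and locate the obstruction in $\{R_\tau, H_\lambda\}$. This is backwards relative to what the paper (recalling \cite{CCT2020}) actually relies on: the key fact, invoked explicitly in the proofs of Theorems \ref{thm:no-first-class-constraints_YM+spinor_deg} through \ref{thm:no-first-class-constraints_SM}, is that $\{R_\tau, R_\tau\}$ is \emph{not} zero and \emph{not} proportional to any constraint. All of those later proofs reduce to the identity $\{R^{\bullet}_\tau, R^{\bullet}_\tau\} = \{R_\tau, R_\tau\}$ (or $=\{R^\psi_\tau,R^\psi_\tau\}_\psi$) and then cite precisely this non-closure; if $\{R_\tau,R_\tau\}$ closed as you claim, that entire chain of arguments would collapse. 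The source of the confusion appears to be your parenthetical that $\{R_\tau,R_\tau\}$ is ``automatically antisymmetric-even'': since $\tau\in\mathcal{S}$ is an odd Lagrange multiplier (like $c$, $\xi$, $\lambda$), the functional $R_\tau$ is odd and its self-bracket has no reason to vanish—compare $\{P_\xi,P_\xi\}$ in Theorem \ref{thm:first-class-constraints}, which is nonzero. Your Step (1) (computing $\mathbb{R}_\tau$ using the decomposition machinery of the appendix) and Step (4) (showing the remainder is not a combination of constraints even after using the structural constraints) are the right ingredients, but the bracket you should feed them into first is $\{R_\tau,R_\tau\}$ itself.
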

\begin{remark}
    The presence of the constraint $R_\tau$ is causing the constraint set to be second class. We interpret this fact as the essential peculiarity of the degenerate theory.
    In particular the number of local degrees of freedom is only \emph{one} instead of \emph{two} as in the case of the non-degenerate theory.
\end{remark}
As we have done for a space or time-like boundary, in the following sections, we recall the description of the reduced phase space for gravity coupled with some gauge and matter fields.

\subsection{Scalar field, light-like boundary}\label{s:degenerate_scalar}

In this section, we will briefly summarize the findings shown in \cite{CFHT23} about the degenerate structure of gravity coupled  with a scalar field. We use the same notation as in the non-degenerate case and we refer to Section \ref{s:previousResults_scalar} for the definition of the quantities mentioned here.

The geometric phase space is the bundle
\begin{align*}
    {F}^{\partial}_{d,\phi} \rightarrow \Omega_{e_n}^1(\Sigma, \mathcal{V}_{\Sigma})\oplus \mathcal{C^{\infty}}(\Sigma)
\end{align*}
with fiber $\mathcal{A}_{red}(\Sigma)\oplus\Omega^{(0,1)}_{\partial,\text{red}}$ such that the structural constraints \eqref{e:structural_constraint_grav_deg}, \eqref{e:structural_constraint2_grav_deg} and \eqref{e:constraintscalar} are satisfied.
  ${F}^{\partial}_{d,\phi}$ is a symplectic space with symplectic form
  $\Omega_{\phi}= \varpi+\varpi_{\phi}$.
  The set of constraints in this space is given by 
  \begin{align*}
      L_c, \qquad P^{\phi}_{\xi}=P_{\xi}+p^{\phi}_\xi, \qquad H^{\phi}_{\lambda}=H_{\lambda}+h^{\phi}_\lambda \quad\text{and}\quad R_{\tau}
  \end{align*}
 Then, it has been proven that these functions are not of first class and hence they do not define a coisotropic submanifold.

\subsection{Yang--Mills, light-like boundary}\label{s:degenerate_YM}

Let us now consider gravity coupled with a Yang--Mills field. We refer again to \cite{CFHT23} for more details.
The geometric phase space is the bundle 
\begin{align*}
    F^{\partial}_{d,A} \rightarrow \Omega_{e_n}^1(\Sigma, \mathcal{V}_{\Sigma})\oplus \mathcal{A}^{G}_{\Sigma}
\end{align*}
with fiber $\mathcal{A}_{red}(\Sigma)\oplus \Omega^{(0,2)}_{\Sigma, red}(\mathfrak{g})$ such that, the structural constraints \eqref{e:structural_constraint_grav_deg}, \eqref{e:structural_constraint2_grav_deg}  and \eqref{e:constraintYM}
  are satisfied.
  The symplectic form on ${F}^{\partial}_{d,A}$ reads $\Omega_{A}= \varpi + \varpi_{A}$.

  The constraints of the theory are $L^A_c$, $H^A_\lambda$, $P^A_\xi$ and $M_\mu^A$ as defined in Section \ref{s:YM}, with the additional $R_\tau$ defined in \ref{s:previousResults_gravity}.

The algebra of constraints has been computed, leading to the result that this set of constraints do not form a first class system.

\subsection{Spinor, light-like boundary}\label{s:degenerate_spinor}
We now consider the theory of gravity coupled with a Dirac spinor. We refer once more to \cite{CFHT23} for more details.
The geometric phase space is the bundle 
\begin{align*}
    {F}^{\partial}_{d,\psi} \rightarrow \Omega_{e_n}^1(\Sigma, \mathcal{V}_{\Sigma})\times S(\Sigma)\times \overline{S}(\Sigma)
\end{align*}
with fiber $\mathcal{A}_{red}(\Sigma)$ such that \eqref{e:structural_constraint2_grav_deg} and
\begin{align}\label{e:structural_constraint_spinor_deg}
         e_n\left[\alpha(e,\omega,\psi,\bar\psi)-p_\mathcal{T}(\alpha(e,\omega,\psi,\bar\psi))\right]= e \widetilde{\sigma}_d
    \end{align}
are satisfied for some $\widetilde{\sigma}_d \in \Omega^{1}(\Sigma, \mathcal{V}_{\Sigma})$, where $\alpha(e,\omega,\psi,\bar\psi)=d_\omega e+\frac{i}{4}(\bar\psi\gamma[e^2,\psi]-[e^2,\bar\psi]\gamma\psi)$.
 The symplectic form on this space is given by $\Omega_{\psi}= \varpi + \varpi_{\psi}$.

The constraints on this space are given are $L^\psi_c$, $H^\psi_\lambda$ and $P^\psi_\xi$ as defined in Section \ref{s:spinor}, with the additional constraint     \begin{align*}
        R^\psi_\tau = R_\tau+r^\psi_\tau
    \end{align*}
    where 
    \begin{align}\label{e:constraint_R_psi}
         r^\psi_\tau&=\int_{\Sigma}\frac{i}{4}\tau(\bar\psi\gamma[e^2,\psi]-[e^2,\bar\psi]\gamma\psi)\\
         &=\int_{\Sigma}+\frac{i}{4}e^2(\bar\psi\gamma[\tau,\psi]+[\tau,\bar\psi]\gamma\psi) \nonumber
    \end{align}
with $\tau\in\mathcal S$.
As for the other theories, also this set of constraints does not form a first class system.

\subsection{Yang--Mills--Spinor, light-like boundary}\label{s:degenerate_YMS}
In this section, our focus will be on studying the scenario of gravity coupled with a Yang--Mills field and a Dirac spinor. 
Since the KT construction does not depend on the degeneracy of the boundary metric, the geometric phase space, as a quotient is the same as in \ref{s:YM+spinor}. However the presentation with the structural constraints fixing the representatives is different in the two cases, as in the previous cases. 
In particular, we can present the geometric phase space as the bundle

\begin{align*}
    {F}^{\partial}_{d, YMS} \rightarrow \Omega_{e_n}^1(\Sigma, \mathcal{V}_{\Sigma})\oplus \mathcal{A}^{G}_{\Sigma} \times S(\Sigma)\times \overline{S}(\Sigma)
\end{align*}
with fiber $\mathcal{A}_{red}(\Sigma) \oplus \Omega^{(0,2)}_{\Sigma, red}(\mathfrak{g})$ such that  \eqref{e:constraintYM}, \eqref{e:structural_constraint_spinor_deg} and  \eqref{e:structural_constraint2_grav_deg} are satisfied. The symplectic form on this space reads 
\begin{align*}
    \Omega_{YMS}= \varpi + \varpi_A + \varpi_\psi 
\end{align*}
exactly as in the non-degenerate case.
On this geometric phase space we can then define the following constraints:
\begin{align*}
    L^{A, \psi}_c &= L_c + l^{\psi}_{c}; \\
    P^{A, \psi}_{\xi} &= P_{\xi}+ p^{A}_{\xi} + p^{\psi}_{\xi}+p^{A, \psi}_{\xi}; \\
    H^{A, \psi}_{\lambda} &= H_{\lambda}+ h^{A}_{\lambda} + h^{\psi}_{\lambda}+ h^{A, \psi}_{\lambda};\\
    M^{A, \psi}_\mu &= M^{A}_\mu + m^{A, \psi}_\mu\\
    R^{A,\psi}_\tau &= R_\tau+r^{\psi}_\tau.
\end{align*}
\begin{remark}\label{rem:R_for_YMS}
    Note that the constraint $R^{A,\psi}_\tau$ takes the very same form of the one of the free degenerate spinor $R^\psi_\tau$ defined in Section \ref{s:degenerate_spinor}.
\end{remark}
\begin{theorem}\label{thm:no-first-class-constraints_YM+spinor_deg}
    Let $g^\partial$ be degenerate on $\Sigma$. Then, the set of integral functionals $L_c^{A,\psi}$, $H^{A,\psi}_\lambda$, $P^{A,\psi}_\xi$, $M_\mu^{A,\psi}$ and $R^{A,\psi}_\tau$  does not form a first class system.
\end{theorem}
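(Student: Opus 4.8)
\emph{Plan.} The obstruction to first class is already present in the degenerate gravitational constraint $R_\tau$ of \cite{CCT2020} and in its spinorial completion $R^\psi_\tau$ recalled in Section \ref{s:degenerate_spinor}, and the point is that the Yang--Mills sector can neither remove it nor reabsorb it. Accordingly it suffices to exhibit one Poisson bracket among $L_c^{A,\psi}$, $P^{A,\psi}_\xi$, $H^{A,\psi}_\lambda$, $M_\mu^{A,\psi}$, $R^{A,\psi}_\tau$ that is non-zero on the common zero locus of the constraints and is not a combination of them. The first step is to compute the Hamiltonian vector field $\mathbb{R}^{A,\psi}$ of $R^{A,\psi}_\tau=R_\tau+r^\psi_\tau$ with respect to $\Omega_{YMS}=\varpi+\varpi_A+\varpi_\psi$. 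By Remark \ref{rem:R_for_YMS} the functional $R^{A,\psi}_\tau$ coincides with the degenerate-spinor constraint $R^\psi_\tau$ and carries no dependence on the Yang--Mills fields $A$, $B$; moreover all degenerate constraint corrections involving the gauge field vanish ($r^A_\tau=r^{A,\psi}_\tau=0$, cf. Table \ref{tab:constraints_deg}). Hence in the splitting \eqref{e:Hamiltonian_VF_interaction} the component supported on $\varpi_A$ is zero, so $\mathbb{R}^{A,\psi}=\mathbb{R}+\mathbb{r}^\psi$, with no components along $A$, $B$ or the Yang--Mills ghost $\mu$.

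Next I would apply Theorem \ref{thm:Poisson_Brackets_couples} with $\varpi_f=\varpi_A$ and $\varpi_g=\varpi_\psi$ to each bracket $\{R^{A,\psi}_\tau,\Phi\}_{YMS}$ for $\Phi\in\{L_c^{A,\psi},P^{A,\psi}_\xi,H^{A,\psi}_\lambda,R^{A,\psi}_\tau\}$. Since $\mathbb{R}^{A,\psi}$ has no Yang--Mills components and $R^{A,\psi}_\tau$ no Yang--Mills field content, every summand in that formula carrying a $\varpi_A$ drops out, and the bracket reduces to the corresponding bracket $\{R^\psi_\tau,\Phi^\psi\}_\psi$ of the degenerate gravity-plus-spinor theory, up to terms manifestly proportional to the constraints. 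For $\{R^{A,\psi}_\tau,M_\mu^{A,\psi}\}_{YMS}$ one checks, exactly as in the proof of Theorem \ref{thm:first-class-constraints_YMS} together with Remark \ref{r:properties_mu}, that the mixed contributions cancel, so this bracket adds nothing relevant. At this point I would invoke the result of \cite{CFHT23} recalled in Section \ref{s:degenerate_spinor}: for gravity coupled to a spinor on a light-like boundary the set $L^\psi_c$, $H^\psi_\lambda$, $P^\psi_\xi$, $R^\psi_\tau$ is \emph{not} first class, i.e.\ one of the brackets listed above is non-zero on the common zero locus and cannot be written through the constraints. By the identification just established the same expression reappears in the YMS constraint algebra.

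The remaining, and principal, difficulty is to rule out that the extra constraint $M^{A,\psi}_\mu$ or the interaction terms $p^{A,\psi}_\xi$, $h^{A,\psi}_\lambda$, $m^{A,\psi}_\mu$ conspire to cancel this obstruction. I would settle this by restricting to the sublocus of the common zero set on which the Yang--Mills fields take the reference values $A=A_0$, $B=0$: from the explicit expressions \eqref{e:constraints_YMS2}--\eqref{e:constraints_YMS4} one sees that each interaction term (being proportional to $A$, to $\iota_\xi A_0$, or to $\mu$) as well as $M^{A,\psi}_\mu$ vanish there, and this sublocus sits inside the full YMS constraint surface. If the YMS constraints were first class, all their mutual brackets would vanish on that surface, hence on the sublocus; but on the sublocus the whole algebra restricts to the degenerate gravity-plus-spinor one, which already fails to be first class. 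This contradiction shows that $L_c^{A,\psi}$, $P^{A,\psi}_\xi$, $H^{A,\psi}_\lambda$, $M_\mu^{A,\psi}$, $R^{A,\psi}_\tau$ do not form a first class system, as the degeneracy of $g^\partial$ forces through the constraint $R^{A,\psi}_\tau$.
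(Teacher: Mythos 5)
Your first two paragraphs are essentially the paper's proof: the paper also reduces everything to the single bracket $\{R^{A,\psi}_\tau, R^{A,\psi}_\tau\}_{YMS}$, uses Theorem \ref{thm:Poisson_Brackets_couples} together with Remark \ref{rem:R_for_YMS} (i.e.\ the fact that $R^{A,\psi}_\tau=R^\psi_\tau$ carries no Yang--Mills field content, so its Hamiltonian vector field has no components along $A$, $B$) to conclude $\{R^{A,\psi}_\tau, R^{A,\psi}_\tau\}_{YMS}=\{R^{\psi}_\tau, R^{\psi}_\tau\}_{\psi}$, and then invokes the result of \cite{CFHT23} that this quantity is neither zero nor proportional to any constraint. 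Up to that point your argument matches the intended one.

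The problem is your third paragraph. You are right that one must rule out the enlarged constraint set (in particular $M^{A,\psi}_\mu$) absorbing the obstruction, but your resolution by restricting to the sublocus $A=A_0$, $B=0$ does not work as stated. The term $m^{A,\psi}_\mu$ in \eqref{e:constraints_YMS4} contains no $A$ and no $B$ whatsoever --- it is built only from $e$, $\psi$, $\overline{\psi}$ and the ghost $\mu$ --- so it does not vanish on that sublocus; the ghost $\mu$ is a Lagrange multiplier and cannot be set to zero without trivialising the constraint. Similarly $p^{A,\psi}_\xi$ in \eqref{e:constraints_YMS2} depends on the background $A_0$ rather than on $A-A_0$, and $h^{A,\psi}_\lambda$ in \eqref{e:constraints_YMS3} is proportional to $A$ itself, so setting $A=A_0$ kills neither unless one also assumes $A_0=0$. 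Consequently the intersection of the YMS constraint surface with your sublocus is a \emph{proper} subset of the degenerate gravity--spinor constraint surface, further cut by $m^{A,\psi}_\mu=0$ (a genuine condition on $\psi$), and the non-vanishing of $\{R^\psi_\tau,R^\psi_\tau\}_\psi$ somewhere on the spinor constraint surface does not automatically survive this further restriction; the contradiction does not close. The paper avoids this by asserting directly, on the strength of the computation in \cite{CFHT23}, that $\{R^{\psi}_\tau, R^{\psi}_\tau\}_{\psi}$ is not expressible through the constraints at all --- that is the statement you should quote (or reprove), rather than attempting the sublocus reduction.
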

\begin{proof}
    In order to prove the result it is just sufficient to show that one bracket is not proportional to any constraint (and not zero). As for the previous cases (see \cite{CCT2020,CFHT23}) we can show that 
    \begin{align*}
        \{R^{A,\psi}_\tau, R^{A,\psi}_\tau\}_{YMS} \neq 0.
    \end{align*}
    This is a consequence of the fact that, using Theorem \ref{thm:Poisson_Brackets_couples} and Remark \ref{rem:R_for_YMS}, it is possible to show that
    \begin{align*}
        \{R^{A,\psi}_\tau, R^{A,\psi}_\tau\}_{YMS} = \{R^{\psi}_\tau, R^{\psi}_\tau\}_{\psi}
    \end{align*}
    and the last term is not proportional to any constraint (nor zero) by the results in \cite{CFHT23} recalled in Section \ref{s:degenerate_spinor}.

\end{proof}

\subsection{Yang--Mills--Higgs, light-like boundary}\label{s:degenerate_YMH}

In this section, we will examine the case of gravity coupled with a Yang--Mills field and a Higgs fields and their interactions. We proceed as in the previous section, using the results of the corresponding non-degenerate Section \ref{s:YM+scalar}.

The geometric phase space  is defined as the bundle
    \begin{equation*}
        {F}^\partial_{d, YMH}\rightarrow\Omega_{e_n}^1(\Sigma, \mathcal{V}_{\Sigma})\oplus \mathcal{A}_\Sigma^{SU(n)}\oplus \Gamma(\Sigma,E_n\vert_\Sigma)\times\Gamma(\Sigma,E_{\bar{n}}\vert_\Sigma),
    \end{equation*}
with fiber 
    \begin{equation*}
        \mathcal{A}_{red}(\Sigma)\oplus \Omega^{(0,2)}_{\Sigma, red}(\mathfrak{g})\oplus\Omega^{0,1}_\partial(E_{n}\vert_\Sigma)\times\Omega^{0,1}_\partial(E_{\bar{n}}\vert_\Sigma ).
    \end{equation*}
    where $\omega,B,\Pi,\Pi^\dagger$ satisfy
    \eqref{e:structural_constraint_grav_deg}, \eqref{e:structural_constraint2_grav_deg} \eqref{e:constraintYM} and \eqref{e:structural_Higgs}.
${F}^\partial_{d, YMH}$ is symplectic with symplectic form
    \begin{equation*}
        \Omega_{YMH}=\varpi + \varpi_{A}+\varpi_H.
    \end{equation*}

We now consider  the constraints. As in the non-degenerate case we have 
    \begin{align*}
    L_c^{H,A}&=L_c; &  P_{\xi}^{H,A}&=P_\xi + p_\xi^A + p_\xi^H + p_\xi^{H,A};\\
    M_\mu^{H,A}&=M_\mu^A + m_\mu^{H,A}; &   H_{\lambda}^{H,A}&=H_\lambda + h_\lambda^A + h_\lambda^{H} + h_\lambda^{H,A}   \\
    R_\tau^{H,A}&=R_\tau.
    \end{align*}

\begin{remark}
    Note that, as it was for the case of the Yang--Mills coupling of Section \ref{s:YM}, the constraint $R^{H,A}_\tau$ takes the very same form of the purely gravitational one $R_\tau$.
\end{remark}
\begin{theorem}\label{thm:no-first-class-constraints_Higgs}
    Let $g^\partial$ be degenerate on $\Sigma$. Then, the set of integral functionals $L^{H,A}_c$, $P^{H,A}_{\xi}$,  $H^{H,A}_{\lambda}$, $M_\mu^{H,A}$ and $R^{H,A}_\tau$ does not form a first class system.
\end{theorem}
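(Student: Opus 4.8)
The plan is to follow the strategy used for gravity in \cite{CCT2020} and, in this paper, for the Yang--Mills--spinor coupling in Theorem~\ref{thm:no-first-class-constraints_YM+spinor_deg}: to show that a set of constraints is not first class it suffices to exhibit a single Poisson bracket which is neither zero nor proportional to any of the constraints. The natural candidate is the self-bracket $\{R^{H,A}_\tau, R^{H,A}_\tau\}_{YMH}$ (recall that $R_\tau$, being paired with the odd multiplier $\tau$, is an odd functional, so its self-bracket need not vanish). The key observation, already recorded in the Remark preceding the theorem, is that $R^{H,A}_\tau = R_\tau$ is the \emph{purely gravitational} functional \eqref{e:constraint_R}, a function of $(e,\omega)$ alone, since $r_\tau^H = r_\tau^A = r_\tau^{H,A} = 0$.

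First I would compute the Hamiltonian vector field $\mathbb{R}^{H,A}$ of $R_\tau$ with respect to $\Omega_{YMH} = \varpi + \varpi_A + \varpi_H$, using the decomposition $\mathbb{R}^{H,A} = \mathbb{R} + \mathbb{r}^A + \mathbb{r}^H + \mathbb{r}^{H,A}$ of Section~\ref{s:appendix_Poisson_brackets}, where $\mathbb{R}$ is the gravitational Hamiltonian vector field of $R_\tau$ with respect to $\varpi$ and the remaining pieces solve \eqref{e:Hamiltonian_VF_single} and \eqref{e:Hamiltonian_VF_interaction} with vanishing right-hand sides $\delta r_\tau^A=\delta r_\tau^H=\delta r_\tau^{H,A}=0$. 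Because $\varpi_A$ and $\varpi_H$ couple to gravity only through the $\delta e$ legs of $\delta(eeB)$ and $\delta(e^3\Pi)$, solving those defining equations leg by leg on the (reduced, hence genuinely symplectic) geometric phase space shows that $\mathbb{r}^A$ has a component only in the $B$ direction, $\mathbb{r}^H$ only in the $\Pi$ (equivalently $p$) direction, and $\mathbb{r}^{H,A} = 0$, the Yang--Mills and Higgs sectors being symplectically orthogonal ($\iota_{\mathbb{r}^H}\varpi_A = \iota_{\mathbb{r}^A}\varpi_H = 0$). Since $\delta R_\tau$ has legs only along $\delta e$ and $\delta\omega$, contracting it with $\mathbb{r}^A$, $\mathbb{r}^H$ or $\mathbb{r}^{H,A}$ gives zero, so that
\begin{equation*}
\{R^{H,A}_\tau, R^{H,A}_\tau\}_{YMH} = \iota_{\mathbb{R}^{H,A}}\delta R_\tau = \iota_{\mathbb{R}}\delta R_\tau = \{R_\tau, R_\tau\}_{\varpi},
\end{equation*}
the purely gravitational self-bracket on the light-like boundary; the same conclusion also follows by feeding these vanishings into Theorem~\ref{thm:Poisson_Brackets_couples} with $f\leftrightarrow A$, $g\leftrightarrow H$ and checking that every extra term drops out.

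By \cite{CCT2020}, recalled in Section~\ref{s:degenerate_gravity}, the functional $\{R_\tau, R_\tau\}_{\varpi}$ is nonzero and not proportional to any of the gravitational constraints $L_c$, $P_\xi$, $H_\lambda$, $R_\tau$. To upgrade this to the coupled theory, I would restrict to the submanifold of $F^\partial_{d,YMH}$ on which all matter and gauge fields ($B$, $\phi$, $\Pi$, and the spinors) vanish: there every interacting constraint reduces to its gravitational part ($M^{H,A}_\mu\to 0$, $P^{H,A}_\xi\to P_\xi$, $H^{H,A}_\lambda\to H_\lambda$, $R^{H,A}_\tau\to R_\tau$), while $\{R_\tau,R_\tau\}_{\varpi}$ is unchanged, so a hypothetical expression of it as a combination of $L^{H,A}_c, P^{H,A}_\xi, H^{H,A}_\lambda, M^{H,A}_\mu, R^{H,A}_\tau$ would contradict the gravitational result. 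Hence the system is not first class.

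I expect the main technical obstacle to be the bookkeeping in the second step, namely verifying that $\mathbb{r}^A$ and $\mathbb{r}^H$ carry no gravity-direction components surviving contraction with $\delta R_\tau$; this requires some care with the degenerate structural constraints \eqref{e:structural_constraint_grav_deg}, \eqref{e:structural_constraint2_grav_deg} and the possible non-uniqueness they introduce before reduction, but once one works on the reduced geometric phase space, where $\Omega_{YMH}$ is non-degenerate and the Hamiltonian vector fields are unique, the decoupling becomes rigorous.
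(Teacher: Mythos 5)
Your proposal is correct and follows essentially the same route as the paper: both reduce the self-bracket $\{R^{H,A}_\tau, R^{H,A}_\tau\}_{YMH}$ to the purely gravitational $\{R_\tau,R_\tau\}$ (via the vanishing of $r_\tau^A$, $r_\tau^H$, $r_\tau^{H,A}$ and Theorem \ref{thm:Poisson_Brackets_couples}) and then invoke the result of \cite{CCT2020} that this bracket is neither zero nor proportional to any constraint. Your additional step of restricting to the locus where the matter fields vanish, to rule out proportionality to the \emph{coupled} constraints rather than just the gravitational ones, is a detail the paper leaves implicit, and it is a sound way to close that small gap.
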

\begin{proof}
The proof is completely analogue to the one of Theorem \ref{thm:no-first-class-constraints_YM+spinor_deg} where in this case we use that 
\begin{align*}
        \{R^{H,A}_\tau, R^{H,A}_\tau\}_{YMH} = \{R_\tau, R_\tau\}
    \end{align*}
    and the last term is not proportional to any constraint (nor zero) by the results in \cite{CCT2020} recalled in Section \ref{s:degenerate_gravity}.

\end{proof}

\subsection{Yukawa interaction, light-like boundary}\label{s:degenerate_yuk}
 As a third interaction in this section, we will investigate the scenario of gravity coupled with a scalar and a spinor field and the Yukawa interaction. We proceed as in the previous sections, using the results of the corresponding non-degenerate Section \ref{s:scalar+spinor}.
The geometric phase space is once again the bundle
\begin{align*}
    {F}^{\partial}_{d, Y} \rightarrow \Omega_{e_n}^1(\Sigma, \mathcal{V}_{\Sigma})\oplus \mathcal{C^{\infty}}(\Sigma)\times S(\Sigma)\times \overline{S}(\Sigma)
\end{align*}
with fiber $\mathcal{A}_{red}(\Sigma)\oplus \Omega^{(0,1)}_{\Sigma,red}$ such that  \eqref{e:constraintscalar} and  \eqref{e:structural_constraint2_grav_deg} and \eqref{e:structural_constraint_spinor_deg} are satisfied.

The corresponding symplectic form is again just the sum of the symplectic form of the building blocks:
\begin{align*}
\Omega_{Y} = \varpi + \varpi_{\psi}+ \varpi_{\phi}.
\end{align*}
The constraints for the degenerate Yukawa theory are:
\begin{align*}
    L^{\phi, \psi}_c&= L_c + l^{\psi}_{c}; &
    P^{\phi, \psi}_{\xi}&= P_{\xi}+ p^{\phi}_{\xi} + p^{\psi}_{\xi}; \\
    H^{\phi, \psi}_{\lambda}&= H_{\lambda}+ h^{\phi}_{\lambda} + h^{\psi}_{\lambda}+ h^{\phi, \psi}_{\lambda} &
    R^{\phi,\psi}_\tau&=R_\tau+r^{\psi}_\tau
\end{align*}
\begin{remark}
    Note again that the constraint $R^{\phi,\psi}_\tau$ is exactly  $R^\psi_\tau$.
\end{remark}

We are now able to give the Poisson brackets of the constraints.

\begin{theorem}\label{thm:no-first-class-constraints_Yukawa}
    Let $g^\partial$ be degenerate on $\Sigma$. Then, the constraints $L^{\phi,\psi}_c$, $P^{\phi,\psi}_{\xi}$,  $H^{\phi,\psi}_{\lambda}$ and $R^{\phi,\psi}_\tau$ do not form a first class system.
\end{theorem}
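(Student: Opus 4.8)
The plan is to mimic exactly the proof of Theorem~\ref{thm:no-first-class-constraints_YM+spinor_deg}: to show a set of constraints is not first class it is enough to exhibit a single bracket which is neither zero nor proportional to the constraints. The natural candidate is $\{R^{\phi,\psi}_\tau, R^{\phi,\psi}_\tau\}_{Y}$, and by the remark preceding the theorem we have $R^{\phi,\psi}_\tau = R^\psi_\tau$ verbatim, so all the work reduces to comparing this bracket (computed with respect to $\Omega_Y = \varpi + \varpi_\psi + \varpi_\phi$) with the bracket $\{R^\psi_\tau, R^\psi_\tau\}_\psi$ of the degenerate spinor theory (computed with respect to $\varpi + \varpi_\psi$), which by the results of \cite{CFHT23} recalled in Section~\ref{s:degenerate_spinor} is known to be nonzero and not proportional to any constraint.

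First I would set up the computation using Lemma~\ref{lem:Poisson_Brackets_singles} (or equivalently Theorem~\ref{thm:Poisson_Brackets_couples}) with the scalar sector $F_\phi$ playing the role of the extra factor. Writing $R^{\phi,\psi}_\tau = R^\psi_\tau + 0$ — i.e. the ``$x^\phi$'' and ``$x^{\phi,\psi}$'' pieces of this constraint vanish since $r^\psi_\tau$ as written in \eqref{e:constraint_R_psi} involves only $e$, $\omega$, $\psi$, $\bar\psi$ and no scalar fields — the Hamiltonian vector field $\mathbb{R}^{\phi,\psi}$ has no component along the scalar directions, and its components along $e,\omega,\psi,\bar\psi$ coincide with those of $\mathbb{R}^\psi$. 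Then the only potentially new contributions in $\{R^{\phi,\psi}_\tau, R^{\phi,\psi}_\tau\}_Y$ relative to $\{R^\psi_\tau, R^\psi_\tau\}_\psi$ are terms of the form $\iota_{\mathbb{R}^\psi}\iota_{\mathbb{R}^\psi}\varpi_\phi$ and mixed terms with $\mathbb{r}^\phi$; but $\mathbb{r}^\phi = 0$ and $\varpi_\phi$ only pairs $\delta p$ with $\delta\phi$, directions in which $\mathbb{R}^\psi$ vanishes, so all these extra terms are zero. Hence $\{R^{\phi,\psi}_\tau, R^{\phi,\psi}_\tau\}_Y = \{R^\psi_\tau, R^\psi_\tau\}_\psi$, and we conclude by invoking the non-first-class statement for the degenerate spinor theory.

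The main obstacle, such as it is, is purely bookkeeping: one must be careful that the structural constraint in the Yukawa case \eqref{e:structural_constraint_spinor_deg} is the same corrected one used in the free degenerate spinor theory, so that the space on which $R^\psi_\tau$ and its Hamiltonian vector field live is genuinely unchanged, and that the Yukawa interaction term $h^{\phi,\psi}_\lambda$ — which modifies $H_\lambda$ but not $R_\tau$, $L_c$, or $P_\xi$ in their $R$-relevant components — does not feed back into the $\{R,R\}$ bracket. Both points are immediate from the explicit formulas: the interaction term carries no derivatives and does not touch the $(e,\omega)$ sector in a way that affects $d_\omega e$ paired against $\tau \in \mathcal{S}$. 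So the proof is essentially a one-line reduction plus a citation, and I would write it in three or four sentences in the same style as the proof of Theorem~\ref{thm:no-first-class-constraints_YM+spinor_deg}.
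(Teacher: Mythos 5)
Your proposal is correct and follows essentially the same route as the paper: the paper's proof simply states that the argument of Theorem \ref{thm:no-first-class-constraints_YM+spinor_deg} carries over verbatim, i.e.\ one uses $R^{\phi,\psi}_\tau = R^\psi_\tau$ and the vanishing of all mixed terms involving $\varpi_\phi$ to reduce $\{R^{\phi,\psi}_\tau, R^{\phi,\psi}_\tau\}_{Y}$ to $\{R^{\psi}_\tau, R^{\psi}_\tau\}_{\psi}$, which is known from \cite{CFHT23} to be neither zero nor proportional to any constraint. Your additional bookkeeping remarks (that $\mathbb{r}^\phi=0$, that $\varpi_\phi$ only pairs the scalar directions, and that the Yukawa term does not affect the $R$-sector) are exactly the details the paper leaves implicit.
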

\begin{proof}
    The proof of Theorem \ref{thm:no-first-class-constraints_YM+spinor_deg} holds verbatim also in this case (with the appropriate substitution of indices).

\end{proof}

\subsection{Standard model, light-like boundary}\label{s:degenerate_SM}
Finally in this section we describe the reduced phase space of the standard model on a light-like boundary. We use here the same notation as in Section \ref{s:standard_model} and we refer to it for the definition of the quantities used here. As before we consider a theory of gravity coupled to a Yang--Mills, a Higgs scalar and a spinor fields. Then the usual structure of the standard model can be recovered by choosing the appropriate gauge group and its representation on the various fields. 
 
On a light-like boundary $\Sigma$, the geometric phase space is given by 
    \begin{equation*}
        F^\partial_{d, SM}\rightarrow \Omega^1_{e_n}(\Sigma,\mathcal{V}_{\Sigma}) \oplus \mathcal{A}_\Sigma^G \oplus \Gamma(\Sigma, (S_R\otimes E_n)) \oplus \Gamma(\Sigma, (\bar{S}_R\otimes E_n)) \oplus \Gamma(\Sigma,{S}_L) \oplus\Gamma(\Sigma,\bar{S}_L) \oplus \Gamma(\Sigma,E_n),
    \end{equation*}
with fiber given by $\mathcal{A}_\Sigma \oplus \Omega^{(0,2)}_{\partial,\mathrm{red}}(\mathfrak{g}) \oplus \Omega^{(0,1)}_{\partial,\mathrm{red}}(E_n) $ satisfying \eqref{e:structural_constraint_spinor_deg}, \eqref{e:structural_constraint2_grav_deg},\eqref{e:constraintYM} and \eqref{e:constraintscalar}.

The symplectic structure on $F^\partial_{d, SM}$ is 
    \begin{equation*}
        \Omega_{SM}=\varpi+\varpi_A+\varpi_{\psi_L}+\varpi_{\psi_R}+\varpi_H.
    \end{equation*}
    On this space the KT construction leads to the following constraints:
    \begin{align*}
        L_c^{SM}&=L_c+l_c^{\psi_{L/R}}, \\
        P_\xi^{SM}&=P_\xi + p_\xi^A + p_\xi^{\psi_{L/R}} + p_\xi^H + p_\xi^{A,{\psi_{L/R}}} + p_\xi^{H,A}, \\
        M_\mu^{SM}&= M_\mu^A + m_\mu^{H,A} + m_\mu^{A,{\psi_{L/R}}}, \\
        H_\lambda^{SM}&=H_\lambda + h_\lambda^A + h_\lambda^{\psi_{L/R}} + h_\lambda^H + h_\lambda^{A,{\psi_{L/R}}} + h_\lambda^{H,A} +h_\lambda^{H,{\psi_{L/R}}}\\
        R_\tau^{SM}&=R_\tau +  r_\tau^{\psi_{L/R}} 
    \end{align*}
\begin{theorem}\label{thm:no-first-class-constraints_SM}
    Let $g^\partial$ be degenerate on $\Sigma$. Then, the constraints $L^{SM}_c$, $P^{SM}_{\xi}$,  $H^{SM}_{\lambda}$ and $R^{SM}_\tau$ do not form a first class system.
\end{theorem}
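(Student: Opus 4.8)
The plan is to follow the same strategy used for all the previous degenerate cases (Theorems \ref{thm:no-first-class-constraints_YM+spinor_deg}, \ref{thm:no-first-class-constraints_Higgs}, \ref{thm:no-first-class-constraints_Yukawa}): it suffices to exhibit a single Poisson bracket among the constraints that is neither zero nor proportional to a combination of the constraints. The natural candidate is $\{R^{SM}_\tau, R^{SM}_\tau\}_{SM}$, since the obstruction to first-classness in the degenerate theory is entirely carried by the new constraint $R_\tau$ and its matter corrections, as already observed for gravity in Section \ref{s:degenerate_gravity}.

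First I would recall that $R^{SM}_\tau = R_\tau + r_\tau^{\psi_{L/R}}$, where $r_\tau^{\psi_{L/R}} = r_\tau^{\psi_L} + r_\tau^{\psi_R}$ and each $r_\tau^{\psi_i}$ has the form \eqref{e:constraint_R_psi}. Crucially, $R^{SM}_\tau$ contains no contributions from the Yang--Mills field $A$, the Higgs field $\phi$, or any of the interaction terms: from Table \ref{tab:constraints_deg} the terms $r_\tau^H$, $r_\tau^A$, $r_\tau^{H,\psi}$, $r_\tau^{H,A}$, $r_\tau^{A,\psi}$ all vanish. Hence $R^{SM}_\tau$ depends only on the gravitational fields $(e,\omega)$ and the spinor fields. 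This means that when I compute $\{R^{SM}_\tau, R^{SM}_\tau\}_{SM}$ using the decomposition of the symplectic form $\Omega_{SM} = \varpi + \varpi_A + \varpi_{\psi_L} + \varpi_{\psi_R} + \varpi_H$ and the formulas of Section \ref{s:appendix_Poisson_brackets} (in particular Theorems \ref{thm:Poisson_Brackets_couples} and \ref{thm:Poisson_brackets_four}), the Hamiltonian vector field of $R^{SM}_\tau$ has vanishing components along the $A$-, Higgs-, and all interaction-directions, and so only the gravity and spinor blocks contribute. The upshot, exactly as in the proof of Theorem \ref{thm:no-first-class-constraints_YM+spinor_deg}, is the identity
\begin{align*}
    \{R^{SM}_\tau, R^{SM}_\tau\}_{SM} = \{R^{\psi_{L/R}}_\tau, R^{\psi_{L/R}}_\tau\}_{\psi_{L/R}},
\end{align*}
reducing the question to the pure gravity-plus-spinor computation (applied to the left-handed and right-handed multiplets).

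The remaining step is to invoke the result recalled in Section \ref{s:degenerate_spinor} (from \cite{CFHT23}): for gravity coupled to a Dirac spinor on a light-like boundary, $\{R^\psi_\tau, R^\psi_\tau\}_\psi$ is not proportional to any constraint and is not zero. Splitting the Dirac spinor into its Weyl components does not change this conclusion, since the bracket is a sum of the two chiral contributions and there is no cancellation between sectors of opposite chirality (the symplectic forms $\varpi_{\psi_L}$ and $\varpi_{\psi_R}$ are independent). Therefore $\{R^{SM}_\tau, R^{SM}_\tau\}_{SM} \neq 0$ and is not in the ideal generated by the constraints, so the system $L^{SM}_c$, $P^{SM}_\xi$, $H^{SM}_\lambda$, $M^{SM}_\mu$, $R^{SM}_\tau$ is second class. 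I expect the only genuinely delicate point to be the careful bookkeeping showing that all the mixed terms in Theorems \ref{thm:Poisson_Brackets_couples}--\ref{thm:Poisson_brackets_four} involving $\mathbb{r}^{A,\psi}$, $\mathbb{r}^{H,\psi}$, etc.\ vanish; but since each such correction term in $R^{SM}_\tau$ is literally zero, this is immediate, and no substantial new computation beyond what is already in \cite{CFHT23} is required.
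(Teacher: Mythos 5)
Your proposal is correct and follows essentially the same route as the paper: the paper's proof simply notes that $R_\tau^{SM}=R_\tau^{\psi}$ and reduces to the argument of Theorem \ref{thm:no-first-class-constraints_YM+spinor_deg}, i.e.\ to the non-vanishing of $\{R^{\psi}_\tau,R^{\psi}_\tau\}_{\psi}$ established in \cite{CFHT23}. Your additional remarks on the vanishing of the mixed terms and the absence of cancellation between the two chiral sectors are consistent with, and slightly more explicit than, what the paper records.
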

\begin{proof}
    Once again, since $R_\tau^{SM}= R_\tau^{\psi}$ we can resort to the proof of Theorem \ref{thm:no-first-class-constraints_YM+spinor_deg} and conclude as well that
    \begin{align*}
        \{R^{SM}_\tau, R^{SM}_\tau\}_{SM} \neq 0
    \end{align*}
    and it is not proportional to any other constraint.
\end{proof}
Since these constraints are not defining a coisotropic submanifold of the geometric phase space, in the case of a light-like boundary we cannot resort to the BFV formalism to describe this quotient cohomologically.

\appendix

\section{Some useful identities for BFV structures}
\label{s:appendix_BFV_identities}

From Theorem \ref{thm:BFV_computation} we deduce that one of the quantities that will need to be computed is $\iota_{q_0}\iota_{Q_1}\varpi_f$ for the various theories that we consider here. Let us now recap the expressions of $\iota_{Q_1}\varpi_f$ for gravity coupled to a scalar field, a Yang--Mills field and a spinor field respectively. From the definition of $Q_1$, $\iota_{Q_1}\varpi_{BFV}= \delta S_1$, we deduce that it only depends on the expression of $S_1$ which is the same for the first and the third theory. In particular the quantity  $\iota_{Q_1}\varpi_f$ was already computed in \cite[Proof of Theorem 30]{CCS2020} from which we get:
    \begin{align*}
       \iota_{Q_1}\varpi= \int_{\Sigma}\iota_{Q_1}(e \delta e \delta \omega) =\int_{\Sigma} &  - [c, \lambda e_n ]^{(b)}\delta e_b^{(a)}(\xi_a^{\dag}- (\omega - \omega_0)_{\nu} c^\dag) - [c, \lambda e_n ]^{(a)}\delta (\omega - \omega_0)_{\nu} c^\dag \\
 & - [c, \lambda e_n ]^{(b)}\delta e_b^{(n)}\lambda^\dag +\mathrm{L}_{\xi}^{\omega_0} (\lambda e_n)^{(b)}\delta e_b^{(a)}(\xi_a^{\dag}- (\omega - \omega_0)_{\nu} c^\dag) \\
 &+ \mathrm{L}_{\xi}^{\omega_0} (\lambda e_n)^{(a)}\delta (\omega - \omega_0)_{\nu} c^\dag +\mathrm{L}_{\xi}^{\omega_0} (\lambda e_n)^{(b)}\delta e_b^{(n)}\lambda^\dag.
\end{align*}
Furthermore we get
\begin{align*}
    \iota_{Q_1}\varpi_{\phi}=\int_{\Sigma}\iota_{Q_1}\frac{1}{3!}\delta(e^3 \Pi)\delta \phi)= \int_{\Sigma} \frac{1}{2}e^2 \Pi \left([c, \lambda e_n ]^{(a)}c_a^\dag- \mathrm{L}_{\xi}^{\omega_0} (\lambda e_n)^{(a)}c_a^\dag\right) \delta \phi,
\end{align*}
\begin{align*}
    \iota_{Q_1} \varpi_{\psi}= \int_{\Sigma} i\frac{e^2}{4}\left( \overline{\psi}\gamma \delta \psi - \delta \overline{\psi}	\gamma \psi \right)\left([c, \lambda e_n ]^{(a)}c_a^\dag- \mathrm{L}_{\xi}^{\omega_0} (\lambda e_n)^{(a)}c_a^\dag\right)
\end{align*}
and
\begin{align*}
    \iota_{Q_1} \varpi_{A}= \int_{\Sigma}\Tr\left[\delta \rho \left([c, \lambda e_n ]^{(a)}\mu_a^\dag- \mathrm{L}_{\xi}^{\omega_0} (\lambda e_n)^{(a)}\mu_a^\dag\right)\right].
\end{align*}
Since $[c, \lambda e_n ]^{(a)}= \lambda[c,  e_n ]^{(a)}$ and $\mathrm{L}_{\xi}^{\omega_0} (\lambda e_n)^{(a)}= \mathrm{L}_{\xi}^{\omega_0} (\lambda) e_n^{(a)}-\lambda\mathrm{L}_{\xi}^{\omega_0} ( e_n)^{(a)}= -\lambda\mathrm{L}_{\xi}^{\omega_0} ( e_n)^{(a)}$, it is straightforward to check that $\iota_{Q_1}\varpi$, $\iota_{Q_1}\varpi_{\phi}$, $\iota_{Q_1}\varpi_{A}$ and $\iota_{Q_1}\varpi_{\psi}$ are all proportional to $\lambda$.

\section{Hamiltonian vector fields}
\label{s:appendix_Hamiltonianvf}
In this section we list the Hamiltonian vector fields of the constraints computed in \cite{CCS2020,CCF2022}. We list 
the expressions of $\mathbb{L}$, $\mathbb{l}^{ \psi}$, $\mathbb{M}^{A}$, $\mathbb{P}$, $\mathbb{p}^{\phi}$, $\mathbb{p}^{A}$, $\mathbb{p}^{\psi}$, $\mathbb{H}$, $\mathbb{h}^{\phi}$, $\mathbb{h}^{A}$ and $\mathbb{h}^{\psi}$
satisfying
\begin{align*}
    \iota_{\mathbb{L}} \varpi &= \delta L_c, & 
    \iota_{\mathbb{L}}\varpi_{\phi}+ \iota_{\mathbb{l}^{\phi}} (\varpi+\varpi_{\phi}) &= \delta l^{\phi}_c,\\
    \iota_{\mathbb{L}}\varpi_{A}+ \iota_{\mathbb{l}^{A}} (\varpi+\varpi_{A}) &= \delta l^{A}_c, &
    \iota_{\mathbb{L}}\varpi_{\psi}+ \iota_{\mathbb{l}^{\psi}} (\varpi+\varpi_{\psi}) &= \delta l^{\psi}_c,\\
    \iota_{\mathbb{M}^{A}} (\varpi+\varpi_{A}) &= \delta M^{A}_{\mu}\\
    \iota_{\mathbb{P}} \varpi &= \delta P_{\xi}, & 
    \iota_{\mathbb{P}}\varpi_{\phi}+ \iota_{\mathbb{p}^{\phi}} (\varpi+\varpi_{\phi}) &= \delta p^{\phi}_{\xi},\\
    \iota_{\mathbb{P}}\varpi_{A}+ \iota_{\mathbb{p}^{A}} (\varpi+\varpi_{A}) &= \delta p^{A}_{\xi}, &
    \iota_{\mathbb{P}}\varpi_{\psi}+ \iota_{\mathbb{p}^{\psi}} (\varpi+\varpi_{\psi}) &= \delta p^{\psi}_{\xi},\\
    \iota_{\mathbb{H}} \varpi &= \delta H_{\lambda}, & 
    \iota_{\mathbb{H}}\varpi_{\phi}+ \iota_{\mathbb{h}^{\phi}} (\varpi+\varpi_{\phi}) &= \delta h^{\phi}_{\lambda},\\
    \iota_{\mathbb{H}}\varpi_{A}+ \iota_{\mathbb{h}^{A}} (\varpi+\varpi_{A}) &= \delta h^{A}_{\lambda}, &
    \iota_{\mathbb{H}}\varpi_{\psi}+ \iota_{\mathbb{h}^{\psi}} (\varpi+\varpi_{\psi}) &= \delta h^{\psi}_{\lambda}.
\end{align*}
Note that we have $l^{\phi}_c=l^{A}_c=0$. We express the vector fields in components, i.e. $\mathbb{X}=\mathbb{X_{\phi}}\frac{\delta}{\delta \phi}$. We start with the vaccum ones.
\begin{align*}
    \mathbb{L}_e &= [c,e] & \mathbb{L}_{\omega}&=d_{\omega} c + \mathbb{V}_L \\
    \mathbb{P}_e &= - \mathrm{L}_{\xi}^{\omega_0} e & \mathbb{P}_{\omega}&=  - \mathrm{L}_{\xi}^{\omega_0} (\omega-\omega_0) - \iota_ {\xi}F_{\omega_0} + \mathbb{V}_P\\
    \mathbb{H}_e &=d_{\omega}(\lambda e_n) + \lambda \sigma & e \mathbb{H}_{\omega} &=  \lambda e_n F_{\omega}+\frac{1}{2}\Lambda  \lambda e_n e^2
\end{align*}
where $\mathbb{V}_L, \mathbb{V}_P, \mathbb{H}_{\omega}\in \mathrm{ker}(W_1^{\partial,(1,2)})$ are such that the vector fields 
$\mathbb{L}$, $\mathbb{P}$ and $\mathbb{H}$ are tangent to the structural constraint \eqref{e:structural_constraint_grav} (see \cite[Remark 26]{CCS2020}).

Let us now list the ones related to the scalar field theory:
    \begin{align*}
        \mathbb{l}^{\phi}_e &= 0  & \mathbb{l}^{\phi}_{\omega} &= \mathbb{V}_{l^{\phi}} \\
        \mathbb{l}^{\phi}_{\Pi}& = [c, \Pi]+\mathbb{W}_{l^{\phi}} &\mathbb{l}^{\phi}_\phi&=0\\
		\mathbb{p}^{\phi}_e &= 0 & \mathbb{p}^{\phi}_{\omega}&=\mathbb{V}_{p^{\phi}}\\
        \mathbb{p}^{\phi}_{\Pi}&= - \mathrm{L}_{\xi}^{\omega_0} \Pi +\mathbb{W}_{p^{\phi}} & \mathbb{p}^{\phi}_{\phi}&= - \mathrm{L}_{\xi} \phi\\
        \mathbb{h}^{\phi}_e &= 0 & \mathbb{h}^{\phi}_{\omega}&=\lambda  e_n\left(e\Pi d\phi + \frac{1}{4}e^2(\Pi,\Pi) \right) - \frac{\lambda }{2}e^2\Pi(\Pi,e_n)+\mathbb{V}_{h^{\phi}}\\
        \frac{1}{3!}e^3\mathbb{h}^{\phi}_{\Pi}&= \frac{1}{2}  \lambda e_n e d_{\omega}(e\Pi)  & \mathbb{h}^{\phi}_{\phi}&=  (\lambda e_n , \Pi).
		\end{align*}
        where $\mathbb{V}_{l^{\phi}}$, $\mathbb{V}_{p^{\phi}}$, $\mathbb{V}_{h^{\phi}}$ $\mathbb{W}_{l^{\phi}}$, $\mathbb{W}_{p^{\phi}}$, and 
		$\mathbb{h}^{\phi}_{\Pi}$ are such that the vector fields 
        $\mathbb{L}+\mathbb{l}^{\phi}$, $\mathbb{P}+\mathbb{p}^{\phi}$ and $\mathbb{H}+\mathbb{h}^{\phi}$ are tangent to the structural constraints \eqref{e:structural_constraint_grav} and \eqref{e:constraintscalar}.

Let us now list the ones related to the Yang--Mills theory:
    \begin{align*}
        \mathbb{l}^{A}_e &= 0  & \mathbb{l}^{A}_{\omega} &= \mathbb{V}_{l^{A}} \\
        \mathbb{l}^{A}_{B}& = [c, B]+\mathbb{W}_{l^{A}} &\mathbb{l}^{A}_A&=0\\
        \mathbb{M}^{A}_e &= 0  & \mathbb{M}^{A}_{\omega} &= \mathbb{V}_{M^{A}} \\
        e^2 \mathbb{M}^{A}_{B}& = [\mu, e^2 B] &\mathbb{M}^{A}_A&=d_A \mu\\
		\mathbb{p}^{A}_e &= 0 & \mathbb{p}^{A}_{\omega}&=\mathbb{V}_{p^{A}}\\
        \mathbb{p}^{A}_{B}&= - \mathrm{L}_{\xi}^{A_0+\omega_0} B +\mathbb{W}_{p^{A}} & \mathbb{p}^{A}_{A}&= -\mathrm{L}_\xi^{A_0}(A-A_0) -\iota_\xi(F_{A_0})\\
        \mathbb{h}^{A}_e &= 0 & e\mathbb{h}^{A}_{\omega}&= \lambda e_n \Tr(B F_A )+ \frac{1}{4}  \lambda e_n e^2 \Tr(B, B)-\lambda e \Tr(B(B,e_ne)) +\mathbb{V}_{h^{A}}\\
        e^2\mathbb{h}^{A}_{B}&=   2\lambda e_n e d_{A+ \omega}B  & e^2\mathbb{h}^{A}_{A}&=  2\lambda e_n e F_A + \frac{2}{3!} (\lambda e_n e^3, B).
		\end{align*}
        where $\mathbb{V}_{l^{A}}$, $\mathbb{V}_{M^{A}}$, $\mathbb{V}_{p^{A}}$, $\mathbb{V}_{h^{A}}$ $\mathbb{W}_{l^{A}}$, $\mathbb{M}^{A}_{B}$, $\mathbb{W}_{p^{A}}$, and 
		$\mathbb{h}^{A}_{B}$ are such that the vector fields 
        $\mathbb{L}+\mathbb{l}^{A}$, $\mathbb{M}^{A}$ $\mathbb{P}+\mathbb{p}^{A}$ and $\mathbb{H}+\mathbb{h}^{A}$ are tangent to the structural constraints \eqref{e:structural_constraint_grav} and \eqref{e:constraintYM}.

Finally here are the Hamiltonian vector field in presence of a spinor field:
\begin{align*}
        \mathbb{l}^{\psi}_e &= 0  & \mathbb{l}^{\psi}_{\omega} &= \mathbb{V}_{l^{\psi}} \\
        \mathbb{l}^{\psi}_{\psi}& = [c,\psi]  &\mathbb{l}^{\psi}_{\overline{\psi}} &= [c,\overline{\psi}] \\
		\mathbb{p}^{\psi}_e &= 0 & \mathbb{p}^{\psi}_{\omega}&=\mathbb{V}_{p^{\psi}}\\
        \mathbb{p}^{\psi}_{\psi}&= - \mathrm{L}_{\xi}^{\omega_0} \psi & \mathbb{p}^{\psi}_{\overline{\psi}}&= - \mathrm{L}^{\omega_0}_{\xi} \overline{\psi}\\
	    \mathbb{h}^{\psi}_e &=  -\lambda (\sigma-\widetilde{\sigma}) & 
	    e \mathbb{h}^{\psi}_\omega &=  i \frac{\lambda e_n}{4} e (\overline{\psi}\gamma d_\omega \psi - d_\omega\overline{\psi} \gamma \psi)\\
	    \frac{e^3}{3!}\gamma \mathbb{h}^{\psi}_{\psi}&= \frac{\lambda e_n}{2} e^2 \gamma d_\omega \psi + \frac{\lambda e^2 e_n}{4}(\widetilde{\sigma}-2\sigma)\gamma \psi &
	    \frac{e^3}{3!}\mathbb{h}^{\psi}_{\overline{\psi}} \gamma &= \frac{\lambda e_n}{2} e^2  d_\omega \overline{\psi} \gamma- \frac{\lambda e^2 e_n}{4}\overline{\psi}\gamma(\widetilde{\sigma}-2\sigma)
	    \end{align*}

        where $\mathbb{V}_{l^{\psi}}$, $\mathbb{V}_{p^{\psi}}$ and $\mathbb{h}^{\psi}_{\omega}$  are such that the vector fields 
        $\mathbb{L}+\mathbb{l}^{\psi}$, $\mathbb{P}+\mathbb{p}^{\psi}$ and $\mathbb{H}+\mathbb{h}^{\psi}$ are tangent to the structural constraints \eqref{e:omegareprfix2spin}.

Let us also summarize here the Hamiltonian vector fields of the constraints $R_{\tau}$ and $r_\tau^\psi$. These were computed in \cite{CCT2020} and \cite{CFHT23} respectively.
        The Hamiltonian vector fields of $R_\tau$  are given by
    \begin{align*}
        e\mathbb R_{e}&=[e,\tau] &
        e\mathbb R_{\omega}&=g(\tau,e,\omega)+d_\omega \tau.
    \end{align*}
    where $g=g(\tau,e,\omega)$ is a form arising from the variation of $\tau$ along $e$. 
        The Hamiltonian vector fields of $R^\psi_\tau$  are given by
    \begin{align*}
            e\mathbb r^{\psi}_{e}&=0 & 
            e\mathbb r^{\psi}_{\omega}&=0 \\
            e\mathbb r^{\psi}_{\psi}&=3[\tau,\psi]&
            e\mathbb r^{\psi}_{\bar\psi}&=3[\tau,\bar\psi].
    \end{align*} 

\section{Definition of some maps and spaces}\label{a:def_maps_and_spaces}
One of the main differences between the degenerate and the non-degenerate case is given by the structural constraint of the geometric phase space of gravity and in the presence of an additional constraint. In order to understand why this happens and the objects involved, we need to define some maps and some spaces. 
\begin{definition}\label{d:def_W}
Let $e\in\Omega_{e_n}(\Sigma, \mathcal{V})$. Then, we define the following map:
    \begin{align*}
                W_1^{\partial, (i,j)}\colon \Omega_\partial^{i,j}  & \longrightarrow \Omega_\partial^{i+1,j+1} \\
                \alpha  & \longmapsto    e\wedge \alpha. \nonumber
    \end{align*}
\end{definition}
\begin{lemma}\label{lem:strconstr_free}
Let $g^\partial$ be non-degenerate. Then, for $\alpha\in\Omega_\partial^{2,1}$ 
$\alpha = 0$
if and only if 
	\begin{equation}\label{e:nd_str_constraint}
    e_n \alpha \in \Ima W_{1}^{\partial,(1,1)}
    \end{equation}
and
	\begin{equation*}
        e\alpha=0.
	\end{equation*}
 \end{lemma}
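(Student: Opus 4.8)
\textbf{Plan for the proof of Lemma \ref{lem:strconstr_free}.}

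The statement is an ``if and only if'', and one direction is trivial: if $\alpha = 0$ then certainly $e\alpha = 0$ and $e_n\alpha = 0 \in \Ima W_1^{\partial,(1,1)}$. So the work is entirely in the converse. The plan is to assume $e\alpha = 0$ and $e_n\alpha = e\wedge\beta$ for some $\beta \in \Omega^{1,1}_\partial$, and deduce $\alpha = 0$. The natural strategy is a pointwise/linear-algebra argument: fix a point of $\Sigma$, use the non-degeneracy of $g^\partial$ to get that $(e_1,e_2,e_3,e_n)$ is a basis of the fiber $\mathcal{V}_\Sigma$ (this is exactly the condition $eeee_n \neq 0$ recorded after the definition of $\Omega^1_{e_n}$), and expand $\alpha \in \Omega^{2,1}_\partial$ in this frame. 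Then $e\alpha = 0$ is an algebraic constraint on the $\mathcal V$-component of $\alpha$, and $e_n\alpha \in \Ima W_1^{\partial,(1,1)}$ is another; I would show that together they force every component of $\alpha$ to vanish.

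Concretely, first I would analyze the map $W_1^{\partial,(2,1)}\colon \Omega^{2,1}_\partial \to \Omega^{3,2}_\partial$, i.e. $\alpha \mapsto e\wedge\alpha$. Since $\dim\Sigma = 3$, a $2$-form has $3$ independent components and $\alpha$ has $4$ internal ($\mathcal V$-valued) components, so $\Omega^{2,1}_\partial$ is $12$-dimensional (fiberwise), while $\Omega^{3,2}_\partial$ is $\binom{4}{2} = 6$-dimensional. A dimension count plus the non-degeneracy of $e$ should show that $\ker W_1^{\partial,(2,1)}$ is exactly $6$-dimensional, so $e\alpha=0$ cuts the space of candidate $\alpha$'s in half. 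The second step is to understand what $e_n\alpha \in \Ima W_1^{\partial,(1,1)}$ says: $\Ima W_1^{\partial,(1,1)} \subset \Omega^{2,2}_\partial$ is the image of $e\wedge(-)$ on $\Omega^{1,1}_\partial$, and I would identify this image and then pull back the condition through multiplication by the nowhere-vanishing section $\epsilon_n$ (which only rescales, hence is injective). The key computation is to show that on $\ker W_1^{\partial,(2,1)}$, the further condition $e_n\alpha \in \Ima W_1^{\partial,(1,1)}$ has only the zero solution; this is where one writes $\alpha = \frac{1}{2}\alpha^a_{\mu\nu}\,\dd x^\mu\dd x^\nu\, u_a$ in adapted coordinates and frame, plugs in, and checks that the linear system on the $\alpha^a_{\mu\nu}$ has trivial kernel.

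I would expect the main obstacle to be the bookkeeping in that final linear-algebra step: one must be careful that ``$e_n$'' here means multiplication by the fixed section $\epsilon_n$ of the $\underline{\mathbb R}$ summand (so that $e_n\alpha$ lives in $\Omega^{2,2}_\partial$, not $\Omega^{2,1}_\partial$), and the interplay between the internal index $a$ running over the $e$-frame together with $e_n$, and the form indices running over $T^*\Sigma$, has to be organized cleanly — e.g. by decomposing $\alpha$ into the part proportional to $e_n$ in the internal slot and the part lying in $\langle e_1,e_2,e_3\rangle$. A cleaner route, which I would try first, is representation-theoretic/abstract: show that $W_1^{\partial,(2,1)}$ restricted to an appropriate complement is an isomorphism onto its image and that $\epsilon_n\wedge(-)$ composed with projection to $\Omega^{2,2}_\partial/\Ima W_1^{\partial,(1,1)}$ is injective on $\ker W_1^{\partial,(2,1)}$ precisely when $g^\partial$ is non-degenerate — this is essentially the non-degenerate-boundary analogue of the injectivity statements used in \cite{CCS2020}, and I would cite or mirror those. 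If the abstract argument proves unwieldy, I would fall back to the explicit frame computation, which is guaranteed to work but is longer. Either way, non-degeneracy of $g^\partial$ enters exactly once, as the statement that $(e_1,e_2,e_3,e_n)$ is a frame, and that is the hypothesis that makes the kernel trivial.
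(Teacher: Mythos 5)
The paper does not prove this lemma itself but defers to \cite[Lemma 13]{CCS2020}, and your plan is essentially the argument given there: the forward direction is trivial, and the converse is a fiberwise linear-algebra computation in the frame $(e_1,e_2,e_3,e_n)$ supplied by non-degeneracy, where $e\alpha=0$ forces the $e_n$-component of $\alpha$ to vanish and the remaining (dualized) components $\tilde\alpha^{c\mu}$ to be symmetric, while $e_n\alpha\in\Ima W_1^{\partial,(1,1)}$ forces those same components to be antisymmetric, hence zero. Your plan is correct; the only imprecision is the remark that wedging with $\epsilon_n$ ``only rescales, hence is injective'' --- it is not injective on all of $\Omega^{2,1}_\partial$ (it kills the $e_n$-proportional part), but it is injective on $\ker W_1^{\partial,(2,1)}$, which is all your argument needs.
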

 For a proof we refer to \cite[Lemma 13]{CCS2020}.

 If we apply this lemma to $\alpha=d_\omega e$, \eqref{e:nd_str_constraint} becomes the structural constraint \eqref{e:structural_constraint_grav} and guarantees the equivalence between the equation $d_\omega e=0$ and the one generated by the constraint $L_c$, $ed_\omega e=0$. Furthermore it has been proved that this structural constraint is exactly what is needed for $F^{\partial}$ to be a symplectic space with symplectic form $\varpi$ (\cite[Theorem 17]{CCS2020}).

In the degenerate case the results of Lemma \ref{lem:strconstr_free}
are no longer true and need to be adapted. In order to do so, we must define some maps and spaces.

\begin{definition}\label{d:spaces_and_maps}
    Let $e\in\Omega_{e_n}(\Sigma, \mathcal{V})$. Then, we define the following maps:
    \begin{align*}
                    \varrho^{(i,j)} \colon \Omega_{\partial}^{i,j}  & \longrightarrow \Omega_{\partial}^{i+1,j-1} \\
                    \alpha & \longmapsto [e,\alpha] \nonumber
    \end{align*}
    \begin{align*}
                    \widetilde{\varrho}^{(i,j)} \colon \Omega_{\partial}^{i,j}  & \longrightarrow \Omega_{\partial}^{i+1,j-1} \\
                    \alpha & \longmapsto [\widetilde{e},\alpha], \nonumber
    \end{align*}
    with $\widetilde{e}\in \widetilde\Omega_\partial^{1,1}$ such that $\widetilde{e}^*\eta=0$.\footnote{Note that such $\widetilde{e}$ exists only for degenerate boundary metrics.}
    Furthermore, let $J$ be a complement\footnote{To obtain an explicit expression for the complement, one can follow these steps. Start by selecting an arbitrary Riemannian metric on the boundary manifold $\Sigma$ and extend it to the space $\Omega^{2,1}$. Then, the orthogonal complement of the image of the map $\varrho^{(1,2)} |{\mathrm{Ker} W_{1}^{\partial, (1,2)}}$ in $\Omega_{\partial}^{2,1}$ can be identified as the space $J$, with respect to the chosen Riemannian metric.} in $\Omega_{\partial}^{2,1}$ of the space $\Ima \varrho^{(1,2)} |_{\mathrm{Ker} W_{1}^{\partial, (1,2)}}$. Then, we define the following subspaces:
    \begin{align*}
    \mathcal{T}&\coloneqq \mathrm{Ker}W_{1}^{\partial (2,1)} \cap J \subset \Omega_{\partial}^{2,1}\\
    \mathcal S&\coloneqq\mathrm{Ker}W_{1}^{\partial, (1,3)} \cap \mathrm{Ker} \widetilde{\varrho}^{(1,3)}  \subset \Omega^{1,3}_{\partial}\\
    \mathcal K&\coloneqq \mathrm{Ker}W_{1}^{\partial, (1,2)} \cap \mathrm{Ker} \varrho^{(1,2)} \subset \Omega_{\partial}^{1,2}.
    \end{align*}
\end{definition}
\begin{remark}
   It is worth noting that in the non-degenerate case, the subspaces $\mathcal{T}$, $\mathcal{S}$, and $\mathcal{K}$ defined above are all trivial. This observation allows us to consider the approach used in the degenerate case as a generalization of the methodology employed in the non-degenerate scenario.
\end{remark}
For a degenerate boundary metric, the result corresponding to Lemma \ref{lem:strconstr_free} is as follows.
\begin{lemma}
    Let $\iota^*g$ be degenerate and $\alpha\in\Omega_\partial^{2,1}$. Then, we have that
    $$\alpha = 0$$
    if and only if 
    \begin{align}\label{e:tau_strcontr}
        e_n (\alpha- p_{\mathcal{T}}\alpha) \in \Ima W_{1}^{\Sigma,(1,1)}\\
            p_{\mathcal T}\alpha=0\label{e:degeneracy}\\
            e\alpha=0, \nonumber
    \end{align}
    where $p_\mathcal{T}$ is the projector onto $\mathcal T$.
\end{lemma}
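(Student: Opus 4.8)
The proof splits into the two implications. The direction $\alpha=0\Rightarrow$ \eqref{e:tau_strcontr}, \eqref{e:degeneracy} and $e\alpha=0$ is immediate, since every condition is homogeneous in $\alpha$. So I concentrate on the converse. First I would use \eqref{e:degeneracy}: since $p_\mathcal{T}\alpha=0$, one has $\alpha-p_\mathcal{T}\alpha=\alpha$, and the three hypotheses collapse to $e\alpha=0$, $e_n\alpha\in\Ima W_1^{\partial,(1,1)}$ and $p_\mathcal{T}\alpha=0$. Because $W_1^{\partial,(i,j)}$, $\varrho^{(i,j)}$, multiplication by $e_n$ and the projector $p_\mathcal{T}$ are all fibrewise linear (tensorial) operators, it is enough to prove the statement pointwise on each fibre of $\wedge^2 T^*\Sigma\otimes\mathcal V_\Sigma$ over $\Sigma$.

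Second, I would repackage the reduced hypotheses through the linear map
\[
\Psi\colon \mathrm{Ker}\,W_1^{\partial,(2,1)}\longrightarrow \Omega^{2,2}_\partial\big/\Ima W_1^{\partial,(1,1)},\qquad \alpha\longmapsto [\,e_n\alpha\,],
\]
where $e\alpha=0$ places $\alpha$ in the domain and $e_n\alpha\in\Ima W_1^{\partial,(1,1)}$ says precisely $\alpha\in\mathrm{Ker}\,\Psi$. The whole statement then reduces to the single fibrewise identity $\mathrm{Ker}\,\Psi=\mathcal T$: granting it, the remaining hypothesis $p_\mathcal{T}\alpha=0$ combined with $\alpha\in\mathrm{Ker}\,\Psi=\mathcal T$ forces $\alpha=p_\mathcal{T}\alpha=0$. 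This is the exact degenerate analogue of Lemma~\ref{lem:strconstr_free}, which asserts $\mathrm{Ker}\,\Psi=0$ when $g^\partial$ is non-degenerate; here the kernel is enlarged by the obstruction space $\mathcal T$.

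Third, to prove $\mathrm{Ker}\,\Psi=\mathcal T$ I would use the splitting $\Omega^{2,1}_\partial=\Ima\varrho^{(1,2)}|_{\mathrm{Ker}\,W_1^{\partial,(1,2)}}\oplus J$ of Definition~\ref{d:spaces_and_maps}. Writing $A:=\Ima\varrho^{(1,2)}|_{\mathrm{Ker}\,W_1^{\partial,(1,2)}}$, the wedge--contraction identities relating $W_1$ and $\varrho$ established in \cite{CCS2020,CCT2020} give $A\subseteq\mathrm{Ker}\,W_1^{\partial,(2,1)}$, whence $\mathrm{Ker}\,W_1^{\partial,(2,1)}=A\oplus\mathcal T$ (using $\mathcal T=\mathrm{Ker}\,W_1^{\partial,(2,1)}\cap J$). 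On this decomposition I must check two claims: (I) $\mathcal T\subseteq\mathrm{Ker}\,\Psi$, i.e. $e_n\alpha\in\Ima W_1^{\partial,(1,1)}$ for every $\alpha\in\mathcal T$; and (II) $\Psi|_A$ is injective. Claim~(II) is the reincarnation of Lemma~\ref{lem:strconstr_free}: on the complement $A$ the coframe $e$ still separates the non-null directions, so the non-degenerate argument applies verbatim. Given (I) and (II), any $\alpha\in\mathrm{Ker}\,\Psi$ decomposes as $\alpha=\alpha_A+\alpha_\mathcal{T}$ with $e_n\alpha_\mathcal{T}\in\Ima W_1^{\partial,(1,1)}$ by (I), hence $e_n\alpha_A\in\Ima W_1^{\partial,(1,1)}$, hence $\alpha_A=0$ by (II); this yields $\mathrm{Ker}\,\Psi\subseteq\mathcal T$, while the reverse inclusion is exactly (I).

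Finally, the main obstacle is the fibrewise verification of (I) together with the compatibility $\mathrm{Ker}\,W_1^{\partial,(2,1)}=A\oplus\mathcal T$. This is precisely where the degeneracy of $g^\partial$ enters: I would fix a frame of $\mathcal V_\Sigma$ adapted to the one-dimensional null direction encoded by $\widetilde e$ (which exists only because $g^\partial$ is degenerate, cf. the footnote to Definition~\ref{d:spaces_and_maps}), expand $e$, $e_n$ and $\alpha$ in it, and compute $W_1$, $\varrho$ and $e_n\,\cdot\,$ explicitly. The content of (I) is that the forms in $\mathcal T$ --- those annihilated by $W_1^{\partial,(2,1)}$ and transverse to $\Ima\varrho$ --- are supported along the null direction in such a way that $e_n\alpha$ lies in the image of $e\wedge\,\cdot\,$, i.e. in $\Ima W_1^{\partial,(1,1)}$; verifying this, and that no non-null component can hide inside $\mathcal T$, is the only genuinely computational step, and it is the one carried out in \cite{CCT2020}.
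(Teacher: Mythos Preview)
The paper does not actually give a proof of this lemma; it is stated as a result in the appendix and the reader is referred to \cite{CCT2020} for the details. Your proposal is therefore not competing with a proof in the paper but rather sketching what the argument in \cite{CCT2020} must look like. As such, your outline is structurally sound: the reduction via $p_\mathcal{T}\alpha=0$ to the simplified hypotheses, the reformulation through the map $\Psi$, and the identification of the problem with the equality $\mathrm{Ker}\,\Psi=\mathcal T$ are all correct and exactly capture the mechanism behind the lemma.

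Two points deserve comment. First, the inclusion $A\subseteq\mathrm{Ker}\,W_1^{\partial,(2,1)}$ that you invoke to obtain $\mathrm{Ker}\,W_1^{\partial,(2,1)}=A\oplus\mathcal T$ is not entirely trivial; it follows from the graded Leibniz-type identities between $W_1$ and $\varrho$ proved in \cite{CCS2020}, and you are right to cite them rather than reprove them. Second, your justification of claim~(II), that $\Psi|_A$ is injective ``because the non-degenerate argument applies verbatim'', is the weakest link: the space $A$ is defined using the \emph{degenerate} coframe $e$, so one cannot literally import Lemma~\ref{lem:strconstr_free}. What is actually needed is a direct fibrewise dimension count (or explicit computation in an adapted null frame) showing that the degeneracy of $g^\partial$ enlarges $\mathrm{Ker}\,\Psi$ by exactly $\dim\mathcal T$ and no more; this is the content of the computation in \cite{CCT2020} that you correctly flag as the only substantive step. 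With that caveat, your sketch is an accurate reconstruction of the argument.
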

When we apply this lemma to $\alpha=d_\omega e$ we call the condition \eqref{e:tau_strcontr} the \emph{degenerate structural constraint} and the condition \eqref{e:degeneracy} the \emph{degeneracy constraint}. This last constraint can be rewritten using the following equivalence
    	\begin{equation*}
        p_{\mathcal T}\alpha=0\quad\Longleftrightarrow\quad\int_\Sigma \tau\alpha=0\quad\forall\tau\in\mathcal{S},
	\end{equation*}
 where we observe the presence of the integral condition on the right-hand side, which is subsequently treated as an additional constraint of the theory and called $R_\tau$.

Since \eqref{e:tau_strcontr} is weaker than \eqref{e:nd_str_constraint}, in order to still get a symplectic space we need an additional condition given by the following Lemma.
\begin{lemma}
 Let $\iota^*g$ be degenerate. Then, given $\omega\in\Omega_\partial^{1,2}$, the conditions
    \begin{equation}\label{e:structural_constraint_gravity_degenerate}
        \begin{cases}
             e_n (d_\omega e- p_{\mathcal{T}}(d_{\omega} e)) \in \Ima W_{1}^{\Sigma,(1,1)}\\[6pt]
           p_{\mathcal{K}} \omega = 0
        \end{cases}
    \end{equation}
    uniquely define a representative of the equivalence class $[\omega]_e$, where $\omega \sim \omega'$ if $\omega=\omega'+v$ with $ev=0$.
\end{lemma}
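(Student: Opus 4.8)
The plan is to treat the statement as a pointwise (fibrewise over $\Sigma$) linear‑algebra fact and to reduce it to the direct‑sum decompositions built into Definition \ref{d:spaces_and_maps}. Since $ev=0$ is precisely $v\in\mathrm{Ker}\,W_1^{\partial,(1,2)}$, the equivalence classes $[\omega]_e$ are the cosets of $\mathrm{Ker}\,W_1^{\partial,(1,2)}\subset\Omega_\partial^{1,2}$, and the claim is that the two conditions of \eqref{e:structural_constraint_gravity_degenerate} single out exactly one element of each coset. Write $I:=\Ima\big(\varrho^{(1,2)}|_{\mathrm{Ker}\,W_1^{\partial,(1,2)}}\big)$ and $\mathcal N:=\{\alpha\in\Omega_\partial^{2,1}\ :\ e_n(\alpha-p_{\mathcal T}\alpha)\in\Ima W_1^{\partial,(1,1)}\}$, so that the degenerate structural constraint reads $d_\omega e\in\mathcal N$ and, by the definition of $J$, one has $\Omega_\partial^{2,1}=I\oplus J$.

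First I would dispose of the degeneracy constraint $p_{\mathcal K}\omega=0$. Since $\mathcal K\subset\mathrm{Ker}\,W_1^{\partial,(1,2)}$, the subspace $\mathcal C:=\mathrm{Ker}\,W_1^{\partial,(1,2)}\cap\mathrm{Ker}\,p_{\mathcal K}$ is a complement of $\mathcal K$ in $\mathrm{Ker}\,W_1^{\partial,(1,2)}$, and because $\mathcal K$ is exactly the kernel of $\varrho^{(1,2)}$ on $\mathrm{Ker}\,W_1^{\partial,(1,2)}$, the map $\varrho^{(1,2)}$ restricts to an isomorphism $\mathcal C\xrightarrow{\ \sim\ }I$. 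Given a class $[\omega]_e$, replacing $\omega$ by $\omega-p_{\mathcal K}\omega$ keeps us in the class (as $p_{\mathcal K}\omega\in\mathcal K$) and achieves $p_{\mathcal K}\omega=0$; the elements of the class with $p_{\mathcal K}\omega=0$ are then precisely $\omega+\mathcal C$, and $d_{\omega+c}e=d_\omega e\pm\varrho^{(1,2)}(c)$, so $c\mapsto d_{\omega+c}e$ parametrises the affine space $d_\omega e+I$ bijectively by $c\in\mathcal C$.

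Hence everything reduces to showing that $d_\omega e+I$ meets $\mathcal N$ in a single point, i.e. that $\Omega_\partial^{2,1}=\mathcal N\oplus I$. For $\mathcal N\cap I=0$: any $\beta\in I$ is $\varrho^{(1,2)}(v)$ with $v\in\mathrm{Ker}\,W_1^{\partial,(1,2)}$, so $p_{\mathcal T}\beta=0$ (the orthogonal projector $p_{\mathcal T}$ kills $\mathcal T\subset J=I^\perp$, hence kills $I$) and $e\beta=0$ (by the $\mathcal V$‑algebraic identity relating wedging with $e$ to bracketing with $e$, which gives $W_1^{\partial,(2,1)}\varrho^{(1,2)}=\pm\varrho^{(2,3)}W_1^{\partial,(1,2)}$, vanishing on $\mathrm{Ker}\,W_1^{\partial,(1,2)}$); if in addition $\beta\in\mathcal N$, then $\beta$ satisfies all three of the conditions characterising $\alpha=0$ in the degenerate counterpart of Lemma \ref{lem:strconstr_free} recalled above, whence $\beta=0$. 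For $\mathcal N+I=\Omega_\partial^{2,1}$ it is enough, by a dimension count against $\Omega_\partial^{2,1}=I\oplus J$, to know that $J\subseteq\mathcal N$, namely that $e_n(\alpha-p_{\mathcal T}\alpha)\in\Ima W_1^{\partial,(1,1)}$ for every $\alpha\in J$. Granting $\mathcal N\oplus I=\Omega_\partial^{2,1}$, existence follows by writing $d_\omega e=n_0+i_0$ with $n_0\in\mathcal N$, $i_0\in I$ and picking $c\in\mathcal C$ with $\pm\varrho^{(1,2)}(c)=-i_0$, so that $d_{\omega+c}e=n_0\in\mathcal N$ while still $p_{\mathcal K}(\omega+c)=0$; uniqueness follows because two representatives with $p_{\mathcal K}=0$ obeying the structural constraint differ by some $c\in\mathcal C$ with $\varrho^{(1,2)}(c)=d_{\omega'}e-d_\omega e\in\mathcal N\cap I=0$, hence $c=0$.

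The one genuinely non‑formal ingredient is the inclusion $J\subseteq\mathcal N$ (paired with $\mathcal N\cap I=0$ via the three‑condition characterisation): this is the fibrewise statement that wedging by $e_n$ sends the part of $\Omega_\partial^{2,1}$ that lies in $J$ and is complementary to $\mathcal T$ into $\Ima W_1^{\partial,(1,1)}$, and it is exactly where the precise definitions of $\mathcal T$ and $J$, and the degeneracy of $g^\partial$ (without which $\widetilde e$, and hence these spaces, would not exist), are used; it is the pointwise content carried out in \cite{CCT2020}. Everything else is the bookkeeping of the splittings $\mathrm{Ker}\,W_1^{\partial,(1,2)}=\mathcal K\oplus\mathcal C$ and $\Omega_\partial^{2,1}=I\oplus J$, so I would invoke \cite{CCT2020} for that lemma and assemble the rest as above.
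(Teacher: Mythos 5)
The paper does not actually prove this lemma: it is recalled from \cite{CCT2020} with only an informal discussion after the statement, so there is no internal argument to compare against line by line. Your linear-algebra skeleton is the right one, and the uniqueness half is handled correctly: disposing of the $\mathcal K$-ambiguity via $p_{\mathcal K}$, identifying the remaining freedom with the affine space $d_\omega e+I$ for $I=\Ima\big(\varrho^{(1,2)}|_{\mathrm{Ker}\,W_1^{\partial,(1,2)}}\big)$, and then observing that every $\beta\in I$ satisfies $e\beta=0$ and $p_{\mathcal T}\beta=0$, so that $\mathcal N\cap I=0$ follows from the three-condition characterisation of $\alpha=0$ in the degenerate lemma. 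That is a clean and correct argument for uniqueness.

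The gap is in the existence half. You reduce $\mathcal N+I=\Omega^{2,1}_\partial$ to the inclusion $J\subseteq\mathcal N$ and propose to import that from \cite{CCT2020}; but this is not the right statement to import, and it is generically false. Since $\mathcal N\cap I=0$ already gives $\dim\mathcal N\le\dim J$, the inclusion $J\subseteq\mathcal N$ would force $J=\mathcal N$. However $J$ is, by the footnote in Definition \ref{d:spaces_and_maps}, the orthogonal complement of $I$ with respect to an \emph{arbitrary} auxiliary Riemannian metric, whereas $\mathcal N$ is cut out by the condition $e_n(\alpha-p_{\mathcal T}\alpha)\in\Ima W_1^{\partial,(1,1)}$; an arbitrarily chosen, metric-dependent complement cannot coincide with this subspace for every choice of metric, so the claim you defer to the reference is not one the reference proves. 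The ingredient actually needed (and established in \cite{CCT2020} by an explicit pointwise rank computation) is that the linear map $\alpha\mapsto e_n(\alpha-p_{\mathcal T}\alpha)\bmod\Ima W_1^{\partial,(1,1)}$ has the same image when restricted to $I$ as on all of $\Omega^{2,1}_\partial$ — equivalently, $\dim\mathcal N=\dim\Omega^{2,1}_\partial-\dim I$ — so that any $d_\omega e$ can be corrected by an element of $I$ into $\mathcal N$. With that surjectivity statement in place of your $J\subseteq\mathcal N$, the rest of your bookkeeping goes through.
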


Strictly speaking, on a light-like boundary, the combination of the structural and degeneracy constraints, along with the additional equation $p_\mathcal{K}\omega=0$, is required to ensure the equivalence between $d_\omega e=0$ and $e d_\omega e=0$ on the boundary and to uniquely determine the representative of the equivalence class $[\omega]_e$. Specifically, the structural constraint, along with the constraint $R_\tau$, guarantees the aforementioned equivalence condition, whereas the structural constraint, together with $p_\mathcal{K}\omega=0$, ensures the uniqueness of the representatives. For further details on this discussion we refer to \cite{CCT2020} and \cite{CFHT23}.

\emergencystretch=2em
\newrefcontext[sorting=nty]
\printbibliography

\end{document}